\newtheorem{theorem}{Theorem}[section]
\newtheorem{lemma}[theorem]{Lemma}
\newtheorem{proposition}[theorem]{Proposition}
\newtheorem{corollary}[theorem]{Corollary}
\newcommand{\even}[1]{\mbox{$#1^{\tt even}$}}
\newcommand{\E}{{\cal E}}
\newcommand{\G}{{\cal G}}
\newcommand{\M}{{\cal M}}
\newcommand{\D}{{\cal D}}
\newcommand{\Siep}{{\Sigma}}
\newcommand{\Ip}[1]{\lesssim_{#1}}
\newcommand{\Mat}[4]{\left( \begin{array}{cc}
#1 & #2 \\
#3 & #4
\end{array} \right)}
\newcommand{\odd}[1]{\mbox{$#1^{\tt odd}$}}
\newcommand{\rat}{\triangledown}
\newcommand{\ident}[1]{{\tt id}_{#1}}
\newcommand{\Fst}{{\tt fst}}
\newcommand{\Snd}{{\tt snd}}
\newcommand{\Mod}{{\cal M}}
\newcommand{\Nat}{{\bf Nat}}
\newcommand{\plays}[1]{M^{\circledast}_{#1}}
\newcommand{\ITEM}[1]{\begin{itemize} #1 \end{itemize}}
\newcommand{\SET}[1]{\{ #1 \}}
\newcommand{\Fr}{\rightarrow}
\newcommand{\Sqleq}{\sqsubseteq}
\newcommand{\Rest}{{\upharpoonright}}
\newcommand{\Eqdef}{\succeq}
\newcommand{\La}{\lambda}
\newcommand{\Pfr}{\rightharpoonup}
\newcommand{\Deq}{\approx}
\newcommand{\Over}[1]{\overline{ #1 }}
\newcommand{\Sum}{\sum}
\newcommand{\Hat}[1]{\widehat{ #1}}
\newcommand{\Incl}{\subseteq}
\newcommand{\B}[1]{{\tt #1 }}
\newcommand{\THEN}{\; \Longrightarrow \;}
\newcommand{\IFF}{\;\mbox{ iff }\;}
\newcommand{\ofcourse}{\mbox{$!$}}
\newcommand{\with}{\mbox{$\&$}}
\newcommand{\tensor}{\mbox{$\otimes$}}
\newcommand{\linimpl}{\mbox{$\multimap$}}
\newcommand{\Games}{\mbox{$\cal{G}$}}
\newcommand{\KG}{K_{\ofcourse }({\cal G})}
\newcommand{\tunit}{I}
\newcommand{\pcfc}{\mbox{$\mbox{PCF}c$}}
\newcommand{\finevalT}[2]{\mbox{{\bf FET}($#1,#2$)}}
\newcommand{\twotrans}[2]{\mbox{$\frac{\mbox{$#1$}}{\mbox{$#2$}}$}}
\newcommand{\lpo}{\mbox{$\sqsubseteq$}}
\newcommand{\evalT}[2]{\mbox{{\bf ET}($#1,#2$)}}
\newcommand{\varempty}{\varnothing}
\newcommand{\lang}{\langle}
\newcommand{\rang}{\rangle}
\newcommand{\Sx}{\sqsubseteq}
\newlength{\sqpreordheight}
\newlength{\sqpreorddepth}
\newcommand{\Subeq}%
   {\mathbin{\raisebox{-1.02ex}[\sqpreordheight][\sqpreorddepth]%
                            {$\stackrel{\textstyle \sqsubset}{\approx}$}}}
\begin{document}

%%%%% To be entered at Academic Press: =====>>

% \journame{}
% \articlenumber{}
% \yearofpublication{}
% \volume{}
% \cccline{}
% \received{}
% \revised{}
% \accepted{}

% \authorrunninghead{}
% \titlerunninghead{}

% communication line, use: \commline{Communicated by...}
% \commline{ }

%\setcounter{page}{261} %% This command is optional.

%% <<== End of commands to be entered at Academic Press

%%  Authors, start here ==>>

%\draft % Optional, will cause a line at the bottom of each page
%% with the words `Draft' and the time and date that the article
%% was LaTeXed. Will also double space text.

\newcommand{\affil}[1]{\footnote{#1}}

\title{Full Abstraction for PCF\footnote{This
research was supported by grants from UK SERC and ESPRIT Basic
Research Action 6811 ``CLICS II''. Radha Jagadeesan was supported
in part by grants from NSF and  ONR. Pasquale Malacaria was
supported in part by the HCM fellowship n. ERBCHBICT940947. }}

%\subtitle{}

%\author{}
% \affil{}

%%%%%%%%%%%%
%% More than one author with separate affiliations, either:
%
%\author{First Author Name$^\dagger$ and Second Author Name$^\ddagger$}
%\affil{$^{\dagger}$First Author Affiliation, $^{\ddagger}Second Author
%       affiliation}

% or
\author{Samson.Abramsky\footnote{University of Oxford, email: samson.abramsky@cs.oxac.uk}
\, and Radha
Jagadeesan\footnote{DePaul University
Chicago, email: RJagadeesan@cs.depaul.edu} 
\, and Pasquale
Malacaria\footnote{Queen Mary University of London, email: p.malacaria@qmul.ac.uk}
}
% \author{Author name}
% \affil{Affiliation}
% \and
% \author{Author name}
% \affil{Affiliation}
%%%%%%%%%%%%

%% \thanks command:
%% Can use \thanks{} in title to have footnote number appear and
%% footnote at the bottom of the page. i.e.,
%% \title{This is the title\thanks{Supported by grant no....}}

%% In \authors or \affil, can use \thanks{} to have asterisk,
%%   dagger or double dagger appear
%%   and text appear at the bottom of the title page. i.e.,

%\authors{D. Adalsteinsson and J. A. Sethian\thanks{Supported in part by the
%Applied Mathematics Subprogram of the...}}

%%%%%%%%%%%%

%\email{}

%optional
%\dedication{Dedicated to...}

\maketitle

\abstract{An intensional model for the programming language PCF is
described, in which the types of PCF are interpreted by games, and
the terms by certain ``history-free'' strategies. This model is
shown to capture definability in PCF. More precisely, every
compact strategy in the model is definable in a certain simple
extension of PCF. We then introduce an intrinsic preorder on
strategies, and show that it satisfies some striking properties,
such that the intrinsic preorder on function types coincides with
the pointwise preorder. We then obtain an order-extensional fully
abstract model of PCF by quotienting the intensional model by the
intrinsic preorder. This is the first syntax-independent
description of the fully abstract model for PCF. (Hyland and Ong
have obtained very similar results by a somewhat different route,
independently and at the same time).

We then consider the effective version of our model, and prove a
Universality Theorem: every element of the effective extensional
model is definable in PCF. Equivalently, every recursive strategy
is definable up to observational equivalence. }

% text should be lower case, unless caps are necessary for meaning
\paragraph{Keywords}
Game semantics, full abstraction, sequentiality, PCF,
  functional computation, programming language semantics, Linear Logic.

%\begin{article}

% \contents is optional, will make a list of all section heads
% that appear in the article
%\contents

%optional
% for those that like to start with section zero:
%\zerosection{Introduction}

%% Convenient List of Commands, and instructions ==>>

\section{Introduction}

The Full Abstraction Problem for  PCF \cite{PlotkinGD:lcfcpl,MilnerR:fulamt,BerryG:fulasl,CurienPL:seqful} is one
of the longest-standing problems in the semantics of programming languages.
There is quite widespread agreement that it is one of the most difficult;
there is much less agreement as to what exactly the problem is, or more
particularly as to the precise criteria for a solution.
The usual formulation is that one wants a ``semantic characterization''
of the fully abstract model (by which we mean the inequationally
fully abstract order-extensional model, which Milner proved to be uniquely
specified up to isomorphism by these properties \cite{MilnerR:fulamt}).
The problem is to understand what should be meant by a 
``semantic characterization''.

Our view is that the essential content of the problem, what makes it
important, is that it calls for a semantic characterization of
{\em sequential, functional computation at higher types}.
The phrase ``sequential functional computation'' deserves careful
consideration. On the one hand, sequentiality refers to a computational
process extended over time, not a mere function; on the other hand,
we want to capture just those sequential computations in which the
different parts or ``modules'' interact with each other in a purely
functional fashion.

There have, to our knowledge, been just four models of PCF put forward
as embodying some semantic analysis. Three are domain-theoretic:
the ``standard model'' based on Scott-continuous functions \cite{PlotkinGD:lcfcpl};
Berry's bidomains model based on stable functions \cite{BerryG:modcom};
and the Bucciarelli-Ehrhard model based on strongly stable functions \cite{EhrhardT:extemb}.
The fourth is the Berry-Curien model based on sequential algorithms
\cite{BerryG:seqacd}.\footnote{Cartwright and Felleisen's model without error
values turns out to be equivalent to the sequential algorithms model
\cite{FelleisenM:obsseq,CurienPL:obssac}.
The main result in \cite{FelleisenM:obsseq,CurienPL:obssac} is that
the sequential algorithms model with errors is fully abstract for
SPCF, an extension of PCF with a {\tt catch} construct and errors. This
is a fine result, but SPCF has a rather different flavour to PCF, and
arguably is no longer purely functional in character.}
Of these, we can say that the standard model gives a good account
of functional computation at higher types, but fails to capture
sequentiality, while the sequential algorithms model gives a good
analysis of sequential computation, but fails to capture functional
behaviour. In each case, the failure can calibrated in terms of
{\em definability}: the standard model includes parallel functions;
the sequential algorithms model includes algorithms which compute
``functionals'' which are sensitive to non-functional aspects of the
behaviour of their arguments. The bidomains model also contains
non-sequential functions; while the strongly stable model, in the
light of a recent result by Ehrhard \cite{EhrhardT:prosas}, can be seen as the
``extensional collapse'' of the sequential algorithms model.
In short, all these models are unsatisfactory because they contain ``junk''.
On the other side of the coin, we have Milner's result that an
order-extensional
model is fully-abstract iff all its compact elements are definable.

\subsection*{Intensional Full Abstraction}
This suggests that the key step towards solving the Full Abstraction problem
for PCF is to capture PCF definability.
This motivates the following definition. A model ${\cal M}$
(not necessarily extensional) is {\em intensionally fully abstract}
if it is algebraic, and all its compact elements are definable in PCF.
In support of this terminology, we have the fact that the fully abstract
model can be obtained from an intensionally fully abstract model ${\cal M}$
in the following canonical fashion. Firstly, define a logical relation
on ${\cal M}$ induced by the ordering on the ground types
(which are assumed {\em standard}, i.e. isomorphic to the usual
flat domains of natural numbers and booleans). Because of the
definability properties of ${\cal M}$, this relation is a preorder at
all types.
In particular, it is {\em reflexive} at all types. This says that
all elements of the model have extensional (functional) behaviour---there
is no junk.

We can now apply Theorem 7.2.2 of \cite{StoughtonA:fulamp} to conclude
that ${\cal M}$ 
can be collapsed by a continuous homomorphism to the fully abstract model.
In short, the fully abstract model is the extensional collapse of
any intensionally fully abstract model. Moreover, note that the collapsing
map is  a homomorphism, and in particular preserves application. This
contrasts sharply with ``collapses'' of the standard model to obtain the
fully abstract model, as in the work of Mulmuley \cite{MulmuleyK:fulase} and
Stoughton and Jung \cite{JungA:stufam}, which are only homomorphic on the
``inductively reachable'' subalgebra.

Thus we propose that a reasonable factorization of the full abstraction
problem is to look for  a semantic presentation of an intensionally
fully abstract model, which embodies a semantic analysis of sequential
functional computation. 
The construction of such a model is our first main result; it is described
in Sections~2 and~3.

We have explained how the (order-extensional, inequationally) fully abstract
model can be obtained from any intensionally fully abstract model
by means of a general construction, described in \cite{StoughtonA:fulamp}. 
However, this description of the fully
abstract model leaves something to be desired. Firstly, just because the
construction in \cite{StoughtonA:fulamp} is very general, it is
unlikely to yield any useful information
about the fully abstract model. Secondly, it is not entirely syntax-free:
it refers to the {\sl type structure} of PCF.

What would the ideal form of description of the fully abstract model be?
We suggest that it should comprise the specification of a cartesian closed
category whose objects are certain cpo's, given together with certain
additional ``intensional'' structure, to be used to characterize sequentiality;
and whose morphisms are continuous functions between these cpo's---not
{\sl all} continuous functions, of course, but only the sequential ones,
as determined by the intensional structure. The interpretation of PCF
generated from this category should then be the fully abstract model.
Most of the attempts at solving the full abstraction problem of which we
are aware, including Berry's bidomains, Curien's bicds, and Bucciarelli
and Erhard's strongly stable functions, clearly fall within this general
scheme. (Thus for example the intensional structure in bidomains is the
stable ordering; for domains with coherence it is the coherence.)

In Section~4, we will explain how the category of games described in Section~2 
does indeed give rise to a category of sequential domains in exactly this
sense. This yields the first syntax-independent description of the
fully abstract model for PCF.  

A still more stringent requirement on a description of the fully abstract model
is that it should yield effective methods for deciding observation
equivalence on terms. For example, consider ``Finitary PCF'', i.e. PCF
based on the booleans rather than the natural numbers. The interpretation
of each type of Finitary PCF in the fully abstract model is a finite poset.
A natural question is whether these finite posets can be
effectively
presented. Suppose that we have a category of sequential domains as described
in the previous paragraph, yielding a fully abstract model of PCF.
If the ``intensional structure'' part of the interpretation of each type
could itself be specified in a finite, effective fashion, then such a model
would immediately yield a positive solution to this problem.
Because of its intensional character, our model does not meet this requirement:
there are infinitely many strategies at each functional type of Finitary PCF.
The same point occurs in one form or another with all the currently known
descriptions of the fully abstract model for PCF.
A remarkable result by Ralph Loader \cite{Loa96} shows that this is in fact 
inevitable.
Loader proved that observation equivalence for Finitary PCF is undecidable.
This shows that an intensional description of the fully abstract model is
the best that we can hope to do.
\subsection*{Related Work}
The results in the present paper were obtained in June 1993 (the results on
Intensional Full Abstraction in Section 3) and September 1993 
(the results on the intrinsic
preorder and (extensional) Full Abstraction in Section 4).
They were announced on various electronic mailing lists in June and September
1993. An extended  abstract of the present paper appeared in the Proceedings
of the Second Symposium on Theoretical Aspects of Computer Science, which
was held in Sendai in April 1994 \cite{AbramskyS:fulap}.

Independently, and essentially simultaneously, Martin Hyland and Luke Ong
gave a different model construction, also based on games and strategies,
which led to the same model of PCF, and essentially the same results on 
Intensional
Full Abstraction. Following our work on the intrinsic preorder, they
showed that similar results held for their model. What is interesting
is that such similar results have been obtained by somewhat different routes.
Hyland and Ong's approach is based on dialogue games and innocent strategies,
in the tradition of Lorentzen's dialogue interpretations of logical proofs
\cite{LorenzenP:logua,LorenzenP:eindk}, and the work by Kleene and Gandy on the
semantics of higher-type 
recursion theory \cite{GandyRO:diabso}, while our approach is closer
to process semantics  
and the Geometry of Interaction
\cite{AbramskyS:gamfcm,MalacariaP:frogi}. Further work is needed to 
understand more fully the relationship between the two approaches.

Independently, Hanno Nickau obtained essentially the same model and results
as Hyland and Ong \cite{NickauH:hersf}. A very different description
of the fully abstract 
model for PCF was obtained by Peter O'Hearn and Jon Riecke, using Kripke logical
relations \cite{OHearnPW:krilrp}. This construction is very
interesting, and probably 
of quite general applicability, but does not appear to us to embody a
specific semantic analysis of sequentiality.

Since the results described in this paper were obtained, there has
been significant further progress in the use of game semantics to give
fully abstract models for programming languages. 
These results all build on the concepts, methods and results developed
in the present paper, and that of Hyland and Ong. For an expository account of 
some of these results, and some references,
see \cite{AM97}; there is an overview in \cite{Abr97}.
The main results of the present paper are recast in an abstract,
axiomatic form in \cite{Abr00}.
There have also been some
significant applications of game semantics, notably \cite{MH99,GM00}.

\section{The Model}
We shall refer to
\cite{AbramskyS:gamfcm} for general background and motivation on game semantics.

We begin by fixing some notation.  If $X$ is a set, we write
$X^{\star}$ for the set of finite sequences (words, strings) on $X$.
We shall use $s$, $t$, $u$, $v$  and primed and subscripted variants
of these to denote sequences, and $a$, $b$, $c$, $d$, $m$, $n$ and
variants to denote
elements of these sequences. Concatenation of sequences will be indicated
by juxtaposition, and we will not distinguish notationally between an element
and the corresponding unit sequence. Thus {\sl e.g.} $as$ denotes a sequence
with first element $a$  and tail $s$.
If $f: X
\rightarrow Y$, then $f^{\star}: X^{\star} \rightarrow Y^{\star}$ is
the unique monoid homomorphism extending $f$.  We write $|s|$ for the
length of a finite sequence, and $s_i$ for the $i$'th element of $s$,
$1\leq i\leq |s|$.
Given a set $S$ of sequences, we write
\even{S} for the subset of even length sequences and \odd{S} for the
subset of odd length sequences.  If $Y \subseteq X$ and $s \in
X^{\star}$, we write $s {\upharpoonright} Y$ for the result of
deleting all occurrences of symbols not in $Y$ from $s$.
We write $ s
\sqsubseteq t$ if $s$ is a prefix of $t$, {\em i.e.} for some $u$, $s u
= t$.  We always consider sequences under this prefix ordering and use
order-theoretic notions~\cite{DaveyBA:Intlor} without further comment.

Given a family of sets $\{ X_i \}_{i \in I}$ we write $\sum_{i \in I} X_i$
for their disjoint union (coproduct); we fix
\[ \sum_{i \in I} X_i  =  \{ (i, x) \mid i \in I, x \in X_i \} \]
as a canonical concrete representation. In particular, we write
$X_{1} + X_{2}$ for $\sum_{i \in \{ 1,2 \}} X_i$.
If $s \in (\sum_{i \in I} X_{i})^{\star}$ and $i \in I$,
we define $s \Rest i \in X_{i}$ inductively by:
\[ \begin{array}{lcl}
\epsilon \Rest i & = & \epsilon \\
((j, a)s) \Rest i & = & \left\{ \begin{array}{ll}
a(s \Rest i), & i = j \\
s \Rest i, & i \neq j.
\end{array} \right.
\end{array} \]
We use $\Fst$ and $\Snd$ as notation for first and second projection functions.
Note that with $s$ as above, $\Fst^{\star}(s)$ is a sequence of indices $i_{1} \cdots
i_{k} \in I^{\star}$ tracking which components of the disjoint union the
successive elements of $s$ are in.

We will also need some notation for manipulating partial functions.
We write $f : X \Pfr Y$ if $f$ is a partial function from the set $X$ to
the set $Y$; and $fx \Eqdef y$ for ``$fx$ is defined and equal to $y$''.
If $f : X \Pfr Y$ is an injective partial function, we write $f^{\ast} : Y \Pfr
X$  for the converse, which is also an injective partial function.
(NB: the reader should beware of confusing $f^{\star}$ with $f^{\ast}$.
In practice, this should not be a problem.)
If $f, g : X \Pfr Y$ are partial functions with disjoint domains
of definition, then we write $f \vee g : X \Pfr Y$  for the
partial function obtained by taking  the union of (the graphs of) $f$ and $g$.
We write $0_{X}$ for the everywhere-undefined partial function on $X$ and
sometimes ${\tt id}_{X}$, sometimes $1_{X}$ for the identity function on $X$.
We shall omit subscripts whenever we think
we can get away with it.

\subsection{Games}
The games we consider are between Player and Opponent.  A {\em play} or {\em
run} of the game consists of an alternating sequence of moves, which may be
finite or infinite.  Our plays are always with Opponent to move first.

A game is a structure $A = (M_A,\lambda_A,P_A,{\Deq}_A)$, where
\begin{itemize}
\item $M_A$ is the set of moves.
\item $\lambda_A: M_A \rightarrow \{ P,O \} \times \SET{Q,A}$ is the
  labelling function.

  The labelling function indicates if a move is by Player (P) or
  Opponent (O), and if a move is a question (Q) or an answer (A). The
  idea is that questions correspond to requests for data, while
  answers correspond to data
  ({\em e.g.} integer or boolean values). In a higher-order
  context, where arguments may be functions which may themselves be
  applied to arguments, all four combinations of Player/Opponent with
  Question/Answer are possible. $\lambda_A$ can be decomposed into
  two functions $\lambda_{A}^{PO}:M_A\rightarrow\{P,O\}$ and
  $\lambda_{A}^{QA}:M_A\rightarrow\{Q,A\}$.

  We write
 \[ \begin{array}{rcl}
 \{ P, O \} \times \{ Q, A \} & = & \{ PQ, PA, OQ, OA \} \\
   \langle \lambda_{A}^{PO},\lambda_{A}^{QA}
  \rangle& = & \lambda_A , \\
  M_A^P & = &  \lambda_A^{-1}(\{P \}\times\SET{Q,A}), \\
  M_A^O & = &
  \lambda_A^{-1}(\{O \}\times\SET{Q,A}), \\
  M_A^Q& = &
  \lambda_A^{-1}(\{P,O \}\times\SET{Q}), \\
  M_A^A & = &
  \lambda_A^{-1}(\{P,O \}\times\SET{A})
  \end{array}
  \]
etc., and define
\[  \overline{P} = O, \;\; \overline{O} = P, \]
\[  \overline{\La_{A}^{PO}}(a)= \overline{\La_{A}^{PO}(a)}, \;\;
  \Over{\La_{A}}= \langle \Over{\La_{A}^{PO}},\La_{A}^{QA} \rangle . \]

\item Let $M_A^{\circledast}$ be the set of all finite sequences $s$ of
  moves satisfying:
\[ \begin{array}{ll}
{\bf (p1)} &  s = at \THEN\ a \in M_A^O \\
{\bf (p2)} &  (\forall i: 1 \leq i
< |s| )\; [ \lambda_{A}^{PO}(s_{i+1}) = \overline{\lambda_{A}^{PO}(s_{i})} ] \\
{\bf (p3)} & (\forall t\Sqleq s)\;(|t\Rest M_{A}^A| \leq |t\Rest M_{A}^Q| ).
\end{array} \]
Then $P_A$, the set of valid {\em positions} of the game, is a
non-empty prefix closed subset of $M_A^{\circledast}$.

The conditions {\bf (p1)}--{\bf (p3)} can be thought of as global rules
applying to all games. {\bf (p1)} says that Opponent moves first, and
{\bf (p2)}  that Opponent and Player alternate. {\bf (p3)} is known as
the {\em bracketing condition}, and can be nicely visualised as follows.
Write each
question in a play as a left parenthesis ``('', and each answer as a
right parenthesis ``)''.  Then the string must be well-formed in the
usual sense, so that each answer is associated with a unique previous
question---the most recently asked, as yet unanswered question.  In
particular, note that a question by Player must be answered by
Opponent, and vice versa.

\item
$\Deq_{A}$ is an equivalence relation on $P_{A}$ satisfying
\[ \begin{array}{ll}
\mbox{{\bf (e1)}} &  s \Deq_{A} t \THEN\ \lambda_{A}^{\star}(s) = \lambda_{A}^{\star}(t) \\
\mbox{{\bf (e2)}} &  s \Deq_{A} t, s' \sqsubseteq s, t' \sqsubseteq t,
|s'| = |t'|
\THEN  s' \Deq_{A} t' \\
\mbox{{\bf (e3)}} &  s \Deq_{A} t, sa \in P_{A} \THEN \exists b. \, sa \Deq_{A} tb .
\end{array} \]
Note in particular that {\bf (e1)} implies that if $s \Deq_{A} t$, then $|s| = |t|$.
\end{itemize}
For example, the game for ${\bf Nat}$ has one possible opening move
$\ast$ (request for data), with $\lambda_{\Nat}(\ast ) = OQ
$; and for each $n \in \omega$, a possible response $\underline{n}$ with
$\lambda_{\Nat}(\underline{n}) = PA $.
${\Deq}_{\Nat}$ is the identity relation on $P_{\Nat}$.
The game for ${\bf Bool}$ is defined similarly.

\subsection{Strategies}
A {\em strategy} for Player in $A$ is a non-empty subset
$\sigma \subseteq \even{P_A}$ such that
$\Over{\sigma} = \sigma \cup \B{dom}(\sigma)$
is prefix-closed, where
\[ \B{dom}(\sigma) = \SET{sa \in P_{A}^{\B{odd}} \mid  \exists b. \
  sab \in  \sigma} . \]
We will be interested in a restricted class of strategies, the history-free
(or history independent, or history insensitive) ones.
A strategy $\sigma$ is {\em history-free} if it satisfies
\begin{itemize}
\item $sab, tac \in \sigma \THEN b=c$
\item $sab, t\in\sigma, ta\in P_A \THEN tab
  \in \sigma$ (equivalently, $ta\in \B{dom}(\sigma)$).
\end{itemize}
Henceforth, ``strategy'' will always by default mean ``history-free strategy''.

Given any strategy $\sigma$, we can define $
\B{fun}(\sigma):M_{A}^O \Pfr M_{A}^P$ by
\[ \B{fun}(\sigma)(a)\Eqdef b \IFF (\exists s)\;[ sab
\in\sigma ] . \]
Conversely, given $f:M_{A}^O\Pfr M_{A}^P$ we can define
$\B{traces}(f)\Incl(M_{
A}^{\circledast})^{\B{even}}$ inductively by:
\[ \B{traces}(f)=\{ \epsilon \} \cup\SET{sab \mid s\in
  \B{traces}(f), sa\in P_A, f(a) \Eqdef b} . \]
We say that $f$ induces the strategy
$\sigma_f=\B{traces}(f)$, if $\B{traces}(f)\Incl P_A$. Note that if
$\tau$ is a strategy, we have
\[ \B{fun}(\sigma_f) \Incl f, \;\;\;\;\;\;
\sigma_{\B{fun}(\tau)} = \tau \]
so there is always a least partial function on moves canonically inducing a
(history-free) strategy.

\begin{proposition}
If $f : M_{A}^{O} \Pfr M_A^{P}$ is any partial function, then
$\B{traces}(f) \subseteq \plays{A}$.
\end{proposition}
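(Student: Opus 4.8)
The plan is to prove the inclusion by induction on the inductive definition of $\B{traces}(f)$, showing that every sequence it contains lies in $\plays{A} = M_A^{\circledast}$, i.e.\ satisfies the three global conditions (p1)--(p3). Recall that $\B{traces}(f)$ is generated from $\epsilon$ by the clause that appends $ab$ to an already-constructed trace $s$ whenever $sa \in P_A$ and $f(a) \Eqdef b$. Since $f : M_A^O \Pfr M_A^P$, the side condition $f(a) \Eqdef b$ carries with it the typing information $a \in M_A^O$ and $b \in M_A^P$, and each trace has even length because moves are added two at a time.

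For the base case, $\epsilon$ satisfies (p1)--(p3) vacuously. For the inductive step, I assume $s \in \B{traces}(f) \cap M_A^{\circledast}$, $sa \in P_A$ and $f(a) \Eqdef b$, so that $sab \in \B{traces}(f)$, and I want to show $sab \in M_A^{\circledast}$. The crucial leverage is that $P_A \subseteq M_A^{\circledast}$, so $sa$ already satisfies (p1)--(p3); it only remains to check that appending the Player move $b$ preserves them. Condition (p1) is immediate, since $sab$ has the same first move as the non-empty position $sa$. Condition (p2) reduces to verifying alternation at the single new adjacency $(a,b)$: as $a \in M_A^O$ and $b \in M_A^P$ we have $\lambda_A^{PO}(b) = P = \overline{O} = \overline{\lambda_A^{PO}(a)}$.

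The one step needing genuine argument, and the main obstacle, is the bracketing condition (p3) for the new prefix $t = sab$ (all shorter prefixes are prefixes of $sa$ and are handled by (p3) for $sa$). If $b$ is a question, the inequality $|t \Rest M_A^A| \leq |t \Rest M_A^Q|$ is only strengthened, so the real difficulty is the case where $b$ is an answer, for which I need the \emph{strict} inequality $|sa \Rest M_A^A| < |sa \Rest M_A^Q|$ to hold. I expect to obtain this from a parity observation: since $M_A^Q$ and $M_A^A$ partition $M_A$, every move is counted exactly once, so $|sa \Rest M_A^Q| + |sa \Rest M_A^A| = |sa|$; and $|sa|$ is odd, because $s$ has even length. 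Two numbers whose sum is odd cannot be equal, so the bound $|sa \Rest M_A^A| \leq |sa \Rest M_A^Q|$ supplied by (p3) for $sa$ must in fact be strict. This is exactly what is required to absorb the answer $b$ while keeping (p3), completing the induction.
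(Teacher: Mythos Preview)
Your proof is correct and follows essentially the same approach as the paper's: both use that $sa \in P_A \subseteq \plays{A}$, so (p1)--(p3) hold for $sa$, and then invoke the parity observation that $|sa|$ is odd to upgrade the bracketing inequality for $sa$ to a strict inequality, which is what is needed to accommodate the final move $b$. The paper's argument is more terse (dismissing (p1) and (p2) as immediate and compressing the parity step into one sentence), but the substance is identical.
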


\begin{proof} Certainly any $s \in \B{traces}(f)$ satisfies ``$O$
moves first'' and the alternation condition. We show that it
satisfies the bracketing condition by induction on $|s|$. If $s =
tab$, then since $ta \in P_{A}$ and $|ta|$ is odd, the number of
questions in $ta$ must exceed the number of answers; hence $s$
satisfies the bracketing condition.
\end{proof}

The equivalence relation on positions extends to a relation on
strategies, which we shall write as $\Subeq$. \\
$\sigma\Subeq\tau$ iff:
\begin{equation}
\label{sequiv1}
sab\in\sigma, s'\in \tau, sa
  \Deq s'a' \THEN \exists b'.\ [s'a'b'\in\tau \wedge sab\Deq s'a'b'].
\end{equation}

By abuse of notation we write the symmetric closure of this relation
as $\Deq$:
$$ \sigma\Deq \tau \ \IFF \ \sigma\Subeq\tau\wedge \tau\Subeq\sigma.$$

Interpreting the equivalence on positions as factoring out coding conventions,
$\sigma \Deq \tau$ expresses the fact that $\sigma$ and $\tau$
are the same modulo
coding conventions.  $\sigma \Deq \sigma$ expresses a
``representation independence'' property of strategies.

\begin{proposition} [Properties of $\Subeq$] \hfill

$\Subeq$ is a partial preorder relation (i.e. transitive) on
  strategies.
Hence $\Deq$ is a partial equivalence relation (i.e. symmetric and transitive).

\end{proposition}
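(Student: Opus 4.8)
The plan is to establish transitivity of $\Subeq$ directly; symmetry of $\Deq$ is immediate from its definition as the symmetric closure, and once $\Subeq$ is transitive the transitivity of $\Deq$ follows at once, so everything reduces to the single claim: if $\sigma \Subeq \tau$ and $\tau \Subeq \rho$ then $\sigma \Subeq \rho$. Unfolding the definition~(\ref{sequiv1}), I am given $sab \in \sigma$, a position $s'' \in \rho$ and a move $a''$ with $sa \Deq s''a''$, and I must produce $b''$ with $s''a''b'' \in \rho$ and $sab \Deq s''a''b''$. The natural idea is to compose the two simulations: first use $\sigma \Subeq \tau$ to convert the $\sigma$-play into an equivalent $\tau$-play, then use $\tau \Subeq \rho$ to convert that into the required $\rho$-play, patching the equivalences together by transitivity and symmetry of $\Deq_A$. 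The catch is that to invoke $\sigma \Subeq \tau$ I must first supply a witness position $s' \in \tau$ with $sa \Deq s'a'$, and no such position is handed to me; manufacturing it is the real content of the proof.

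Before that I record two easy facts: the empty sequence lies in every strategy (it is an even-length prefix of any element, and $\Over{\sigma}$ is prefix-closed while $\B{dom}(\sigma)$ contains only odd-length sequences), and every even-length prefix of a position of $\sigma$ again lies in $\sigma$ for the same reason. The key step is then the following Bridge Lemma, which uses only $\sigma \Subeq \tau$: for every even-length $s \in \sigma$ there is an even-length $r \in \tau$ with $s \Deq r$. I would prove this by induction on $|s|$. The base case $s = \epsilon$ is settled by $r = \epsilon \in \tau$. For the step, write $s = ucd$ with $u$ of even length and $c,d$ single moves, so that $u \in \sigma$ and, by induction, there is an even $r_0 \in \tau$ with $u \Deq r_0$. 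Since $uc \in P_A$, condition {\bf (e3)} on $\Deq_A$ supplies a move $c'$ with $uc \Deq r_0 c'$; now applying $\sigma \Subeq \tau$ to $ucd \in \sigma$, $r_0 \in \tau$ and $uc \Deq r_0 c'$ yields $d'$ with $r_0 c' d' \in \tau$ and $ucd \Deq r_0 c' d'$, so $r = r_0 c' d'$ is the desired bridge. The two ingredients here are exactly {\bf (e3)} (to match the Opponent move) and the simulation hypothesis (to match the Player move), applied alternately two moves at a time.

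With the Bridge Lemma in hand the composition goes through cleanly. Given the data above, $s$ is an even-length element of $\sigma$, so the lemma produces $s' \in \tau$ with $s \Deq s'$; since $sa \in P_A$, {\bf (e3)} gives a move $a'$ with $sa \Deq s'a'$. Applying $\sigma \Subeq \tau$ to $sab$, $s'$ and $sa \Deq s'a'$ yields $b'$ with $s'a'b' \in \tau$ and $sab \Deq s'a'b'$. Transitivity and symmetry of $\Deq_A$ give $s'a' \Deq s''a''$, so applying $\tau \Subeq \rho$ to $s'a'b' \in \tau$, $s'' \in \rho$ and $s'a' \Deq s''a''$ produces $b''$ with $s''a''b'' \in \rho$ and $s'a'b' \Deq s''a''b''$; chaining the two equivalences gives $sab \Deq s''a''b''$, as required. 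I expect the Bridge Lemma to be the only real obstacle---the rest is bookkeeping with the equivalence relation $\Deq_A$---and the one point to watch is that its induction genuinely needs property {\bf (e3)}, which is precisely the clause in the definition of a game guaranteeing that Opponent's moves can always be matched up to equivalence.
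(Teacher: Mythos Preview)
Your proof is correct and follows essentially the same route as the paper: the paper compresses your Bridge Lemma into the single phrase ``by induction on $|sa|$ using the definition of $\sigma \Subeq \tau$ and {\bf (e3)}, there is $ta'b' \in \tau$ with $sab \Deq ta'b'$,'' and then finishes exactly as you do by applying $\tau \Subeq \rho$ and chaining equivalences. You have simply spelled out in full the inductive construction that the paper leaves implicit.
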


\begin{proof}  Suppose $\sigma \Subeq \tau$ and $\tau \Subeq
\upsilon$, and $s \in \sigma$, $u \in \upsilon$, $sab \in\sigma $
and $sa \Deq ua''$. By induction on $|sa|$ using the definition of
$\sigma \Subeq \tau$ and (e3), there is $ta'b' \in \tau $ with
$sab \Deq ta'b'$. But then $ta' \Deq ua''$, and since $\tau \Subeq
\upsilon$, $ua''b''\in\upsilon$ with $ta'b'\Deq ua''b''$ and hence
$sab\Deq ta'b'\Deq ua''b''$ as required.
\end{proof}

>From now on, we are only interested in those history-free
strategies $\sigma$ such that $\sigma \Deq \sigma$.We write ${\tt
Str}(A)$ for the set of such strategies over $A$. If $\sigma$ is
such a strategy for a game $A$, we shall write $\sigma : A$. We
write $\hat{A}$ for the set of partial equivalence classes of
strategies on  $A$, which we think of as the set of ``points'' of
$A$. We write $[\sigma ] = \{ \tau \mid \sigma \Deq \tau \}$ when
$\sigma \Deq \sigma$.

\subsection{Multiplicatives}
\paragraph{Tensor}

The game $A \tensor B$ is defined as follows.   We call the games $A$
and $B$ the {\sl component} games.

\begin{itemize}
\item $M_{A \tensor B} = M_A + M_B$, the disjoint union of the two move
sets.
\item $\lambda_{A \tensor B} = [\lambda_A,\lambda_B]$, the source tupling.
\item $P_{A \tensor B}$ is the set of all $s \in \plays{A \tensor B}$
such that:
\begin{enumerate}
\item {\em Projection condition:} The restriction to the moves in $M_A$ (resp. $M_B$) is in
$P_A$ (resp. $P_B$).
\item {\em Stack discipline:} Every answer in $s$ must be in the same component game as the
  corresponding question.
\end{enumerate}

\item $s \Deq_{A \tensor B} t \IFF  s \Rest A \Deq_{A} t \Rest_{A} \;
\wedge \; s \Rest B \Deq_{B} t \Rest B \; \wedge \; \Fst^{\star}(s) =
\Fst^{\star}(t).$
\end{itemize}
We omit the easy proof that $\Deq_{A \tensor B}$ satisfies
{\bf (e1)}--{\bf (e3)}.
Note that, if the equivalence relations $\Deq_{A}$ and $\Deq_{B}$ are
the identities on $P_{A}$ and $P_{B}$ respectively, then $\Deq_{A \tensor B}$
is the identity on $P_{A \tensor B}$.

The tensor unit is given by
\[ \tunit = (\varnothing,\varnothing,\{\epsilon \}, \{ (\epsilon , \epsilon )
\} ). \]

\paragraph{Linear Implication}
The game $A \linimpl B$ is defined as follows.   We call the games $A$
and $B$ the {\sl component} games.

\begin{itemize}
\item $M_{A \linimpl B} = M_A + M_B$, the disjoint union of the two move
sets.
\item $\lambda_{A \linimpl B} = [\overline{\lambda_{A}},\lambda_B]$.
\item $P_{A \linimpl B}$ is the set of all $s \in \plays{A \linimpl B}$
such that:
\begin{enumerate}
\item {\em Projection condition:} The restriction to the moves in $M_A$ (resp. $M_B$) is in
$P_A$ (resp. $P_B$).
\item {\em Stack discipline:} Every answer in $s$ must be in the same component game as the
  corresponding question.
\end{enumerate}

\item $s \Deq_{A \linimpl B} t \IFF  s \Rest A \Deq_{A} t \Rest_{A} \;
\wedge \; s \Rest B \Deq_{B} t \Rest B \; \wedge \; \Fst^{\star}(s) =
\Fst^{\star}(t).$
\end{itemize}
Note that, by {\bf (p1)}, the first move in any position in
$P_{A \linimpl B}$ must be in $B$.

We refer to the condition requiring answers to be given in the same
components as the corresponding questions as the {\sl stack discipline}.
It ensures that computations must evolve in a properly nested fashion.
This abstracts out a key structural feature of functional computation, and plays
an important r\^{o}le in our results.

\begin{proposition}[Switching Condition]
If a pair of successive moves in a position in $A \tensor B$ are in different
components, (i.e. one was in $A$ and the other in $B$),
then the second move was by Opponent (i.e. it was Opponent who
switched components). If two successive moves in $A \linimpl B$ are
in different components, the second move was by Player (i.e. it was Player
who switched components).
\end{proposition}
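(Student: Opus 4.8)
The plan is to prove both halves by a parity count, reading off the Player/Opponent polarity of each move in two independent ways and forcing them to agree; the argument uses only {\bf (p1)}, {\bf (p2)} and the \emph{projection condition}, and never touches the bracketing condition. First I would record the \emph{global} alternation: since $s \in \plays{A \tensor B}$ satisfies {\bf (p1)} and {\bf (p2)}, the move $s_j$ at position $j$ is Opponent's exactly when $j$ is odd. Next, the projection condition gives $s \Rest A \in P_A$ and $s \Rest B \in P_B$, so {\bf (p1)} and {\bf (p2)} also hold \emph{within} each component: if $s_j$ lies in $A$ and is the $k$-th $A$-move of $s$, then $\lambda_A^{PO}(s_j) = O$ iff $k$ is odd, and likewise for $B$.

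Now fix a prefix $s_1 \cdots s_j$ and let $a_j, b_j$ count its $A$- and $B$-moves, so $a_j + b_j = j$. For the tensor, the source tupling $\lambda_{A \tensor B} = [\lambda_A, \lambda_B]$ means a move's global polarity coincides with its component polarity. Comparing the two readings of the polarity of $s_j$ yields a parity invariant: if $s_j$ is an $A$-move then $j$ and $a_j$ have the same parity, whence (as $j = a_j + b_j$) $b_j$ is even; symmetrically, if $s_j$ is a $B$-move then $a_j$ is even. In words, after any $A$-move the number of $B$-moves played so far is even, and dually. The switching condition then drops out: if $s_i$ is in $A$ and $s_{i+1}$ is in $B$, the invariant gives $b_i$ even, so $s_{i+1}$ is the $(b_i+1)$-th $B$-move, at an odd index, hence an Opponent move by the within-$B$ alternation; the case $s_i \in B$, $s_{i+1} \in A$ is symmetric. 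So the component-switching move is always Opponent's.

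For $A \linimpl B$ the sole change is the labelling $\lambda_{A \linimpl B} = [\Over{\lambda_A}, \lambda_B]$, which flips the polarity of every $A$-move while leaving $B$-moves alone. Re-running the consistency computation, the flip forces $j$ and $a_j$ to have \emph{opposite} parity for an $A$-move, so now after each $A$-move the number of $B$-moves is \emph{odd}, whereas the $B$-invariant ($a_j$ even after each $B$-move) is unchanged. Feeding this into the switch, a $B$-move following an $A$-move lands at an even $B$-index and is therefore Player's, and an $A$-move following a $B$-move lands at an odd $A$-index, which after the polarity flip is again Player's. Hence in $A \linimpl B$ the switcher is always Player's.

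I expect the one point needing care — rather than a genuine obstacle — to be the bookkeeping of the polarity flip on the $A$-component of $A \linimpl B$: it silently interchanges the r\^{o}les of ``even'' and ``odd'' for $A$-moves in the invariant, and this is precisely what turns an Opponent switch into a Player switch, so it must be tracked exactly. It is also worth noting the degenerate possibility that the switching move is the very first move of its component (index $1$), but this is already covered by the index-parity computation and needs no separate argument.
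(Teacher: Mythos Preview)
Your proof is correct and is essentially the same argument as the paper's, packaged differently. The paper tracks the four ``states'' $(O,O)$, $(P,O)$, $(O,P)$, $(P,P)$ --- which are exactly your parity pairs $(a_j \bmod 2,\, b_j \bmod 2)$ --- and argues by a short induction on the play that $(P,P)$ is unreachable in $A \tensor B$ and that Player is always forced to stay in the current component; it then repeats the analysis for $A \linimpl B$ with the initial state shifted to $(P,O)$. Your version replaces the induction by the direct observation that the global alternation and the per-component alternation impose two parity constraints on the same move which must agree (or, for the flipped $A$-component of $\linimpl$, disagree), and the invariant ``$b_j$ even after an $A$-move'' etc.\ falls out algebraically from $j = a_j + b_j$. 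This is a slightly slicker way to reach the same parity conclusion, but the content is identical.
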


\begin{proof} Each position in $A \tensor B$ can be classified as in
one of four ``states'': $(O, O)$, i.e. an even number of moves
played in both components, so Opponent to move in both; $(P, O)$,
meaning an odd number of moves played in the first component, so
Player to move there, and an even number of moves played in the
second component, so Opponent to play there; $(O, P)$; and $(P,
P)$. Initially, we are in state $(O, O)$. After Opponent moves, we
are in $(P, O)$ or $(O, P)$, and Player can only move in the same
component that Opponent has just moved in. After Player's move, we
are back in the state $(O, O)$. A simple induction shows that this
analysis holds throughout any valid play, so that we can never in
fact reach a state $(P, P)$, and Player must always play in the
same component as the preceding move by Opponent. A similar
analysis applies to $A \linimpl B$; in this case the initial state
is $(P, O)$, after Opponent's move we are in $(P, P)$, and after
Player's response we are in $(O, P)$ or $(P, O)$. $\;\;\; $
\end{proof}

Note that, by comparison with \cite{AbramskyS:gamfcm}, the Switching Condition is
a consequence of our definition of the multiplicatives rather than
having to be built into it.
This is because of our global condition {\bf (p1)}, which corresponds to
restricting our attention to ``Intuitionistic'' rather than ``Classical'' games.
Note also that the unreachable state $(P, P)$ in $A \tensor B$ is precisely
the problematic one in  the analysis of Blass' game semantics in \cite{AbramskyS:gamfcm}.

\subsection{The Category of Games}
We build a category \Games:
\begin{eqnarray*}
\mbox{Objects} &:& \mbox{Games} \\
\mbox{Morphisms} &:& [\sigma]:A\Fr B \; \mbox{is a partial equivalence
  class} \; [\sigma]\in\Hat{A\linimpl B}
\end{eqnarray*}
We shall write $\sigma : A \Fr B$ to mean that $\sigma$ is a strategy in
$A \linimpl B$ satisfying $\sigma \Deq \sigma$.

There are in general two ways of defining a (history-free) strategy or operation
on strategies: in terms of the representation of strategies as sets of
positions, or via the partial function on moves inducing the strategy.
Some notation
will be useful in describing these partial functions.
Note that the type of the function $f$ inducing a strategy in $A \linimpl B$
is
\[ f : M_{A}^{P} + M_{B}^{O} \Pfr M_{A}^{O} + M_{B}^{P}. \]
Such a function can be written as a matrix
\[ f = \Mat{f_{1,1}}{f_{1,2}}{f_{2,1}}{f_{2,2}} \]
where
\[ f_{1,1} : M_{A}^{P} \Pfr M_{A}^{O} \;\;\;\; \;\; f_{1,2} : M_{B}^{O} \Pfr
M_{A}^{O} \]
\[ f_{2,1} : M_{A}^{P} \Pfr M^{P}_{B} \;\;\;\;\;\; f_{2,2} : M_{B}^{O} \Pfr
M_{B}^{P} . \]
For example, the twist map
\[ M_{A}^{P} + M_{A}^{O} \cong M_{A}^{O} + M_{A}^{P} \]
corresponds to the matrix
\[ \Mat{0}{\ident{M_{A}^{O}}}{\ident{M_{A}^{P}}}{0} \]
where $0$ is the everywhere-undefined partial function.
(Compare the interpretation
of axiom links in \cite{GirardJY:geoi1i}.) The strategy induced by this function is
the copy-cat strategy as defined in \cite{AbramskyS:gamfcm}.
As a set of positions, this strategy is defined by:

\[ \ident{A} = \{ s \in P^{\tt even}_{A \linimpl A} \mid \; s
{\upharpoonright} 1 = s {\upharpoonright} 2 \} . \]
In process terms, this is a bi-directional one place
buffer~\cite{AbramskyS:prop}.
These copy-cat strategies are the identity morphisms in \Games .

\paragraph{Composition}
The composition of (history-free) strategies can similarly
be defined either
in terms of the set representation, or via the underlying functions on moves
inducing the strategies.  We begin with the set representation.
Given $\sigma: A
\rightarrow B, \;
\tau: B \rightarrow C$, we define
\[ \begin{array}{ccl}
\sigma \| \tau & = & \{ s \in (M_A + M_B + M_C )^{\star} \mid
s \Rest A, B \in \Over{\sigma}, \; s \Rest B, C \in \Over{\tau} \} \\
\sigma; \tau & = & \{ s {\upharpoonright} A,C \mid \; s \in \sigma \| \tau
 \}^{\tt even} .
\end{array} \]
This definition bears a close resemblance to that of ``parallel composition
plus hiding'' in the trace semantics of CSP \cite{HoareCAR:comsp};
see \cite{AbramskyS:gamfcm} for an extended
discussion of the analogies between game semantics and concurrency
semantics, and \cite{AbramskyS:prop} for other aspects.

We now describe composition
in terms of the functions inducing strategies.  Say we have $\sigma_f:
A \rightarrow B, \; \sigma_g: B\rightarrow C$.  We want to find $h$ such
that $\sigma_f; \sigma_g = \sigma_h$.  We shall compute $h$ by
the ``execution formula''~\cite{GirardJY:towgi,GirardJY:geoi1i,GirardJY:geoi2d}.  Before giving the
formal definition, let us explain the idea, which is rather simple.
We want to hook the strategies up so that Player's moves in $B$ under
$\sigma$ get turned into Opponent's moves in $B$ for $\tau$,
and vice versa.  Consider the following picture:
\begin{center}
\setlength{\unitlength}{0.0125in}
\begin{picture}(361,195)(135,550)
\thicklines
\put(350,560){\line( 1, 0){ 30}}
\put(220,560){\line( 1, 0){ 40}}
\put(350,740){\vector( 1, 0){ 30}}
\put(260,740){\vector(-1, 0){ 40}}
\put(260,560){\line( 1, 2){ 90}}
\put(260,740){\line( 1,-2){ 90}}
\put(440,620){\vector( 0,-1){ 60}}
\put(440,740){\vector( 0,-1){ 60}}
\put(380,620){\vector( 0,-1){ 60}}
\put(380,740){\vector( 0,-1){ 60}}
\put(220,620){\vector( 0,-1){ 60}}
\put(160,620){\vector( 0,-1){ 60}}
\put(220,740){\vector( 0,-1){ 60}}
\put(160,740){\vector( 0,-1){ 60}}
\put(360,620){\framebox(100,60){}}
\put(140,620){\framebox(100,60){}}
\put (445,565) {\makebox(0,0) [lb] {\raisebox{0pt}[0pt][0pt]{ $M^P_C$}}}
\put (445,720) {\makebox(0,0) [lb] {\raisebox{0pt}[0pt][0pt]{ $M^O_C$}}}
\put (360,565) {\makebox(0,0) [lb] {\raisebox{0pt}[0pt][0pt]{
$\hspace{-2mm}M^O_B$}}}
\put (360,720) {\makebox(0,0) [lb] {\raisebox{0pt}[0pt][0pt]{
$\hspace{-2mm}M^P_B$}}}
\put (225,565) {\makebox(0,0) [lb] {\raisebox{0pt}[0pt][0pt]{
$\hspace{-2mm}M^P_B$}}}
\put (225,720) {\makebox(0,0) [lb] {\raisebox{0pt}[0pt][0pt]{
$\hspace{-2mm}M^O_B$}}}
\put (135,565) {\makebox(0,0) [lb] {\raisebox{0pt}[0pt][0pt]{ $M^O_A$}}}
\put (135,720) {\makebox(0,0) [lb] {\raisebox{0pt}[0pt][0pt]{ $M^P_A$}}}
\put (405,645) {\makebox(0,0) [lb] {\raisebox{0pt}[0pt][0pt]{ g}}}
\put (190,645) {\makebox(0,0) [lb] {\raisebox{0pt}[0pt][0pt]{ f}}}
\end{picture}

\end{center}

Assume that the Opponent starts in $C$.  There are two
possible cases:
\begin{itemize}
\item  The move is mapped by $g$ to a response in $C$: In this case,
this is the response of the function $h$.
\item The move is mapped by $g$ to a response in $B$.  In this
  case, this response is interpreted as a move of the Opponent in
  $B$ and fed as input to $f$.  In turn, if $f$ responds in
  $A$, this is the response of the function $h$.  Otherwise, if $f$
  responds in $B$, this is fed back to $g$.  In this way, we
  get an internal dialogue between the strategies $f$ and $g$.
\end{itemize}

It remains to give a formula for computing $h$ according to these
ideas.  This is the execution formula:
\[ h = {\tt EX}(f, g) =  \bigvee_{k \in \omega} m_k . \]
The join in the definition of $h$ can be interpreted concretely as union of
graphs.  It is well-defined because it is being applied to a family of
partial functions with pairwise disjoint domains of definition.  The
functions $m_k: M_A^P + M_C^O \rightharpoonup M_A^O + M_C^P $ are defined by
\[ m_k = \pi^{\star} \circ (( f + g) \circ \mu)^{k}  \circ (
f + g) \circ \pi . \]
The idea is that $m_k$ is the function which, when
defined, feeds an input from $M_A^P$ or $M^O_C$ exactly $k$ times around
the channels of the internal feedback loop and then exits from $M^O_A$ or
$M^P_C$.  The retraction
\[ \pi: M_A
+ M_C \lhd M_A +M_B+M_B+M_C : \pi^{\star} \]
is defined by
\[
\pi^{\star} = [{\tt inl},0,0,{\tt inr}] \;\;\;\;\;
\pi = [{\tt in}_1,{\tt in}_4]
\]
and the ``message exchange'' function
$\mu: M_A^O +M_B^P+M_B^O+M_C^P \rightharpoonup M_A^P +M_B^O+M_B^P+M_C^O$
is defined by
\[ \mu = 0 + [{\tt inr},{\tt inl}] + 0. \]
Here, $0$ is the everywhere undefined partial function.

The fact that this definition of composition coincides with that given
previously in terms
of sets of positions is proved in \cite[Proposition~3]{AbramskyS:gamfcm}.

\begin{proposition}
Composition is monotone with respect to $\Subeq$:
  \[ \sigma , \sigma' :A\Fr B, \; \tau , \tau' :B\Fr C, \; \sigma\Subeq\sigma', \
  \tau\Subeq\tau' \THEN\ \sigma;\tau\Subeq\sigma';\tau' . \]
\end{proposition}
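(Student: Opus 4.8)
The plan is to lift the whole statement to the level of the parallel‑composition (interaction) sequences. Recall that $\sigma;\tau=\{u\Rest A,C\mid u\in\sigma\|\tau\}^{\tt even}$, and that since our strategies are history‑free they are induced by partial functions on moves (via $\B{fun}$); by the execution formula $\B{EX}$, and the agreement of the two descriptions of composition, the interaction sequence $u\in\sigma\|\tau$ witnessing a given composite position is \emph{uniquely determined} by that position. I would then introduce an interaction‑level equivalence: for $u\in\sigma\|\tau$ and $u'\in\sigma'\|\tau'$ (both over $M_A+M_B+M_C$) put $u\Deq u'$ iff $u\Rest A,B\Deq_{A\linimpl B}u'\Rest A,B$, $u\Rest B,C\Deq_{B\linimpl C}u'\Rest B,C$, and $\Fst^{\star}(u)=\Fst^{\star}(u')$ (the two sequences visit the same components in the same order). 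This is a partial equivalence, each clause being one. The proposition then reduces to a \emph{Zipping Lemma}: assuming $\sigma\Subeq\sigma'$ and $\tau\Subeq\tau'$, for the witness $u$ of any $sab\in\sigma;\tau$ one can build $u'\in\sigma'\|\tau'$ with $u\Deq u'$ whose external Opponent moves are prescribed by a given $\Deq$‑match of $u\Rest A,C$.

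The lemma I would prove by induction on $|u|$, processing $u$ one move at a time and building $u'$ in parallel while maintaining $u\Deq u'$. An external Opponent move of $u$ (a composite move lying in $A$ or $C$) is matched by the corresponding move of the given equivalent external position $s'a'$: this is exactly the datum $sa\Deq s'a'$ (and $s\Deq s'$ for the earlier ones, via clause {\bf (e2)}) supplied by the $\Subeq$ test. Every remaining move is a Player response of one of the two strategies: by the Switching Condition each move in $B$ is, at the instant it is played, a P‑move for precisely one of $\sigma,\tau$ (and an O‑move handed to the other), while each external P‑move is a response of $\sigma$ (in $A$) or $\tau$ (in $C$). For such a move I apply $\sigma\Subeq\sigma'$ (resp. $\tau\Subeq\tau'$) to the relevant restriction $u\Rest A,B$ (resp. $u\Rest B,C$): that restriction is already $\Deq$‑matched to $u'$'s by the induction hypothesis, so the definition of $\Subeq$ delivers a matching response in $\sigma'$ (resp. $\tau'$) together with the required equivalence of the extended positions. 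Appending it re‑establishes $u\Deq u'$; one checks that extending one restriction by a $B$‑move keeps the other matched, since the two share the same $B$‑projection and both $\Deq_{A\linimpl B},\Deq_{B\linimpl C}$ entail $\Deq_B$ on it, and the component tracking $\Fst^{\star}$ stays aligned.

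To finish I must ensure the $u'$ so built really lifts the \emph{given} $s'$, not merely a $\Deq$‑equivalent position. Here history‑freeness is essential: once the external Opponent moves of $u'$ are fixed to be those of $s'a'$, the inducing functions $\B{fun}(\sigma'),\B{fun}(\tau')$ determine all of $u'$ uniquely via $\B{EX}$, so the lift of $s'$ is unique and the zipping reconstructs exactly it—in particular the external Player moves forced by the construction coincide with those already in $s'$. Thus, running the zip with external Opponent moves read off from $s'a'$, I obtain $u'\in\sigma'\|\tau'$ with $u\Deq u'$ and $u'\Rest A,C=s'a'b'$ for some $b'$, whence $s'a'b'\in\sigma';\tau'$. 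Finally $u\Deq u'$ descends to the composite: restricting its three clauses to $A,C$ yields $u\Rest A\Deq_A u'\Rest A$, $u\Rest C\Deq_C u'\Rest C$ and $\Fst^{\star}(u\Rest A,C)=\Fst^{\star}(u'\Rest A,C)$, i.e. $sab\Deq_{A\linimpl C}s'a'b'$, as required.

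The main obstacle is the bookkeeping in the zipping step: correctly identifying, at each internal move, which strategy is the mover (so as to invoke the right instance of $\Subeq$) and verifying that the single move‑by‑move equivalence provided by $\Subeq$ simultaneously preserves \emph{both} restriction‑equivalences and the tracking condition $\Fst^{\star}$. The Switching Condition and {\bf (e1)}--{\bf (e3)} are the tools for this. A secondary but indispensable point is the appeal to history‑freeness to pin the construction to the specific $s'$ in the statement; without determinism one would only land on a $\Deq$‑equivalent external position, which is weaker than the definition of $\Subeq$ demands. Termination of the internal dialogue is not at issue, as $u$ is a fixed finite sequence that the zip merely follows step for step.
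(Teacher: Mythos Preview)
Your approach is correct and essentially the same as the paper's: both proofs lift the problem to the interaction sequences in $\sigma\|\tau$, exploit the uniqueness of the witness $u$ guaranteed by history-freeness, and then ``zip'' along $u$ move-by-move, invoking $\sigma\Subeq\sigma'$ or $\tau\Subeq\tau'$ according to which strategy supplies the current Player move, to build a matching $u'\in\sigma'\|\tau'$. The paper's write-up is terser---it spells out only the base case $ca\in\sigma;\tau$ with $c\Deq c'$ and then gestures at the induction on $|s|$---whereas you make the interaction-level equivalence and the bookkeeping explicit, but the underlying argument is the same.
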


\begin{proof}  We follow the analysis of composition given in the
proof of Proposition~1 of \cite{AbramskyS:gamfcm}. Suppose $\sigma
\Subeq \sigma'$, $\tau \Subeq \tau'$, $ca \in \sigma ; \tau$ and
$c \Deq c'$. Then $ca = u \Rest A, C$ for uniquely determined $u =
cb_{1}\cdots b_{k}a$ such that $u \Rest A, B \in \sigma$, $u \Rest
B, C \in \tau$. We must have $c \in M_{C}$. Since $\tau \Subeq
\tau'$, $c'b_{1}' \in \tau'$ for unique $b_{1}'$, and $cb_{1} \Deq
c'b_{1}'$. Now $b_1 \in \B{dom}(\sigma )$ and $\sigma \Subeq
\sigma'$ implies that $b_{1}'b_{2}' \in \sigma'$ for unique
$b_{2}'$, and $b_{1}b_{2} \Deq b_{1}'b_{2}'$. Continuing in this
way, we obtain a uniquely determined sequence $u' = c'b_{1}'\cdots
b_{k}'a'$ such that $u' \Rest A, B \in \sigma'$, $u' \Rest B, C
\in \tau'$, and $ca \Deq c'a'$, as required. This argument is
extended to general strings $s \in \sigma ; \tau$ by an induction
on $|s|$. $\;\;\; $
\end{proof}

We say that a string $s \in (M_{A_{1}} + \ldots + M_{A_{n}})^{\star}$ is
{\em well-formed}
if it satisfies the bracketing condition and the stack discipline; and
{\em balanced} if it is well-formed, and the number of questions in $s$
equals the number of answers.
Note that these properties depend only on the string $\bar{s}$ obtained from $s$
by replacing each question in $A_{1}, \ldots , A_{n}$ by
$(_{1}, \ldots , (_{n}$ respectively, and each answer in $A_{1}, \ldots ,
A_{n}$ by $)_{1}, \cdots , )_{n}$ respectively.

\begin{lemma}
The balanced and well-formed strings in
$(M_{A_{1}} + \cdots + M_{A_{n}})^{\star}$
are generated by the following context-free grammar:
\[ \mbox{\sc bal} \;\;  ::=  \;\;  \epsilon \; | \; \mbox{\sc bal} \; \mbox{\sc bal}
\; | \; (_{i} \,  \mbox{\sc bal}\,  )_{i} \;\; (i = 1, \ldots , n) \]
\[ \mbox{\sc wf} \;\; ::= \;\; \epsilon \; | \; \mbox{\sc bal} \; \mbox{\sc wf} \; | \;
(_{i} \, \mbox{\sc wf} \;\; (i=1, \ldots , n) . \]
(More precisely, $s$ is well-formed (balanced) iff $\bar{s}$ is derivable from
{\sc wf} ({\sc bal}) in the above grammar.)
\end{lemma}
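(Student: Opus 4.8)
The plan is to reformulate everything in terms of the parenthesis encoding $\bar s$ over the alphabet $\{(_1,)_1,\dots,(_n,)_n\}$, so that the bracketing condition {\bf (p3)} and the stack discipline collapse into the single combinatorial requirement that $\bar s$ be a (prefix of a) Dyck word over $n$ bracket types. Concretely, I would process $\bar s$ from left to right while maintaining a stack of unmatched opening brackets: reading $(_i$ pushes $(_i$, while reading $)_i$ must pop a $(_i$ from the top. Popping the top matches the bracketing reading of {\bf (p3)} (each answer pairs with the most recent unanswered question), and demanding that the popped symbol carry the same index $i$ encodes the stack discipline. Then $s$ is \emph{well-formed} exactly when this run never fails (no underflow, no index mismatch), and \emph{balanced} exactly when moreover the stack is empty at the end. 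With this dictionary the two grammars are the classical Dyck grammar (\textsc{bal}) and its prefix grammar (\textsc{wf}), and I prove the lemma by establishing both inclusions for each.

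For soundness (every derivable string lies in the language) I would argue by structural induction on the grammar. For \textsc{bal}: $\epsilon$ is balanced; if $u,v$ are balanced then so is $uv$, since the stack is empty after reading $u$ and the run on $v$ then coincides with its standalone run; and if $u$ is balanced then $(_i\,u\,)_i$ is balanced, because the initial $(_i$ sits untouched at the bottom of the stack while $u$ runs with net effect zero, and is popped by the final $)_i$. For \textsc{wf} the analogous three checks give $\epsilon$ well-formed; $uv$ well-formed when $u$ is balanced and $v$ well-formed; and $(_i\,u$ well-formed when $u$ is, since prefixing an extra unmatched $(_i$ only raises the stack and can never introduce an underflow or mismatch inside $u$'s run.

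For completeness (every string of the language is derivable) I would induct on $|\bar s|$, the crucial ingredient being the \emph{first-matching decomposition}: if $\bar s \neq \epsilon$ is well-formed, its first symbol cannot be a closing bracket (that would underflow), so it is some $(_i$, and this opening is either matched by a unique later $)_i$ or left unmatched. If $\bar s$ is balanced and nonempty the initial $(_i$ is necessarily matched, giving $\bar s = (_i\,u\,)_i\,v$ with $u$ (the content strictly inside) and $v$ (the remainder) both balanced and strictly shorter; the induction hypothesis derives $u,v$ from \textsc{bal}, and then $\bar s$ follows via $\textsc{bal}\to\textsc{bal}\,\textsc{bal}$ with the first factor produced by $\textsc{bal}\to(_i\,\textsc{bal}\,)_i$. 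If $\bar s$ is merely well-formed I split on whether the initial $(_i$ is matched: when matched, $\bar s = (_i\,u\,)_i\,v$ with $u$ balanced and $v$ well-formed (its run starts from the empty stack, since the matched block restores it), handled by $\textsc{wf}\to\textsc{bal}\,\textsc{wf}$; when unmatched, deleting the initial $(_i$ leaves a well-formed $w'$, since its run from the empty stack coincides with the original run one level up which never touched that bottom bracket, handled by $\textsc{wf}\to(_i\,\textsc{wf}$.

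I expect the delicate part to be this completeness step, specifically making the first-matching decomposition rigorous: showing the matching close of the initial open is well-defined, that the inner factor $u$ is genuinely balanced (the stack returns to exactly the bottom bracket just before the matching close), and that the matched/unmatched dichotomy is exhaustive and always yields strictly shorter, still-valid substrings so the induction closes. Everything else is routine bookkeeping on the stack height, and the translation between the game-theoretic conditions and the parenthesis conditions is immediate once $\bar s$ is fixed.
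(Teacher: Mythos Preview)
Your proposal is correct and follows essentially the same approach as the paper: case-split on whether the initial opening bracket is matched, and in each case peel off a shorter well-formed or balanced factor to which the induction hypothesis applies. The paper's own proof is terser---it dismisses the soundness direction and the \textsc{bal} characterization as ``easy to see'' and only spells out the completeness induction for \textsc{wf}---whereas you treat all four directions explicitly and recast the conditions via a stack machine, but the core inductive decomposition $(_i\,u\,)_i\,v$ versus $(_i\,t$ with $t$ well-formed is identical.
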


\begin{proof} It is easy to see that the terminal strings derivable
from {\sc bal} are exactly the balanced ones, and that strings
derivable from {\sc wf} are well-formed. Now suppose that $s$ is
well-formed. We show by induction on $|s|$ that $s$ is derivable
from {\sc wf}. If $s$ is non-empty, it must begin with a question,
$s = (_{i}t$. If this question is not answered in $s$, then $t$ is
well-formed, and by induction hypothesis $t$ is derivable from
{\sc wf}, hence $s$ is derivable via the production $\mbox{\sc wf}
\rightarrow (_{i} \mbox{\sc wf}$. If this question is answered, so
$s = (_{i} u )_{i} v$, then $(_{i} u )_{i}$ is balanced, and hence
derivable from {\sc bal}, and $v$ is well-formed, and so by
induction hypothesis derivable from {\sc wf}. Then $s$ is
derivable from {\sc wf} via the production $\mbox{\sc wf}
\rightarrow\mbox{\sc bal} \; \mbox{\sc wf}$. $\;\;\; $
\end{proof}

\begin{lemma}[Projection Lemma]
If $s \in (M_{A_{1}} + \cdots + M_{A_{n}})^{\star}$ is well-formed (balanced),
then
so is $s \Rest A_{i_1}, \ldots , A_{i_k}$ for any subsequence
$A_{i_1}, \ldots , A_{i_k}$ of $A_{1}, \ldots , A_{n}$.
\end{lemma}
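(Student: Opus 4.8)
The plan is to reduce the statement to the context-free grammar characterization established in the preceding lemma, exploiting the observation already recorded in the text that well-formedness and balancedness of $s$ depend only on the barred string $\bar{s}$ (obtained by replacing each question in $A_{j}$ by $(_{j}$ and each answer by $)_{j}$). The crucial remark is that, under this barring, the restriction $s \Rest A_{i_{1}}, \ldots , A_{i_{k}}$ corresponds exactly to the operation on $\bar{s}$ that \emph{erases} every parenthesis $(_{i}$ and $)_{i}$ whose index $i$ does not lie in $J = \{ i_{1}, \ldots , i_{k} \}$; call this erasure operation $\rho_{J}$. Since deleting moves outside the listed components is precisely deleting the correspondingly-coloured parentheses, it suffices to prove that $\rho_{J}$ sends strings derivable from {\sc bal} to strings derivable from {\sc bal}, and strings derivable from {\sc wf} to strings derivable from {\sc wf}. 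The grammar lemma then transfers this back to the statement for balanced and well-formed $s$ respectively.

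First I would set up a \emph{simultaneous} induction on the structure of the grammar derivation, establishing both claims at once; this simultaneity is forced because the {\sc wf} productions refer to {\sc bal}. For the {\sc bal} case, the productions $\epsilon$ and $\mbox{\sc bal}\,\mbox{\sc bal}$ are immediate, as $\rho_{J}$ fixes $\epsilon$ and commutes with concatenation. The only production needing thought is $(_{i}\,\mbox{\sc bal}\,)_{i}$, where I split on whether $i \in J$: if $i \in J$ the outer pair survives and the induction hypothesis re-derives the interior, giving a {\sc bal}-string via the same production; if $i \notin J$ the outer pair is deleted and $\rho_{J}(w)$ is just the image of the interior, which is a {\sc bal}-string by hypothesis. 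The {\sc wf} case runs identically: $\epsilon$ is trivial; the production $\mbox{\sc bal}\,\mbox{\sc wf}$ uses both induction hypotheses together with the fact that $\rho_{J}$ distributes over concatenation; and the production $(_{i}\,\mbox{\sc wf}$ again splits on $i \in J$, retaining the leading parenthesis when $i \in J$ and otherwise passing directly to the interior.

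I do not anticipate a genuine obstacle: the entire content lies in recognizing that restriction \emph{is} erasure of parenthesis colours, and that this erasure respects the grammar. The only points demanding care are to run the two inductions together rather than sequentially, and to treat the deletion subcase $i \notin J$ explicitly in each of the two productions that introduce an indexed parenthesis. The balanced claim needs nothing extra beyond the {\sc bal} branch, since balancedness is well-formedness plus equality of question and answer counts, and that count is tracked automatically by derivability from {\sc bal}.
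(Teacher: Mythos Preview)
Your proposal is correct and follows essentially the same approach as the paper: induction on the grammar derivation of $\bar{s}$ from {\sc bal} or {\sc wf}, using the previous lemma's characterization. You are more explicit than the paper about the erasure interpretation and the case split on whether the index $i$ lies in the retained set $J$, but the underlying argument is the same.
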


\begin{proof} We use the characterization of well-formed and
balanced strings from the previous lemma, and argue by induction
on the size of the derivation of $s$ from {\sc wf} or {\sc bal}.
Suppose $s$ is well-formed. If $s$ is empty, the result is
immediate. If $s$ is derivable via $\mbox{\sc wf} \rightarrow
\mbox{\sc bal} \; \mbox{\sc wf}$, so $s = tu$ where $t$ is
balanced and $u$ is well-formed, then we can apply the induction
hypothesis to $t$ and $u$. Similarly when $s = (_{i} t$ where $t$
is well-formed, we can apply the induction hypothesis to $t$. The
argument when $s$ is balanced is similar. $\;\;\; $
\end{proof}

\begin{lemma}[Parity Lemma]
If $s \in \sigma \| \tau $ is such that $s = tmun$, where
$m$, $n$ are moves in the ``visible'' components $A$ and $C$, then:
\begin{itemize}
\item if $m$, $n$ are in the {\em same} component, then $|u \Rest B|$ is even.
\item if $m$, $n$ are in {\em different} components, then $|u \Rest B|$ is odd.
\end{itemize}
\end{lemma}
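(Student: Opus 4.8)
The plan is to reduce the statement to the case in which $m$ and $n$ are \emph{consecutive} visible moves, and then to read off the parity of the hidden run between them from the Switching Condition together with the alternation law {\bf (p2)}.

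\emph{Reduction.} First I would dispose of any intermediate visible moves. If $u$ contains visible moves $p_1, \ldots , p_r$ (in order), then $m, p_1, \ldots , p_r, n$ are the successive visible moves of $s$ in this segment, and $u \Rest B$ is the concatenation of the hidden blocks lying strictly between consecutive ones. Granting the consecutive case, each such block has length congruent mod $2$ to $0$ or $1$ according as its two flanking visible moves lie in the same or in different components; summing over the $r+1$ blocks, $|u \Rest B|$ is congruent mod $2$ to the number of adjacent visible pairs that switch component, which telescopes to $0$ when $m,n$ lie in the same component and to $1$ otherwise. Hence it suffices to treat $u = b_1 \cdots b_j \in M_B^{\star}$.

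\emph{Core argument.} Assume $j \geq 1$. Since $b_1, \ldots , b_j$ are pairwise adjacent in $s$, each pair $b_i b_{i+1}$ is also a pair of successive moves in both $s \Rest A, B$ and $s \Rest B, C$; by {\bf (p2)} their $\lambda_B^{PO}$-labels alternate, so $b_1$ and $b_j$ carry the \emph{same} label iff $j$ is odd. It remains to pin down these two labels from $m$ and $n$. The move $m$ is adjacent to $b_1$: if $m \in A$ then $m b_1$ is a component-switch in $A \linimpl B$, so by the Switching Condition $b_1$ is a Player move there, giving $\lambda_B^{PO}(b_1) = P$; if $m \in C$ then $m b_1$ switches in $B \linimpl C$, so $b_1$ is a Player move there and $\lambda_B^{PO}(b_1) = O$. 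Symmetrically, $n$ is adjacent to $b_j$: applying the Switching Condition and {\bf (p2)} in $A \linimpl B$ gives $\lambda_B^{PO}(b_j) = O$ when $n \in A$, and in $B \linimpl C$ gives $\lambda_B^{PO}(b_j) = P$ when $n \in C$. Running through the four cases, $b_1$ and $b_j$ get equal labels exactly when $m$ and $n$ lie in different components; combined with the alternation count this yields $j$ odd for different components and $j$ even for the same component, as required.

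\emph{The main obstacle} is the boundary case $j = 0$: I must show that two adjacent visible moves always lie in the \emph{same} component, so that an empty hidden run never separates moves in different components. This is the assertion that the interaction is single-threaded, which I would prove by induction on $|s|$, tracking for each prefix $s'$ the pair of turns $(x,y)$, where $x \in \{O,P\}$ is the player to move in $A \linimpl B$ (thus $x = P$ iff $|s' \Rest A, B|$ is odd) and $y$ is defined symmetrically for $B \linimpl C$. A $B$-move flips both turns while a visible move flips exactly one, so $x$ and $y$ differ precisely after an odd number of visible moves. The heart of the induction is that the state $(P,P)$ --- both $\sigma$ and $\tau$ poised to move --- is never reached. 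The delicate point is to maintain simultaneously the refinement that when $\sigma$ (resp.\ $\tau$) is the party to move, the last move of $s' \Rest B, C$ (resp.\ $s' \Rest A, B$) lies in $B$; this is exactly what lets the Switching Condition veto the only transitions that could enter $(P,P)$ --- a $C$-move out of a $\sigma$-active state, or an $A$-move out of a $\tau$-active state --- since an Opponent move may not switch component. Pushing this bookkeeping cleanly through every transition case is where the real work lies; granted single-threadedness, adjacent visible moves in different components cannot occur and the argument is complete.
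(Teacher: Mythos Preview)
Your argument is correct and follows the same route as the paper: reduce to the case where $u$ lies entirely in $B$, then read off the parity of the block from its endpoints via the Switching Condition (which the paper phrases informally as the first and last $B$-moves being ``by $\sigma$'' or ``by $\tau$''). You are in fact more careful than the paper in isolating the $j=0$ boundary case --- the paper's base case tacitly assumes $u$ is nonempty when $m,n$ lie in different components --- and your single-threadedness sketch correctly supplies what is needed there.
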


\begin{proof} Firstly, we consider the case where all moves in $u$
are in $B$. Suppose for example that $m$ and $n$ are both in $A$.
Then the first move in $u$ is by $\sigma$, while the last move is
by $\tau$, since it must have been $\sigma$ which returned to $A$.
Thus $|u|$ is even. Similarly if $m$ and $n$ are both in $C$. Now
suppose that $m$ is in $A$ while $n$ is in $C$. Then the first and
last moves in $u$ were both by $\sigma$, so $|u|$ is odd; and
similarly if $m$ is in $C$ and $n$ is in $A$.

Now we consider the general case, and argue by induction on $|u|$.
Suppose $m$ and $n$ are both in $A$. Let $u = u_{1}m_{1}u_{2}$,
where all moves in $u_{1}$ are in $B$. Suppose firstly that $m_1$
is in $A$; then $|u_{1}|$ is even, and by induction hypothesis
$|u_{2} \Rest B|$ is even, so $|u \Rest B|$ is even. If $m_1$ is
in $C$, then $|u_{1}|$ is odd, and by induction hypothesis $|u_{2}
\Rest B|$ is odd, so $|u \Rest B|$ is even. The other cases are
handled similarly. $\;\;\; $
\end{proof}

\begin{proposition}
If $\sigma : A \Fr B$ and $\tau : B \Fr C$, then $\sigma ; \tau$ satisfies
the bracketing condition and the stack discipline.
\end{proposition}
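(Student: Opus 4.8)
The plan is to reduce the statement to the \emph{well-formedness} of the full interaction sequences and then invoke the Projection Lemma. Recall that every element of $\sigma ; \tau$ has the form $s \Rest A, C$ for some $s \in \sigma \| \tau$, and that $s \in \sigma \| \tau$ means precisely $s \Rest A, B \in \Over{\sigma} \Incl P_{A \linimpl B}$ and $s \Rest B, C \in \Over{\tau} \Incl P_{B \linimpl C}$. By the definition of $\linimpl$, each of these two restrictions satisfies the bracketing condition and the stack discipline, i.e.\ is well-formed as a string over its two components. My main goal is to promote this to: every $s \in \sigma \| \tau$ is well-formed as a string over the three components $M_A + M_B + M_C$. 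Granting this, the Projection Lemma applied to the subsequence $A, C$ shows that $s \Rest A, C$ is well-formed over $A, C$ --- which is exactly the assertion that $\sigma ; \tau$ satisfies the bracketing condition and the stack discipline. Thus all the content lies in establishing well-formedness of the three-component interaction sequence.

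To build the global well-bracketing of $s$, I first observe that the question/answer matching on the $B$-moves is unambiguous. Since the stack discipline holds in both $A \linimpl B$ and $B \linimpl C$, every $B$-answer is matched to a $B$-question, and projecting away the $A$- (resp.\ $C$-) moves shows that in each case this $B$-matching is just the well-bracketing of the subsequence $s \Rest B$ (using the Projection Lemma to see that $s \Rest B$ is itself well-formed). Hence the $B$-matching read off from the $\sigma$-side and from the $\tau$-side coincide. I can then assemble a single matching on all of $s$: the $A$-pairs and $B$-pairs coming from $s \Rest A, B$, the $C$-pairs and $B$-pairs coming from $s \Rest B, C$, with the two copies of the $B$-pairs agreeing. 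Every answer is matched within its own component, so the stack discipline for $s$ will follow once I know this matching is the correct innermost one, i.e.\ once I show it is properly nested; the bracketing inequality on prefixes likewise follows, since each answer is matched to a strictly earlier question. It is convenient to verify nesting by induction on $|s|$ along the grammar of the earlier Lemma.

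The crux --- and the step I expect to be the main obstacle --- is ruling out \emph{crossing} pairs. Two matched pairs that both live inside a single restriction cannot cross, since nesting is inherited from the well-bracketing of $s \Rest A, B$ or of $s \Rest B, C$ (whether two pairs cross depends only on the relative order of their four endpoints, which restriction preserves). This disposes of every combination except a matched $A$-pair interleaved crosswise with a matched $C$-pair, precisely because $A$- and $C$-moves never occur together in either restriction. To exclude this configuration I would use the Parity Lemma together with the Switching Condition: the parity of the number of intervening $B$-moves between two visible moves pins down which of $\sigma$, $\tau$ holds control, and tracking control across the span of an open $A$-question (resp.\ $C$-question) forces the $A$- and $C$-activity to be nested rather than crossing. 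Once crossing is excluded, nesting, the bracketing inequality, and the stack discipline for $s$ all follow, and a final application of the Projection Lemma to the subsequence $A, C$ yields the proposition.
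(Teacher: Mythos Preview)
Your overall plan---reduce to showing every $s \in \sigma \| \tau$ is well-formed over $M_A + M_B + M_C$, then invoke the Projection Lemma for $s \Rest A,C$---is exactly the paper's strategy, and your identification of the $A$--$C$ interaction as the crux, to be resolved via the Parity Lemma, is also right.

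The difference is organizational, and it matters for the one step you leave vague. The paper argues by induction on $|s|$: if the last move $n$ is an answer, the induction hypothesis makes the preceding string already well-formed, so the innermost matching question $m$ exists and the segment $v$ strictly between $m$ and $n$ is balanced \emph{in the full three-component string}. One Projection then gives $v \Rest B$ balanced, hence of even length, and the Parity Lemma forces $m,n$ into the same component. That is the entire $A$--$C$ argument in the paper: two lines.

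Your ``assemble the matching from the two restrictions, then exclude crossings'' route is also correct, but your sketch of the $A$--$C$ case does not yet close. The Parity Lemma alone is not enough: if $q_A < q_C < a_A < a_C$ with $(q_A,a_A)$ an $A$-pair and $(q_C,a_C)$ a $C$-pair, and $B_1,B_2,B_3$ denote the $B$-moves in the three open intervals, the Parity Lemma only says each $|B_i|$ is odd, which is perfectly consistent. What you must use is that $B_1 B_2$ and $B_2 B_3$ are \emph{balanced} (obtained by projecting the balanced segments between the matched pairs in $s \Rest A,B$ and $s \Rest B,C$ respectively, via the Projection Lemma). From $B_2 B_3$ balanced, $B_2$ is a prefix with non-negative running height; from $B_1 B_2$ balanced and $h(B_1) \geq 0$, one gets $h(B_2) \leq 0$. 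Hence $h(B_2) = 0$ and $B_2$ is itself balanced, so $|B_2|$ is even, contradicting the Parity Lemma. This is the missing ingredient in your sketch; the paper's inductive organization sidesteps it because balancedness of the middle segment in the full string is available for free from the hypothesis.
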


\begin{proof} By the Projection Lemma, it suffices to verify that
every $s \in \sigma \| \tau$ is well-formed. We argue by induction
on $|s|$. The basis is trivial. Suppose $s = tm$. If $m$ is a
question, it cannot destroy well-formedness. If $m$ is an answer
with no matching question, then by induction hypothesis $t$ is
balanced. Suppose $m$ is in $A$ or $B$; then by the Projection
Lemma, $t \Rest A, B$ is balanced, so $m$ has no matching question
in $s \Rest A,B = (t \Rest A, B)m$, contradicting $s \Rest A, B
\in \sigma$. A similar argument applies when $m$ is in $B$ or $C$.

So we need only consider $s = umvn$ where $m$, $n$ are a matching question-answer pair.
It remains to show  that $m$ and $n$ must be in the same component.
Suppose firstly that $m$ and $n$ both occur in $A$ or $B$. Note that $v$ is
balanced, and then by the Projection Lemma, so is $v \Rest A,B$.
So $m$ and $n$ will be paired in $s \Rest A, B \in \sigma$, and hence they
must be in the same component. Similarly when $m$ and $n$ are both in $B$ or
$C$.

The final case to be considered is when $m$ and $n$ both occur in
$A$ or $C$. Since $v$ is balanced, by the Projection Lemma so is
$v \Rest B$. It follows that $|v \Rest B|$ is even, so by the
Parity Lemma, $m$ and $n$ must be in the same component. $\;\;\;
$
\end{proof}

Combining Propositions~2.4.2 and~2.4.6 with Proposition~2 from \cite{AbramskyS:gamfcm}, we obtain:

\begin{proposition}
\Games\ is a category.
\end{proposition}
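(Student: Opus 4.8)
The plan is to check the three defining data of a category—a well-defined composition operation, identity morphisms, and the associativity and unit laws—bearing in mind throughout that a morphism $A \Fr B$ is a $\Deq$-class $[\sigma] \in \Hat{A \linimpl B}$ rather than a bare strategy, so every assertion must ultimately be made at the level of equivalence classes. Accordingly I would organise the verification so that the genuinely combinatorial facts are supplied by the preceding propositions, and the only genuinely new work is the descent to $\Deq$-classes.

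First I would check that composition of representatives lands where it should. Given $\sigma : A \Fr B$ and $\tau : B \Fr C$, Proposition~2.4.6 already shows that $\sigma ; \tau$ satisfies the bracketing condition and the stack discipline; combined with the Projection Lemma (restricting a play $s \in \sigma \| \tau$ to $A, C$) and the standard switching analysis this yields $s \Rest A, C \in \plays{A \linimpl C}$ and hence $\sigma ; \tau \Incl \even{P_{A \linimpl C}}$, while the execution-formula presentation $h = {\tt EX}(f,g)$ together with Proposition~3 of \cite{AbramskyS:gamfcm} shows $\sigma ; \tau = \sigma_h$ is again history-free. Thus $\sigma ; \tau$ is a strategy of the correct type. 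Next I would pass to classes using Proposition~2.4.2: since composition is monotone for $\Subeq$, the hypotheses $\sigma \Deq \sigma'$ and $\tau \Deq \tau'$ force $\sigma ; \tau \Deq \sigma' ; \tau'$, and specialising to $\sigma = \sigma'$, $\tau = \tau'$ gives $\sigma ; \tau \Deq \sigma ; \tau$, i.e. the composite is itself $\Deq$-reflexive. Hence setting $[\sigma] ; [\tau] = [\sigma ; \tau]$ is both independent of the choice of representatives and valued in $\Hat{A \linimpl C}$, so composition of morphisms is well defined.

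It then remains to exhibit identities and to check the equational laws. The copy-cat strategies $\ident{A}$ will serve as identities; these are $\Deq$-reflexive because $\Deq_{A \linimpl A}$ is determined by the two component copies of $\Deq_A$ together with the $\Fst^{\star}$-trace, all of which copy-cat respects. Associativity $(\sigma ; \tau) ; \upsilon = \sigma ; (\tau ; \upsilon)$ and the unit laws $\ident{A} ; \sigma = \sigma = \sigma ; \ident{B}$ hold already as genuine equalities of history-free strategies by Proposition~2 of \cite{AbramskyS:gamfcm}, and equalities of strategies are a fortiori equalities of $\Deq$-classes, so they descend without further argument. I expect essentially all the real content to sit in the passage to equivalence classes: the categorical equations are inherited wholesale from the linear category of games and history-free strategies of \cite{AbramskyS:gamfcm}, and the work specific to the present setting—namely that composition is compatible with $\Deq$ and that composites are $\Deq$-reflexive—is precisely what Proposition~2.4.2 delivers. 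The one point demanding care, rather than difficulty, is keeping the bookkeeping straight between the two levels (strategies versus $\Deq$-classes) so that well-definedness is not silently assumed where it must be invoked.
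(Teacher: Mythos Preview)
Your proposal is correct and follows essentially the same approach as the paper: the paper's proof is a single sentence invoking precisely Propositions~2.4.2 and~2.4.6 together with Proposition~2 of \cite{AbramskyS:gamfcm}, and you have simply unpacked what that combination entails—well-definedness on $\Deq$-classes from 2.4.2, the bracketing and stack discipline from 2.4.6, and the associativity and unit laws from the cited paper.
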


\subsection{$\Games$ as an autonomous category}
We have already defined the object part of the tensor product $A
\tensor B$, linear implication $A \linimpl B$, and the tensor unit $I$.
The action of tensor on morphisms is defined as follows.  If $\sigma_f: A
\rightarrow B, \; \sigma_g: A' \rightarrow B'$, then
$\sigma_f \tensor \sigma_g: A\tensor A' \rightarrow B \tensor B'$  is
induced by the partial function
\[ \begin{array}{l}
(M_A^{P} + M_{A'}^{P}) + (M_B^{O} + M_{B'}^{O}) \\
\cong (M_A^{P} + M_{B}^{O}) + (
M_{A'}^{P}+ M_{B'}^{O})
\stackrel{f+g}{\Pfr} (M_A^{O} + M_{B}^{P})+ ( M_{A'}^{O}+ M_{B'}^{P}) \\
\cong (M_A^{O} + M_{A'}^{O}) + (M_B^{P} + M_{B'}^{P}) .
\end{array} \]
The natural isomorphisms for associativity, commutativity and unit of the
tensor product:
\[ \begin{array}{rcrcl}
{\tt assoc}_{A,B,C} & : & (A \tensor B) \tensor C & \cong & A \tensor (B \tensor C) \\
{\tt symm}_{A,B} & : & A \tensor B & \cong & B \tensor A \\
{\tt unit}_{A} & : & A \tensor I & \cong & A
\end{array} \]
are induced by the evident bijections on the sets of moves:
\[ ((M^{P}_A + M^{P}_B) + M^{P}_C) + (M^{O}_A + (M^{O}_B + M^{O}_C))  \cong
((M^{O}_A
+ M^{O}_B) + M^{O}_C) + (M^{P}_A + (M^{P}_B + M^{P}_C)) \]
\[ (M^{P}_{A} + M_{B}^{P}) + (M^{O}_{B} + M^{O}_{A})  \cong
(M_{A}^{O} + M^{O}_{B}) + (M_{B}^{P} + M^{P}_{A}) \]
\[ (M_{A}^{P} + \varnothing ) + M^{O}_{A}  \cong (M^{O}_{A} + \varnothing ) +
M_{A}^{P} .
\]
The application morphism ${\tt App}_{A,B}: (A \linimpl B) \tensor A
\rightarrow B$ is induced by
\[ ((M^{O}_A + M^{P}_B) + M^{P}_A) + M^{O}_B \cong ((M^{P}_A + M^{O}_B) + M^{O}_A) + M^{P}_B .
\]

\setlength{\unitlength}{0.0125in}
\begin{picture}(200,263)(40,540)
\thicklines
\put(300,620){\vector( 0,-1){ 60}}
\put(260,620){\vector( 0,-1){ 60}}
\put(220,620){\vector( 0,-1){ 60}}
\put(180,620){\vector( 0,-1){ 60}}
\put(300,780){\vector( 0,-1){ 60}}
\put(260,780){\vector( 0,-1){ 60}}
\put(220,780){\vector( 0,-1){ 60}}
\put(180,780){\vector( 0,-1){ 60}}
\put(180,780){\framebox(0,0){}}
\put(140,620){\framebox(200,100){}}
\put(300,720){\line(-4,-5){ 80}}
\put(220,720){\line( 4,-5){ 80}}
\put(260,720){\line(-4,-5){ 80}}
\put(180,720){\line( 4,-5){ 80}}
\put (205,550) {\makebox(0,0) [lb] {\raisebox{0pt}[0pt][0pt]{ $M^O_B$}}}
\put (285,785) {\makebox(0,0) [lb] {\raisebox{0pt}[0pt][0pt]{ $M^O_B$}}}
\put (165,550) {\makebox(0,0) [lb] {\raisebox{0pt}[0pt][0pt]{ $M^P_A$}}}
\put (245,785) {\makebox(0,0) [lb] {\raisebox{0pt}[0pt][0pt]{ $M^P_A$}}}
\put (285,550) {\makebox(0,0) [lb] {\raisebox{0pt}[0pt][0pt]{ $M^P_B$}}}
\put (205,785) {\makebox(0,0) [lb] {\raisebox{0pt}[0pt][0pt]{ $M^P_B$}}}
\put (240,550) {\makebox(0,0) [lb] {\raisebox{0pt}[0pt][0pt]{ $M^O_A$}}}
\put (165,785) {\makebox(0,0) [lb] {\raisebox{0pt}[0pt][0pt]{ $M^O_A$}}}
\end{picture}

% 140,540
This ``message switching'' function can be understood in algorithmic terms
as follows.  A demand for output from the application at $M^{O}_B$ is
switched to the function part of the input, $A \linimpl B$; a demand by the
function input for information about its input at $M_A^{O}$ is forwarded to
the input port $A$;  a reply with this information about the input at
$M^{P}_A$ is sent back to the function; an answer from the function to the
original demand for output at $M_B^{P}$ is sent back to the output port $B$.
Thus, this strategy does indeed correspond to a protocol for linear
function application---linear in that the ``state'' of the input changes as
we interact with it, and there are no other copies available
allowing us to backtrack.

As for currying, given
$\sigma_f: A\tensor B \rightarrow C$,
$\Lambda(\sigma_f) : A \rightarrow (B \linimpl C)$ is induced by
\[ M^{P}_A + (M^{P}_B + M^{O}_C) \cong (M^{P}_A + M^{P}_B) + M^{O}_C
 \stackrel{f}{\Pfr} (M^{O}_A + M^{O}_B) + M^{P}_C \cong M^{O}_A + (M^{O}_B
+ M^{P}_C) . \]

For discussion of these definitions, and most of the verification that they
work as claimed, we refer to Section~3.5 of \cite{AbramskyS:gamfcm}.

\begin{proposition}
\begin{enumerate}
\item If $\sigma \Deq \sigma'$ and $\tau \Deq \tau'$ then $\sigma \tensor \tau
\Deq \sigma' \tensor \tau'$.

\item $\sigma \tensor \tau$ satisfies the stack discipline.
\end{enumerate}
\end{proposition}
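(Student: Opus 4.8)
The plan is to take part (2) first, since part (1) presupposes that $\sigma \tensor \tau$ is a genuine strategy, i.e. that its positions really lie in $P_{(A \tensor A') \linimpl (B \tensor B')}$. Throughout write $G$ for $(A \tensor A') \linimpl (B \tensor B')$, and call $\{A,B\}$ the \emph{first factor} and $\{A',B'\}$ the \emph{second factor}. The single structural fact driving everything is that the partial function $h$ inducing $\sigma \tensor \tau$ is \emph{block-diagonal}: it acts as $f$ on first-factor moves and as $g$ on second-factor moves, so each Player reply lies in the same factor as the Opponent move that triggers it. Since elements of $\B{traces}(h)$ are built by appending pairs $ab$ with $b = h(a)$, every position of $\sigma \tensor \tau$ is a concatenation of adjacent Opponent/Player pairs lying in a common factor; in particular $s \Rest A,B$ and $s \Rest A',B'$ are again of even length.

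For part (2) I would first show, by induction on $|s|$, that every $s \in \B{traces}(h)$ satisfies $s \Rest A,B \in \Over{\sigma}$ and $s \Rest A',B' \in \Over{\tau}$: in the step $s = tab$ the definition of $\B{traces}$ supplies $ta \in P_G$, block-diagonality places $a,b$ in the same factor, the relevant restriction is extended exactly as $\sigma$ (or $\tau$) prescribes, and the other is untouched. As $\sigma : A \linimpl B$ and $\tau : A' \linimpl B'$ are strategies, these restrictions satisfy the stack discipline. It then remains to show that $s$ \emph{itself} does, i.e. that each answer is matched (under the bracketing, which holds for all traces by the Proposition establishing $\B{traces}(f) \Incl \plays{A}$) with a question in the \emph{same} one of the four games. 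This is the real content, and the purely combinatorial implication is false: the bracketed interleaving $(_A\,(_{A'}\,)_A\,)_{A'}$ restricts well-formedly to each factor yet crosses. The crossing is excluded precisely by how $\B{traces}(h)$ is generated: a Player answer cannot cross factors by block-diagonality, whereas a crossing Opponent answer would already violate the stack discipline at the odd prefix $ta$, contradicting $ta \in P_G$.

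I expect the main obstacle to be organising this into a clean induction: the needed fact is that \emph{whenever Player is about to play an answer, the innermost pending question lies in Player's current factor}. By block-diagonality that factor is the factor of the preceding Opponent move; the delicate case is when that Opponent move is itself an answer, so that closing the current top question exposes a new pending question, and one must verify the exposed question stays in the same factor. Locating this innermost pending question is exactly what the Switching Condition and the Parity Lemma are for, and strengthening the induction hypothesis enough to carry this invariant through the Opponent-answer case is where the work lies.

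For part (1), since $\Deq$ is the conjunction of $\Subeq$ read in both directions, it suffices to prove $\sigma \tensor \tau \Subeq \sigma' \tensor \tau'$ from $\sigma \Subeq \sigma'$ and $\tau \Subeq \tau'$ (the reverse inclusion is symmetric). So suppose $sab \in \sigma \tensor \tau$, $s' \in \sigma' \tensor \tau'$ and $sa \Deq_G s'a'$, and produce $b'$. The work is to unpack $\Deq_G$: since the equivalence on both $\linimpl$ and $\tensor$ is componentwise together with agreement of the $\Fst^{\star}$ index-sequences, $sa \Deq_G s'a'$ decomposes into $(sa)\Rest A,B \Deq (s'a')\Rest A,B$, its second-factor analogue, and agreement of the index-sequences at both nesting levels; in particular each move of $s'a'$ lies in the same one of $A,A',B,B'$ as its counterpart in $sa$, so $a'$ lies in $a$'s factor, say the first. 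Restricting there gives $(s\Rest A,B)\,ab \in \sigma$ and $(s'\Rest A,B)\,a' \in \Over{\sigma'}$ with $(s\Rest A,B)\,a \Deq (s'\Rest A,B)\,a'$, whence $\sigma \Subeq \sigma'$ yields $b'$ with $(s'\Rest A,B)\,a'b' \in \sigma'$ and $(s\Rest A,B)\,ab \Deq (s'\Rest A,B)\,a'b'$; history-freeness forces $b' = h'(a')$. Reassembling, appending $a'b'$ to $s'$ leaves the second factor untouched and extends the first by $a'b' \in \sigma'$, while $s'a'b' \in P_G$ follows from part (2) applied to $\sigma' \tensor \tau'$ (using $s'a' \in P_G$, which is transported from the well-formedness of $sa$ along $sa \Deq_G s'a'$). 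Finally, since $a,a',b,b'$ all sit in the first factor and match component-wise, with second-factor restrictions and all index-sequences unchanged, one reads off $sab \Deq_G s'a'b'$; the only fiddly point here is the two-level $\Fst^{\star}$ bookkeeping, which is routine once the factor of each move is pinned down.
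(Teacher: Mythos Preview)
The paper does not supply a proof of this proposition; the surrounding text defers the verification of the autonomous structure to \cite{AbramskyS:gamfcm}, so there is nothing here to compare your argument against directly. That said, your outline is sound on both parts.

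For part (1), the decomposition of $\Deq_G$ into factor-wise equivalences together with agreement of the two-level $\Fst^{\star}$ index strings is exactly right, and the reassembly goes through as you describe; the only point to tidy is that $s'a' \in P_G$ is already part of the hypothesis $sa \Deq_G s'a'$ (since $\Deq_G$ is a relation on $P_G$), so no transport argument is needed there.

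For part (2), you have correctly located the crux and the shape of the induction: block-diagonality controls Player, the clause $ta \in P_G$ in the definition of $\B{traces}(h)$ controls Opponent, and the residual case is when Opponent's move is an answer, exposing a new pending question whose factor must be tracked. The clean invariant that carries this through is: at every even-length prefix, the stack of open questions decomposes (bottom to top) into maximal single-factor blocks each of \emph{even} length. An Opponent move in factor $i$ then either pushes a question (starting or extending a factor-$i$ block to odd length) or, if an answer, is forced by $ta \in P_G$ to match the top of the stack, so $i$ must equal the factor of the top block, which becomes odd; Player's block-diagonal reply in factor $i$ then restores evenness, and if it is an answer it matches within that same factor-$i$ block. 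This sharpens your ``innermost pending question lies in Player's current factor'' exactly. The Switching Condition and Parity Lemma are suggestive analogues but are not the tools that actually close this case; the even-block invariant is what does the work.
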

\begin{proposition}
\Games\ is an autonomous category.
\end{proposition}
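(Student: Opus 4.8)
The plan is to show that $\Games$ is symmetric monoidal closed, that is, that the data already exhibited---the bifunctor $\tensor$, the unit $\tunit$, the natural isomorphisms ${\tt assoc}$, ${\tt symm}$, ${\tt unit}$, and the internal hom $\linimpl$ together with ${\tt App}$ and $\Lambda$---satisfy the axioms of an autonomous category. The whole argument rests on one methodological observation already implicit above: each structural morphism is induced by a \emph{bijection} on the underlying move-sets (the evident coproduct rearrangements), and composition of strategies is computed by the execution formula ${\tt EX}(f,g)$ from the inducing partial functions. A strategy induced by a bijection is a generalized copy-cat, and the execution formula of two such functions is again induced by the composite bijection---the internal feedback on the hidden component simply relays moves, since the inducing functions carry no ``internal'' dynamics of their own. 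Consequently every equation between composites of structural morphisms reduces to the corresponding equation between bijections on move-sets, and I would carry out all verifications at that level, invoking only the Projection and Parity Lemmas and the Switching Condition to guarantee that the execution formula is well defined and routes moves as intended.

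First I would finish checking that $\tensor$ is a bifunctor $\Games \times \Games \to \Games$. That $\sigma \tensor \tau$ is a well-defined morphism on partial equivalence classes and obeys the stack discipline is exactly the preceding proposition; the projection and bracketing conditions are immediate from the component-wise definition. Functoriality---$\ident{A}\tensor\ident{B}=\ident{A\tensor B}$ and $(\sigma;\sigma')\tensor(\tau;\tau')=(\sigma\tensor\tau);(\sigma'\tensor\tau')$---then follows because the two tensored feedback loops act on disjoint families of moves and do not interfere, so the composite function factors as the tensor of the two composite functions. Naturality of ${\tt assoc}$, ${\tt symm}$ and ${\tt unit}$ is of the same character: each naturality square becomes an identity between two composites of bijections and execution formulas, and since all maps involved are copy-cat-like, the two sides induce literally the same partial function on moves.

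Next I would verify the coherence conditions---Mac\,Lane's pentagon and triangle for the monoidal structure, and the two hexagons for the symmetry. Because ${\tt assoc}$, ${\tt symm}$ and ${\tt unit}$ are induced by precisely the coherence isomorphisms of the coproduct (symmetric monoidal) structure $(+,\varnothing)$ on sets, transported to moves with the appropriate $P/O$ relabelling, each coherence diagram for $\Games$ reduces to the corresponding---and automatically valid---coherence diagram for disjoint union in $\mathbf{Set}$. There is nothing to check beyond confirming that the polarity bookkeeping in $\linimpl$ and $\tensor$ is respected, which it is by the construction of $\Over{\lambda_A}$.

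The main obstacle is the closed structure: establishing that $\Lambda$ furnishes a natural bijection $\Games(A\tensor B, C)\cong\Games(A, B\linimpl C)$ with counit ${\tt App}_{B,C}$, equivalently that the triangle identity $(\Lambda(\sigma)\tensor\ident{B});{\tt App}_{B,C}=\sigma$ holds, together with its dual and naturality in all arguments. Unlike the coherence isomorphisms, ${\tt App}$ genuinely composes with a copy-cat on $B$ through the execution formula, so here the internal dialogue on the hidden copy of $B$ is non-trivial and must be shown to collapse: a demand relayed by ${\tt App}$ into the function component travels through the $\ident{B}$ buffer and back, and one must check---using the Switching Condition and the bracketing/stack discipline to see that this loop always terminates and preserves the question/answer nesting---that the net effect is exactly the action of $\sigma$. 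This is the one place where the execution formula does real work. Rather than reproduce the bookkeeping, I would follow the verification of the autonomous (indeed $*$-autonomous) structure carried out for the same execution-formula composition in Section~3.5 of \cite{AbramskyS:gamfcm}, noting that our global condition {\bf (p1)} only restricts the admissible plays and so leaves those arguments intact; combined with the bifunctoriality, naturality and coherence established above, this yields that $\Games$ is an autonomous category.
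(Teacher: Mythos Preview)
Your proposal is correct and follows essentially the same approach as the paper. The paper itself gives no self-contained proof of this proposition: it describes the structural morphisms as induced by the evident bijections on move-sets, states Proposition~2.5.1 to ensure $\tensor$ is well defined on equivalence classes and respects the stack discipline, and for the actual verification that the autonomous-category axioms hold defers to Section~3.5 of \cite{AbramskyS:gamfcm}---which is exactly what you do, after making the reduction-to-bijections methodology more explicit than the paper bothers to.
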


\subsection{Products}
The game $A \with B$ is defined as follows.
\begin{eqnarray*}
M_{A \with B} &=& M_A + M_B  \\
\lambda_{A \with B} &=& [\lambda_A,\lambda_B] \\
P_{A \with B}&=&  P_A + P_B  \\
\Deq_{A \with B}&=& \Deq_{A} + \Deq_{B} .
\end{eqnarray*}
The projections
\[ A \stackrel{\tt fst}{\longleftarrow} A \with B
\stackrel{\tt snd}{\longrightarrow}
B \]
are induced by the partial injective maps
\[ (M_{A}^{P} + M_{B}^{P}) + M_{A}^{O} \Pfr (M_{A}^{O} + M_{B}^{O}) + M_{A}^{P}
\]
\[ (M_{A}^{P} + M_{B}^{P}) + M_{B}^{O} \Pfr (M_{A}^{O} + M_{B}^{O}) +
M_{B}^{P}
\]
which are undefined on $M_{B}^{P}$ and $M_{A}^{P}$ respectively.
Pairing {\sl cannot} be defined in general on history-free strategies in
${\cal G}$;
however, it can be defined on the co-Kleisli category for the
comonad $\ofcourse $, as we will see.

\subsection{Exponentials}
Our treatment of the exponentials is based on \cite{AbramskyS:gamexp}.
The game $\ofcourse A$ is defined as the ``infinite symmetric tensor power''
of $A$.
The symmetry is built in via the equivalence relation on positions.

\begin{itemize}
\item $M_{\ofcourse A} =\omega\times M_A=\Sum_{i\in\omega} M_A $, the
  disjoint union of countably many copies of the moves of $A$.  So, moves
  of $\ofcourse A$ have the form $(i, m )$, where $i$ is a natural
  number, called the index, and $m$ is a move of $A$.
\item Labelling is by source tupling:
$$ \lambda_{\ofcourse A}(i, a) =  \lambda_A (a) . $$
\item We write $s\Rest i$ to indicate the restriction to moves with
  index $i$.  $P_{\ofcourse A}$ is the set of all
  $s \in \plays{\ofcourse A}$
such that:
\begin{enumerate}
\item {\em Projection condition:} $(\forall i) \; [s\Rest i \in P_A]$.
\item {\em Stack discipline:} Every answer in $s$ is in the same index as the
corresponding question.
\end{enumerate}
\item Let $S(\omega )$ be the set of permutations on $\omega$.
\[ s \Deq_{\ofcourse A} t \; \Longleftrightarrow \; ( \exists \pi \in
S(\omega ) )
[(\forall i \in \omega . \, s \Rest i \Deq_{A} t \Rest \pi (i) ) \;
\wedge \; (\pi \, \circ \, {\tt fst})^{\ast}(s) = {\tt fst}^{\ast}(t) ] .
\]

\end{itemize}

\paragraph{Dereliction}
For each game $A$ and $i \in \omega$, we define a strategy
\[ {\tt der}_{A}^{i} : \ofcourse A \rightarrow A \]
induced by the partial function $h_{i}$:
\[ \begin{array}{lcl}
h_{i}(j, a) & = & \left\{ \begin{array}{ll}
a, & i=j \\
\mbox{undefined}, & i \neq j
\end{array} \right. \\
h_{i}(a) & = & (i, a) .
\end{array} \]
In matrix form
\[ h_{i} =  \Mat{0}{{\tt in}_{i}}{{\tt in}_{i}^{\ast}}{0} . \]

\begin{proposition}
\begin{enumerate}
\item For all $i$, $j$:
\[ {\tt der}_{A}^{i} \Deq {\tt der}_{A}^{j}. \]
\item \, ${\tt der}_{A}^{i}$ satisfies the stack discipline.
\end{enumerate}
\end{proposition}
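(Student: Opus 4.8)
The plan rests on a concrete description of ${\tt der}_{A}^{i}$ as a relabelled copy-cat strategy. Reading off the matrix form of $h_i$ together with the definition of $\B{traces}$, one sees that $h_i$ induces the set of those $s \in P^{\tt even}_{\ofcourse A \linimpl A}$ whose moves in $\ofcourse A$ all lie in copy $i$ and which satisfy $(s \Rest \ofcourse A) \Rest i = s \Rest A$; that is, $h_i$ plays copy-cat between the $i$-th copy of $A$ inside $\ofcourse A$ and the right-hand occurrence of $A$, exactly as $\ident{A}$ is characterised by $s \Rest 1 = s \Rest 2$. I would record this characterisation first, since both parts of the proposition reduce to structural facts about this forwarding behaviour.

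For part (1) the idea is that the choice of index is precisely the sort of coding convention that $\Deq_{\ofcourse A}$ quotients away, so the witnessing datum is the transposition $\pi = (i \; j) \in S(\omega)$. To establish ${\tt der}_{A}^{i} \Subeq {\tt der}_{A}^{j}$ I would verify the defining condition of $\Subeq$: given $sab \in {\tt der}_{A}^{i}$ and $s' \in {\tt der}_{A}^{j}$ with $sa \Deq s'a'$, the copy-cat determines $b$ uniquely (forward $a$ to the other side at index $i$), and the forwarded response $b'$ at index $j$ in ${\tt der}_{A}^{j}$ is the required witness. The hypothesis $sa \Deq s'a'$ unpacks, via the definition of $\Deq_{\ofcourse A \linimpl A}$, into component equivalences in $\ofcourse A$ and in $A$ together with agreement of the $\Fst^{\star}$-traces; since $s$ uses only index $i$ and $s'$ only index $j$, the permutation realising $s \Rest \ofcourse A \Deq_{\ofcourse A} s' \Rest \ofcourse A$ must carry $i$ to $j$, and $\pi$ does precisely this while fixing every other index. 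Checking that appending $b$ and $b'$ preserves all three conjuncts then yields $sab \Deq s'a'b'$, and the symmetric argument (using $\pi^{-1} = \pi$) gives the reverse inequality, hence $\Deq$.

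For part (2) I would observe that the copy-cat forwards each move without altering its underlying $A$-move, and therefore preserves the question/answer labelling: a question in the right-hand $A$ is forwarded as a question at index $i$ in $\ofcourse A$, and the matching answer at index $i$ is forwarded back as an answer in the right-hand $A$. Together with the bracketing condition --- already guaranteed for $\B{traces}(h_i)$ by the first Proposition of the excerpt --- this shows that each question-answer pair produced by the strategy lies in a single component, and indeed at a single index within $\ofcourse A$. This is precisely the stack discipline (in both senses: same component game, and same index), and it simultaneously secures the projection condition, so that $\B{traces}(h_i) \Incl P_{\ofcourse A \linimpl A}$.

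The reading-off in the first paragraph and the labelling check in part (2) are routine. The one genuinely delicate point, and the step I expect to be the main obstacle, is the index-tracking clause $(\pi \circ \Fst)^{\ast}(s \Rest \ofcourse A) = \Fst^{\ast}(s' \Rest \ofcourse A)$ from the definition of $\Deq_{\ofcourse A}$ in part (1): one must ensure that the transposition chosen to match $i$ with $j$ is compatible with the permutation already witnessing $sa \Deq s'a'$, rather than conflicting with it. Because only a single index is ever active in a dereliction play, this reduces to the observation that both permutations must agree on that one active index, which is what makes the choice $\pi = (i \; j)$ go through.
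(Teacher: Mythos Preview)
The paper does not give a proof of this proposition; it simply states it and moves on. Your approach is correct and fills the gap appropriately: the copy-cat characterisation of ${\tt der}_A^i$ is the right starting point, the transposition $(i\;j)$ is the natural witness for part (1), and the label-preservation argument handles part (2).

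One point worth making explicit in part (1): when you argue that the permutation witnessing $sa \Rest \ofcourse A \Deq_{\ofcourse A} s'a' \Rest \ofcourse A$ must send $i$ to $j$, you should note that this relies on $s' \Rest \ofcourse A$ being nonempty (so that index $j$ actually appears on the right and forces $\pi(i)=j$, which in turn forces the Opponent move $a'$ to land at index $j$). The case $s' = \epsilon$ needs a separate (trivial) observation: then $a'$ is the opening move, which by {\bf (p1)} must lie in the target $A$, so $h_j(a')$ is defined regardless. With that case split made explicit, the argument that $a' \in \B{dom}(h_j)$ is airtight and the rest goes through as you describe.
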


By virtue of this Proposition, we henceforth write ${\tt der}_{A}$, meaning
${\tt der}_{A}^{i}$ for arbitrary choice of $i$.

\paragraph{Promotion}
A {\em pairing function} is an injective map
\[ p : \omega \times \omega \rightarrowtail \omega . \]
Given $\sigma_{f} : \ofcourse A \rightarrow B$ and a pairing function $p$, we define
$\sigma^{\dagger}_{p} : \ofcourse A \rightarrow \ofcourse B$ as the strategy induced by the
partial function $f^{\dagger}_{p}$ defined by:
\[ \begin{array}{lcl}
f_{p}^{\dagger}(p(i,j),a) & = & \left\{
\begin{array}{ll}
(p(i,j'), a'), & f(j,a) = (j',a') \\
(i,b), & f(j,a) = b
\end{array} \right.
\\
f_{p}^{\dagger}(i,b) & = & \left\{
\begin{array}{ll}
(p(i,j),a), & f(b) = (j,a) \\
(i,b'), & f(b) = b' .
\end{array} \right.
\end{array} \]
In matrix form
\[ f^{\dagger}_{p} = \Mat{t \circ (1 \times f_{1,1}) \circ t^{\ast}}{t \circ
(1 \times f_{1,2})}{(1 \times f_{2,1}) \circ t^{\ast}}{1 \times f_{2,2}}
\]
where
\[ t(i, (j, a)) = (p(i,j), a) . \]

\begin{proposition}
\begin{enumerate}
\item \, If $\sigma, \tau : \ofcourse A \rightarrow B$, $\sigma \Deq \tau$, and $p$, $q$
are pairing functions, then $\sigma^{\dagger}_{p} \Deq \tau^{\dagger}_{q}$.
\item \, $\sigma^{\dagger}_{p}$ satisfies the stack discipline.
\end{enumerate}
\end{proposition}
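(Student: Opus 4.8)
The plan is to factor both parts through a single \emph{thread-decomposition} lemma. Fix the reindexing $t_{p}(i,(j,a)) = (p(i,j),a)$ occurring in the definition of $f^{\dagger}_{p}$, and call the $\ofcourse B$-index $i$ together with the fibre $\{ p(i,j) \mid j \in \omega \}$ of $\ofcourse A$-indices the \emph{$i$-th thread}. Reading off the matrix for $f^{\dagger}_{p}$ gives two observations: $f^{\dagger}_{p}$ always answers an Opponent move lying in thread $i$ by a Player move in the same thread $i$ (the outer index $i$ is preserved by $t_{p}$ and by each $1 \times f_{k,l}$), and a \emph{fresh} $\ofcourse A$-index $p(i,j)$ can only be created as a Player output, within the thread $i$ of the Opponent move being answered. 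Hence, in any position reachable by $\sigma^{\dagger}_{p}$, the thread of every move is intrinsically determined (given $p$), the only $\ofcourse A$-indices occurring lie in the range of $p$ (so, $p$ being injective, are uniquely decoded), and, up to $t_{p}$, the moves of thread $i$ are played exactly as $f$ would play them. I would record this as the Decomposition Lemma: for $s \in \plays{\ofcourse A \linimpl \ofcourse B}$, writing $s @_{p} i$ for the thread-$i$ projection of $s$ renamed by $t_{p}^{\ast}$ into a string over $\ofcourse A \linimpl B$, one has $s \in \Over{\sigma^{\dagger}_{p}}$ iff $s @_{p} i \in \Over{\sigma}$ for every $i$; the proof is an induction on $|s|$ driven by the two observations above.

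For part~2, I would first note that the Proposition that $\B{traces}(f) \Incl \plays{A}$, applied to $f^{\dagger}_{p}$, already yields {\bf (p1)}--{\bf (p3)} for every position of $\sigma^{\dagger}_{p}$. Because $f^{\dagger}_{p}$ responds within the thread of the move it reacts to, each matched question--answer pair (determined by the bracketing {\bf (p3)}) lies entirely in one thread. Within that thread $s @_{p} i \in \Over{\sigma}$ is a position of $\ofcourse A \linimpl B$, which obeys the stack discipline since $\sigma : \ofcourse A \Fr B$ means $\sigma \Deq \sigma$. Transporting back along $t_{p}$, a matched pair in the $B$-part of the thread becomes a pair of $\ofcourse B$-moves at one and the same index $i$, and a pair in the $\ofcourse A$-part at index $j$ becomes a pair at index $p(i,j)$; in either case the answer is returned in the same component and index as its question, which is precisely the stack discipline for $\ofcourse A \linimpl \ofcourse B$.

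For part~1 it suffices to prove the monotonicity statement
\[ \sigma \Subeq \tau \THEN \sigma^{\dagger}_{p} \Subeq \tau^{\dagger}_{q} \qquad \mbox{for all pairing functions } p, q, \]
since $\Deq$ is by definition the conjunction of $\Subeq$ with its converse: applied to $\sigma \Subeq \tau$ and $\tau \Subeq \sigma$ it gives $\sigma^{\dagger}_{p} \Subeq \tau^{\dagger}_{q}$ and $\tau^{\dagger}_{q} \Subeq \sigma^{\dagger}_{p}$, hence $\sigma^{\dagger}_{p} \Deq \tau^{\dagger}_{q}$ (the instance $\sigma = \tau$, $p \neq q$ also records that $\sigma^{\dagger}$ is independent of the pairing function up to $\Deq$). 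To prove the displayed implication I would unfold the definition of $\Subeq$: given $sab \in \sigma^{\dagger}_{p}$, $s' \in \tau^{\dagger}_{q}$ and $sa \Deq s'a'$, I must supply $b'$ with $s'a'b' \in \tau^{\dagger}_{q}$ and $sab \Deq s'a'b'$. Now $\Deq_{\ofcourse A \linimpl \ofcourse B}$ matches $sa$ and $s'a'$ position by position, renaming $\ofcourse B$-indices by some $\pi_{B}$ and $\ofcourse A$-indices by some $\pi_{A}$; combining this with the freshness observation and the Decomposition Lemma forces these renamings to respect threads, i.e. $\pi_{A}(p(i,j)) = q(\pi_{B}(i),\theta_{i}(j))$ for suitable per-thread permutations $\theta_{i}$, and exhibits $sa @_{p} i \Deq s'a' @_{q} \pi_{B}(i)$ in $\ofcourse A \linimpl B$ for each $i$. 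Locating the Opponent move $a$ in its thread $i_{0}$ and invoking $\sigma \Subeq \tau$ there produces the thread-local response; transporting it back through $t_{q}$ yields $b'$, and the Decomposition Lemma recombines the thread-local equivalence into $sab \Deq s'a'b'$.

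The main obstacle is the Decomposition Lemma itself, and in particular the claim that the renamings witnessing $sa \Deq s'a'$ must respect the thread structure. This rests on the freshness observation: a new $\ofcourse A$-copy is always opened by Player inside the current thread, so by a position-wise induction the outer correspondence $i \mapsto \pi_{B}(i)$ on $\ofcourse B$-threads propagates to a fibre-wise correspondence $p(i,-) \mapsto q(\pi_{B}(i),-)$ on $\ofcourse A$-indices, with the inner discrepancy absorbed by the $\theta_{i}$. It is exactly here that permitting $p \neq q$ is essential, since the two pairing conventions need never agree on the nose. Once thread-respecting is in hand, assembling the single global permutation $p(i,j) \mapsto q(\pi_{B}(i),\theta_{i}(j))$ required by $\Deq_{\ofcourse A \linimpl \ofcourse B}$, checking it is well-defined and injective on the union of fibres and extending it arbitrarily to the unused indices, is routine; the remaining verifications are straightforward inductions on length using {\bf (e1)}--{\bf (e3)}.
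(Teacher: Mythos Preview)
The paper states this proposition without proof, so there is nothing to compare against directly. Your thread-decomposition approach is correct and is in fact the same idea the paper deploys later in Section~3.3 (Lemma~\ref{disblock} and the Bang Lemma), where positions of type $\ofcourse A \linimpl \ofcourse B$ are sliced by $\ofcourse B$-index exactly as your threads are; you are simply using the technique earlier than the paper makes it explicit.

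One point deserves to be spelled out when you run the induction for your Decomposition Lemma. Your two observations constrain \emph{Player}, but you also need that an Opponent move in $\ofcourse A$ cannot land at an index $p(i_1,j)$ belonging to a thread $i_1$ other than the current one (otherwise the projection $s @_{p} i_1$ would fail to alternate). This is not a consequence of your two observations alone; it comes from the Switching Condition for $\ofcourse A$ regarded as an infinite tensor. Since the definition of $\B{traces}(f^{\dagger}_{p})$ already requires $sa \in P_{\ofcourse A \linimpl \ofcourse B}$, we get $sa \Rest \ofcourse A \in P_{\ofcourse A}$, and in $P_{\ofcourse A}$ every $P$-move (which is what an Opponent move in $\ofcourse A \linimpl \ofcourse B$ becomes after the $\linimpl$ relabelling) must sit at the same index as the immediately preceding $O$-move in $\ofcourse A$; that preceding move is Player's in $\ofcourse A \linimpl \ofcourse B$ and hence, by your first observation, lies in the current thread. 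With this noted, the induction goes through and the thread-respecting form $\pi_{A}(p(i,j)) = q(\pi_{B}(i),\theta_{i}(j))$ of the witnessing permutations in part~1 follows by the position-wise argument you sketch.
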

By virtue of this Proposition, we shall henceforth write $\sigma^{\dagger}$,
dropping explicit reference to the pairing function.

\begin{proposition}
For all $\sigma : \ofcourse A \rightarrow B$, $\tau : \ofcourse B \rightarrow C$:
\[ \begin{array}{clcl}
({\bf m1}) & \sigma^{\dagger} ; \tau^{\dagger} & \Deq & (\sigma^{\dagger} ;
\tau )^{\dagger} \\
({\bf m2}) & {\tt der}_{A}^{\dagger} ; \sigma & \Deq & \sigma \\
({\bf m3}) & \sigma^{\dagger} ; {\tt der}_{B} & \Deq & \sigma .
\end{array} \]
\end{proposition}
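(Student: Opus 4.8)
The plan is to prove all three laws at the level of the partial functions on moves that induce the strategies, rather than through sets of positions, since both dereliction and promotion are given by explicit inducing functions (indeed by the matrix forms displayed above) and composition is computed by the execution formula ${\tt EX}(f,g) = \bigvee_{k} m_k$. For each law I would compute the inducing function of the left-hand composite via ${\tt EX}$, compare it with the inducing function of the right-hand side, and observe that the two agree modulo a relabelling of copy-indices; since $\Deq_{\ofcourse A}$ (and $\Deq_{\ofcourse C}$) is defined precisely up to a permutation $\pi \in S(\omega)$ of the indices, exhibiting the appropriate $\pi$ yields the required $\Deq$-equivalence. Throughout I would lean on the preceding Proposition on pairing-independence, which guarantees $\sigma^{\dagger}_p \Deq \sigma^{\dagger}_q$ for any pairing functions $p$, $q$: this lets me fix whichever pairing functions make the index bookkeeping most transparent, and it is also what makes the statements meaningful with the explicit pairing functions suppressed.

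For the two unit laws $({\bf m2})$ and $({\bf m3})$ the simplification is that dereliction's inducing function has the very sparse matrix $\Mat{0}{{\tt in}_i}{{\tt in}_i^{\ast}}{0}$, so only the single copy-index $i$ ever participates in the feedback loop. For $({\bf m3})$, composing $\sigma^{\dagger}$ with ${\tt der}_B$ forces the internal dialogue on $\ofcourse B$ to use only the index $i$ selected by ${\tt der}_B$; tracing this through the definition of $f^{\dagger}_p$, whose map $t(i,(j,a)) = (p(i,j),a)$ relabels the $A$-copies by $j \mapsto p(i,j)$, shows that the execution formula collapses to $f$ itself, up to the injection $j \mapsto p(i,j)$ on input indices. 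Extending this injection to a permutation of $\omega$ gives $\sigma^{\dagger};{\tt der}_B \Deq \sigma$. The law $({\bf m2})$ is dual: promoting ${\tt der}_A$ yields a pure reindexing strategy $\ofcourse A \rightarrow \ofcourse A$, and precomposing $\sigma$ with it merely renames input copies, again absorbed by a permutation.

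The main obstacle is the ``multiplication'' law $({\bf m1})$, the associativity of promotion. Both $\sigma^{\dagger};\tau^{\dagger}$ and $(\sigma^{\dagger};\tau)^{\dagger}$ are strategies $\ofcourse A \rightarrow \ofcourse C$, and unwinding each via the execution formula shows that both compute the same internal dialogue between $\sigma$ (communicating on $\ofcourse A$ and $B$) and $\tau$ (communicating on $\ofcourse B$ and $C$); the two sides differ only in how the resulting countable families of $A$- and $C$-copies are packed into $\omega$. On the left the encoding nests the pairing functions $p$ and $q$ one way, whereas on the right the single promotion of the composite $\sigma^{\dagger};\tau$ nests them the other way. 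I would make this precise by writing down, for a $C$-move whose index is packed via $q$ on the left and via the outer pairing on the right, the explicit bijection between the two index schemes, and check that it is given by permutations of $\omega$ witnessing $\Deq_{\ofcourse A}$ on the inputs and $\Deq_{\ofcourse C}$ on the outputs, as required by the definition of $\Deq$ for strategies in $\ofcourse A \linimpl \ofcourse C$. Pairing-independence is what keeps this clean: since the two triple-encodings $\omega \times \omega \times \omega \rightarrowtail \omega$ arising from the two association orders are both injective, I am free to choose $p$, $q$ and the outer pairing so that the mediating map is as simple as possible, and the Proposition then transports the conclusion back to arbitrary pairings. Verifying that this index permutation matches the two execution-formula computations term by term is the one genuinely laborious point; everything else is routine bookkeeping.
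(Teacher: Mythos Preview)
The paper states this proposition without proof; it simply records the three comonad laws and immediately derives the comonad structure by citing Manes. (The treatment of exponentials is attributed to \cite{AbramskyS:gamexp}, so the authors are implicitly deferring the verification.) There is therefore no ``paper's own proof'' to compare against.

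Your plan is the natural one and is essentially correct. Working with the inducing partial functions and the execution formula is exactly the right level of description here, since both ${\tt der}$ and $(\cdot)^{\dagger}$ are specified that way, and Proposition~2.7.2 (pairing-independence) is precisely the licence you need to choose convenient pairing functions. Your analyses of $({\bf m2})$ and $({\bf m3})$ are accurate: in each case the feedback through the single live index collapses the execution formula to $f$ composed with an injective relabelling of the $\ofcourse A$-indices, and because any given position uses only finitely many indices, that injection can be completed to a permutation in $S(\omega)$ witnessing $\Deq$ position-by-position, which is all that the definition of $\Subeq$ on strategies requires. For $({\bf m1})$ your diagnosis that the two sides differ only by the association order of the nested pairings is the heart of the matter; the bijection between the two triple-encodings of $\omega\times\omega\times\omega$ into $\omega$ is what supplies the witnessing permutations on both the $\ofcourse A$ and $\ofcourse C$ sides. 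The one point to keep straight in the bookkeeping is that the permutation on $\ofcourse A$-indices may depend on the $\ofcourse C$-index of the current thread (since the $A$-indices on the left are of the form $p(q(\cdot,\cdot),\cdot)$ while on the right they nest the other way), but again this is harmless because $\Deq$ is checked position-wise and within a single position only finitely many threads are active.
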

As an immediate consequence of this Proposition and standard results
\cite{ManesE:algt}:
\begin{proposition}
$(\ofcourse  , {\tt der}, ( \cdot )^{\dagger})$ is a comonad in ``Kleisli form''.
If we define, for $\sigma : A \rightarrow B$,
$\ofcourse  \sigma = ({\tt der}_{A} ; \sigma )^{\dagger} : \ofcourse A \rightarrow \ofcourse B$, and
$\delta_{A} : \ofcourse A \rightarrow \ofcourse \ofcourse A$ by $\delta_{A} = {\tt id}_{\ofcourse A}^{\dagger}$,
then $(\ofcourse  , {\tt der}, \delta )$ is a comonad in the standard sense.
\end{proposition}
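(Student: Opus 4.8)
The plan is to observe that essentially nothing remains to be computed: the three equations $({\bf m1})$--$({\bf m3})$ just established are, modulo the identification of morphisms in $\Games$ with $\Deq$-classes, precisely the axioms of a comonad in Kleisli form, so the statement follows by dualising the standard correspondence between Kleisli triples and monads recorded in \cite{ManesE:algt}. Concretely, I would regard $(\ofcourse, {\tt der}, (\cdot)^{\dagger})$ as a Kleisli triple on $\Games^{op}$: the counit ${\tt der}_A : \ofcourse A \Fr A$ plays the role of the unit, and promotion, which sends $\sigma : \ofcourse A \Fr B$ to $\sigma^{\dagger} : \ofcourse A \Fr \ofcourse B$, plays the role of Kleisli extension. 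This reading is exactly what makes the types line up.

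The first step is then to match $({\bf m1})$--$({\bf m3})$ against the three (dualised) Kleisli-triple axioms. The counit law $\sigma^{\dagger} ; {\tt der}_B \Deq \sigma$ is exactly $({\bf m3})$, and the associativity law $(\sigma^{\dagger} ; \tau)^{\dagger} \Deq \sigma^{\dagger} ; \tau^{\dagger}$ is exactly $({\bf m1})$. The only point needing a word is the remaining axiom ${\tt der}_A^{\dagger} \Deq {\tt id}_{\ofcourse A}$: this is not literally $({\bf m2})$, but it is obtained from $({\bf m2})$ by instantiating $\sigma := {\tt id}_{\ofcourse A}$ — legitimate, since $({\bf m2})$ holds for every $\sigma : \ofcourse A \Fr B$, in particular for $B = \ofcourse A$ — and then cancelling the identity on the right; conversely $({\bf m2})$ follows from ${\tt der}_A^{\dagger} \Deq {\tt id}_{\ofcourse A}$ together with the identity law. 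Hence $\{({\bf m1}),({\bf m2}),({\bf m3})\}$ is equivalent to the dualised Manes axioms, which is precisely the assertion that $(\ofcourse, {\tt der}, (\cdot)^{\dagger})$ is a comonad in Kleisli form.

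Finally, I would invoke the (dual of the) standard theorem of \cite{ManesE:algt} that every Kleisli triple extends canonically to a monad: read on $\Games^{op}$ it yields a comonad $(\ofcourse, {\tt der}, \delta)$ in the usual sense, with the functorial action recovered as $\ofcourse \sigma = ({\tt der}_A ; \sigma)^{\dagger}$ (the dual of the standard formula $Tf=(f;\eta)^{*}$) and the comultiplication as $\delta_A = {\tt id}_{\ofcourse A}^{\dagger}$ (the dual of $\mu = {\tt id}^{*}$) — exactly the definitions appearing in the statement. All the functoriality of $\ofcourse$, the naturality of ${\tt der}$ and $\delta$, and the comonad coherence equations are then supplied wholesale by that theorem and need no separate verification. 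I expect no real obstacle here: the genuine content was already discharged in proving $({\bf m1})$--$({\bf m3})$, and what is left is only the bookkeeping of the dualisation, the identification of $\Deq$ with equality of $\Games$-morphisms, and the small instantiation of $({\bf m2})$ noted above.
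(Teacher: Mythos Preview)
Your proposal is correct and is essentially the same approach the paper takes: the paper simply remarks that the proposition is ``an immediate consequence of this Proposition [i.e.\ $({\bf m1})$--$({\bf m3})$] and standard results \cite{ManesE:algt}'', which is precisely what you unpack. Your extra observation that the axiom ${\tt der}_A^{\dagger} \Deq {\tt id}_{\ofcourse A}$ is obtained from $({\bf m2})$ by instantiating $\sigma = {\tt id}_{\ofcourse A}$ is the right bridge and is implicit in the paper's appeal to Manes.
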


\paragraph{Contraction and Weakening}
For each game $A$, we define ${\tt weak}_{A} : \ofcourse A \rightarrow I$ by
${\tt weak}_{A} = \{ \epsilon \}$.

A {\em tagging function} is an injective map
\[ c : \omega + \omega \rightarrowtail \omega . \]
Given such a map, the contraction strategy
${\tt con}_{A}^{c} : \ofcourse A \rightarrow \ofcourse A \tensor \ofcourse A$
is induced by the function
\[ \Mat{0}{(r \times 1) \circ \, {\tt inl}^{\ast} \vee (s \times 1)
\circ \, {\tt inr}^{\ast}}{{\tt inl} \, \circ (r^{\ast} \times 1)
\vee {\tt inr} \, \circ(s^{\ast} \times 1)} {0} \] where $r =
\omega \stackrel{\tt inl}{\longrightarrow} \omega + \omega
\stackrel{c}{\longrightarrow} \omega$, $s = \omega \stackrel{\tt
inr}{\longrightarrow} \omega + \omega
\stackrel{c}{\longrightarrow} \omega$.

Again, it is easily verified that ${\tt con}_{A}^{c} \Deq {\tt con}_{A}^{c'}$
for any tagging functions $c$, $c'$.

\begin{proposition}
${\tt con}_{A}$, ${\tt weak}_{A}$ are well-defined strategies which give
a cocommutative comonoid structure on $\ofcourse A$, {\em i.e.} the following diagrams
commute:
\[ \begin{diagram}
\ofcourse A & & \rTo^{[{\tt con}_{A}]} & & \ofcourse A \tensor \ofcourse A \\
\dTo^{[{\tt con}_{A}]}    & & & & \dTo_{[{\tt id}_{A} \tensor
{\tt con_{A}}]} \\
\ofcourse A \tensor \ofcourse A & \rTo_{[{\tt con}_{A} \tensor {\tt id}_{A}]} &
(\ofcourse A \tensor \ofcourse A ) \tensor \ofcourse A  & \rTo_{[{\tt assoc}_{A}]} & \ofcourse A \tensor (\ofcourse A \tensor \ofcourse A ) \\
\end{diagram} \]
\[ \begin{diagram}
\ofcourse A & \rTo^{[{\tt con}_{A}]} & \ofcourse A \tensor \ofcourse A \\
\dTo^{[{\tt id}_{A}]} & & \dTo_{[{\tt id}_{A} \tensor {\tt weak}_{A}]} \\
\ofcourse A & \lTo_{[{\tt unit}_{A}]} & \ofcourse A \tensor I \\
\end{diagram} \;\;\;\;\;\;\;\;
\begin{diagram}
\ofcourse A & \rTo^{[{\tt con}_{A}]} & \ofcourse A \tensor \ofcourse A \\
& \rdTo_{[{\tt con}_{A}]} & \dTo_{[{\tt symm}_{A,A}]} \\
& & \ofcourse A \tensor \ofcourse A \\
\end{diagram} \]
\end{proposition}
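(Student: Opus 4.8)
The plan is to carry out every verification at the level of the partial functions on moves that induce the strategies, exploiting the fact that ${\tt con}_A$ and ${\tt weak}_A$ --- like ${\tt der}_A$ and the structural isomorphisms for $\tensor$ --- are \emph{copy-cat-like}: they do nothing to the $M_A$-component of a move and merely relabel its index in $\omega$ (weakening being the degenerate case $\{\epsilon\}$, whose inducing function is everywhere undefined). I would first discharge well-definedness. Weakening is trivially a strategy. For ${\tt con}_A^c$ the stack discipline is immediate, since the inducing function preserves the question/answer labelling and sends each answer to the copy determined by $r,s,r^{\ast},s^{\ast}$ that already carries its matching question; reflexivity ${\tt con}_A^c \Deq {\tt con}_A^c$, and more generally independence of the tag, ${\tt con}_A^c \Deq {\tt con}_A^{c'}$, hold because any two injective tagging functions $c,c' : \omega + \omega \rightarrowtail \omega$ have images related by a bijection extending to a permutation of $\omega$ --- exactly the witness required by $\Deq_{\ofcourse A}$. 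This also justifies suppressing the tag and writing ${\tt con}_A$.

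The engine of the proof is the observation that the composite of copy-cat-like strategies is again copy-cat-like. In the execution formula the internal dialogue in the hidden component merely forwards each move while composing the two index-relabelings, so it terminates after a bounded number of steps and the resulting inducing function is once more an index-relabeling. Hence each comonoid law reduces to a finite calculation with injective maps on $\omega$, compared up to the permutation freedom of $\Deq$. Moreover, since composition is monotone for $\Subeq$ (and hence a congruence for $\Deq$) and $\tensor$ respects $\Deq$ --- both established above --- it suffices to verify each diagram for one convenient choice of tagging and pairing functions and then invoke congruence.

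Concretely: for cocommutativity, post-composing ${\tt con}_A^c$ with ${\tt symm}_{A,A}$ swaps the two output copies, producing ${\tt con}_A^{c'}$ with $c' = c \circ [{\tt inr},{\tt inl}]$, which is $\Deq$-equivalent to ${\tt con}_A^c$ by tag-independence. For coassociativity, both composites are ternary contractions $\ofcourse A \Fr \ofcourse A \tensor (\ofcourse A \tensor \ofcourse A)$ whose inducing functions split one index into three by nested taggings; the two nestings realise the same injection $\omega + \omega + \omega \rightarrowtail \omega$ up to a permutation of $\omega$, giving $\Deq$. For the counit law, composing ${\tt con}_A$ with ${\tt id}_{\ofcourse A} \tensor {\tt weak}_A$ routes the second output copy into $\{\epsilon\}$: in the execution formula every play sent there is killed, only the first copy survives, and after ${\tt unit}_A$ what remains is the copy-cat on $\ofcourse A$ relabeled by the injection $r$, hence $\Deq {\tt id}_{\ofcourse A}$.

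I expect the main obstacle to be neither the algebra of injective maps nor the termination of the dialogues, but the bookkeeping for $\Deq_{\ofcourse A}$: the laws genuinely fail on the nose and hold only after reindexing, so the crux is to exhibit, in the coassociativity case, a single permutation of $\omega$ that simultaneously reconciles the two ternary encodings across both tensor factors, and, in the counit case, to argue within the execution formula that interaction with the empty strategy prunes all behaviour in the second copy without perturbing the feedback loop that implements the surviving copy-cat.
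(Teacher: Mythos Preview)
The paper does not give a proof of this proposition (nor of the surrounding propositions on dereliction, promotion, and the comonad laws in Section~2.7); it is stated immediately after the remark that ${\tt con}_A^c \Deq {\tt con}_A^{c'}$ is ``easily verified,'' and the comonoid axioms are left to the reader. Your proposal is the intended argument and is correct: all the morphisms involved are copy-cat strategies that act only by reindexing the $\omega$-tags, their composites under the execution formula are again of this shape (with bounded internal dialogue, as you note), and each diagram collapses to the observation that two injective encodings of $\omega + \omega$ or $\omega + \omega + \omega$ into $\omega$ agree up to the permutation freedom built into $\Deq_{\ofcourse A}$.

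One small refinement: you phrase tag-independence as needing a single global permutation of $\omega$ relating the images of $c$ and $c'$. What the definition of $\Subeq$ on strategies actually demands is a witness \emph{per pair of positions}, and since positions are finite only finitely many indices are live at any stage, so the required permutation is always available regardless of whether the complements of the images of $c$ and $c'$ have matching cardinality. This dissolves the bookkeeping worry you flag for coassociativity: you do not need one permutation reconciling the two ternary encodings globally, only one for each finite play, which is immediate.
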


\subsection{The co-Kleisli category}
By Proposition~2.7.4, we can form the co-Kleisli category $\KG$, with:
\begin{description}
\item[Objects] The objects of $\Games$.
\item[Morphisms] $\KG (A, B) = \Games (\ofcourse A, B)$.
\item[Composition] If $\sigma : \ofcourse A \rightarrow B$ and $\tau : \ofcourse B \rightarrow C$
then composition in $\KG$ is given by:
\[ \sigma \fatsemi \tau = \sigma^{\dagger} ; \tau . \]
\item[Identities] The identity on $A$ in $\KG$ is
${\tt der}_{A} : \ofcourse A \rightarrow A$.
\end{description}

\paragraph{Exponential laws}
\begin{proposition}
\begin{enumerate}
\item There is a natural isomorphism
$e_{A,B} : \ofcourse (A \with B) \cong \ofcourse A \tensor \ofcourse B$.
\item \, $\ofcourse I = I$.
\end{enumerate}
\end{proposition}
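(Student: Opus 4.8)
The plan is to read off both parts directly from the explicit game constructions, the substance lying entirely in part (1).

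Part (2) I would dispose of by unfolding definitions. Since $M_I = \varnothing$, the move set of $\ofcourse I$ is $\omega \times \varnothing = \varnothing$, its labelling is the empty function, and the only sequence over the empty alphabet is $\epsilon$; hence $P_{\ofcourse I} = \{\epsilon\}$ and $\Deq_{\ofcourse I} = \{(\epsilon,\epsilon)\}$, the permutation witnessing $\epsilon \Deq_{\ofcourse I} \epsilon$ being any element of $S(\omega)$ with all three clauses holding vacuously. Thus $\ofcourse I = (\varnothing,\varnothing,\{\epsilon\},\{(\epsilon,\epsilon)\}) = I$ on the nose, which is the claim.

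For part (1) the starting point is the canonical distributivity bijection on move sets
\[ \phi : M_{\ofcourse(A \with B)} = \omega \times (M_A + M_B) \;\cong\; (\omega \times M_A) + (\omega \times M_B) = M_{\ofcourse A \tensor \ofcourse B}, \]
given by $\phi(i,\inl a) = \inl(i,a)$ and $\phi(i,\inr b) = \inr(i,b)$. I would first check that $\phi$ preserves the labelling (both sides label $(i,\inl a)$ resp. $\inl(i,a)$ by $\lambda_A(a)$, and symmetrically for $B$), so $\phi$ restricts to bijections $M^P \cong M^P$ and $M^O \cong M^O$. Following the pattern of the copy-cat identity $\ident_A$, I would then take $e_{A,B} : \ofcourse(A \with B) \Fr \ofcourse A \tensor \ofcourse B$ to be induced by the relabelled twist function $\Mat{0}{\phi^{-1}}{\phi}{0}$ and its candidate inverse to be induced by $\Mat{0}{\phi}{\phi^{-1}}{0}$; concretely these are the copy-cats that mirror each Opponent move across $\phi$. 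There are then three routine obligations — that both are well-defined strategies (stack discipline and $\Deq$-self-equivalence), that $e_{A,B}$ and its inverse compose to the relevant identities modulo $\Deq$, and naturality — and one genuine point. The stack-discipline and bracketing parts of the composition are discharged exactly as above using the Projection and Parity Lemmas, since $\phi$ preserves both index and component.

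The genuine point, which I expect to be the main obstacle, is the clash of index disciplines. In $\ofcourse(A \with B)$ each index carries one play of $A \with B$, hence (as $P_{A \with B} = P_A + P_B$) a play of exactly one component, so the indices form a single pool permuted by one $\pi \in S(\omega)$; whereas in $\ofcourse A \tensor \ofcourse B$ the $A$-copies and $B$-copies are indexed by two independent pools, and one index may host an $A$-copy and a $B$-copy at once. Consequently $\phi^{\star}$ is not a position bijection: its image is precisely those positions of $\ofcourse A \tensor \ofcourse B$ whose occupied $A$-indices and occupied $B$-indices are disjoint. The key lemma I would prove is that, up to $\Deq_{\ofcourse A \tensor \ofcourse B}$, every position lies in this image — using that $\omega$ is infinite, permute the $A$-pool into the evens and the $B$-pool into the odds to force disjointness, which is an equivalence since the tensor relation permutes the two pools independently and preserves the $\Fst^{\star}$-pattern — and, conversely, that a single permutation of the merged pool of $\ofcourse(A \with B)$ induces (and, on the image, is induced by) a pair of independent permutations of the tensor pools. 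This merge/split reconciliation of the two $\Deq$-structures is what makes the copy-cats mutually inverse modulo $\Deq$, and is where essentially all the work sits.

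Finally, for naturality I would verify that the square relating $e_{A,B}$ and $e_{A',B'}$ commutes after applying the functorial actions $\ofcourse(\,\cdot \with \cdot\,)$ on the source and $\ofcourse\cdot \tensor \ofcourse\cdot$ on the target. Since every strategy involved is a copy-cat defined by relabelling along index- and component-preserving bijections, commutativity up to $\Deq$ reduces to the equality of the two composite relabellings, which is just the naturality of the distributivity bijection $\phi$ itself.
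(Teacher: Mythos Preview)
Your treatment of part~(2) and your identification of the index-pool clash as the crux of part~(1) are both correct. The paper takes essentially the same line, defining $e_{A,B}$ as the copy-cat along the distributivity bijection $\phi$; the chief difference is that the paper bakes the even/odd reindexing directly into the \emph{definition} of $e_{A,B}^{-1}$ (sending index $i$ on the $!(A \with B)$ side to index $2i$ in $!A$ or $2i+1$ in $!B$), rather than defining both maps as pure copy-cats and proving a separate reconciliation lemma afterward.

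There is, however, a genuine gap in your plan for $e_{A,B}$, and it sits exactly where you call the obligation ``routine''. The pure copy-cat along $\phi$ from $!(A \with B)$ to $!A \tensor !B$ does \emph{not} induce a valid strategy. Opponent, playing in the target $!A \tensor !B$, may legitimately open the same index $i$ first in $!A$ and later in $!B$; Player's mirrored responses then place both an $A$-move and a $B$-move at index $i$ of the source $!(A \with B)$, violating the projection condition (since $P_{A \with B} = P_A + P_B$). Concretely, after the exchange $\inl(0,a_0),\,(0,\inl a_0),\,(0,\inl a_1),\,\inl(0,a_1)$ (with $a_0$ a question and $a_1$ its answer), Opponent's move $\inr(0,b_0)$ is a legal fifth move in the implication, but the copy-cat reply $(0,\inr b_0)$ lands outside $P$. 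Thus $\B{traces}$ of your inducing function is not contained in $P$, and it fails to define a strategy at all---before any question of stack discipline or $\Deq$-self-equivalence even arises. Your even/odd observation is the right idea but deployed in the wrong place: rather than proving it about positions after the fact, you must build a reindexing into the map defining $e_{A,B}$ itself (e.g.\ send $\inl(i,a)$ to $(2i,\inl a)$ and $\inr(i,b)$ to $(2i+1,\inr b)$), so that Player never collides source indices. Your pure copy-cat for $e_{A,B}^{-1}$, by contrast, \emph{is} sound: there Opponent opens in $!(A \with B)$, where each index is already committed to a single component, so no collision can occur on the source side.
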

\begin{proof}
\begin{enumerate}
\item We define $e_{A,B}:!(A \with B ) \linimpl !A \tensor !B $ as
  (the strategy induced by) the
map which sends ${\tt inl} (a,i)\in \, !A \tensor !B $ to
$({\tt inl}(a),i)\in\, !(A \with B ) $, $({\tt inl}(a),i)\in \, !(A \with B ) $ to
${\tt inl} (a,i)\in \, !A \tensor !B $ and similarly sends ${\tt inr}
(b,i)\in \, !A
\tensor !B$ to $({\tt inr}(b),i)\in \, !(A \with B ) $, $({\tt
  inr}(b),i)\in \,
!(A
\with B ) $ to  ${\tt inr} (b,i)\in \, !A \tensor !B $.

We define ${e_{A,B}^{-1}}:(!A \tensor !B) \linimpl !(A \with B )$ as
(the strategy induced by) the map
which sends ${\tt inl}(a,2i)\in \, !A \tensor !B $ to $({\tt
  inl}(a),i)\in \, !(A\with
B)$,  $({\tt inl}(a),i)\in \, !(A\with B)$ to ${\tt inl}(a,2i)\in \, !A \tensor !B
$ and $({\tt inr}(b),i)\in \, !(A \with B ) $ to  ${\tt inr} (b,2i+1)\in
\, !A \tensor
!B $, ${\tt inr} (b,2i +1)\in \, !A \tensor !B$ to $({\tt inr}(b),i)\in
\, !(A \with B
) $.

It is straightforward to check that $e_{A,B},{e_{A,B}^{-1}}$ are strategies.
Let's prove that $e_{A,B},{e_{A,B}^{-1}}$ define the required isomorphism.

\begin{itemize}
\item For $ e_{A,B}; {e_{A,B}^{-1}} :  (!A_2 \tensor !B_2) \linimpl (!A_1 \tensor
!B_1)$ (we have used different subscripts for different copies of
the same game) we have that ${\tt inl}(a,i)\in (!A_1\tensor !B_1)$ is sent to
${\tt inl}(a,2i)\in (!A_2\tensor B_2)$ and ${\tt inr}(b,j)\in (!A_1\tensor !B_1)$ is
sent to ${\tt inr}(b,2j+1)\in (!A_2\tensor B_2)$ . This strategy is
equivalent to the
identity. The automorphism which witnesses the equivalence is the map
which sends $i$ in $!A_1$ to $2i$ and $j$ in $!B_1$ to $2j+1$ (and is
the identity elsewhere).

\item For ${e_{A,B}^{-1}}; e_{A,B}$ the same map as above witnesses the
equivalence of ${e_{A,B}^{-1}}; e_{A,B}$ with the identity.
\end{itemize}
\item Immediate by definition.

\end{enumerate}
\end{proof}

\paragraph{Products in $\KG$}
\begin{proposition}
$I$ is terminal in $\KG$.
\end{proposition}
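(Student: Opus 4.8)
The plan is to unfold the definitions and reduce the claim to the triviality that $I$ has no moves. A morphism $A \Fr I$ in $\KG$ is, by definition of the co-Kleisli category, an element of $\KG(A,I) = \Games(\ofcourse A, I) = \Hat{\ofcourse A \linimpl I}$: that is, a partial equivalence class $[\sigma]$ of history-free strategies on the game $\ofcourse A \linimpl I$ with $\sigma \Deq \sigma$. So it suffices to show that, for every object $A$, this set of classes is a singleton; existence and uniqueness of the arrow into $I$ then come together.

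First I would analyse the positions of $\ofcourse A \linimpl I$. Since $I = (\varnothing, \varnothing, \{ \epsilon \}, \{ (\epsilon,\epsilon) \})$ has no moves, we have $M_{\ofcourse A \linimpl I} = M_{\ofcourse A} + \varnothing$, so every move of this game lies in the left (argument) component $\ofcourse A$. Now I invoke the observation recorded just after the definition of $\linimpl$ — itself a consequence of the global rule {\bf (p1)} that Opponent moves first — that the first move of any position of a game $C \linimpl D$ must lie in the right component $D$. Here $D = I$ is empty, so no position can even have a first move, and therefore $P_{\ofcourse A \linimpl I} = \{ \epsilon \}$.

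From this everything else reads off. A strategy is a non-empty subset $\sigma \subseteq \even{P_{\ofcourse A \linimpl I}}$ with $\Over{\sigma}$ prefix-closed; but $\even{P_{\ofcourse A \linimpl I}} = \{ \epsilon \}$, so the only candidate is $\sigma = \{ \epsilon \}$, and $\Over{\{\epsilon\}} = \{\epsilon\}$ is trivially prefix-closed. Moreover $\{\epsilon\} \Deq \{\epsilon\}$, since the defining condition for $\Subeq$ (relation~(\ref{sequiv1})) is vacuous when there is no $sab \in \sigma$ of positive length. Hence $\Hat{\ofcourse A \linimpl I}$ is the singleton $\{\, [\{\epsilon\}]\,\}$, which is exactly the assertion that $I$ is terminal in $\KG$, the unique arrow $A \Fr I$ being the class of the empty strategy.

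I expect no real obstacle here: the entire content is that $I$ carries no moves, so the linear function space into it collapses to the single empty play, and the notions of strategy and of $\Deq$ then leave no room for choice. The same argument in fact shows that $I$ is already terminal in $\Games$, and the passage to the co-Kleisli category adds nothing, since $\KG(A,I) = \Games(\ofcourse A, I)$; in particular one does not even need $\ofcourse I = I$. The one point deserving a moment's care is the appeal to the opening/switching remark to rule out a first move in the empty component — this is where condition {\bf (p1)} does the work — after which the verification is purely definitional.
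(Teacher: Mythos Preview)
Your proof is correct and follows essentially the same route as the paper's: both observe that $I$ has no moves, so by the remark (immediately after the definition of $\linimpl$) that any opening move in $P_{A \linimpl B}$ must lie in $B$, there is no valid opening move in $\ofcourse A \linimpl I$ and hence $\{\epsilon\}$ is the unique strategy. Your write-up is simply more explicit, spelling out the verification that $\{\epsilon\} \Deq \{\epsilon\}$ and the passage through the co-Kleisli hom-set, which the paper leaves implicit.
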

\begin{proof} For any game $A$ there is only one strategy in
$!A\linimpl I$, namely $\{\epsilon\}$. This is because $I$ has an
empty set of moves and for any opening move $a$ in $!A$ we have
$\lambda_{!A\linimpl
  I}(a)=P$ so that Opponent has no opening move in $!A\linimpl I$.$\;\;\;$
  \end{proof}

\begin{proposition}
$A \stackrel{\pi_{1}}{\longleftarrow} A \with B
\stackrel{\pi_{2}}{\longrightarrow} B$
is a product diagram in $\KG$, where
\[ \begin{array}{lcl}
\pi_1 & = & \ofcourse (A \with B) \stackrel{{\tt der}}{\longrightarrow}
A \with B \stackrel{{\tt fst}}{\longrightarrow} A \\
\pi_2 & = & \ofcourse (A \with B) \stackrel{{\tt der}}{\longrightarrow}
A \with B \stackrel{{\tt snd}}{\longrightarrow} B .
\end{array} \]

If $\sigma:!C\linimpl A,\tau:!C\linimpl B$ then their pairing
$\lang\sigma,\tau\rang:!C\linimpl A\& B$ is defined by
\[\begin{diagram}
\lang\sigma,\tau\rang=!C&\rTo^{\tt con}&!C\otimes
!C&\rTo^{\sigma^{\dag}\otimes\tau^{\dag}} &!A\otimes !B&\rTo^e&!(A\&
B)&\rTo^{\tt der}& A\&B.
\end{diagram}\]

\end{proposition}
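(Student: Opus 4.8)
The plan is to establish the universal property of the product directly in $\KG$: for $\sigma : C \Fr A$ and $\tau : C \Fr B$ (that is, strategies $\sigma : \ofcourse C \linimpl A$ and $\tau : \ofcourse C \linimpl B$), I would verify the two projection equations $\lang \sigma , \tau \rang \fatsemi \pi_1 \Deq \sigma$ and $\lang \sigma , \tau \rang \fatsemi \pi_2 \Deq \tau$, together with uniqueness of the mediating morphism. Throughout I would work up to the partial equivalence $\Deq$ and exploit the comonad laws $({\bf m1})$--$({\bf m3})$, the comonoid laws for $({\tt con}, {\tt weak})$, the Seely isomorphism $e_{A,B}$, and the terminality of $I$, using the monotonicity of composition and of $\tensor$ with respect to $\Deq$ for the congruence steps, rather than grinding through the execution formula.

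For the first projection equation, recall that $\KG$-composition is $f \fatsemi g = f^{\dagger} ; g$. Since $\pi_1 = {\tt der}_{A \with B} ; {\tt fst}$, law $({\bf m3})$ gives $\lang \sigma , \tau \rang^{\dagger} ; {\tt der}_{A \with B} \Deq \lang \sigma , \tau \rang$, so that
\[ \lang \sigma , \tau \rang \fatsemi \pi_1 \;\Deq\; \lang \sigma , \tau \rang ; {\tt fst} \;=\; {\tt con}_C ; (\sigma^{\dagger} \tensor \tau^{\dagger}) ; e_{A,B}^{-1} ; {\tt der}_{A \with B} ; {\tt fst} . \]
The crux is the coherence identity
\[ {\tt der}_{A \with B} ; {\tt fst} \;\Deq\; e_{A,B} ; ({\tt der}_A \tensor {\tt weak}_B) ; {\tt unit}_A , \]
which says that dereliction followed by the first projection is, under the Seely isomorphism, dereliction on the $A$-factor and weakening on the $B$-factor. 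Feeding this in and cancelling $e_{A,B}^{-1} ; e_{A,B} \Deq {\tt id}$ reduces the composite to ${\tt con}_C ; (\sigma^{\dagger} \tensor \tau^{\dagger}) ; ({\tt der}_A \tensor {\tt weak}_B) ; {\tt unit}_A$. Bifunctoriality of $\tensor$ rewrites the middle as $(\sigma^{\dagger} ; {\tt der}_A) \tensor (\tau^{\dagger} ; {\tt weak}_B)$; now $({\bf m3})$ gives $\sigma^{\dagger} ; {\tt der}_A \Deq \sigma$, while $\tau^{\dagger} ; {\tt weak}_B \Deq {\tt weak}_C$ since, by the argument establishing that $I$ is terminal, $\{ \epsilon \}$ is the only strategy into $I$. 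A final use of naturality of ${\tt unit}$ and of the counit law ${\tt con}_C ; ({\tt id} \tensor {\tt weak}_C) ; {\tt unit} \Deq {\tt id}_{\ofcourse C}$ collapses the expression to $\sigma$. The second equation is symmetric, using ${\tt snd}$, the mirror-image coherence identity, and the corresponding counit law for ${\tt con}_C ; ({\tt weak}_C \tensor {\tt id})$.

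For uniqueness I would exploit a structural feature peculiar to the product game: $P_{A \with B} = P_A + P_B$, so every position of $A \with B$ lies wholly in one component, and hence every play of a strategy $\rho : \ofcourse C \linimpl (A \with B)$ begins, by $({\bf p1})$, with an Opponent move that commits it to either $A$ or $B$ for its entire length. Composing with ${\tt fst}$ (resp. ${\tt snd}$) extracts exactly the $A$-opening (resp. $B$-opening) part of $\rho$, so ${\tt fst}$ and ${\tt snd}$ are jointly monic in $\Games$: $\rho ; {\tt fst} \Deq \rho' ; {\tt fst}$ and $\rho ; {\tt snd} \Deq \rho' ; {\tt snd}$ together force $\rho \Deq \rho'$. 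Given any mediating $\rho$ with $\rho \fatsemi \pi_1 \Deq \sigma$ and $\rho \fatsemi \pi_2 \Deq \tau$, law $({\bf m3})$ again yields $\rho \fatsemi \pi_1 \Deq \rho ; {\tt fst}$ and $\rho \fatsemi \pi_2 \Deq \rho ; {\tt snd}$; hence $\rho ; {\tt fst} \Deq \sigma \Deq \lang \sigma , \tau \rang ; {\tt fst}$ and likewise for ${\tt snd}$, and joint monicity gives $\rho \Deq \lang \sigma , \tau \rang$.

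I expect the main obstacle to be the two ingredients that are genuinely about the concrete model rather than formal comonoid bookkeeping: the Seely coherence identity ${\tt der}_{A \with B} ; {\tt fst} \Deq e_{A,B} ; ({\tt der}_A \tensor {\tt weak}_B) ; {\tt unit}_A$, and the joint monicity of the projections. Both are best checked at the level of the inducing partial functions on moves --- tracing an Opponent move in the output $A$ through ${\tt fst}$, ${\tt der}_{A \with B}$ and $e_{A,B}^{-1}$ to watch it resurface in the $\ofcourse A$ factor at a shifted index, and invoking index-independence of ${\tt der}$ (the first proposition on dereliction) to absorb the shift. The real care is in matching the re-indexing conventions of $e_{A,B}$, ${\tt der}$ and ${\tt con}$ only up to $\Deq$, precisely where the permutation $\pi$ in the definition of $\Deq_{\ofcourse A}$ does its work.
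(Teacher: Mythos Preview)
The paper does not actually supply a proof of this proposition; it states the projections, gives the explicit formula for the pairing, and moves on. So there is nothing to compare against in the paper beyond the ambient machinery it has set up.

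Your outline is correct and is exactly the kind of verification one would expect here. The reduction of $\lang\sigma,\tau\rang \fatsemi \pi_1$ to $\sigma$ via $({\bf m3})$, the Seely coherence ${\tt der}_{A\with B};{\tt fst} \Deq e_{A,B};({\tt der}_A\tensor{\tt weak}_B);{\tt unit}_A$, terminality of $I$ to collapse $\tau^{\dagger};{\tt weak}_B$, and the comonoid counit law is the right chain; each step is an instance of results the paper has already established (Propositions on dereliction, promotion, contraction/weakening, and the exponential isomorphism). For uniqueness, your observation that $P_{A\with B} = P_A + P_B$ forces ${\tt fst}$ and ${\tt snd}$ to be jointly monic in $\Games$ is precisely the concrete fact that makes $\with$ a product here, and combining it with $({\bf m3})$ to pass from $\rho\fatsemi\pi_i$ to $\rho;{\tt fst}$ (resp.\ ${\tt snd}$) is clean.

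One small point of care: the paper's statement of the pairing writes the Seely map as $e$ in the direction $\ofcourse A\tensor\ofcourse B \to \ofcourse(A\with B)$, whereas the proof of the exponential-law proposition defines $e_{A,B}:\ofcourse(A\with B)\linimpl \ofcourse A\tensor\ofcourse B$. Your use of $e_{A,B}^{-1}$ for that step is the internally consistent choice; just flag the discrepancy so a reader is not confused. Beyond that, the only genuine work, as you note, is checking the coherence identity and joint monicity at the level of inducing partial functions, and your description of how to do this (tracing an $O$-move in $A$ back through the maps and absorbing the index shift via ${\tt der}^i\Deq{\tt der}^j$) is accurate.
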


In fact, we have:

\begin{proposition}
$\KG$ has countable products.
\end{proposition}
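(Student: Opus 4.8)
The plan is to construct the countable product directly as a game, generalising the binary product $A \with B$ already constructed above, and then to verify its universal property in $\KG$. For a countable family $\SET{A_i}_{i\in\omega}$ of games, I would define the game $\prod_{i\in\omega}A_i$ by
\[ M_{\prod_i A_i}=\Sum_{i\in\omega}M_{A_i},\quad \lambda_{\prod_i A_i}=[\lambda_{A_i}]_{i\in\omega},\quad P_{\prod_i A_i}=\Sum_{i\in\omega}P_{A_i},\quad \Deq_{\prod_i A_i}=\Sum_{i\in\omega}\Deq_{A_i}, \]
exactly as for $\with$ but now ranging over all of $\omega$; here $P_{\prod_i A_i}$ consists of those positions lying \emph{entirely within a single component} $A_i$. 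I would first check that this is a well-formed game: conditions {\bf (p1)}--{\bf (p3)} and {\bf (e1)}--{\bf (e3)} are inherited componentwise, the essential point being, exactly as for $A\with B$, that since $P_{\prod_i A_i}=\Sum_i P_{A_i}$ every play is confined to one component, so bracketing, alternation and the equivalence reduce to the corresponding data for a single $A_i$.

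Next I would give the projections. For each $j$ let ${\tt fst}_j:\prod_i A_i\Fr A_j$ be the evident component projection in $\Games$, induced by the partial injection that is the identity on $M_{A_j}$ and undefined on $M_{A_i}^P$ for $i\neq j$ (generalising ${\tt fst}$ and ${\tt snd}$), and set $\pi_j=\prod_i A_i\stackrel{{\tt der}}{\Fr}\prod_i A_i\stackrel{{\tt fst}_j}{\Fr}A_j$ in $\KG$. To define the pairing of a family $\sigma_j:\ofcourse C\linimpl A_j$ ($j\in\omega$), I would work with the inducing partial functions $g_j:M^P_{\ofcourse C}+M^O_{A_j}\Pfr M^O_{\ofcourse C}+M^P_{A_j}$. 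Fixing a bijection $\langle-,-\rangle:\omega\times\omega\to\omega$, I reindex the copies of $C$ so that $\sigma_j$ uses exactly the copies whose index lies in the $j$-th block $\SET{\langle j,n\rangle\mid n\in\omega}$; since these blocks are pairwise disjoint, the reindexed pieces have pairwise disjoint domains, and their union is a well-defined partial function $f:M^P_{\ofcourse C}+\Sum_j M^O_{A_j}\Pfr M^O_{\ofcourse C}+\Sum_j M^P_{A_j}$, inducing the strategy $\lang\sigma_j\rang_{j\in\omega}$. The structural fact that makes this work, and that distinguishes the product from a (nonexistent) countable tensor, is that any play stays inside one component $A_j$, so at most one $\sigma_j$ is ever active and no genuine interleaving of copies across components can occur.

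Finally I would verify the universal property: that $[\sigma]\mapsto([\sigma\fatsemi\pi_j])_{j\in\omega}$ is a bijection from $\Hat{\ofcourse C\linimpl\prod_i A_i}$ onto the set of families $([\tau_j])_{j\in\omega}$ with $[\tau_j]\in\Hat{\ofcourse C\linimpl A_j}$, with inverse $([\sigma_j])_j\mapsto[\lang\sigma_j\rang]$. Concretely this amounts to the equations $\lang\sigma_j\rang\fatsemi\pi_k\Deq\sigma_k$ together with uniqueness up to $\Deq$. The equations I would check by unwinding $\fatsemi$ as $(-)^{\dagger};(-)$ and analysing the execution formula for composition, using confinement of plays to a single component to see that $\lang\sigma_j\rang\fatsemi\pi_k$ simply replays $\sigma_k$ on the $k$-th block of copies of $C$, which is $\Deq\sigma_k$ by the comonad laws ({\bf m2}) and monotonicity of composition with respect to $\Subeq$; uniqueness follows because confinement forces any $\tau:\ofcourse C\linimpl\prod_i A_i$ to be determined up to $\Deq$ by its components $\tau\fatsemi\pi_j$. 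The main obstacle is exactly this last step: tracking the copies of $C$ through promotion and the execution formula, and checking that the combination of the $\sigma_j$ genuinely respects $\Deq$ (rather than merely being a strategy). Everything else is a routine extension of the binary case, and together with the terminal object $I$ this shows that $\KG$ has all countable products.
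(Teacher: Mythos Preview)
The paper does not actually give a proof of this proposition: it is stated as a bare fact immediately after the binary case, and countable pairing $\lang\tau_n\mid n\in\omega\rang$ is then used freely in Section~3.4 without further comment. Your construction is exactly the natural generalisation of the binary product $A\with B$ and of the paper's pairing-via-contraction, and the key structural observation you isolate---that $P_{\prod_i A_i}=\Sum_i P_{A_i}$ confines every play to a single component, so at most one $\sigma_j$ is ever engaged---is precisely what makes the history-free pairing well-defined after reindexing. So your plan is correct and is essentially what the paper leaves implicit.

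Two small corrections. First, in verifying $\lang\sigma_j\rang\fatsemi\pi_k\Deq\sigma_k$ you unwind to $\lang\sigma_j\rang^{\dagger};{\tt der};{\tt fst}_k$; the comonad law you want here is ({\bf m3}), $\sigma^{\dagger};{\tt der}\Deq\sigma$, not ({\bf m2}). Second, the step ``$\lang\sigma_j\rang;{\tt fst}_k\Deq\sigma_k$'' is where the reindexing is absorbed: this is not a comonad law but the fact that permutations of indices in $\ofcourse C$ are built into $\Deq_{\ofcourse C}$, so the block-$k$-reindexed copy of $\sigma_k$ is $\Deq\sigma_k$. Once you make that explicit, the verification that $\lang\sigma_j\rang\Deq\lang\sigma_j\rang$ (i.e.\ that the pairing lands in the partial equivalence) goes through by the same reasoning, and uniqueness follows as you say.
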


\paragraph{Cartesian closure}
We define $ A \Rightarrow B  \equiv \ofcourse A \linimpl B$.
\begin{proposition}
$\KG$ is cartesian closed.
\end{proposition}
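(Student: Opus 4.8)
The plan is to establish cartesian closure by exhibiting, for each game $A$, a right adjoint to the product functor $(-) \with A$ on $\KG$. Since $\KG$ already has finite products (the terminal object $I$ and the binary products $A \with B$), it suffices to show that $B \mapsto (A \Rightarrow B) = (\ofcourse A \linimpl B)$ is such a right adjoint, i.e.\ to produce a bijection
\[ \KG(C \with A,\, B) \;\cong\; \KG(C,\, A \Rightarrow B) \]
natural in $C$ and $B$. I would obtain it by chaining together structure already in hand. Unfolding the co-Kleisli homsets gives $\KG(C \with A, B) = \Games(\ofcourse(C \with A), B)$; the Seely isomorphism $e_{C,A} : \ofcourse(C \with A) \cong \ofcourse C \tensor \ofcourse A$ then yields $\Games(\ofcourse(C \with A), B) \cong \Games(\ofcourse C \tensor \ofcourse A, B)$; and currying in the autonomous category $\Games$ gives $\Games(\ofcourse C \tensor \ofcourse A, B) \cong \Games(\ofcourse C, \ofcourse A \linimpl B) = \KG(C, A \Rightarrow B)$. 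Explicitly, the transpose of $f : \ofcourse(C \with A) \Fr B$ is $\Lambda(e_{C,A}^{-1} ; f)$, and dually the evaluation map $(A \Rightarrow B) \with A \Fr B$ is the $\Games$-composite $e ; ({\tt der} \tensor \ident{\ofcourse A}) ; {\tt App}_{\ofcourse A, B}$.

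The remaining task is to check that this family of bijections is natural, and this is where the genuine content lies. Each constituent step is bijective at the level of $\Games$, so the composite is certainly a bijection; the subtlety is that naturality must be tested with respect to \emph{co-Kleisli} composition $\sigma \fatsemi \tau = \sigma^{\dagger} ; \tau$, not ordinary composition in $\Games$. Covariant naturality in $B$ unwinds, via the comonad laws $({\bf m1})$--$({\bf m3})$ and naturality of $\Lambda$, to a comparison of $\dagger$-liftings that the standard argument reduces to an identity. I expect the main obstacle to be contravariant naturality in $C$: precomposition with $g : C' \Fr C$ acts as $h \mapsto g^{\dagger} ; h$, so the naturality square simultaneously involves the comultiplication (hidden inside $g^{\dagger}$), the Seely isomorphism $e$, and the contraction/weakening comonoid structure on $\ofcourse C$ that was used to build the products of $\KG$.

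The crux --- and the step I would spend the most effort on --- is that the excerpt has so far produced $e_{A,B}$ only as a bare natural isomorphism; cartesian closure additionally requires that $e$ be \emph{coherent} with the comonad and comonoid structure (concretely, that it commute appropriately with ${\tt der}$, $(\cdot)^{\dagger}$, ${\tt con}$ and ${\tt weak}$). These are exactly the Seely coherence conditions, and they are precisely what make the contravariant naturality square collapse. Once they are verified --- which amounts to diagram chases using the comonoid laws and the comonad laws $({\bf m1})$--$({\bf m3})$ already established --- one is in the setting of the standard fact that the co-Kleisli category of a linear exponential comonad on an autonomous category is cartesian closed, and the proof is complete.
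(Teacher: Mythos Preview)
Your approach is essentially the same as the paper's: chain the Seely isomorphism $\ofcourse(C \with A) \cong \ofcourse C \tensor \ofcourse A$ with the autonomous currying adjunction to get the required bijection of hom-sets. The paper's proof is in fact considerably terser than yours --- it simply displays the chain of isomorphisms and asserts naturality without further comment --- so your discussion of the Seely coherence conditions needed for naturality under co-Kleisli composition is more careful than what the paper actually provides.
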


\begin{proof} We already know that $\KG$ has finite products. Also,
we have the natural isomorphisms
\[ \begin{array}{lcl}
\KG (A \with B, C) & = & \Games (\ofcourse (A \with B), C) \\
& \cong & \Games (\ofcourse A \tensor \ofcourse B , C) \\
& \cong & \Games (\ofcourse A , \ofcourse B \linimpl C) \\
& = & \KG (A, B \Rightarrow C).
\end{array} \]
Thus $\KG$ is cartesian closed, with ``function spaces'' given by $\Rightarrow$.
\end{proof}

We shall write ${\cal I}=\KG$, since we think of this category as our
intensional model.

\subsection{Order-enrichment}
There is a natural ordering on strategies on a game $A$
given by set inclusion.
It is easily seen that (history-free) strategies are closed under directed
unions, and that $\{ \epsilon \}$ is the least element in this ordering.
However, morphisms in $\Games$ are actually partial equivalence classes
of strategies, and we must define an order on these partial equivalence
classes.

We define:

$$ [\sigma]\Sx_A [\tau] \IFF \sigma\Subeq \tau.$$

\begin{proposition}
$\Sx_{A}$ is a partial order over  $\hat{A}$. The least element
in this partial order is $[\{ \epsilon \} ]$.
\end{proposition}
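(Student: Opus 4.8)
The plan is to reduce everything to the already-established facts (Proposition on Properties of $\Subeq$) that $\Subeq$ is a transitive relation on strategies and that $\Deq$, which by definition is the conjunction of $\Subeq$ in both directions, is a partial equivalence relation. The first and only slightly delicate point I would address is \emph{well-definedness} of $\Sx_A$ on partial equivalence classes: since the relation is specified through chosen representatives, I must check that its truth value is independent of the choice of $\sigma \in [\sigma]$ and $\tau \in [\tau]$. So suppose $\sigma \Deq \sigma'$ and $\tau \Deq \tau'$; unfolding the definition of $\Deq$ gives in particular $\sigma' \Subeq \sigma$ and $\tau \Subeq \tau'$. If $\sigma \Subeq \tau$, then transitivity yields $\sigma' \Subeq \sigma \Subeq \tau \Subeq \tau'$, hence $\sigma' \Subeq \tau'$; the reverse implication is symmetric. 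Thus $\sigma \Subeq \tau$ holds iff $\sigma' \Subeq \tau'$ holds whenever $[\sigma]=[\sigma']$ and $[\tau]=[\tau']$, so $\Sx_A$ is well-defined on $\hat{A}$.

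Next I would verify the three partial-order axioms, each of which is immediate. For \textbf{reflexivity}, recall the standing convention that the elements of $\hat{A}$ are classes of strategies satisfying $\sigma \Deq \sigma$; since $\Deq$ is just $\Subeq$ holding in both directions, this assumption is precisely $\sigma \Subeq \sigma$, which is exactly the statement $[\sigma] \Sx_A [\sigma]$. For \textbf{transitivity}, if $[\sigma] \Sx_A [\tau]$ and $[\tau] \Sx_A [\upsilon]$ then $\sigma \Subeq \tau$ and $\tau \Subeq \upsilon$, and transitivity of $\Subeq$ delivers $\sigma \Subeq \upsilon$, i.e. $[\sigma] \Sx_A [\upsilon]$. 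For \textbf{antisymmetry}, if $[\sigma] \Sx_A [\tau]$ and $[\tau] \Sx_A [\sigma]$ then $\sigma \Subeq \tau$ and $\tau \Subeq \sigma$, which is by definition $\sigma \Deq \tau$, so $[\sigma] = [\tau]$.

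For the least element, I would first observe that $\{\epsilon\}$ is a history-free strategy with $\{\epsilon\} \Deq \{\epsilon\}$, so that $[\{\epsilon\}]$ genuinely is an element of $\hat{A}$. Both claims hold vacuously: every defining implication—of the two history-free conditions and of $\Subeq$ itself—has a premise of the form $sab \in \{\epsilon\}$, and no such string can exist, since $sab$ has length at least two whereas the sole element of $\{\epsilon\}$ is the empty string. The identical observation shows that for every strategy $\sigma$ with $\sigma \Deq \sigma$ the implication defining $\{\epsilon\} \Subeq \sigma$ is vacuously satisfied, so $\{\epsilon\} \Subeq \sigma$ and hence $[\{\epsilon\}] \Sx_A [\sigma]$ for all $[\sigma] \in \hat{A}$.

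I do not expect any real obstacle: every clause is either a direct instance of the previously proved transitivity and symmetry of $\Subeq$ and $\Deq$, or else holds vacuously for the empty-strategy case. The single step that genuinely requires thought, rather than being a one-line unfolding of definitions, is the well-definedness argument in the first paragraph, and it is exactly there that transitivity of $\Subeq$ is essential; once that is in hand, the remaining properties follow mechanically.
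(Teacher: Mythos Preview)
Your proof is correct. The paper actually states this proposition without proof, treating it as routine; your argument supplies exactly the details one would expect, relying on the previously established transitivity of $\Subeq$ (Proposition on Properties of $\Subeq$) and the definition of $\Deq$ as the symmetric closure. The well-definedness check you single out is indeed the only point requiring a moment's care, and your handling of it via transitivity is the natural one.
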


We have not been able to determine whether $(\hat{A}, \sqsubseteq_{A})$ is
a cpo in general. However, a weaker property than cpo-enrichment suffices
to model PCF, namely {\em rationality}, and this property can be
verified for $\KG$.

A {\em pointed poset} is a partially ordered set with a least element.
A cartesian closed category ${\Bbb C}$ is {\em pointed-poset enriched}
(ppo-enriched) if:
\begin{itemize}
\item Every hom-set ${\Bbb C}(A, B)$ has a ppo structure
$({\Bbb C}(A, B), \sqsubseteq_{A,B}, \bot_{A,B})$.
\item Composition, pairing and currying are monotone.
\item Composition is {\em left-strict}: for all $f : A \rightarrow B$,
\[ \bot_{B,C} \circ f = \bot_{A, C} . \]
\end{itemize}
${\Bbb C}$ is {\em cpo-enriched} if it is ppo-enriched, and moreover
each poset
\[ ({\Bbb C}(A, B), \sqsubseteq_{A,B}) \]
is directed-complete,
and composition preserves directed suprema.
${\Bbb C}$ is {\em rational}  if it is ppo-enriched, and moreover
for all $f : A \times B \rightarrow B$:
\begin{itemize}
\item The chain $(f^{(k)} \mid k \in \omega )$ in ${\Bbb C}(A, B)$
defined inductively by
\[ f^{(0)} = \bot_{A, B}, \;\;\;\;\;\;\;\;
f^{(k+1)} = f \circ \langle \ident{A},
f^{(k)} \rangle \]
has a least upper bound, which we denote by $f^{\rat}$.
\item For all $g : C \rightarrow A$, $h : B \rightarrow D$,
\[ g \circ f^{\rat} \circ h  = \bigsqcup_{k \in \omega} g \circ f^{(k)} \circ
h . \]
\end{itemize}
Altough the standard definition of categorical model for PCF is based on
cpo-enriched categories, in fact rational categories suffice to interpret
PCF, as we will see in Section~2.10.

\paragraph{Strong completeness and continuity}
Let $A$ be a game, and $(\Lambda , \leqslant )$ a directed set.
A family $\{ [\sigma_{\lambda}] \mid \lambda \in \Lambda \}$ is said to be
{\em strongly directed} if there exist strategies $\sigma'_{\lambda}$ for each
$\lambda \in \Lambda$ such that $\sigma'_{\lambda} \in [\sigma_{\lambda}]$
and $\lambda \leqslant \mu \; \Rightarrow \; \sigma'_{\lambda} \subseteq
\sigma'_{\mu}$.

\begin{proposition}
A strongly directed family is $\sqsubseteq$-directed.
Every strongly directed family has a $\sqsubseteq$-least upper bound.
\end{proposition}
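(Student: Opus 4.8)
The plan is to reduce both assertions to a single elementary lemma, applied via a ``localisation'' principle furnished by the directedness of $\Lambda$. The lemma is: \emph{if $\sigma \Incl \tau$ as sets of positions and $\tau \Deq \tau$, then $\sigma \Subeq \tau$.} This is immediate from the definition~(\ref{sequiv1}): given $sab \in \sigma$, $s' \in \tau$ and $sa \Deq s'a'$, set inclusion yields $sab \in \tau$, and $\tau \Subeq \tau$ then supplies a $b'$ with $s'a'b' \in \tau$ and $sab \Deq s'a'b'$. Let $\{ \sigma'_{\lambda} \}$ be the witnesses of strong directedness, so $\lambda \leqslant \mu$ implies $\sigma'_{\lambda} \Incl \sigma'_{\mu}$. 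I first record that each $\sigma'_{\lambda} \Deq \sigma'_{\lambda}$: since $\sigma'_{\lambda} \in [\sigma_{\lambda}]$ we have $\sigma'_{\lambda} \Deq \sigma_{\lambda}$, and as $\sigma_{\lambda} \Deq \sigma_{\lambda}$ the symmetry and transitivity of $\Deq$ give $\sigma'_{\lambda} \Deq \sigma'_{\lambda}$.

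For $\Sx_A$-directedness, given indices $\lambda$, $\mu$ the directedness of $\Lambda$ supplies $\nu$ with $\lambda , \mu \leqslant \nu$, whence $\sigma'_{\lambda} \Incl \sigma'_{\nu}$ and $\sigma'_{\mu} \Incl \sigma'_{\nu}$. Applying the lemma with $\tau = \sigma'_{\nu}$ (using $\sigma'_{\nu} \Deq \sigma'_{\nu}$) gives $\sigma'_{\lambda} \Subeq \sigma'_{\nu}$ and $\sigma'_{\mu} \Subeq \sigma'_{\nu}$, that is $[\sigma_{\lambda}] \Sx_A [\sigma_{\nu}]$ and $[\sigma_{\mu}] \Sx_A [\sigma_{\nu}]$; since $[\sigma_{\nu}]$ belongs to the family, it is $\Sx_A$-directed.

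For the least upper bound I would take $\sigma' = \bigcup_{\lambda \in \Lambda} \sigma'_{\lambda}$. As $\lambda \mapsto \sigma'_{\lambda}$ is monotone out of the directed set $(\Lambda , \leqslant )$, the sets $\sigma'_{\lambda}$ form a $\Incl$-directed family, so $\sigma'$ is a directed union and hence again a history-free strategy. The step I expect to be the main obstacle is verifying $\sigma' \Deq \sigma'$, and this is exactly where localisation is used: any two positions of $\sigma'$ already lie in a common $\sigma'_{\nu}$. Concretely, to check $\sigma' \Subeq \sigma'$ take $sab \in \sigma'$, $s'' \in \sigma'$ with $sa \Deq s''a'$; pick $\lambda$, $\mu$ with $sab \in \sigma'_{\lambda}$, $s'' \in \sigma'_{\mu}$, then $\nu$ with $\lambda , \mu \leqslant \nu$, so both lie in $\sigma'_{\nu}$, and $\sigma'_{\nu} \Deq \sigma'_{\nu}$ delivers $b'$ with $s''a'b' \in \sigma'_{\nu} \Incl \sigma'$ and $sab \Deq s''a'b'$. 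Thus $[\sigma']$ is a well-defined point of $\hat{A}$.

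It remains to see that $[\sigma']$ is the least upper bound. It is an upper bound: each $\sigma'_{\lambda} \Incl \sigma'$ with $\sigma' \Deq \sigma'$, so the lemma gives $\sigma'_{\lambda} \Subeq \sigma'$, i.e. $[\sigma_{\lambda}] \Sx_A [\sigma']$. It is least: if $[\tau]$ is any upper bound then $\sigma'_{\lambda} \Subeq \tau$ for every $\lambda$; given $sab \in \sigma'$, $s' \in \tau$, $sa \Deq s'a'$, localising $sab$ to some $\sigma'_{\lambda}$ and applying $\sigma'_{\lambda} \Subeq \tau$ yields the required $b'$, so $\sigma' \Subeq \tau$, that is $[\sigma'] \Sx_A [\tau]$.
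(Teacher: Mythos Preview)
The paper states this proposition without proof, so there is no authors' argument to compare against. Your proof is correct and essentially the natural one: the key observations are exactly the elementary lemma that $\sigma \Incl \tau$ together with $\tau \Deq \tau$ gives $\sigma \Subeq \tau$, and the localisation principle that any finite amount of data in the directed union $\sigma' = \bigcup_{\lambda} \sigma'_{\lambda}$ already lives in some single $\sigma'_{\nu}$. Both the verification that $\sigma' \Deq \sigma'$ and the least-upper-bound argument go through as you describe. One very minor remark: you might note explicitly that $\sigma'$, being a directed union of history-free strategies, is itself a history-free strategy (the paper records this closure property at the start of Section~2.9), though you do say this in passing.
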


Now consider the constructions in $\Games$  we have introduced in previous
sections. They have all been given in terms of concrete operations
on strategies,
which have then been shown to be compatible with the partial preorder
relation $\Subeq$, and hence to give rise to well-defined operations
on morphisms of $\Games$.
Say that an $n$-ary concrete operation $\Phi$ on strategies is
{\em strongly continuous} if it is monotone with respect to $\Subeq$,
and monotone
and continuous with respect to subset inclusion and directed unions:
\[ \begin{array}{ll}
\bullet & \sigma_{1} \Subeq \tau_{1}, \ldots , \sigma_{n} \Subeq \tau_{n} \;
\Longrightarrow \; \Phi (\sigma_{1} , \ldots , \sigma_{n}) \Subeq
\Phi (\tau_{1}, \ldots , \tau_{n})  \\
\bullet & \Phi (\bigcup S_1 , \ldots , \bigcup S_n ) = \bigcup
\{ \Phi (\sigma_i , \ldots , \sigma_n ) \mid \sigma_i \in S_i , \;
i \in 1, \ldots , n \}
\end{array} \]
for directed $S_1 , \ldots , S_n$.
(Note that for $n=0$, these properties reduce to $\Phi \Deq \Phi$.)

\begin{proposition}
Composition, tensor product, currying and promotion
are strongly continuous.
\end{proposition}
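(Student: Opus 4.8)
The definition of strong continuity asks two things of each operation $\Phi$: monotonicity with respect to $\Subeq$, and the identity
\[ \Phi(\textstyle\bigcup S_1, \ldots, \bigcup S_n) = \bigcup \{ \Phi(\sigma_1,\ldots,\sigma_n) \mid \sigma_i \in S_i \} \]
for directed families $S_1,\ldots,S_n$, the latter packaging $\subseteq$-monotonicity (the $\supseteq$ inclusion) together with continuity (the $\subseteq$ inclusion). The plan is to treat the two requirements separately, and to reduce everything to the representation of strategies as sets of positions, so that the four operations can be handled uniformly.

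For $\Subeq$-monotonicity, composition is already covered by Proposition~2.4.2 (composition is monotone with respect to $\Subeq$). For the remaining operations I would argue in exactly the same style as that proof: assuming $\sigma_i \Subeq \tau_i$, take a play witnessing membership in $\Phi(\sigma_1,\ldots,\sigma_n)$ together with an equivalent opening context on the $\tau$-side, and propagate the equivalence of positions move by move, at each step invoking the defining clause~(\ref{sequiv1}) for the relevant component strategy and clause {\bf (e3)} to extend it. For tensor this merely refines the $\Deq$-statement already proved in \S2.5; for currying it is immediate, since (as noted below) currying does not alter the underlying set of plays; for promotion it strengthens the $\Deq$-invariance established in \S2.7, the one delicate point being that $\Deq_{\ofcourse B}$ permits a permutation of the output indices, so the index-matching permutation must be carried through the induction alongside the position equivalence.

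The substance of the proposition is the directed-union identity, and the key is to present each operation concretely as a set of positions cut out by conditions of the form ``certain restrictions (threads) of the play lie in the component strategies, or in their prefix-closures $\Over{\cdot}$''. For composition this is exactly the definition of $\sigma \| \tau$ and $\sigma;\tau$ already given. For tensor, $s \in \sigma \tensor \tau$ iff $s \Rest A,B \in \sigma$ and $s \Rest A',B' \in \tau$, the switching condition guaranteeing that Player's moves in each pair of components are governed solely by the corresponding factor. For currying, the plays of $(A \tensor B) \linimpl C$ and of $A \linimpl (B \linimpl C)$ literally coincide, together with their equivalence relations, under the canonical bijection of move-sets, so $\Lambda$ is the identity on position-sets and all the required properties are trivial for it. For promotion, the pairing function $p$ determines, for each output index $i$, a single thread: the sub-play consisting of the $\ofcourse B$-moves of index $i$ together with the $\ofcourse A$-moves whose index lies in $p(\{i\}\times\omega)$, re-decoded via $p$ into a play of $\ofcourse A \linimpl B$; and $s \in \sigma^\dagger_p$ iff every such thread lies in $\Over{\sigma}$. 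Granting these characterizations, the identity falls out from finiteness and directedness: the $\supseteq$ inclusion is $\subseteq$-monotonicity, immediate because each defining condition is monotone in the components (using that $\B{dom}$, hence $\Over{\cdot}$, is monotone); and for $\subseteq$, a position $s$ in the left-hand side is finite, hence determines only finitely many finite component-plays (for promotion, only finitely many non-empty threads), each lying in $\Over{\bigcup S_j} = \bigcup_{\sigma\in S_j}\Over{\sigma}$ (an unconditional equality, since $\B{dom}$ commutes with unions), so in $\Over{\sigma}$ for some member $\sigma$; by directedness a single member of each $S_j$ dominates all the finitely many members so produced, and these jointly witness $s \in \Phi(\sigma_1,\ldots,\sigma_n)$.

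The main obstacle is neither of these bookkeeping arguments but the step I have assumed for promotion (and, more easily, for tensor): that the operation defined through its inducing partial function---$f^\dagger_p$ in the case of promotion---really does coincide with the per-thread set-of-positions description above. This is the analogue for $(\cdot)^\dagger$ of the coincidence of the two definitions of composition proved as \cite[Proposition~3]{AbramskyS:gamfcm}; verifying it requires unwinding how $f^\dagger_p$ routes moves between the global indices of $\ofcourse A$ and $\ofcourse B$ and the local indices seen by $\sigma$, and checking that the stack discipline prevents the distinct threads from interfering. Once that coincidence is in hand, together with the $\Subeq$-monotonicity extensions sketched above, the continuity argument is genuinely uniform across all four operations.
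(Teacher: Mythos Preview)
The paper states this proposition without proof, so there is no argument to compare against; your proposal is essentially a reconstruction of what the authors leave to the reader. Your overall strategy is sound: separating $\Subeq$-monotonicity (already established piecemeal in Propositions~2.4.2, 2.5.1, and 2.7.2, with currying trivial since it is a bijective relabelling of plays) from the directed-union identity, and handling the latter by exhibiting each operation as a set of positions cut out by finitely many thread-membership conditions, is exactly the right shape. The finiteness-plus-directedness argument you give is the standard one and is correct.

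Your honest identification of the main gap---that for promotion one must verify the coincidence of the $f^\dagger_p$-induced strategy with the per-output-index thread description---is well placed; this is indeed where the work is, and it is analogous to (but somewhat simpler than) the composition case handled in \cite[Proposition~3]{AbramskyS:gamfcm}, since promotion involves no hiding. One small refinement: for the $\Subeq$-monotonicity of promotion you note that the output-index permutation must be carried through the induction; this is right, but you should also observe that the pairing function $p$ may differ between the two sides (Proposition~2.7.2 allows $\sigma^\dagger_p \Deq \tau^\dagger_q$ for distinct $p,q$), so the witnessing permutation on $\ofcourse A$-indices is built from both the permutation on output indices and the two pairing functions. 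Otherwise your sketch would serve as a perfectly acceptable proof where the paper gives none.
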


\begin{proposition}\label{229}
$\KG$ is a rational cartesian closed category.
\end{proposition}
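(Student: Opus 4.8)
The plan is to assemble the structural facts already in hand—cartesian closure, the existence of a least element with monotone structural operations, the Proposition on strongly directed families, and the strong continuity of composition, tensor, currying and promotion—into verifications of first the ppo-enrichment axioms and then the two rationality clauses. Since $\KG$ is already known to be cartesian closed, the only work is order-theoretic.

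For ppo-enrichment, I would take $\bot_{A,B} = [\{\epsilon\}]$, which is the $\Sx$-least element of $\KG(A,B) = \hat{\ofcourse A \linimpl B}$ by the Order-enrichment Proposition. Monotonicity of composition, pairing and currying w.r.t. $\Sx$ follows because each is a composite of the concrete operations (co-Kleisli composition $\sigma \fatsemi \tau = \sigma^{\dagger};\tau$, tensor, promotion, contraction, dereliction, and the fixed isomorphisms $e$), every one of which is strongly continuous and hence monotone w.r.t. $\Subeq$; a composite of $\Subeq$-monotone concrete operations induces a $\Sx$-monotone operation on equivalence classes. For left-strictness, $\bot_{B,C} \circ f$ in applicative order unfolds to $f^{\dagger}; \{\epsilon\}$; since $\{\epsilon\}$ offers no response to Opponent's opening move (which lies in $C$), the interaction produces no non-empty even-length position, so the composite is $\{\epsilon\}$ and equals $\bot_{A,C}$.

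For rationality, given $f : A \with B \to B$, consider $\Psi(\tau) = f \circ \langle \ident{A}, \tau \rangle$ on $\KG(A,B)$, so that $f^{(k)} = \Psi^{k}(\bot)$. The key step is to realize the chain $(f^{(k)})$ by an honestly $\subseteq$-increasing chain of concrete strategies. Fix a representative $\phi \Deq \phi$ of $f$ and let $\hat{\Psi}$ be the corresponding concrete operation on strategies, assembled from the concrete pairing-with-identity and concrete composition with $\phi$. Each constituent concrete operation is continuous w.r.t. directed unions, hence $\subseteq$-monotone: applying continuity to the directed two-element family $\{\sigma, \tau\}$ with $\sigma \subseteq \tau$ gives $\hat{\Psi}(\tau) = \hat{\Psi}(\sigma) \cup \hat{\Psi}(\tau) \supseteq \hat{\Psi}(\sigma)$. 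Since $\{\epsilon\}$ is the $\subseteq$-least strategy, $\{\epsilon\} \subseteq \hat{\Psi}(\{\epsilon\})$, and by induction $\sigma_{k} := \hat{\Psi}^{k}(\{\epsilon\})$ is $\subseteq$-increasing. Compatibility of the strongly continuous operations with $\Deq$ gives $[\sigma_{k}] = f^{(k)}$ for all $k$. Thus $\{f^{(k)}\}$ is a strongly directed family, so by the Proposition on strongly directed families it has a $\Sx$-least upper bound, which we take as $f^{\rat} := [\,\bigcup_{k} \sigma_{k}\,]$.

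The second rationality clause, $g \circ f^{\rat} \circ h = \bigsqcup_{k} g \circ f^{(k)} \circ h$ for $g : C \to A$ and $h : B \to D$, then follows from strong continuity alone: unfolding the co-Kleisli composites, $g \circ f^{\rat} \circ h$ is the value of a fixed composite of strongly continuous concrete operations (promotions and compositions against fixed representatives of $g$ and $h$) applied to $\bigcup_{k} \sigma_{k}$, and strong continuity lets this composite commute with the directed union, so it equals $\bigcup_{k}$ of the same composite applied to $\sigma_{k}$, which represents $\bigsqcup_{k} g \circ f^{(k)} \circ h$ (the right-hand family again being strongly directed). The main obstacle is the third paragraph: everything reduces to producing the $\subseteq$-increasing representatives $\sigma_{k}$, which in turn rests entirely on the already-proved strong continuity of the structural operations together with the fact that $\{\epsilon\}$ is $\subseteq$-least; once that chain is in place, both the existence of $f^{\rat}$ and its interaction with composition are immediate consequences of the strongly-directed-family machinery.
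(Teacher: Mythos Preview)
Your proposal is correct and follows precisely the approach the paper sets up: the paper states Proposition~\ref{229} without proof, having arranged Propositions~2.9.1--2.9.3 (least element, strongly directed suprema, strong continuity of the structural operations) exactly so that the verification you give goes through. Your argument---building a $\subseteq$-increasing chain of representatives for $(f^{(k)})$ via $\subseteq$-monotonicity of the concrete operations, invoking strongly directed suprema for existence of $f^{\rat}$, and using strong continuity to commute the directed union past pre- and post-composition---is the intended one.
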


\subsection{The model of PCF}

PCF is an applied simply-typed $\lambda$-calculus; that is, the terms in PCF are
terms of the simply-typed $\lambda$-calculus built from a certain
stock of constants.
As such, they can be interpreted in any cartesian closed category once we have
fixed the interpretation of the ground types and the constants.
The constants of PCF fall into two groups: the ground and first-order constants
concerned with arithmetic manipulation and conditional branching;
and the recursion combinators
${\bf Y}_{T} : (T \Rightarrow T) \Rightarrow T$  for each type $T$.
These recursion combinators can be canonically interpreted in any rational
cartesian closed category ${\Bbb C}$.
Indeed, given any object $A$ in ${\Bbb C}$, we can define
$\Theta_{A} : {\bf 1} \times (A \Rightarrow A) \Rightarrow A \longrightarrow
(A \Rightarrow A) \Rightarrow A$
by
\[ \Theta_{A} = \llbracket F : (A \Rightarrow A) \Rightarrow A \vdash
\lambda f^{A \Rightarrow A}. f(F f) : (A \Rightarrow A) \Rightarrow A
\rrbracket . \]
Now define ${\bf Y}_{A} = \Theta_{A}^{\rat} : {\bf 1} \longrightarrow (A \Rightarrow
A) \Rightarrow A$.
Note that
\[ {\bf Y}_{A} = \bigsqcup_{k \in \omega} \Theta_{A}^{(k)} =
\bigsqcup_{k \in \omega} \llbracket {\bf Y}_{A}^{(k)} \rrbracket , \]
where
\[ {\bf Y}_{A}^{(0)} = \lambda f^{A \Rightarrow A}. \bot_{A} \;\;\;\;\;\;\;\;
{\bf Y}_{A}^{(k + 1)} = \lambda f^{A \Rightarrow A}. f({\bf Y}_{A}^{(k)} f) . \]
These terms ${\bf Y}_{A}^{(k)}$ are the standard ``syntactic approximants''
to ${\bf Y}_{A}$.

Thus, given a rational cartesian closed category ${\Bbb C}$, a model
$\Mod ({\Bbb C})$
of PCF can be defined by stipulating the following additional information:
\begin{itemize}
\item For each ground type of PCF, a corresponding object
of ${\Bbb C}$. This suffices to determine the interpretation of each
PCF type $T$ as an object in ${\Bbb C}$, using the cartesian closed structure
of ${\Bbb C}$.
(For simplicity, we shall work with the version of PCF with a single ground
type $N$.)
\item For each ground constant  and first-order function of PCF,
say of type $T$,
a morphism $x : {\bf 1} \rightarrow A$ in ${\Bbb C}$, where ${\bf 1}$
is the terminal object in ${\Bbb C}$, and $A$ is the object in ${\Bbb C}$
interpreting the type $T$. ($x$ is a ``point'' or ``global element'' of the
type $A$.)
\end{itemize}

We say that $\Mod ({\Bbb C})$ is a {\em standard model} if ${\Bbb
C}({\bf 1}, N ) \cong {\Bbb N}_{\bot}$, the flat cpo of the
natural numbers, and moreover the interpretation of the ground and
first-order arithmetic constants agrees with the standard one. We
cite an important result due to Berry
\cite{BerryG:modcom,BerryG:fulasl}.

\begin{theorem}[Computational Adequacy]
If $\Mod ({\Bbb C})$ is a standard model, then it is computationally
adequate; {\em i.e.} for all programs $M$ and ground constants $\underline{c}$,
\[ M \longrightarrow^{\ast} \underline{c} \;\; \Longleftrightarrow \;\;
\llbracket M \rrbracket = \llbracket \underline{c} \rrbracket \]
and hence the model is {\em sound}: for all terms $M, N : T$,
\[ \llbracket M \rrbracket \sqsubseteq \llbracket N \rrbracket \;\;
\Longrightarrow \;\; M \sqsubseteq^{\mbox{obs}} N . \]
\end{theorem}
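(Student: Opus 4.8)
The plan is to establish the two directions of the stated equivalence separately, and then derive inequational soundness as a corollary. The forward direction, $M \longrightarrow^{\ast} \underline{c} \THEN \llbracket M \rrbracket = \llbracket \underline{c} \rrbracket$, is the easier half. Since $\llbracket - \rrbracket$ is defined compositionally using the cartesian closed structure of $\Mod({\Bbb C})$, it suffices to show that each one-step reduction preserves denotation, i.e. $M \longrightarrow M'$ implies $\llbracket M \rrbracket = \llbracket M' \rrbracket$; the claim then follows by induction on the length of the reduction. One checks this rule by rule: the $\beta$-rule holds because interpretation in a cartesian closed category validates $\beta$-equality, the arithmetic and conditional rules hold because the model is standard, and the unfolding rule ${\bf Y}\,M \longrightarrow M({\bf Y}\,M)$ is validated because, by the rational structure, ${\bf Y}_A = \Theta_A^{\rat}$ satisfies the fixed-point identity $\llbracket {\bf Y}_A f \rrbracket = \llbracket f({\bf Y}_A f) \rrbracket$.

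The converse, adequacy proper, is the substantial direction, and I would prove it by a logical-relations (computability) argument in the style of Tait and Plotkin. For each PCF type $T$ I define a relation $\triangleleft_T$ between points of $\llbracket T \rrbracket$ and closed PCF terms of type $T$, by induction on $T$: at the ground type, $a \triangleleft_N M$ iff for all $n$, $a = \llbracket \underline{n} \rrbracket \THEN M \longrightarrow^{\ast} \underline{n}$; and at a function type, $f \triangleleft_{S \Rightarrow T} M$ iff for all $a, N$ with $a \triangleleft_S N$ one has $f \cdot a \triangleleft_T MN$, where $\cdot$ is application in the model. The crux is the Fundamental Lemma: for every term $x_1 : S_1, \ldots, x_k : S_k \vdash M : T$ and all $a_i \triangleleft_{S_i} N_i$, the point $\llbracket M \rrbracket \langle a_1, \ldots, a_k \rangle$ is related by $\triangleleft_T$ to the substituted term $M[N_i / x_i]$. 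This is proved by induction on the structure of $M$. Specialising it to a closed program $M : N$ and using the hypothesis $\llbracket M \rrbracket = \llbracket \underline{c} \rrbracket$ yields $M \longrightarrow^{\ast} \underline{c}$, closing the equivalence.

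I expect the fixed-point case of the Fundamental Lemma to be the main obstacle, and it is here that \emph{rationality} rather than full cpo-enrichment must be used with care. Since ${\bf Y}_A$ is interpreted as the rational supremum $\Theta_A^{\rat} = \bigsqcup_k \Theta_A^{(k)}$ of its syntactic approximants, one shows that each approximant is computable (related by $\triangleleft$) and then that $\triangleleft_T$ is closed under precisely the suprema that the rational structure guarantees to exist, namely the lubs of the chains $(f^{(k)})$ defining $(-)^{\rat}$. Because we have not established that $(\hat{A}, \sqsubseteq_A)$ is a cpo, the usual appeal to admissibility under arbitrary directed sups is unavailable; instead the argument must be phrased entirely in terms of these rational lubs. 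The relevant closure property is verified at ground type using the computational behaviour of ground observations, and is propagated through function types exactly because, by the second clause in the definition of rationality, pre- and post-composition commute with $(-)^{\rat}$, so that $g \circ f^{\rat} \circ h = \bigsqcup_k g \circ f^{(k)} \circ h$.

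Finally, inequational soundness follows formally from the established equivalence. Given $\llbracket M \rrbracket \sqsubseteq \llbracket N \rrbracket$ and any ground program context $C[-]$, monotonicity of composition, pairing and currying yields $\llbracket C[M] \rrbracket \sqsubseteq \llbracket C[N] \rrbracket$. If $C[M] \longrightarrow^{\ast} \underline{c}$, then by the forward direction $\llbracket C[M] \rrbracket = \llbracket \underline{c} \rrbracket$, which is a maximal element of the flat cpo ${\Bbb C}({\bf 1}, N) \cong {\Bbb N}_{\bot}$; hence $\llbracket \underline{c} \rrbracket \sqsubseteq \llbracket C[N] \rrbracket$ forces $\llbracket C[N] \rrbracket = \llbracket \underline{c} \rrbracket$, and adequacy gives $C[N] \longrightarrow^{\ast} \underline{c}$. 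As this holds for every context, $M \sqsubseteq^{\mbox{obs}} N$.
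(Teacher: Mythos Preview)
The paper does not give its own proof of this theorem: it is stated as a result due to Berry \cite{BerryG:modcom,BerryG:fulasl}, with only the parenthetical remark that Berry's original argument, stated for cpo-enriched models, in fact uses only rational closure. So there is no in-paper proof to compare against.

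Your outline is the standard Tait--Plotkin computability (logical relations) argument, which is indeed Berry's method, and your adaptation to the rational setting is correct. In particular you have located the one genuinely delicate point: admissibility of the computability predicate at the ${\bf Y}$-case cannot appeal to closure under arbitrary directed suprema, but only to the specific chains $(f^{(k)})$ and the identity $h \circ f^{\rat} \circ g = \bigsqcup_k h \circ f^{(k)} \circ g$ guaranteed by the definition of a rational CCC. This is exactly what one needs to push the admissibility through function types (since application is composition) and, combined with the easy ground-type case, it suffices. Your derivation of soundness from adequacy is also correct. One small remark: your aside about ``not having established that $(\hat{A},\sqsubseteq_A)$ is a cpo'' is beside the point here, since the theorem is stated for an \emph{arbitrary} standard rational model $\Mod({\Bbb C})$, not specifically for the games model; but this does not affect the argument.
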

(Berry stated his result for models based on cpo-enriched categories,
but only used rational closure.)

Thus to obtain a model $\Mod (\KG )$ it remains only to specify the
ground types and first-order constants.
The interpretation of $N$ as $\Nat$ has already been given at the
end of Section~2.1. It is readily seen that $\hat{\Nat} \cong {\Bbb N}_{\bot}$.

\paragraph{Ground constants}
For each natural number $n$, there is a strategy $\overline{n} : I \rightarrow
\Nat$, given by
\[ \overline{n} = \{ \epsilon , \ast \underline{n} \} . \]
Also, $\Omega_{\Nat} = [\{\epsilon \} ]$.

\paragraph{Arithmetic functions}
For each number-theoretic partial function $f : {\Bbb N} \Pfr {\Bbb N}$
there is a strategy
\[ \sigma^{f} = \{ \epsilon , \ast_{2}\ast_{1} \} \cup
\{ \ast_{2}\ast_{1}\underline{n}_{1}\underline{m}_{2} \mid
f(n) \Eqdef m \} . \]

\paragraph{Conditionals}
The strategy $\kappa$ interpreting ${\tt if0} : N \Rightarrow N \Rightarrow
N \Rightarrow N$ is defined as follows:
in response to the initial question, $\kappa$ interrogates its first
argument;
if the answer is 0, then it interrogates the second argument, and copies the
reply to the output; if the answer is any number greater than 0, it interrogates
the third argument, and copies the reply to the output.

\begin{proposition}
$\Mod (\KG )$ is a standard model of PCF.
\end{proposition}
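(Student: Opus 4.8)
The plan is to reduce everything to the two requirements in the definition of a \emph{standard model}, since the heavy lifting has already been done: by Proposition~\ref{229}, $\KG$ is a rational cartesian closed category, so by the general construction of Section~2.10 the data specified above (the object $\Nat$ for the ground type $N$, and the strategies $\overline{n}$, $\sigma^{f}$, $\kappa$ for the constants) already determine a model $\Mod(\KG)$ of PCF. It therefore remains only to check that (i) $\KG({\bf 1}, N) \cong {\Bbb N}_{\bot}$, and (ii) the interpretation of the ground and first-order arithmetic constants agrees with the standard one.

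For (i), I would first identify ${\bf 1}$ with the tensor unit $I$, which is terminal in $\KG$, and then compute $\KG({\bf 1}, N) = \Games(\ofcourse I, N) = \Games(I, N)$ using the exponential law $\ofcourse I = I$. Since $M_{I} = \varnothing$, the game $I \linimpl N$ is isomorphic to $N$ (it has the same moves, labelling, positions and equivalence up to the trivial coproduct injection), so $\Games(I, N) \cong \hat{\Nat}$. Finally I would read off $\hat{\Nat} \cong {\Bbb N}_{\bot}$ directly from the definition of $\Nat$: the positions are exactly $\epsilon$, $\ast$, and the $\ast\underline{n}$, so the only history-free strategies $\sigma$ with $\sigma \Deq \sigma$ are $\{\epsilon\}$ and $\{\epsilon, \ast\underline{n}\}$ for a unique $n$; since $\Deq_{\Nat}$ is the identity each partial equivalence class is a singleton, and $\Subeq$ restricts to ordinary inclusion, giving the flat order with $[\{\epsilon\}] \mapsto \bot$ and $[\{\epsilon,\ast\underline{n}\}] \mapsto n$.

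For (ii), the numerals are immediate: under the isomorphism just described $\overline{n} \mapsto n$ and $\Omega_{\Nat} = [\{\epsilon\}] \mapsto \bot$. For the arithmetic strategies I would compute the function on points that $\sigma^{f}$ induces, i.e.\ evaluate the composite $\overline{n} \fatsemi \sigma^{f}$ (equivalently, apply $\sigma^{f}$ to the point $\overline{n}$). Unwinding the composition --- either via $\sigma \| \tau$ followed by restriction to the visible components, or via the execution formula ${\tt EX}$ --- the internal dialogue runs $\ast_{2} \ast_{1} \underline{n}_{1} \underline{m}_{2}$ and exits with output $\underline{m}$ precisely when $f(n) \Eqdef m$, so the induced map on $\hat{\Nat}$ is the strict extension of $f$; when the argument point is $[\{\epsilon\}]$ the dialogue cannot get past $\ast_{1}$ and the composite is $[\{\epsilon\}]$, which is exactly left-strictness of composition. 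The same computation for $\kappa$ shows it interrogates its first argument and, according to whether the returned value is $0$ or positive, copies the second or third argument to the output, yielding the standard strict conditional ${\tt if0}$.

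The genuinely routine but load-bearing part is the first-order computation in (ii): one must check that the execution formula produces \emph{exactly} the standard sequential, strict behaviour and nothing spurious. Here I expect the main obstacle to be bookkeeping rather than ideas --- the Switching and bracketing conditions, together with the stack discipline already established for composition, guarantee that in the composite $\overline{n} \fatsemi \sigma^{f}$ the only legal play is the single dialogue above, so no parallel or non-strict behaviour can arise. Once this is confirmed for $\sigma^{f}$ and $\kappa$, conditions (i) and (ii) hold and $\Mod(\KG)$ is standard.
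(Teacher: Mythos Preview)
The paper states this proposition without proof, treating it as immediate from the preceding setup: it has already remarked that $\hat{\Nat}\cong{\Bbb N}_\bot$, and the strategies $\overline{n}$, $\sigma^f$, $\kappa$ are defined precisely so as to implement the standard constants. Your proposal is correct and is exactly the routine verification one would write out if asked to justify the proposition in detail; there is nothing to compare it against beyond noting that the paper suppresses what you have made explicit.
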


\section{Intensional Full Abstraction}
\newcommand{\Intmod}{\cal M(I)}

\subsection{\pcfc}

In order to obtain our intensional full abstraction result, it turns out
that we need to consider an extension of PCF. This extension is quite
``tame'', and does not change the character of the language. It consists
of extending PCF with a family of first order constants
\[ {\tt case}_{k} : N \Rightarrow \underbrace{N \Rightarrow \cdots
  \Rightarrow N}_{k} \Rightarrow N \] for each $k \in \omega$.  The
functions that these constants are intended to denote are defined by:
\[ \begin{array}{lllll}
{\tt case}_{k} & \perp & n_0 \; n_1 \ldots\ n_{k-1} &=& \perp \\
{\tt case}_{k} & i     & n_0 \; n_1 \ldots\ n_{k-1} &=& n_i, \; 0 \leq
i < k \\
{\tt case}_{k} & i     & n_0 \; n_1 \ldots\ n_{k-1} &=& \perp, \; i
\geq k .
\end{array}
\]
The interpretation of ${\tt case}_{k}$ as a strategy is immediate:
this strategy responds to the initial question by interrogating its
first input; if the response is $i$, with $ 0 \leq i < k$, it
interrogates the $i+1$'th input and copies the answer to the output;
otherwise, it has no response.

To see how harmless this extension, which we call \pcfc, is, note that
each term in \pcfc\  is observationally equivalent to one in PCF.
Specifically,

\begin{tabbing}
case =  lx. \=ly.\= ...\= y. \kill
${\tt case}_{k} \equiv^{\tt obs} \lambda x^N.  \lambda y_0^N. \ldots
\lambda y_{k-1}^N.$ \\
\> ${\tt if0} \; x \; y_0 $ \\
\> \> (${\tt if0} \; ({\tt pred} \; x) \; y_1 $\\
\>\>\> \vdots \\
\>\>\> (${\tt if0} \; (\underbrace{{\tt pred \; pred \; pred}}_k \; x) \;
y_{k-1} \; \Omega ) \ldots ) .$\\
\end{tabbing}
The point is that our intensional model is sufficiently fine-grained
to distinguish between these observationally equivalent terms.
However, note that our results in Section~4 apply directly to PCF.

\subsection{Evaluation Trees}

We shall now describe a suitable analogue of B\"{o}hm
trees~\cite{BarendregtHP:lamcss}
for \pcfc.  These give an (infinitary) notion of normal forms for
\pcfc\ terms, and provide a bridge between syntax and semantics.

We use $\Gamma, \Delta $ to range over type environments $x_1: T_1,
\ldots, x_k: T_k$.  We define \finevalT{\Gamma}{T}, the finite
evaluation trees of type $T$ in context $\Gamma$, inductively as
follows:
\begin{itemize}

\item \twotrans{M \in \finevalT{\Gamma,x:T}{U}}{\lambda x^T. M \in
    \finevalT{\Gamma}{T \Rightarrow U}}

\item \twotrans{}{\Omega, {\tt n} \in \finevalT{\Gamma}{N}}

\item \twotrans{
\begin{array}{l}
\Gamma(x) = T_1 \Rightarrow \ldots T_k \Rightarrow N, \\
 P_i \in \finevalT{\Gamma}{T_i}, 1 \leq i \leq k, \\ Q_n \in
    \finevalT{\Gamma}{N}, n \in \omega, \\ \exists n \in \omega. \;
    \forall m \geq n. \; Q_n = \Omega
\end{array}
}{{\tt case}(xP_1 \ldots P_k, (Q_n
    \mid n \in \omega)) \in \finevalT{\Gamma}{N}}
\end{itemize}

We regard these evaluation trees as defined ``up to
$\alpha$--equivalence'' in the usual sense. Note that if we
identify each $${\tt case}(xP_1 \ldots P_k, (Q_n \mid n \in
\omega))$$ with $${\tt case}_{l}(xP_1 \ldots P_k, Q_0, \ldots,
Q_{l-1})$$ for the least $l$ such that $Q_n = \Omega$ for all $n
\geq l$, then every finite evaluation tree is a term in \pcfc.

We order \finevalT{\Gamma}{T} by the ``$\Omega$--match ordering'':
$M\lpo N$ if $N$ can be obtained from $M$ by replacing
occurrences of $\Omega$ by arbitrary finite evaluation trees.

\begin{proposition}\label{A}
$(\finevalT{\Gamma}{T}, \lpo)$ is a pointed poset with non-empty
meets.  Every principal ideal is a finite distributive lattice.
\end{proposition}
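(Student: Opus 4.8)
The plan is to reduce the statement to ground type and then represent each principal ideal as the lattice of down-sets of a finite poset, from which finiteness, the lattice structure, and distributivity all fall out together.

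\emph{Reduction and the poset axioms.} Every type has the shape $T_1 \Rightarrow \cdots \Rightarrow T_k \Rightarrow N$, and by inspection of the grammar the only finite evaluation trees of a function type are $\lambda$-abstractions. Stripping the leading abstractions therefore gives a bijection $\finevalT{\Gamma}{T} \cong \finevalT{\Gamma'}{N}$ onto ground-type trees over the extended context $\Gamma' = \Gamma, x_1{:}T_1, \dots, x_k{:}T_k$, and since every occurrence of $\Omega$ lies in the body, this bijection is an order-isomorphism for $\lpo$. Hence it suffices to argue in the ground case, where each tree is $\Omega$, a numeral, or a ${\tt case}$-term. Reflexivity and transitivity of $\lpo$ are immediate (fill by $\Omega$; compose fillings), and antisymmetry follows by a structural induction, since any nontrivial filling strictly enlarges the defined part of a tree. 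The least element is $\Omega$ (equivalently $\lambda \vec{x}.\, \Omega$ at type $T$), which lies below every ground-type tree by filling its single $\Omega$. Thus $(\finevalT{\Gamma}{T}, \lpo)$ is a pointed poset.

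\emph{Principal ideals as down-set lattices.} Fix a ground-type tree $M$ and let $\mathrm{Pos}(M)$ be the finite set of nodes at which $M$ carries a non-$\Omega$ label (a numeral, or a ${\tt case}$-head with its variable), ordered by the ancestor relation with the root least. I would show that $N \mapsto \mathrm{Pos}(N)$, sending $N \lpo M$ to the set of nodes at which $N$ is defined, is an order-isomorphism from $\downarrow M$ onto the lattice ${\cal O}(\mathrm{Pos}(M))$ of ancestor-closed subsets of $\mathrm{Pos}(M)$ ordered by inclusion. It is well-defined because a subtree can only be present if its enclosing constructor is, so $\mathrm{Pos}(N)$ is ancestor-closed; injective because on $\mathrm{Pos}(N)$ the tree $N$ must agree with $M$ (matching non-$\Omega$ labels, since $N \lpo M$) and is $\Omega$ elsewhere; and surjective because any ancestor-closed $D \subseteq \mathrm{Pos}(M)$ is realised by the pruning of $M$ that keeps $M$'s labels on $D$ and sets everything else to $\Omega$. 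Moreover $N \lpo N'$ iff $\mathrm{Pos}(N) \subseteq \mathrm{Pos}(N')$, since filling $\Omega$'s only enlarges the defined part. As the down-sets of any poset form a distributive lattice --- a sublattice of $2^{\mathrm{Pos}(M)}$ closed under $\cap$ and $\cup$ --- and $\mathrm{Pos}(M)$ is finite, $\downarrow M$ is a finite distributive lattice, with $\sqcap$ and $\sqcup$ transported to $\cap$ and $\cup$. I expect distributivity to be the one point requiring a genuine idea rather than a direct induction, and this correspondence is precisely what delivers it (together with finiteness and the lattice property) in a single stroke; the care needed lies in checking that the defined positions of a pruning form an ancestor-closed set and that binary meet and join are carried to intersection and union.

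\emph{Non-empty meets.} Binary meets exist globally: for $N_1, N_2$ of the same type one defines $N_1 \sqcap N_2$ by the evident recursion (equal numerals give that numeral; a common ${\tt case}$-head gives that head with componentwise meets of arguments and branches; every other case gives $\Omega$), and a routine structural induction shows this is the greatest lower bound. For a non-empty family $S$, fix $M_0 \in S$ and set $S' = \{\, M_0 \sqcap M \mid M \in S \,\} \subseteq\, \downarrow M_0$. The common lower bounds of $S$ and of $S'$ coincide, and $S'$ is finite because $\downarrow M_0$ is finite by the previous step; hence $\bigsqcap S = \bigsqcap S'$ exists as an iterated binary meet. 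So the whole poset has all non-empty meets, completing the three assertions.
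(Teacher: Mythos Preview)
Your proof is correct. The paper itself states this proposition without proof, treating it as a routine structural fact about $\Omega$-match orderings (it is essentially the standard B\"ohm-tree situation specialised to evaluation trees). Your argument via the order-isomorphism $\downarrow M \cong {\cal O}(\mathrm{Pos}(M))$ is exactly the right way to make this precise: it delivers finiteness, the lattice structure, and distributivity simultaneously, and your derivation of arbitrary non-empty meets from binary meets plus finiteness of principal ideals is clean. The only place to be slightly more careful in a polished write-up is the treatment of the $P_j$ arguments in a ${\tt case}$ term, which are of function type and hence carry leading $\lambda$'s; but as you note, these abstractions are forced by the grammar and contribute no $\Omega$-positions of their own, so the recursion on ``defined positions'' passes through them transparently.
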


Now we define \evalT{\Gamma}{T}, the space of evaluation trees, to be
the ideal completion of \finevalT{\Gamma}{T}.  As an immediate
consequence of proposition~\ref{A}, we have

\begin{proposition}
\evalT{\Gamma}{T} is a dI-domain.  The compact elements are terms of
\pcfc.
\end{proposition}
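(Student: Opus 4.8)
The plan is to read the statement as an instance of the standard correspondence between a poset and its ideal completion, transporting each defining property of a dI-domain back to the finite, combinatorial facts already secured in Proposition~\ref{A}. Recall that for any poset $P$ the ideal completion $\mathrm{Idl}(P)$ is an algebraic cpo whose compact elements are exactly the principal ideals $\mathord{\downarrow}p$, with $p \mapsto \mathord{\downarrow}p$ an order-isomorphism of $P$ onto the poset of compact elements. Taking $P = \finevalT{\Gamma}{T}$, algebraicity of $\evalT{\Gamma}{T}$ is then automatic, and its compact elements are in order-preserving bijection with the finite evaluation trees; invoking the identification made just before Proposition~\ref{A} of each finite evaluation tree with a \pcfc\ term settles the second sentence at once.

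It then remains to verify that $\evalT{\Gamma}{T}$ is a dI-domain, i.e. a bounded-complete algebraic cpo satisfying Berry's finiteness axiom (I) and distributivity axiom (d). First I would note pointedness: the least tree $\Omega$ of Proposition~\ref{A} yields a least ideal $\mathord{\downarrow}\Omega$. For bounded completeness it suffices to work with compact elements, hence to show that every bounded-above subset $S$ of $\finevalT{\Gamma}{T}$ has a least upper bound; this follows formally from Proposition~\ref{A}, since the set $U$ of upper bounds of $S$ is non-empty and its meet $\bigwedge U$ — which exists because $\finevalT{\Gamma}{T}$ has all non-empty meets — is readily checked to be the least upper bound of $S$. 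Thus $\finevalT{\Gamma}{T}$ is bounded complete and so is $\evalT{\Gamma}{T}$. Axiom (I) is then immediate: a compact element is some $\mathord{\downarrow}M$, and since a directed subset of the finite lattice $\mathord{\downarrow}M$ (finite by Proposition~\ref{A}) has a maximum, the elements below $\mathord{\downarrow}M$ are exactly the $\mathord{\downarrow}M'$ with $M' \lpo M$, a finite set.

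The one step needing care is the distributivity axiom (d): for all $x,y,z$ with $\{y,z\}$ bounded, $x \wedge (y \vee z) = (x \wedge y) \vee (x \wedge z)$. I would first verify it for compact $x,y,z$ and then extend. The key observation is that if $b$ is a compact bound of $\{y,z\}$, then $y$, $z$, $y \vee z$, and all three meets with $x$ lie in the principal ideal $\mathord{\downarrow}b$; since $y,z \lpo b$, replacing $x$ by $x \wedge b$ leaves each of these meets unchanged, so the whole identity can be computed inside $\mathord{\downarrow}b$, where it holds because that principal ideal is a distributive lattice by Proposition~\ref{A}. Because finite meets and bounded finite joins commute with the directed suprema presenting $x,y,z$ as least upper bounds of their compact approximants, the identity lifts from the compact case to arbitrary elements. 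I expect this localisation-to-a-principal-ideal argument — deriving the global axiom (d), whose $x$ need not be bounded with $y,z$, from the merely local distributivity of Proposition~\ref{A} — to be the only non-routine point; everything else is standard bookkeeping about ideal completions.
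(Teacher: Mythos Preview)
Your proposal is correct and essentially the canonical unpacking of the claim. The paper itself offers no proof at all: it simply declares the proposition ``an immediate consequence of Proposition~\ref{A}'' and adds one clarifying sentence identifying compact elements with principal ideals $\mathord{\downarrow}M$. You have supplied precisely the domain-theoretic bookkeeping that justifies calling it immediate---transporting pointedness, bounded completeness, axiom~(I), and axiom~(d) back along the ideal-completion construction to the finitary hypotheses of Proposition~\ref{A}.

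One small remark on presentation. You flag the localisation-to-a-principal-ideal step for axiom~(d) as the only non-routine point, but the step you pass over more quickly---lifting the distributive identity from compact to arbitrary elements by continuity of binary meet and bounded binary join---also deserves a sentence of justification, since continuity of meet is not automatic in an arbitrary Scott domain. It does hold here (algebraicity gives $x \wedge y = \bigsqcup\{c \mid c \text{ compact}, c \lpo x, c \lpo y\}$, from which continuity in each argument follows), so there is no gap; but a reader might reasonably pause at that line.
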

Strictly speaking, the compact elements of \evalT{\Gamma}{T} are
principal ideals ${\downarrow}(M)$, where $M$ is a finite evaluation
tree, which can be identified with a term in \pcfc\ as explained
above.

\subsection{The Bang Lemma}

We now prove a key technical result. This will require an additional
hypothesis on games. Say that a game $A$ is {\em well-opened} if the
opening moves of $A$ can only appear in opening positions. That is,
for all $a\in M_A$ if $a\in P_A$ then
$$ sa\in P_A \Rightarrow s=\epsilon .$$
It is easy to see that $N$ and $I$ are well-opened, that if $A$
and $B$ are well-opened so is $A\with B$ and that if $B$ is
well-opened so is $A\Rightarrow B$.
Here and henceforth we blur the distinction between the type $N$ and
the game it denotes.
Thus the category of well-opened
games is cartesian closed, and generates the same PCF model
$\Intmod$.

Now let $A$ be well-opened and consider $s\in \even{P_{! A\linimpl
    !B}}$.
Using the switching condition, we see that $s$ can be written uniquely
as
$$s=\beta_1\cdots\beta_k$$
where each ``block'' $\beta_j$ has the form $(i_j,b_j)t_j$,
i.e. starts with a move in $!B$; every move in $!B$ occurring in
$\beta_j$ has the form $(i_j,b')$ for some $b'$, i.e. has the same
index as the opening move in $\beta_j$; if $\beta_i,\beta_j$ are two adjacent
blocks then $i\neq j$; and $|\beta_j|$ is even (so
each block starts with an O-move). We refer to  $i_j$ as the {\em
  block index} for $\beta_j$. For each such block index $i$ we define
$s_i$ to be the subsequence of $s$ obtained by deleting all blocks
with index $i'\neq i$.

Some further notation. For $s\in M^*_{!A\linimpl !B}$, we define
$${\tt FST}(s)=\{ i \mid \exists a. (i,a)\mbox{ occurs in } s \}$$
i.e. the set of all indices of moves in $!A$ occurring in $s$. Also,
we write $s \Rest A{,j}$ for the projection of $s$ to moves of the
form $(j,a)$, i.e. moves in $!A$ with index $j$; and similarly
$s\Rest B{,j}$.

\begin{lemma}\label{disblock}
 For all $\sigma:!A\linimpl !B$ with $A$ well-opened, $s\in\sigma$,
 and block indices $i,j$ occurring in $s$:
\ITEM{
\item[(i)] $s_i\in\sigma$,
\item[(ii)] $i\neq j$ implies ${\tt FST}(s_i)\cap {\tt FST}(s_j)=\varempty$.
}
\end{lemma}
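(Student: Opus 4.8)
The plan is to treat (ii) as the substantive statement and to read (i) off from it. First I would use history-freedom to write $\sigma = \B{traces}(f)$ with $f = \B{fun}(\sigma)$, so that every Player move of $s$ is $f$ applied to its immediate predecessor. Since each block is a contiguous, even-length segment beginning with an O-move, erasing the blocks of index $\neq i$ leaves every O/P adjacency inside a surviving block intact, and — as each surviving block again begins with an O-move — the alternation is maintained across the joins as well; hence each Player move of $s_i$ is still $f$ of its predecessor in $s_i$, and $s_i \in \B{traces}(f) = \sigma$ as soon as $s_i$ is a legal position. The $\ofcourse B$-projection of $s_i$ is exactly the single copy $s \Rest B{,i}$, which is legal, so the whole weight of (i) rests on the legality of $s_i \Rest A$ together with the global bracketing of $s_i$, and both of these are consequences of (ii).

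For (ii) I would argue by induction on $|s|$, maintaining the invariant that each $\ofcourse A$-copy occurring in $s$ is ``owned'' by a single block-index. The decisive observation is that, because opening moves of $A$ are O-moves of $A$, the flip in $A \linimpl B$ makes them \emph{Player} moves of $\ofcourse A \linimpl \ofcourse B$: Player is the party that spawns copies. Thus in the inductive step, where a pair $a\,b$ with $b = f(a)$ is appended, if $b = (c,m)$ with $m$ an opening move of $A$, then well-openedness (transported to copy $c$ through $s \Rest A{,c} \in P_A$) forces $(c,m)$ to be the very first occurrence of index $c$, so $c$ is \emph{globally} fresh and may be assigned to the current block-index without disturbing the invariant. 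Everything therefore reduces to the case in which $b$ continues an already-active copy $c$.

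The \textbf{main obstacle} is exactly this continuation case: I must show that Player never extends a copy $c$ while playing in a block whose index differs from the owner of $c$. This genuinely requires $\sigma \Deq \sigma$ and not merely well-openedness: one can write down a legal position of $\ofcourse A \linimpl \ofcourse B$ in which a single $\ofcourse A$-copy is split between two blocks of different $\ofcourse B$-index, and that position lies in $\B{traces}(f)$ for a suitable history-free $f$ — it is only excluded once representation independence is imposed. The tool is $\Deq$-commutation: whenever $sab \in \sigma$, $s' \in \sigma$ and $sa \Deq s'a'$, there is $b'$ with $s'a'b' \in \sigma$ and $sab \Deq s'a'b'$, so $f$ intertwines with the reindexings witnessing $\Deq$. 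The substance is to feed into this a reindexing $\pi$ separating the pool of the current block-index from that of $c$'s owner; $\Deq$-commutation then pins the copy-index of Player's response to the $\pi$-image of what it was in $c$'s own regime, while well-openedness (an opening move cannot recur) turns ``fresh up to $\pi$'' into ``genuinely not $c$'', contradicting the assumed continuation. Choosing these witnessing reindexings coherently along the whole prefix, so that the ownership invariant survives every step, is where essentially all the difficulty lies.

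Granting (ii), the two legality obligations left from the first paragraph close quickly. By (ii) every $\ofcourse A$-copy appearing in $s_i$ is owned by $i$, so $s_i \Rest A$ is the restriction of the legal position $s \Rest A$ to those indices and consists of whole copies, each opened (well-openedness) and played out within $s_i$; hence $s_i \Rest A \in P_{\ofcourse A}$. For the global bracketing of $s_i$: by the stack discipline a matched question--answer pair lies in one component, and within that component in a single index — the same $\ofcourse B$-index, or, by (ii), the same owning block-index in $\ofcourse A$ — so erasing the other blocks removes only moves whose matched partner is also removed and leaves every surviving match intact. Thus $s_i$ inherits (p3) and the stack discipline from $s$, $s_i$ is legal, and with the history-freedom observation above we obtain $s_i \in \sigma$, which is (i).
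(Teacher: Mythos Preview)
Your central claim --- that the ``continuation'' case for (ii) \emph{genuinely requires} $\sigma\Deq\sigma$ --- is false, and this misdiagnosis is what makes your plan harder than it needs to be. The paper proves (i) and (ii) by a \emph{simultaneous} induction on $|s|$, and the inductive instance of (i) is precisely what replaces your appeal to $\Deq$-commutation. At the step $s = tumm'$ (with $m'$ the Player move in the current block of index $i$), the inductive (i) gives $(tu)_i\in\sigma$; one then checks $(tu)_i m\in P_{!A\linimpl !B}$ (this uses the inductive (ii) together with the switching condition to see that if $m\in!A$ then its index already lies in ${\tt FST}((tu)_i)$), and history-freedom yields $(tu)_i mm'\in\sigma\subseteq P_{!A\linimpl !B}$. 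Now if $m'=(c,a)$ with $c$ owned by some $i'\neq i$, the projection $(tu)_i mm'\Rest A{,c}$ is the single move $a$, so $a\in P_A$ is an opening move of $A$; yet in $s$ itself the same $a$ occurs non-initially in copy $c$. Well-openedness gives the contradiction. No representation independence is invoked.

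Your counterexample does not do what you think. You can certainly write down a legal $s\in P_{!A\linimpl !B}$ with a split copy and an $f$ with $s\in\B{traces}(f)$, but then $\B{traces}(f)$ will also contain the ``shortcut'' prefix in which the later block is played first --- and that prefix projects to a single non-opening $A$-move standing alone, which is \emph{not} in $P_A$. Hence $\B{traces}(f)\not\subseteq P_{!A\linimpl !B}$, so $f$ does not induce a strategy at all; the hypothesis $\sigma\subseteq P$, not $\sigma\Deq\sigma$, is what excludes it. In effect the ``reindexed comparison position'' you wanted to manufacture via $\Deq$ is already handed to you by the inductive (i): it is $(tu)_i mm'$. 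Your reduction of (i) to (ii) is fine, but by decoupling them you lose access to exactly this object and are forced toward a vaguer $\Deq$-based argument that you yourself flag as the place ``where essentially all the difficulty lies''. Run the induction jointly and the difficulty disappears.
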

\begin{proof} By induction on $|s|$. The basis is trivial. For the
inductive step, write $s=\beta_1\dots\beta_k \beta_{k+1}$,
$t=\beta_1\dots\beta_k$, $umm'=\beta_{k+1}$. Let the index of
$\beta_{k+1}$ be $i$. We show firstly that $(tu)_im\in
P_{!A\linimpl
  !B}$. By the induction hypothesis, for all $j\in {\tt FST}((tu)_i),\; \;
(tu)_i\Rest A{,j}=tu\Rest A{,j}$, while obviously $(tu)_i\Rest
B{,i}=tu\Rest B{,i}$.
Also, $m$ is either a move in $!B$ with index $i$, or a move in
$!A$. In the latter case, by the switching condition the index of $m$
is in ${\tt FST}((tu)_i)$.
Hence the projection conditions are satisfied
by $(tu)_im$. Moreover $(tu)_im$ is well-formed by the Projection Lemma
2.4.4.
Thus $(tu)_im\in P_{!A \linimpl !B}$ as required.

By induction hypothesis, $(tu)_i\in\sigma$, and since
$\sigma=\sigma_f$ is a well-defined history-free strategy, with
$f(m)=m'$ since $tumm'\in\sigma$ we conclude that
$(tumm')_i=(tu)_imm'\in\sigma$. Moreover, for $j\neq i$,
$(tumm')_j=(tu)_j\in\sigma$ by induction hypothesis. This establishes
{\it (i)}.

Now note that, if $tu$ satisfies {\it (ii)}, so does $tum$ by the
switching condition. Suppose for a contradiction that $tumm'$ does
not satisfy {\it (ii)}. This means that $m'=(j,a)$, where $j\in
{\tt FST}((tu)_{i'})$ for some $i'\neq i$ and hence that $s\Rest
A{,j} = s'a$ where $s'\neq\epsilon$, so that $a$ is a non-opening
move in $A$. But we have just shown that $(tu)_i mm'\in\sigma
\subseteq P_{!A\linimpl !B}$ and hence that $(tu)_i mm' \Rest
A{,j}\in P_A$. By induction hypothesis $$ {\tt FST}((tu)_i)\cap
{\tt FST}((tu)_{i'}) =\varempty$$ and hence $(tu)_imm' \Rest
A{,j}=a$. Thus $a$ is both an opening and a non-opening move of
$A$, contradicting our hypothesis that $A$ is well opened.
\end{proof}

With the same notation as in lemma \ref{disblock}:

\begin{corollary}\label{cordisblock}
\begin{itemize}
\item[(i)] $\forall j\in {\tt FST}(s_i) \; s_i\Rest A{,j} =s\Rest A{,j}$.
\item[(ii)] $\forall j\not\in {\tt FST}(s_i) \; s_i\Rest A{,j}=\epsilon$.
\item[(iii)] $s_i\Rest B{,i}=s\Rest B{,i}$.
\item[(iv)] $j\neq i$ implies $s_i\Rest B{,j}=\epsilon$.
\end{itemize}
\end{corollary}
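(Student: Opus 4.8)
The plan is to derive all four identities from the block decomposition $s = \beta_1 \cdots \beta_k$ together with Lemma \ref{disblock}, invoking part (ii) of that lemma only for the single clause that genuinely requires it. Throughout I would use two structural facts already recorded: $s_i$ is by construction the concatenation of exactly those blocks $\beta_\ell$ whose block index equals $i$, and every move in $!B$ occurring inside a block $\beta_\ell$ carries that block's index $i_\ell$.

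Clauses (ii), (iii) and (iv) are then pure bookkeeping on this structure. For (ii), if $j \notin {\tt FST}(s_i)$ then by the definition of ${\tt FST}$ no move $(j,a)$ with $a \in M_A$ appears in $s_i$, so $s_i \Rest A{,j} = \epsilon$. For (iv), any $!B$-move of index $j$ can occur only inside a block of block index $j$; since $s_i$ retains no block of index $j \neq i$, it contains no such move and $s_i \Rest B{,j} = \epsilon$. Clause (iii) is the complementary statement: every $!B$-move of index $i$ in $s$ already lies in a block of block index $i$, and these are precisely the blocks kept in $s_i$, so passing from $s$ to $s_i$ deletes no $!B$-move of index $i$ and $s_i \Rest B{,i} = s \Rest B{,i}$.

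The only clause with real content is (i), which I would prove by contradiction using the disjointness from Lemma \ref{disblock}(ii). Fix $j \in {\tt FST}(s_i)$, so some $!A$-move of index $j$ sits in a block of block index $i$. Were a further $!A$-move of index $j$ to occur in a block $\beta_\ell$ of block index $i' = i_\ell \neq i$, then $j$ would also lie in ${\tt FST}(s_{i'})$, giving $j \in {\tt FST}(s_i) \cap {\tt FST}(s_{i'})$ and contradicting Lemma \ref{disblock}(ii). Hence every $!A$-move of index $j$ occurring in $s$ lies in a block of block index $i$, that is, in $s_i$; none is removed in forming $s_i$, and therefore $s_i \Rest A{,j} = s \Rest A{,j}$. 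I expect this appeal to the disjointness of the index-sets ${\tt FST}(s_i)$ to be the crux: it is exactly what guarantees that the $!A$-component of $s$ is partitioned cleanly among the blocks, so that restricting to a single block index loses none of the relevant $!A$-moves, whereas (ii)--(iv) follow formally from the definitions of $s_i$ and of the block indexing.
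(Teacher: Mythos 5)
Your proof is correct and follows exactly the route the paper intends: the paper states this result as an immediate corollary of Lemma \ref{disblock}, with clauses (ii)--(iv) read off from the block decomposition and clause (i) resting on the disjointness ${\tt FST}(s_i)\cap{\tt FST}(s_j)=\varnothing$ from part (ii) of that lemma. Your identification of (i) as the only clause with real content, and your contradiction argument for it, match the intended justification precisely.
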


\begin{lemma}\label{lemma2}
Let $\sigma,\tau: !A\linimpl !B$ with $A$ well-opened. If
$\sigma;\tt{der}_B \approx \tau;\tt{der}_B$ then $\sigma\approx\tau$.
\end{lemma}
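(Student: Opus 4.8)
The plan is to prove $\sigma \Subeq \tau$ and, by the evident symmetry of the hypothesis, also $\tau \Subeq \sigma$, whence $\sigma \Deq \tau$. The guiding idea is that composing with ${\tt der}_{B}$ exposes exactly the \emph{single--block} behaviour of a strategy on $\ofcourse A \linimpl \ofcourse B$, and that, by the Bang Lemma, such a strategy with $A$ well-opened is completely determined by this single-block behaviour, precisely because distinct blocks never share indices in $\ofcourse A$.

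First I would make precise how $\sigma ; {\tt der}_{B}$ records single-block plays. Since ${\tt der}_{B}$ is the copy-cat on a single index of $\ofcourse B$, in the composite $\sigma ; {\tt der}_{B} : \ofcourse A \linimpl B$ the strategy $\sigma$ only ever sees $\ofcourse B$-moves carrying one fixed index $i$; relabelling those index-$i$ moves as moves of $B$ sets up a bijection between the plays $\hat{s}$ of $\sigma ; {\tt der}_{B}$ and the single-block plays $s_{i}$ of $\sigma$ (in the notation of Lemma~\ref{disblock}). By Corollary~\ref{cordisblock} this relabelling carries $s_{i} \Rest B{,i}$ to the corresponding play of $B$ and leaves the $\ofcourse A$-part intact, so it is compatible with $\Deq$: a witnessing permutation on $\ofcourse A \linimpl \ofcourse B$ restricts to one on $\ofcourse A \linimpl B$. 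In particular the hypothesis $\sigma ; {\tt der}_{B} \Deq \tau ; {\tt der}_{B}$ says precisely that the single-block behaviour of $\sigma$ and $\tau$ agree, i.e.\ that they are related by $\Subeq$ in both directions.

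For the main step, take $sab \in \sigma$, $s' \in \tau$ with $sa \Deq s'a'$; I must produce $b'$ with $s'a'b' \in \tau$ and $sab \Deq s'a'b'$. By {\bf (e2)} we have $s \Deq s'$, so the block decompositions of $s$ and $s'$ are aligned by the witnessing permutation $\pi$ on $\ofcourse B$-indices, matching block index $i$ of $s$ with $i' = \pi(i)$ of $s'$. A short case analysis using the Switching Condition shows that the O-move $a$ and its response $b$ lie in a single block, of index $i$ say; hence by Lemma~\ref{disblock}(i) the restriction $(sab)_{i} = s_{i}ab$ is again in $\sigma$, and likewise $s'_{i'} \in \tau$. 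Relabelling as above transports $s_{i}ab$ to a play $\hat{s}\hat{a}\hat{b} \in \sigma ; {\tt der}_{B}$ and $s'_{i'}$ to $\hat{s'} \in \tau ; {\tt der}_{B}$, with $\hat{s}\hat{a} \Deq \hat{s'}\hat{a'}$. Applying $\sigma ; {\tt der}_{B} \Subeq \tau ; {\tt der}_{B}$ yields $\hat{b'}$ with $\hat{s'}\hat{a'}\hat{b'} \in \tau ; {\tt der}_{B}$ and $\hat{s}\hat{a}\hat{b} \Deq \hat{s'}\hat{a'}\hat{b'}$; relabelling back gives a move $b'$ in block $i'$ with $s'_{i'}a'b' \in \tau$ and $s_{i}ab \Deq s'_{i'}a'b'$. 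Finally $s'a'$ is a position (since $sa$ is and $sa \Deq s'a'$), so the second history-freeness clause for $\tau$ promotes $s'_{i'}a'b' \in \tau$ to $s'a'b' \in \tau$ with the same response $b'$.

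It remains to assemble $sab \Deq s'a'b'$ from the block-$i$ equivalence $s_{i}ab \Deq s'_{i'}a'b'$ just obtained together with the block-$j$ equivalences $s_{j} \Deq s'_{\pi(j)}$ ($j \neq i$) coming from $s \Deq s'$; this is the step I expect to be the main obstacle. The point is that the per-block witnessing permutations must be fused into a single global permutation of the $\ofcourse A$- and $\ofcourse B$-indices. This is exactly where well-openedness is used: by Lemma~\ref{disblock}(ii) the index sets ${\tt FST}(s_{i})$ for distinct block indices are pairwise disjoint, so the block permutations have disjoint supports and their union is a well-defined permutation; the $\Fst^{\star}$-matching condition is preserved because $b$ and $b'$ are inserted inside the matched blocks $i$ and $i'$ while all other blocks are untouched. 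This yields $sab \Deq s'a'b'$, establishing $\sigma \Subeq \tau$; the symmetric argument using the other direction of the hypothesis gives $\tau \Subeq \sigma$, and hence $\sigma \Deq \tau$.
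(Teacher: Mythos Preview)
Your proposal is correct and follows essentially the same route as the paper's own proof. The only cosmetic difference is that the paper argues by contrapositive (assuming $\sigma\not\approx\tau$ and producing a witness that $\sigma;{\tt der}_B\not\approx\tau;{\tt der}_B$), while you argue the direct implication; the substance---restricting to the block containing $a$ via Lemma~\ref{disblock}, passing through ${\tt der}_B$ to compare single-block plays, lifting the response $b'$ back to $s'a'b'\in\tau$ by history-freeness, and reassembling the global equivalence $sab\approx s'a'b'$ from the block-$i$ data together with the untouched blocks via the disjointness of the ${\tt FST}$-sets (Lemma~\ref{disblock}(ii) and Corollary~\ref{cordisblock})---is identical. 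If anything, you are more explicit than the paper about the final reassembly step, which the paper dispatches with ``Using Lemma~\ref{disblock} and Corollary~\ref{cordisblock} as above''; one small point worth noting in that step is that the $(\pi\circ\Fst)^{\star}$ clause in the definition of $\Deq_{\ofcourse A}$ forces the block-$i$ witness $\rho_A$ to agree with the global witness $\pi_A$ on ${\tt FST}((sa)_i)$, so the ``fusion'' of per-block permutations is really just $\pi$ extended on the possibly fresh index of $b$.
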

\begin{proof} We prove the contrapositive. Suppose
$\sigma\not\approx\tau$. Then w.l.o.g. we can assume that there
exist positions $sab, s'a'$ such that $sab\in \sigma$,
$s'\in\tau$, $sa\approx s'a'$, and either
$s'a'\not\in\rm{dom}(\tau)$ or $ s'a'b'\in\tau$ and
$sab\not\approx s'a'b'$. Let the block index of $a$ in $sa$ be
$i$, and of $a'$ in $s'a'$ be $i'$. Note that the block index of
$b$ in $sab$ must also be $i$.

By Lemma \ref{disblock}, $(sab)_i\in\sigma$ and $s'_{i'}\in\tau$. We
claim that $(sa)_i\approx(s'a')_{i'}$. Indeed, if $s=\beta_1\dots
\beta_k$, $s'=\beta'_1\dots\beta'_{k'}$, then by definition of
$\approx_{!A\linimpl !B}$ we must have $k=k'$ and the permutation
$\pi=[\pi_A,\pi_B]$ witnessing $sa\approx s'a'$ must map the block
index of each $\beta_j$ to that of $\beta'_j$, so that in particular
$sa\Rest B{,i}\approx s'a'\Rest B{,i'}$. Moreover, $\pi_A$ must map
${\tt FST}((sa)_i)$ bijectively onto ${\tt FST}((s'a')_{i'})$. Using
Corollary \ref{cordisblock} for each $j\in {\tt FST}((sa)_i)$, $(sa)_i \Rest
  A{,j}= sa\Rest A{,j}\approx s'a'\Rest A {,\pi_A(j)}
  =(s'a')_{i'}\Rest A {,\pi_A(j)}$.

Now let $tcd$ be defined by replacing each $(i,m)\in !B$ in $s_i
ab$ by $m$; and $t'c$ be defined by replacing each $(i',m')\in !B$
in $s'_ia'$ by $m'$. Then $tcd\in\sigma;{\rm{der}_B}^i,\;
t'\in\tau;{\rm{der}_B}^{i'}$ and $tc\approx t'c'$. We wish to
conclude that $tcd,t'c'$ witness the non equivalence
$\sigma;\tt{der}_B\not\approx \tau;\tt{der}_B$. Suppose for a
contradiction that for some $d'$,
$t'c'd'\in\tau;{\tt{der}_B}^{i'}$ and $tcd\approx t'c'd'$. This
would imply that for some $b'$, $s'_{i' }a'b'\in\tau$ and
$s_iab\approx s'_{i'}a'b'$. Since $s'a'\in P_{!A\linimpl !B}$ and
$\tau$ is a well-defined history-free strategy, this implies that
$s'a'b'\in\tau$. Using Lemma \ref{disblock} and Corollary
\ref{cordisblock} as above, $sab\approx s'a'b'$. This yields the
required contradiction with our assumptions on $sab,s'a'.\;\;$
\end{proof}

\begin{proposition}
[The Bang Lemma]

For all $\sigma:{!A\linimpl !B}$ with $A$ well opened,
$$\sigma\approx (\sigma;\tt{der}_B)^{\dag}. $$
\end{proposition}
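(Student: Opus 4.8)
The plan is to obtain the Bang Lemma as a short deduction from Lemma~\ref{lemma2} together with the comonad identity (m3), rather than arguing about positions directly. Lemma~\ref{lemma2} already says that, for $A$ well-opened, post-composition with ${\tt der}_{B}$ is \emph{faithful} up to $\approx$ on strategies of type $!A \linimpl !B$: two such strategies that agree after applying ${\tt der}_{B}$ were already $\approx$-equivalent. Consequently, to prove $\sigma \approx (\sigma;{\tt der}_{B})^{\dag}$ it suffices to check that the two sides become $\approx$-equal once composed with ${\tt der}_{B}$, and this is exactly what the comonad law delivers.

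Concretely, I would write $\rho = \sigma;{\tt der}_{B} : {!A} \linimpl B$ and set $\tau = \rho^{\dag} = (\sigma;{\tt der}_{B})^{\dag} : {!A} \linimpl {!B}$. First I must check that $\tau$ is a genuine morphism, i.e. $\tau \approx \tau$: this holds because composition preserves $\approx$ (so $\rho \approx \rho$) and promotion is well-defined on $\approx$-classes. Both $\sigma$ and $\tau$ are then strategies of type $!A \linimpl {!B}$ with $A$ well-opened, so Lemma~\ref{lemma2} is applicable to the pair $(\sigma,\tau)$. Next I compute $\tau;{\tt der}_{B}$: instantiating identity (m3) at the strategy $\rho : {!A} \linimpl B$ gives $\rho^{\dag};{\tt der}_{B} \approx \rho$, that is $(\sigma;{\tt der}_{B})^{\dag};{\tt der}_{B} \approx \sigma;{\tt der}_{B}$, so $\tau;{\tt der}_{B} \approx \sigma;{\tt der}_{B}$. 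Feeding this hypothesis into Lemma~\ref{lemma2} yields $\sigma \approx \tau$, which is precisely the claimed equivalence $\sigma \approx (\sigma;{\tt der}_{B})^{\dag}$.

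Thus the Bang Lemma itself introduces no new obstacle: it is a two-line reduction once the supporting machinery is in place. The genuine difficulty has been front-loaded into Lemma~\ref{disblock}, Corollary~\ref{cordisblock} and Lemma~\ref{lemma2}, and the decisive hypothesis is the well-openedness of $A$. It is this assumption that forces the blocks $\beta_{j}$ of any position to use pairwise disjoint sets of $!A$-indices, which in turn is what makes ${\tt der}_{B}$ faithful. Were $A$ not well-opened, distinct copies of $B$ could share indexed $A$-moves, the index-coherent block decomposition $s = \beta_{1}\cdots\beta_{k}$ would no longer separate the copies, and the reduction to ``single-copy'' behaviour underlying Lemma~\ref{lemma2} would fail; so I would flag the appeal to Lemma~\ref{lemma2}, and through it the well-opened hypothesis, as carrying all the weight of the argument.
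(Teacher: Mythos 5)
Your proposal is correct and is essentially identical to the paper's own proof: instantiate the right identity law (m3) at $\sigma;{\tt der}_B$ to get $(\sigma;{\tt der}_B)^{\dag};{\tt der}_B \approx \sigma;{\tt der}_B$, then invoke Lemma~\ref{lemma2} (faithfulness of post-composition with ${\tt der}_B$ under well-openedness of $A$) to conclude $\sigma \approx (\sigma;{\tt der}_B)^{\dag}$. The extra remarks on well-definedness of the promotion and on where well-openedness is used are sound but add nothing beyond the paper's two-line argument.
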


\begin{proof} By the right identity law (Prop. 2.11 \textbf{(m3)}),
 $\sigma;\tt{der}_B\approx (\sigma;\tt{der}_B)^{\dag};\tt{der}_B$.
By Lemma \ref{lemma2}, this implies that
$\sigma\approx(\sigma;\tt{der}_B)^{\dag}. \;\;$
\end{proof}

\subsection{The Decomposition Lemma}

In this section we prove the key lemma for our definability
result. We begin with some notational conventions. We will work
mostly in the cartesian closed category  $\M(\KG)$. We write
arrows in this category as $\sigma : A\Rightarrow B$ and
composition e.g. of $\sigma: A\Rightarrow B$ and $\tau :
B\Rightarrow C$ as $\tau\circ\sigma$. We will continue to write
composition in the Linear Category $\cal G$ in diagram order
denoted by $;$ . We write $${\tt Ap} : (A\Rightarrow B)\with
A\Rightarrow B$$ for the application in the cartesian closed
category, and ``linear'' application in $\cal G$ as $${\tt LAPP }
: (A\linimpl B)\tensor A\rightarrow B$$ All games considered in
this section are assumed to be well-opened. If $s\in
M^*_{D\Rightarrow B}$, we write $${\tt FST}(s)=\{ i\;\mid \;
\exists d\;.(i,d)\mbox{ occurs in } s\}$$ i.e. the set of all
indices of moves in $!D$ occurring in $s$.

Now we define a strategy
$$\chi : N\with N^{\omega} \Rightarrow N$$
corresponding to the {\bf case} construct. It will actually be most
convenient to firstly define the affine version
$$\chi_a : N_1\tensor N_2^{\omega}\linimpl N_0$$
where we have tagged the occurrences of $N$ for ease of
identification;
$$\chi_a={\tt Pref}\{*_0*_1 n_1 *_{2,n} m_{2,n} m_0 \;\mid \;
n,m\in\omega\}$$
i.e. $\chi_a$ responds to the initial question by interrogating its
first input; if it gets the response $n$ it interrogates the $n$'th
component of its second input, and copies the response as its answer
to the initial question.

Now we define

$$\chi =\; !(N\with N^{\omega})
\stackrel{e_{N,N^{\omega}}}{\longrightarrow}
!N\tensor
!N^{\omega}\stackrel{\rm{der}_N\tensor\rm{der}_{N^{\omega}}}{\longrightarrow}
N\tensor N^{\omega}\stackrel{\chi_a}{\longrightarrow}N$$
We will now fix some notation for use in the next few lemmas. Let
$$\sigma : C\with (A\Rightarrow N_2) \Rightarrow N_1$$
be a strategy where we have tagged the two occurrences of $N$ for
ease of identification. We assume that $\sigma$'s response to the
initial question $*_1$ in $N_1$ is to interrogate its second input,
i.e. to ask the initial question $*_2$ in $N_2$. Thus any non-empty
position in $\sigma$ must have the form $*_1 *_2 s$. Moreover by the
stack discipline any {\em complete} position in $\sigma$, i.e. one
containing an answer to the initial question $*_1$, must have the form
$$ *_1 *_2 \; s\;  n_2 \; t \; n_1$$
where $n_2$ is the answer corresponding to the question $*_2$ (this is
the sole---albeit crucial---point at which the stack condition is used
in the definability proof). Thus a general description of non-empty
positions in $\sigma$ is that they have the form
$$ *_1 *_2 s \; n_2 \;t$$
where $n_2$ is the answer corresponding to $*_2$, or
$$*_1 *_2 s$$
where $*_2$ is not answered in $s$.

\begin{lemma}\label{lem1}
For all $ *_1 *_2 s \;n_2 \; t\in\sigma$
\begin{itemize}
\item[(i)] $*_1 *_2 n_2 t\in\sigma$
\item[(ii)] ${\tt FST}(s)\cap {\tt FST}(t)=\varempty$.
\end{itemize}
\end{lemma}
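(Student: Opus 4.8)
The plan is to read off the bracketing structure of a position $*_1*_2\,s\,n_2\,t\in\sigma$, to prove (ii) directly from well-openedness, and then to deduce (i) by excising the block $s$ and appealing to history-freeness. Write $G$ for the game $\ofcourse(C\with(A\Rightarrow N_2))\linimpl N_1$; for an index $i$ write $s\Rest i$ for the restriction of a play to the moves of $G$ bearing index $i$, which by the projection condition is a position of $C\with(A\Rightarrow N_2)$; and let $i_0$ be the index carried by $*_2$ and $n_2$. Since $n_2$ answers $*_2$ and $*_2$ is asked immediately after $*_1$, at the moment $n_2$ is played $*_2$ is the pending (most recently unanswered) question; hence every question of $s$ is answered within $s$, so $s$ is balanced, and $*_1$ cannot be answered inside $s$. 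Consequently every move of $s$ lies in some input copy, and $*_1$ (with its answer, which if present is the last move of $t$) is the only move outside the input copies.

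For (ii) I would argue that each index occurring in $s$ names a copy that is already complete by the time $n_2$ is played. If $j\in{\tt FST}(s)$ and $j\neq i_0$, the opening move of copy $j$ is a question occurring in $s$ (opening moves of the games in play are questions), and since $s$ is balanced it is answered within $s$; if $j=i_0$ its opening question $*_2$ is answered by $n_2$. In either case, once the opening (root) question of a copy has been answered, bracketing together with well-openedness leaves no legal continuation of $s\Rest j$: a further move would be either an opening move occurring outside opening position, or an answer with no pending question, both impossible. Hence copy $j$ contributes no move to $t$, i.e. $j\notin{\tt FST}(t)$, giving ${\tt FST}(s)\cap{\tt FST}(t)=\varempty$. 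This is exactly the opening-versus-non-opening contradiction exploited in the proof of Lemma~\ref{disblock}.

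For (i) I would form $w=*_1*_2\,n_2\,t$ and verify $w\in P_G$. Alternation is immediate, and the bracketing of $w$ agrees with that of the original play because, $s$ being balanced and $*_2$ being answered by $n_2$, the set of pending questions on entering $t$ is $\{ *_1 \}$ in both plays; the matched pair $*_2,n_2$ and the stack discipline are thereby preserved. For the projection conditions, $w\Rest i_0=*_2\,n_2$ is a legal two-move position; by (ii) the copies in ${\tt FST}(s)\setminus\{ i_0 \}$ project to $\epsilon$ while those in ${\tt FST}(t)$ are untouched; and the $N_1$-projection is unchanged. Finally, since $\sigma={\tt traces}({\tt fun}(\sigma))$ and each Player move of $w$ has the same immediate predecessor as in the original play (only $s$, which lies entirely before $n_2$, has been deleted), every Player move of $w$ is ${\tt fun}(\sigma)$ of its predecessor, so $w\in{\tt traces}({\tt fun}(\sigma))=\sigma$.

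The main obstacle is the completion (``single-rootedness'') step inside (ii): showing that a copy admits no legal extension once its opening question has been answered. This is what simultaneously drives (ii) and guarantees, in (i), that excising the balanced block $s$ leaves copy $i_0$ as the valid play $*_2\,n_2$; it rests on combining well-openedness with the bracketing condition, just as in Lemma~\ref{disblock}. A secondary point requiring care is checking that deleting $s$ preserves bracketing and stack discipline of the residual play, which is precisely what the disjointness in (ii) secures.
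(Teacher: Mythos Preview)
Your direct argument for (ii) has a genuine gap. The dichotomy you invoke---that once the opening question of a copy has been answered, any further move there is either an opening move (forbidden by well-openedness) or an answer (forbidden by bracketing)---misses a third possibility: a \emph{non-opening} Opponent question. Well-openedness only says that opening moves cannot recur; it does not force every $O$-question to be opening. Concretely, let $C$ have moves $q_1,a_1,q_2,a_2$ labelled $OQ,PA,OQ,PA$, with positions the prefixes of $q_1a_1q_2a_2$. This $C$ is well-opened, yet the balanced position $q_1a_1$ admits the legal continuation $q_2$. With such a $C$ one can build a perfectly valid position $*_1*_2\,s\,n_2\,t$ in the ambient game with some $j\in{\tt FST}(s)\cap{\tt FST}(t)$ (open copy $j$ in the $C$-component during $s$, answer its root there, and after $n_2$ play $q_2$ at $j$). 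So (ii) is \emph{false} as a statement about arbitrary positions; what makes it true is that no such position lies in a \emph{history-free} $\sigma$, because the offending Player move replayed after the shorter prefix $*_1*_2\,n_2$ would project to a non-opening move sitting alone at index $j$, which is illegal. Your argument for (ii) never invokes history-freeness, so it cannot succeed at this level of generality.

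This is exactly why the paper proves (i) and (ii) \emph{together} by induction on $|t|$. With $t=umm'$, the inductive (ii) for $u$ plus the switching condition give ${\tt FST}(um)\cap{\tt FST}(s)=\varnothing$, which is what lets one check $*_1*_2\,n_2\,um\in P$; then history-freeness forces $*_1*_2\,n_2\,umm'\in\sigma$, and if $m'=(j,d)$ with $j\in{\tt FST}(s)$ the projection of this shorter play to index $j$ is the single move $d$, exhibiting $d$ as opening, contradicting its non-opening occurrence in the original play. Note that Lemma~\ref{disblock}, which you cite as precedent for a direct argument, follows precisely this inductive pattern; it does not support the short-cut you take. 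Your verification of (i) given (ii) is fine, so restructuring into a simultaneous induction on $|t|$ repairs the proof.
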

\begin{proof} By induction on $|t|$, which must be odd. (The proof follows very similar
lines to that of Lemma~3.1 in the previous section). The basis is
when $t = m$, and $f(n_2 ) = m$, where $\sigma = \sigma_f$. Then
\textit{(i)} follows because $\sigma$ is a well-defined history-free
strategy,
and \textit{(ii)} holds because otherwise $m = (j, d)$ where $d$ is
both
a starting move, using $*_1 *_2 n_2 m \in \sigma$, and a non-starting
move, using $*_1 *_2 s n_2  t\in\sigma$, contradicting
well-openedness.
If $t=umm'$, then we firstly show that $$*_1 *_2  n_2 um
\in P_{C\with (A\Rightarrow N)\Rightarrow N}$$ By the induction
hypothesis and the switching conditions, for all $j\in {\tt
FST}(um)$ $$ *_1 *_2 n_2 um\Rest {C\with(A\Rightarrow N)}{,j} =
*_1 *_2 s n_2 um\Rest {C\with(A\Rightarrow N)}{,j}$$ so
$*_1*_2n_2um$ satisfies the projection conditions because $*_1*_2
sn_2 um$ does. Also, $*_2 s n_2$ is balanced so by the Parity
Lemma 2.4.3 $*_1\;t$ is well formed, and hence $*_1 *_2 n_2 um$ is
well formed. Thus $$*_1 *_2  n_2 um \in P_{C\with (A\Rightarrow
N)\Rightarrow N}$$ Now since $\sigma=\sigma_f$ is a well-defined
history-free strategy with $f(m)=m'$, and $*_1*_2 n_2 u\in\sigma$
by induction hypothesis, we must have $*_1*_2 n_2 umm'\in\sigma$,
establishing {\it (i)}.

For {\it (ii)} suppose for a contradiction that $m'=(j,d)$ for $j\in
{\tt FST}(s)$. Then $*_1*_2 s n_2 t\Rest {C\with (A\Rightarrow
  N)}{,j}=s'd\in P_{C\with(A\Rightarrow N)}$, where
$s'\neq\epsilon$. On the other hand, by induction hypothesis
$*_1*_2 n_2 umm'\Rest {C\with(A\Rightarrow N)}{,j}=d$, and by {\it
(i)}, $d\in P_{C\with(A\Rightarrow N)}$. This contradicts our
assumption that games are well-opened.
\end{proof}

Now we define
$$\sigma'=\{ *_1*_2 s\;  n_2 n_1\;\mid \; *_1 *_2 s \; n_2\in \sigma \}
\cup \{ *_1 *_2 s \;\mid \;*_1*_2s\in\sigma,\; *_2\mbox{ not answered in
  }\sigma\}$$
and for all $n\in\omega$
$$\tau_n=\{ *_1 t \;\mid \; *_1*_2 n_2 \; t \in \sigma\}$$
\begin{lemma}\label{lem2}
$\sigma':C\with(A\Rightarrow N)\Rightarrow N$ and
$\tau_n:C\with(A\Rightarrow N)\Rightarrow N$ ($n\in\omega$) are valid
strategies.
\end{lemma}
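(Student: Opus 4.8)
The plan is to exhibit both $\sigma'$ and $\tau_n$ as the results of elementary ``surgery'' on the plays of the given history-free strategy $\sigma$, and then to transport the defining conditions of a (history-free) strategy from $\sigma$ across this surgery. Recall that to be a valid strategy, a set $\rho$ of plays must be a subset of $\even{P_G}$ (for $G$ the ambient game $C\with(A\Rightarrow N_2)\Rightarrow N_1$) containing the empty play, with $\Over{\rho}=\rho\cup\B{dom}(\rho)$ prefix-closed, satisfying the two history-free conditions (determinacy $sab,tac\in\rho\THEN b=c$, and history-insensitivity); since we work with strategies modulo coding conventions we also record $\rho\Deq\rho$. I would verify these in turn for each of $\sigma'$ and $\tau_n$, noting at the outset that the empty play is tacitly present in each, so non-emptiness is automatic and the degenerate case $\tau_n=\{\epsilon\}$ (when $\sigma$ has no response to the answer $n$) is simply the least strategy.

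For $\tau_n$ the operation is excision: each $*_1 t\in\tau_n$ arises from $*_1 *_2 n_2 t\in\sigma$ by deleting the adjacent matched question/answer pair $*_2 n_2$ (with $n_2=\underline n$). First I would check positionhood, $*_1 t\in\even{P_G}$. Parity is immediate, since $*_1$ is the opening $O$-move of $N_1$, $*_2$ is the $P$-question and $n_2$ the $O$-answer in $N_2$, so deleting the pair leaves an even play that still alternates across the splice. The bracketing condition (p3) is preserved because $*_2 n_2$ is a balanced, properly nested segment lying entirely between $*_1$ and $t$ with $*_1$ open throughout, so the bracketing of $*_1 t$ is that of $*_1 *_2 n_2 t$ with this pair removed; the stack discipline is preserved for the same reason. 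For the projection conditions I would use well-openedness of $N_2$ and the fact that a single thread of $N$ carries no play beyond one question/answer pair: the deleted copy of $N_2$ receives no further moves in $t$, so its restriction in $*_1 t$ is empty, while every other component is untouched. Prefix-closedness of $\Over{\tau_n}$, determinacy and history-insensitivity then follow by transporting the corresponding properties of $\sigma$ along the prefix-compatible excision, with Lemma~\ref{lem1}(i) guaranteeing that the representative plays $*_1 *_2 n_2 t\in\sigma$ (empty interrogation block) faithfully record $\sigma$'s behaviour after the answer $n_2$.

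For $\sigma'$ the operation is append-and-copy. The first clause extends a play that has just received the $O$-answer $n_2$ to $*_2$ by the $P$-answer $n_1:=\underline n$ in $N_1$, while the second clause keeps verbatim those $\sigma$-plays in which $*_2$ is still unanswered. The only non-routine point of positionhood is legality of the appended $n_1$: I would argue that when $n_2$ is played the segment $s$ between $*_2$ and $n_2$ is balanced, forced by (p3) since $n_2$ answers $*_2$ and hence every question opened in $s$ is already closed in $s$; thus the unique pending question is the initial $*_1$, and answering it by $n_1$ in the same component $N_1$ respects both bracketing and the stack discipline, with the $N_1$-restriction becoming $*_1 n_1\in P_N$. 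Determinacy and history-insensitivity of $\sigma'$ split into the interrogation phase, where $\sigma'$ copies the moves of $\sigma$ and inherits both conditions, and the single new response $n_2\mapsto n_1$, which is the deterministic, history-independent copy-cat map on values living in a component disjoint from the interrogation moves; hence the inducing partial function on moves is well-defined. Finally $\sigma'\Deq\sigma'$ and $\tau_n\Deq\tau_n$ follow from $\sigma\Deq\sigma$, since both surgeries commute with the position-equivalence $\Deq$ (the copy move $n_1=n_2$ and the excised pair $*_2 n_2$ are treated identically under any witnessing permutation).

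I expect the main obstacle to be the positionhood bookkeeping in the presence of the exponentials: checking that excising $*_2 n_2$ (for $\tau_n$) and truncating at $n_2$ (for $\sigma'$) leaves the projection conditions on every indexed copy intact. This is precisely where well-openedness together with the index-disjointness of Lemma~\ref{lem1}(ii), namely ${\tt FST}(s)\cap {\tt FST}(t)=\varempty$, does the real work, ensuring that the interrogation block and the continuation never share a copy and so the surgery cannot create an ill-formed thread. Once positionhood is secured, the remaining strategy axioms are a routine transfer from $\sigma$.
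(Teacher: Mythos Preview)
Your proposal is correct and follows the same approach as the paper's (much terser) proof: positionhood for $\tau_n$ via Lemma~\ref{lem1}(i), with history-freeness and the partial equivalence relation for both strategies transported directly from $\sigma$. One minor point: Lemma~\ref{lem1}(ii) is not actually needed here---for $\tau_n$ the interrogation block $s$ is empty so (ii) is vacuous, and for $\sigma'$ there is no continuation $t$---so the ``real work'' is done by part (i) together with the fact that the copy at index $i_0$ is complete once $*_2 n_2$ has been played; part (ii) earns its keep only in Lemma~\ref{lem3}.
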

\begin{proof} The fact that each $\tau_n$ is a set of valid
positions follows from Lemma \ref{lem1}. That $\sigma',\tau_n$ are
history-free and satisfy the partial equivalence relation follows
directly from their definitions and the fact that $\sigma$ is a
valid strategy.
\end{proof}

\begin{lemma}\label{lem3}
$\sigma\approx{\tt con}_C;\sigma'\tensor \lang\tau_n\ \mid \  n\in\omega\rang ;\chi_a$.
\end{lemma}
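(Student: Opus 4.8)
The plan is to prove the two inequalities $\sigma \Subeq R$ and $R \Subeq \sigma$, where $R = {\tt con}_C ; \sigma' \tensor \lang \tau_n \mid n\in\omega \rang ; \chi_a$ denotes the right-hand side, by unwinding the definition of composition as parallel composition plus hiding (the operations $\|$ and ${\upharpoonright}$) and setting up a position-by-position correspondence, valid modulo the reindexing permitted by $\approx$. The operational reading of $R$ is transparent and drives the whole argument: $\chi_a$ answers the initial question by interrogating its first input, which is supplied by $\sigma'$; on receiving an answer $n$ it interrogates the $n$-th component of its second input, which is supplied by $\tau_n$; and it copies that component's answer to the output. Thus an interaction of $R$ falls into a \emph{head phase}, in which $\sigma'$ computes the answer to the inner question $*_2$, followed by a \emph{tail phase}, in which $\tau_n$ computes the final answer, the transmission of the value $n$ from $\sigma'$ to $\chi_a$ being the only genuinely internal (hidden) event.

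First I would make the matching of positions precise. A complete non-empty position of $\sigma$ has the shape $*_1 *_2\, s\, n_2\, t\, n_1$, with $n_2$ answering $*_2$; by the definitions of $\sigma'$ and $\tau_n$ its head $*_1 *_2\, s\, n_2$ yields the $\sigma'$-position $*_1 *_2\, s\, n_2\, n_1$ (with $\sigma'$ copying the value of $n_2$ to its own output), while its tail yields the $\tau_n$-position $*_1\, t$, where $n$ is the value of $n_2$. Positions in which $*_2$ is never answered are accounted for by the second clause in the definition of $\sigma'$. The decisive structural point is that both fragments are to be run over a \emph{single} copy of the argument $A\Rightarrow N$: this is exactly what Lemma~\ref{lem1}(ii) supplies, since ${\tt FST}(s)\cap{\tt FST}(t)=\varempty$ guarantees that the indices of $!(A\Rightarrow N)$ consumed in the head are disjoint from those consumed in the tail, so that the partition of the index set effecting the tensorial split between $\sigma'$ and $\lang\tau_n\rang$ introduces no clash. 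Only $C$, which may be interrogated freely in both phases, requires genuine duplication, and that is what ${\tt con}_C$ provides.

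For $\sigma \Subeq R$ I would take a position $*_1 *_2\, s\, n_2\, t$ of $\sigma$ (or a prefix of one) and assemble from these fragments an interaction sequence in the parallel composition defining $R$: the head fragment drives $\sigma'$ together with the first-input interrogation of $\chi_a$ up to the delivery of $n$, after which $\chi_a$ selects component $n$ and the tail fragment drives $\tau_n$. Here Lemma~\ref{lem1}(i) is essential, since it gives $*_1 *_2\, n_2\, t\in\sigma$, i.e. it certifies that the tail can be replayed from the state reached immediately after $*_2$ is answered, independently of the particular head $s$; this is precisely what makes $*_1\, t$ a legal $\tau_n$-position usable on its own. Restricting the assembled interaction to the external moves returns $*_1 *_2\, s\, n_2\, t$, so the position lies in $R$ up to the permutation that merges the two index streams. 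The reverse inclusion $R \Subeq \sigma$ is the mirror image: every interaction of $R$ has the head-then-tail shape forced by $\chi_a$, its head restriction is a $\sigma'$-position hence (by definition of $\sigma'$) the image of some $*_1 *_2\, s\, n_2\in\sigma$, its tail restriction is a $\tau_n$-position hence the image of some $*_1 *_2\, n_2\, t\in\sigma$ with matching $n$, and gluing these via Lemma~\ref{lem1} recovers a position of $\sigma$.

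The main obstacle is the index bookkeeping underlying these correspondences: one must verify that the disjointness of Lemma~\ref{lem1}(ii) lines up exactly with the index partition performed by the tensor, and that the composite of the relabelling permutations contributed by ${\tt con}_C$, by the tensorial split, and by the pairing $\lang\tau_n\rang$ is precisely a witnessing permutation for $\approx$ between the assembled position and the original position of $\sigma$. Once the position correspondence is in place, the fact that $\sigma'$ and each $\tau_n$ are well-defined strategies satisfying $\approx$ (Lemma~\ref{lem2}), together with the monotonicity of composition, tensor, contraction and pairing with respect to $\Subeq$ established earlier, makes both inequalities routine to close.
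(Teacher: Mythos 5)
Your proposal is correct and follows essentially the same route as the paper's proof: unpack the composite as parallel composition plus hiding, observe that the internal moves of $\chi_a$ synchronize with the first/last moves of $\sigma'$ and of the appropriate $\tau_n$, so that positions of the right-hand side are exactly the reindexed forms $*_1 *_2\, s'\, n_2\, t'$ (or $*_1 *_2\, s'$) of positions of $\sigma$, and then invoke Lemma~\ref{lem1}(ii) (disjointness of ${\tt FST}(s)$ and ${\tt FST}(t)$) to certify that this reindexing is a legitimate witness for $\approx$. Your splitting into two $\Subeq$ inequalities and your explicit appeal to Lemma~\ref{lem1}(i) for replaying the tail are just a more detailed presentation of the same argument (the paper absorbs the latter into Lemma~\ref{lem2}).
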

\begin{proof} Unpacking the definition of the RHS $\tau={\tt
con}_C;\sigma'\tensor \lang\tau_n\ \mid \  n\in\omega\rang
;\chi_a$ we see that the second and third moves of $\chi_a$
synchronize and cancel out with the first and last moves of
$\sigma'$ respectively, and the fourth and fifth moves of $\chi_a$
cancel out with the first and last moves of the appropriate
$\tau_n$. Thus positions in $\tau$ have the form $$ *_1 *_2 s' n_2
t' \mbox{ or } *_1 *_2 s'$$ where $*_1 *_2 s n_2 t,\  *_1 *_2 s$
are positions in $\sigma$, and $s',t'$ are bijectively reindexed
versions of $s$ and $t$, with the property that ${\tt FST}(s')\cap
{\tt FST}(t')=\varempty$. However, by Lemma \ref{lem1} we know
that ${\tt FST}(s)\cap {\tt FST}(t)=\varempty$, and hence $$ *_1
*_2 s' n_2 t' \approx  *_1 *_2 s n_2 t $$ and $\sigma\approx\tau$
as required.
\end{proof}

\begin{lemma}\label{lem4}
$\sigma\approx\chi\circ \lang\sigma',\lang\tau_n\ \mid \  n\in\omega\rang \rang $
\end{lemma}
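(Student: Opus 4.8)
The plan is to reduce this statement to the linear decomposition already established in Lemma~\ref{lem3}, by unfolding the cartesian-closed composition and pairing of $\M(\KG)$ into the underlying linear operations of $\cal G$ and then simplifying with the comonad laws and the Bang Lemma. Write $D = C\with(A\Rightarrow N)$, which is well-opened since $C$ and $N$ are and well-openedness is preserved by $\with$ and $\Rightarrow$. First I would rewrite the co-Kleisli composite as $\chi\circ\lang\sigma',\lang\tau_n\mid n\in\omega\rang\rang = \lang\sigma',\lang\tau_n\rang\rang^{\dag};\chi$, and substitute the definition $\chi = e_{N,N^\omega};({\tt der}_N\tensor{\tt der}_{N^\omega});\chi_a$ together with the definition of the cartesian pairing, namely $\lang\sigma',\lang\tau_n\rang\rang = {\tt con}_D;(\sigma'^{\dag}\tensor\lang\tau_n\rang^{\dag});e_{N,N^\omega}^{-1};{\tt der}_{N\with N^\omega}$ (the arrow written simply as $e$ in the pairing diagram is the inverse $e_{N,N^\omega}^{-1}$, running $!N\tensor!N^\omega\linimpl\,!(N\with N^\omega)$).

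Setting $g = {\tt con}_D;(\sigma'^{\dag}\tensor\lang\tau_n\rang^{\dag});e_{N,N^\omega}^{-1} : \,!D\linimpl\,!(N\with N^\omega)$, the pairing is exactly $g;{\tt der}_{N\with N^\omega}$, so the composite becomes $(g;{\tt der}_{N\with N^\omega})^{\dag};e_{N,N^\omega};({\tt der}_N\tensor{\tt der}_{N^\omega});\chi_a$. The crucial step is to collapse the promotion of the pairing: since $D$ is well-opened, the Bang Lemma applied to $g$ gives $(g;{\tt der}_{N\with N^\omega})^{\dag}\approx g$. Substituting $g$ and cancelling $e_{N,N^\omega}^{-1};e_{N,N^\omega} = {\tt id}_{!N\tensor!N^\omega}$ leaves ${\tt con}_D;(\sigma'^{\dag}\tensor\lang\tau_n\rang^{\dag});({\tt der}_N\tensor{\tt der}_{N^\omega});\chi_a$. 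Now bifunctoriality of $\tensor$ together with the right-identity comonad law $(\mathbf{m3})$, which gives $\sigma'^{\dag};{\tt der}_N\approx\sigma'$ and $\lang\tau_n\rang^{\dag};{\tt der}_{N^\omega}\approx\lang\tau_n\rang$, reduces $(\sigma'^{\dag}\tensor\lang\tau_n\rang^{\dag});({\tt der}_N\tensor{\tt der}_{N^\omega})$ to $\sigma'\tensor\lang\tau_n\rang$.

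At this point the composite is $\approx$-equivalent to ${\tt con}_D;(\sigma'\tensor\lang\tau_n\mid n\in\omega\rang);\chi_a$, which is precisely the right-hand side of Lemma~\ref{lem3}, so $\chi\circ\lang\sigma',\lang\tau_n\rang\rang\approx\sigma$ as required. The main thing to be careful about is that every rewriting step is performed on partial-equivalence classes: the chain of $\approx$-equivalences is legitimate only because composition, tensor and promotion are all monotone for $\Subeq$ (strong continuity), so that $\approx$ is a congruence for each operation used. The genuine content of the argument is concentrated in the single appeal to the Bang Lemma, which is what lets the promoted pairing be replaced by $g$ before the $e_{N,N^\omega}^{-1};e_{N,N^\omega}$ cancellation; the remaining manipulations are bookkeeping with the comonad and exponential isomorphisms, where the only subtlety is keeping straight the two directions of $e_{N,N^\omega}$.
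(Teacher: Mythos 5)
Your proof is correct and is essentially identical to the paper's: both unfold the co-Kleisli composition and pairing into the linear operations, apply the Bang Lemma to the promoted pairing $g = {\tt con};(\sigma'^{\dag}\tensor\lang\tau_n\mid n\in\omega\rang^{\dag});e^{-1}$, cancel the exponential isomorphism, reduce with bifunctoriality and the comonad law $(\mathbf{m3})$, and finish by Lemma~\ref{lem3}. Your added remarks on the direction of $e_{N,N^\omega}$ and on $\approx$ being a congruence for the operations used are accurate clarifications of points the paper leaves implicit.
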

\begin{proof}
\[ \begin{array}[]{llc}
\chi\circ\lang\sigma',\lang\tau_n\ \mid \  n\in\omega\rang \rang  & ={\mbox{definition}}& \\
({\tt con};(\sigma'^{\dag}\tensor \lang\tau_n\ \mid \
n\in\omega\rang^{\dag};e;{\tt der})^{\dag};e^{-1};{\tt der}\tensor{\tt der};\chi_a
&\approx {\mbox{ Bang Lemma}}& \\
{\tt con};(\sigma'^{\dag}\tensor \lang\tau_n\ \mid \
n\in\omega\rang^{\dag});e;e^{-1};{\tt der}\tensor{\tt der};\chi_a &\approx&
\\
{\tt con};(\sigma'^{\dag}\tensor \lang\tau_n\ \mid \
n\in\omega\rang^{\dag});{\tt der}\tensor{\tt der} ;\chi_a &\approx
&\\ {\tt con};(\sigma'^{\dag};{\tt der}\tensor \lang\tau_n\ \mid \
n\in\omega\rang^{\dag};{\tt der});\chi_a &\approx &\\ {\tt
con};(\sigma' \tensor \lang\tau_n\ \mid \  n\in\omega\rang
);\chi_a &\approx {\mbox{Lemma \ref{lem3}}} &\\ \sigma. &&
\end{array}
\]
\end{proof}

We continue with our decomposition, and define
$$\sigma''=\{ s \ \mid \  *_1*_2s\in\sigma,\  *_2\mbox{ not answered in
  } s\}$$
\begin{lemma}\label{lem5}
$\sigma'':C\with (A\Rightarrow N) \Rightarrow !A$ is a well-defined
strategy, and
$$\sigma'\approx {\tt con}_{C\with(A\Rightarrow N)}; \pi_2\tensor
\sigma''; {\tt LAPP } .\ \;\; (\dag)$$
\end{lemma}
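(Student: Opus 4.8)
The plan is to prove the two assertions in turn, and the structural fact that unlocks both is that $\with$ is interpreted by the coproduct of position sets, so that $P_{C \with (A\Rightarrow N)} = P_C + P_{A\Rightarrow N}$. Consequently, within any single copy of $D := C\with(A\Rightarrow N)$ the play lies \emph{entirely} in $C$ or entirely in the function component $A\Rightarrow N = \ofcourse A\linimpl N$. Since $\sigma$'s response to $*_1$ is to open the function with $*_2$, the particular copy $d_0$ of $D$ carrying $*_2$ can, during $s$, contain only moves of $A\Rightarrow N$; its $\ofcourse A$-component is exactly the argument being interrogated. This is what makes the reinterpretation implicit in the definition of $\sigma''$ legitimate: the moves of $s$ lying in copy $d_0$'s argument are relabelled as the codomain $\ofcourse A$ of $\sigma''$, while every other $\ofcourse D$-move of $s$ sits in a copy distinct from $d_0$ and thus populates the domain $\ofcourse D$ without clash.

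First I would establish well-definedness of $\sigma''$. The positions are valid in $\ofcourse D\linimpl\ofcourse A$: each $s$ inherits alternation and the projection/stack conditions from the validity of $*_1*_2s\in\sigma$ via the Projection Lemma (2.4.4), using well-openedness exactly as in Lemma~\ref{lem1} to rule out an argument-move of $A$ being forced to be both opening and non-opening. Removing the two unanswered questions $*_1,*_2$ preserves the bracketing condition. History-freeness and $\approx$-closure ($\sigma''\approx\sigma''$) then follow directly from the corresponding properties of $\sigma$, exactly as in the proof of Lemma~\ref{lem2} for $\sigma'$ and $\tau_n$; the freedom in the choice of $d_0$ across different plays is absorbed by the permutation quotient defining $\approx$.

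Next I would prove the equivalence $(\dag)$. The method is the internal-dialogue analysis of Lemma~\ref{lem3}: unpack $\rho := {\tt con}_{D}; \pi_2\tensor\sigma''; {\tt LAPP}$ and observe that ${\tt LAPP}$ forwards $*_1$ to $\pi_2$, which by ${\tt der}$ opens the function in one copy of the left contraction factor; the function's argument-questions are routed by ${\tt LAPP}$ into the codomain $\ofcourse A$ of $\sigma''$, which answers them using the right contraction factor; and when the function answers $n_2$, ${\tt LAPP}$ copies it to $n_1$. Hence every position of $\rho$ has the form $*_1*_2\,s'\,n_2\,n_1$ or $*_1*_2\,s'$, where $*_1*_2\,s\,n_2\in\sigma$ (resp. $*_1*_2\,s\in\sigma$ with $*_2$ unanswered) and $s'$ is a bijective reindexing of $s$. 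It then remains to exhibit the permutation witnessing $s'\approx s$, i.e. $\sigma'\approx\rho$.

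The main obstacle is precisely this reindexing bookkeeping. One must check that the single copy of $D$ reached by $\pi_2$ through dereliction is, via the contraction tagging, disjoint from all the copies that $\sigma''$ uses to answer the argument-questions, and that this matches the single-$\ofcourse D$ behaviour of $\sigma'$ up to $\approx$. Here the product structure does the decisive work: because copy $d_0$ can carry no $C$-move and no second opening of its own function, the indices $\sigma$ uses while computing the argument are automatically disjoint from $d_0$ (the analogue of the disjointness ${\tt FST}(s)\cap{\tt FST}(t)=\varnothing$ of Lemma~\ref{lem1}, here between the interrogated function's copy and the answering copies). This disjointness is exactly what certifies that the reindexing is a valid $\approx$-witness, giving $\sigma'\approx\rho$.
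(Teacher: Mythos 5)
Your proof is correct and follows essentially the same route as the paper's: reinterpret the moves carried by the copy $d_0$ in which $\sigma$ opens the function as the target $\ofcourse A$ and all other copies as the source $\ofcourse(C\with(A\Rightarrow N))$, check validity, history-freeness and self-equivalence as for $\sigma'$ and $\tau_n$, and then verify $(\dag)$ by tracing the copy-cat routing through ${\tt con}$, $\pi_2$ and ${\tt LAPP}$ to conclude that positions of the right-hand side are bijectively reindexed versions of positions of $\sigma'$, the disjoint address spaces created by contraction matching the single-copy behaviour of $\sigma'$ up to $\approx$. The only quibble is your remark that ``the freedom in the choice of $d_0$ across different plays is absorbed by the permutation quotient'': there is no such freedom, since $\sigma$ is history-free, so its response to $*_1$ --- and with it the index $d_0$ --- is fixed once and for all, and this fixity (rather than the quotient) is what makes the typing reinterpretation defining $\sigma''$ consistent across plays.
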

\begin{proof}We must firstly explain how moves in $\sigma''$ can be
interpreted as being of type $C\with (A\Rightarrow N)\Rightarrow
!A$. Let the index in $!(C\with (A\Rightarrow N))$ of the response
by $\sigma$ to the initial question $*_1$ be $i_0$. Then we regard
all moves in $s\in\sigma''$ with index $i_0$ as moves in the
target $!A$ , and all moves with index $i\neq i_0$ as moves in the
source $!(C\with (A\Rightarrow N))$. The projection conditions and
stack discipline are easily seen to hold for $s$ with respect to
this type. The fact that $\sigma''$ is history-free and satisfies
the partial equivalence relation follows directly from its
definition and the fact that $\sigma'$ is a valid strategy.

Now write $\tau$ for the RHS of $(\dag)$. We diagram $\tau$, tagging
occurrences of the types for ease of reference.

\[ \begin{diagram}
 & !(C_0\with(!A_0\linimpl N_0)) & \\
& \dTo^{{\tt con}} & \\
& !(C_1\with(!A_1\linimpl N_1))\tensor  !(C_2\with(!A_2\linimpl N_2))
&\\
&\dTo^{\pi_2\tensor\sigma''} & \\
&(!A_3 \linimpl N_3)\tensor \;!A_4 & \\
& \dTo^{{\tt LAPP }} &\\
& N_5 &
\end{diagram} \]

From the definitions ${\tt LAPP }$ plays copy-cat strategies
between $N_3$ and $N_5$ and $!A_3$ and $!A_4$; $\pi_2$ plays a
copy-cat strategy between $!A_3\linimpl N_3$ and a single index
$i_0$ in $!(C_1\with (!A_1 \linimpl N_1))$;  {\tt con} splits
$!(C_0\with(!A_0\linimpl N_0))$ into two disjoint address spaces
$!(C_0\with(!A_0\linimpl N_0))_L$ and $!(C_0\with(!A_0\linimpl
N_0))_R$ and  plays copy-cat strategies between $!(C_0\with
(!A_0\linimpl N_0))_L$ and $!(C_2\with(!A_2\linimpl N_2))$ and
between $!(C_0\with(!A_0\linimpl N_0))_R$ and
$!(C_2\with(!A_2\linimpl N_2))$. Thus we see that the opening move
in $N_5$ is copied to $(i_0,N_0)_L$ via $N_3$ and $(i_0,N_1)$, and
any response in $(i_0,N_0)_L$ is copied back to $N_5$. Similarly,
O's moves $(i_0,!A_0)_L$ are copied to $!A_4$ via $(i_0,!A_1)$ and
$!A_3$; and P's responses in $!A_4$ following $\sigma''$ are
copied back to $(i_0,!A_0)_L$. Finally, O's moves in $!(C_0\with
(!A_0\linimpl N_0))_R$ are copied to $!(C_2\with (!A_2\linimpl
N_2))$, and P's responses following $\sigma''$ are copied back to
$!(C_0\with (!A_0\linimpl N_0))_R$.

As regards sequencing, the initial move $*_5$ is copied
immediately as $*_{i_0,L}$. Opponent may now either immediately
reply with $n_{i_0,L}$, which will be copied back as $n_5$,
completing the play; or move in $(i_0,!A_0)_L$--- the only other
option by the switching condition. Play then proceeds following
$\sigma''$ transposed to $$\underline{\sigma}'': \;!(C_0 \with (\;
!A_0\linimpl N_0))_R \rightarrow (i_0,\; !A_0)_L,$$ until Opponent
replies with some $n_{i_0,L}$ to $*_{i_0,L}$. Thus positions in
$\tau$ have the form $$ *_5 *_{i_0,L}\; s'\; n_{i_0,L}\; n_5
\mbox{  or  } *_5 \; *_{i_0,L} \; s'$$ where $s'$ is a bijectively
reindexed version of $s\in\sigma''$, with $s\approx s'$. Clearly
$\sigma''\approx  \underline{\sigma}''$, and hence
$\sigma'\approx\tau$.

\end{proof}

We now prove a useful general lemma.
\begin{lemma}\label{lem6}
For all strategies
$\gamma:C\Rightarrow (A\Rightarrow B), \delta: C\Rightarrow A$
$${\tt Ap}\circ\lang\gamma,\delta\rang \approx {\tt con}_C;(\gamma\tensor\delta^{\dag});{\tt LAPP }.$$
\end{lemma}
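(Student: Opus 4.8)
The plan is to reduce the statement to a short chain of equivalences in $\G$, using the explicit form of the co-Kleisli evaluation together with the Bang Lemma. First I would record the concrete description of the cartesian-closed application. Since $A \Rightarrow B = \,!A \linimpl B$ and the exponential adjunction is built from the Seely isomorphism $e = e_{A\Rightarrow B,A}$ followed by the linear application, the morphism ${\tt Ap} : (A \Rightarrow B) \with A \Rightarrow B$ is induced in $\G$ by
$$ {\tt Ap} \;=\; !((A \Rightarrow B)\with A) \stackrel{e}{\longrightarrow} \,!(A \Rightarrow B)\tensor \,!A \stackrel{{\tt der}_{A\Rightarrow B}\tensor {\tt id}_{!A}}{\longrightarrow} (A \Rightarrow B)\tensor \,!A \stackrel{{\tt LAPP}}{\longrightarrow} B. $$
That this is the counit is the standard check that currying it returns ${\tt der}_{A\Rightarrow B}$, the identity of $\KG$: one has $\Lambda(e^{-1};{\tt Ap}) \approx \Lambda(({\tt der}\tensor{\tt id});{\tt LAPP}) = {\tt der}$, using $e^{-1};e \approx {\tt id}$ and the autonomous identity $\Lambda((\phi\tensor{\tt id});{\tt LAPP}) = \phi$. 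I would cite the autonomous and co-Kleisli structure of Section~2 for this, or verify it directly by currying if one prefers not to invoke the abstract evaluation.

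Next I would unfold the outer composition. Composition in $\M(\KG)$ is $\tau\circ\sigma = \sigma^{\dagger};\tau$, so
$$ {\tt Ap}\circ\langle\gamma,\delta\rangle \;=\; \langle\gamma,\delta\rangle^{\dagger};\, e;\,({\tt der}\tensor{\tt id});\,{\tt LAPP}. $$
The heart of the argument is to simplify $\langle\gamma,\delta\rangle^{\dagger}$. Writing the pairing as $\langle\gamma,\delta\rangle = \psi;{\tt der}_{P}$ with $\psi = {\tt con}_C;(\gamma^{\dagger}\tensor\delta^{\dagger});e^{-1} : \,!C \linimpl \,!P$ and $P = (A\Rightarrow B)\with A$, the Bang Lemma applies to $\psi$ because $C$ is well-opened (the standing assumption of this section), giving
$$ \langle\gamma,\delta\rangle^{\dagger} = (\psi;{\tt der}_{P})^{\dagger} \approx \psi = {\tt con}_C;(\gamma^{\dagger}\tensor\delta^{\dagger});e^{-1}. $$
Substituting this in and using that composition respects $\approx$ (monotonicity of composition with respect to $\Subeq$), the two copies of the Seely isomorphism cancel via $e^{-1};e \approx {\tt id}$, leaving
$$ {\tt Ap}\circ\langle\gamma,\delta\rangle \approx {\tt con}_C;(\gamma^{\dagger}\tensor\delta^{\dagger});({\tt der}\tensor{\tt id});{\tt LAPP}. $$
Finally, bifunctoriality of $\tensor$ rewrites the middle factor as $(\gamma^{\dagger};{\tt der})\tensor(\delta^{\dagger};{\tt id})$, and the right identity law \textbf{(m3)} gives $\gamma^{\dagger};{\tt der}\approx\gamma$ while $\delta^{\dagger};{\tt id}=\delta^{\dagger}$, yielding ${\tt con}_C;(\gamma\tensor\delta^{\dagger});{\tt LAPP}$, the required right-hand side.

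I expect the main obstacle to be bookkeeping rather than conceptual depth: keeping the two Seely isomorphisms $e_{A\Rightarrow B,A}$ and its inverse straight, since the pairing introduces $e^{-1}$ while ${\tt Ap}$ introduces $e$, and it is precisely their cancellation that drives the proof; and ensuring the Bang Lemma is invoked with its well-openedness hypothesis on the correct game, namely the source $C$, and with $\sigma$ instantiated to $\psi$ so that $(\psi;{\tt der}_P)^\dagger$ really is $\langle\gamma,\delta\rangle^\dagger$. A secondary point worth stating explicitly is the justification of the closed form of ${\tt Ap}$. Everything else is routine use of functoriality of $\tensor$, associativity of composition, and the fact that $\approx$ is a congruence for the operations of $\G$.
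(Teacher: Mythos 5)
Your proof is correct and follows essentially the same route as the paper's: unfold ${\tt Ap}\circ\lang\gamma,\delta\rang$ into its co-Kleisli form, apply the Bang Lemma to collapse $\lang\gamma,\delta\rang^{\dag}$ to ${\tt con}_C;(\gamma^{\dag}\tensor\delta^{\dag})$ followed by the Seely isomorphism, cancel the isomorphism against its inverse, and finish with bifunctoriality of $\tensor$ and the comonad law \textbf{(m3)}. Your additional care about the explicit counit form of ${\tt Ap}$ and the well-openedness hypothesis on $C$ only makes explicit what the paper leaves implicit in its one-line ``by definition'' and ``Bang Lemma'' steps.
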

\begin{proof}
\[
\begin{array}{llc}
{\tt Ap}\circ\lang\gamma,\delta\rang &= {\mbox{definition}}& \\
({\tt con}_C;\gamma^{\dag}\tensor\delta^{\dag};e;{\tt
der}_{(A\Rightarrow
 B)\tensor A})^{\dag};e^{-1};{\tt der}_{A\Rightarrow B} \tensor {\tt id}_A;{\tt LAPP }
&\approx {\mbox{Bang Lemma}}& \\ {\tt
con}_C;\gamma^{\dag}\tensor\delta^{\dag};e;e^{-1}; {\tt
der}_{A\Rightarrow B}\tensor {\tt id}_A;{\tt LAPP } & \approx&\\
{\tt con}_C;\gamma\tensor\delta^{\dag};{\tt LAPP }. & &
\end{array}
\]
\end{proof}

Now consider a game
$$(A_1\with\dots\with A_k)\Rightarrow N$$
where $$A_i=(B_{i,1}\with\dots\with B_{i,l_i})\Rightarrow N ,\;\;\;
1\leq i\leq k.$$
Let $\tilde{A} = A_1\with\dots\with A_k$, $\tilde{B_i} = B_{i,1}\with
\dots\with B_{i,l_i}, \; 1\leq i\leq k$.

We define
$\bot_{\tilde{A}}:\tilde{A}\Rightarrow N$ by
$\bot_{\tilde{A}}=\{\epsilon\}$
and ${\bf K}_{\tilde{A}}n:\tilde{A} \Rightarrow N \;\; (n\in\omega)$
by ${\bf K}_{\tilde{A}}n=\{\epsilon,*n\}$.Thus $\bot_{\tilde{A}}$ is
the completely undefined strategy of type
$\tilde{A}\Rightarrow N$ while ${\bf K}_{\tilde{A}}n$ is the constant
strategy which responds immediately to the initial question in $N$
with the answer $n$.

Finally, if $1\leq i\leq k$, and for each $1\leq j\leq l_i$
$$\sigma_j:\tilde{A}\Rightarrow B_{i,j}$$
and for each $n\in\omega$
$$\tau_n:\tilde{A}\Rightarrow N$$
we define
$$\check{{\bf C_i}}(\sigma_1,\dots,\sigma_{l_i},(\tau_n\;\mid \;
n\in\omega)):\tilde{A}\Rightarrow N$$
by
$$\check{{\bf C_i}}(\sigma_1,\dots,\sigma_{l_i},(\tau_n\;\mid \; n\in\omega))
=\chi\circ\lang {\tt Ap}\circ\lang\pi_i,\lang\sigma_1,\dots,\sigma_{l_i}\rang \rang ,\lang\tau_n\;\mid \;
n\in\omega\rang \rang. $$
\begin{lemma}[The Decomposition Lemma (uncurried version)]\label{lem7}

Let $\sigma:(A_1\with\dots\with A_n)\Rightarrow N$ be any strategy,
where
$$A_i=(B_{i,1}\with\dots\with B_{i,l_i})\Rightarrow N ,\;\;\;
1\leq i\leq k.$$
Then exactly one of the following three cases applies:
\begin{itemize}
\item[(i)] $\sigma=\bot_{\tilde{A}}$.
\item[(ii)] $\sigma={\bf K}_{\tilde{A}}n$ for some $n\in\omega$.
\item[(iii)] $\sigma\approx
  \check{{\bf C_i}}(\sigma_1,\dots,\sigma_{l_i},(\tau_n\;\mid \; n\in\omega))$

where $1\leq i\leq k$, $\sigma_j:\tilde{A}\Rightarrow B_{i,j},\; 1\leq
j\leq l_i$, $\tau_n:\tilde{A}\Rightarrow N,\; n\in\omega$ .
\end{itemize}
\end{lemma}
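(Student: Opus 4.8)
The plan is to read off the three cases from $\sigma$'s unique response to the opening move, and then, in the only nontrivial case, to assemble the decomposition from the preparatory Lemmas~\ref{lem1}--\ref{lem6} together with the Bang Lemma and the exponential isomorphism.

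First I would analyse the first move. In $\tilde A\Rightarrow N=\ofcourse\tilde A\linimpl N$ the sole opening move is Opponent's initial question $*$ in the target $N$, so a nonempty $\sigma$ is determined on its first step by Player's (history-free, hence deterministic) response to $*$. By the Switching Condition this response, if any, is either an answer in $N$ or a move in the source $\ofcourse\tilde A$. If there is no response we are in case~(i), $\sigma=\{\epsilon\}=\bot_{\tilde A}$. If the response is an answer $n$ then, since answering $*$ closes the only open thread and leaves Opponent with no available (non-opening) move in the source, the play is complete and $\sigma=\{\epsilon,*n\}={\bf K}_{\tilde A}n$, which is case~(ii). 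Otherwise the response is a move in the source; being the first move played in a component, it is an opening move of exactly one $A_i$, namely the initial question of the target $N$ of $A_i=\tilde B_i\Rightarrow N$, and this singles out the index $i$ of case~(iii). The three alternatives are mutually exclusive because they are distinguished by the single deterministic response to $*$.

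In case~(iii) I would regard $\tilde A$ as $C\with A_i$, with $C$ collecting the remaining components $A_{i'}$ ($i'\neq i$) and $A_i=\tilde B_i\Rightarrow N$ the interrogated argument, so that the situation is exactly that of the preceding lemmas with $A:=\tilde B_i$ (the original projection $\pi_i$ becoming the second projection up to the product symmetry). Defining $\sigma'$ and the continuations $\tau_n$ as in the discussion before Lemma~\ref{lem2}, Lemma~\ref{lem4} gives $\sigma\Deq\chi\circ\lang\sigma',\lang\tau_n\mid n\in\omega\rang\rang$, reducing everything to identifying $\sigma'$ with the evaluation of the $i$-th argument. By Lemma~\ref{lem5} there is $\sigma'':\tilde A\Rightarrow\ofcourse\tilde B_i$ with $\sigma'\Deq{\tt con};\pi_i\tensor\sigma'';{\tt LAPP}$. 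Using the natural isomorphism $\ofcourse\tilde B_i\cong\ofcourse B_{i,1}\tensor\cdots\tensor\ofcourse B_{i,l_i}$ together with the Bang Lemma, I would split $\sigma''$ into component strategies $\sigma_j:\tilde A\Rightarrow B_{i,j}$ ($1\leq j\leq l_i$) for which $\sigma''$ is equivalent to $\lang\sigma_1,\dots,\sigma_{l_i}\rang$ transported along $e$; Lemma~\ref{lem6}, applied with $\gamma=\pi_i$ and $\delta=\lang\sigma_1,\dots,\sigma_{l_i}\rang$, then yields $\sigma'\Deq{\tt Ap}\circ\lang\pi_i,\lang\sigma_1,\dots,\sigma_{l_i}\rang\rang$. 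Substituting into the equivalence above and unfolding the definition of $\check{\bf C_i}$ gives $\sigma\Deq\check{\bf C_i}(\sigma_1,\dots,\sigma_{l_i},(\tau_n\mid n\in\omega))$, which is case~(iii).

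The main obstacle is this last splitting step. The preparatory lemmas are phrased for a single function argument over a single domain $A$, whereas here $A_i$ takes the product $\tilde B_i=B_{i,1}\with\cdots\with B_{i,l_i}$ and $\tilde A$ carries several arguments. The delicate point is to verify that $\sigma''$, a strategy into $\ofcourse\tilde B_i$, factors as the pairing $\lang\sigma_1,\dots,\sigma_{l_i}\rang$ once transported through $e$ and the promotions $\sigma_j^{\dag}$; this rests on the coherence of the exponential isomorphisms and on the Bang Lemma to pass between strategies into $\ofcourse B_{i,j}$ and strategies into $B_{i,j}$. One must also check that exposing $A_i$ as the distinguished component and using $\pi_i$ is compatible with the reindexing implicit in the setup, but that bookkeeping is exactly what Lemma~\ref{disblock} and Corollary~\ref{cordisblock} were designed to supply.
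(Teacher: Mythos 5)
Your proposal is correct and follows essentially the same route as the paper's proof: a case split on $\sigma$'s deterministic response to the initial question, followed in the third case by Lemma~\ref{lem4}, Lemma~\ref{lem5}, the Bang Lemma, and Lemma~\ref{lem6}, with the transposition $!(C\with A_i)\cong\,!\tilde{A}$ handled by an explicit isomorphism $\alpha$. The only cosmetic difference is in the splitting of $\sigma''$: the paper applies the Bang Lemma to get $\sigma''\approx(\sigma'';{\tt der})^{\dag}$ and then invokes the universal property of the product in $\KG$ to write $\sigma'';{\tt der}\approx\lang\sigma_1,\dots,\sigma_{l_i}\rang$, which is the same content as your appeal to the exponential isomorphism, since pairing in $\KG$ is itself defined through $e$.
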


\begin{proof} By cases on $\sigma$'s response to the initial
question. If it has no response, we are in case {\it (i)}. If its
response is an immediate answer $n$ for some $n\in\omega$, we are
in case {\it (ii)}. Otherwise, $\sigma$ must respond with the
initial question in the $i$'th argument, for some $1\leq i \leq
k$. In this case, write $C=A_1\with\dots\with A_{i-1}\with
A_{i+1}\with\dots\with A_k$. We have the natural isomorphism
$$\alpha : \; !(C\with A_i)\cong \; !\tilde{A} :\alpha^{-1}$$ so
we can apply Lemma \ref{lem4} to conclude that
$$\alpha;\sigma\approx \chi\circ \lang \sigma',\lang \tau_n\;\mid
\; n\in\omega\rang \rang $$ By Lemma \ref{lem5}
$$\sigma'\approx{\tt con};\pi_2\tensor \sigma'';{\tt LAPP }$$
where $\sigma'':C\with A_i\Rightarrow !\tilde{B_i}$. By the Bang
Lemma, $$\sigma''\approx(\sigma'';{\tt der})^{\dag}.$$ Moreover
$$\sigma'';{\tt der}_B:C\with A_i\Rightarrow (B_{i,1}\with
\dots\with B_{i,l_i})$$ so by the universal property of the
product, $$\sigma'';{\tt der}_B\approx\lang
\sigma_1,\dots,\sigma_{l_i}\rang $$ where $\sigma_j:C\with
A_i\Rightarrow B_{i,j},\;\; 1\leq j\leq l_i$.

Thus $\sigma'\approx{\tt con};\pi_2\tensor
\lang \sigma_1,\dots,\sigma_{l_i}\rang^{\dag};{\tt LAPP } $ and by Lemma \ref{lem6},
$$\sigma'\approx {\tt Ap }\circ\lang \pi_2,\lang \sigma_1,\dots,\sigma_{l_i}\rang \rang $$
Thus
\[
\begin{array}{lcl}
\sigma &\approx & \alpha^{-1};\alpha;\sigma \\
 &\approx&
 \alpha^{-1};(\chi\circ\lang {\tt Ap }\circ\lang \pi_2,\lang \sigma_1,\dots,\sigma_{l_i}\rang \rang ,\lang \tau_n\;\mid \; n\in\omega\rang \rang  \\
&\approx & \chi\circ \lang {\tt Ap }\circ\lang \pi_i,\lang
\alpha^{-1};\sigma_1,\dots,\alpha^{-1};\sigma_{l_i}\rang \rang ,
\lang \alpha^{-1};\tau_n\;\mid \; n\in\omega\rang \rang  \\ &=&
\check{{\bf
C}_i}(\alpha^{-1};\sigma_1,\dots,\alpha^{-1};\sigma_{l_i},
(\alpha^{-1};\tau_n\;\mid \;n\in\omega)) .
\end{array}
\]
\end{proof}

The Decomposition Lemma in its uncurried version is not sufficiently
general for our purposes. Suppose now that we have a game
$$(A_1\with\dots\with A_k)\Rightarrow N$$
where
$$A_i=B_{i,1}\Rightarrow \dots B_{i,l_i}\Rightarrow N, \; \;\; (1\leq
i\leq l_i).$$

If for some $1\leq i\leq k$ and each $1\leq j\leq l_i$ we have
$$\sigma_j:\tilde{A}\Rightarrow B_{i,j}$$
and for each $n\in\omega$
$$\tau_n:\tilde{A}\Rightarrow N$$
then we define
$${\bf C}_i(\sigma_1,\dots,\sigma_{l_i},(\tau_n\;\mid \;
n\in\omega)):\tilde{A}\Rightarrow N$$
by
$${\bf C}_i(\sigma_1,\dots,\sigma_{l_i},(\tau_n\;\mid \; n\in\omega))=
\chi\circ\lang {\tt Ap }\circ\lang \dots
{\tt Ap }\circ\lang \pi_i,\sigma_1\rang ,\dots,\sigma_{l_i}\rang ,\lang \tau_n\;\mid \;
n\in\omega\rang \rang .$$
To relate ${\bf C}_i$ and $\check{{\bf C}_i}$, consider the canonical isomorphisms
$$\alpha_i: B_{i,1}\Rightarrow\dots B_{i,l_i}\Rightarrow N\cong
(B_{i,1}\with\dots\with B_{i,l_i})\Rightarrow N:\alpha_i^{-1} \;\;
(1\leq i\leq k)$$
Let $\tilde{\alpha}=!(\alpha_1\with\dots\with \alpha_k)$ so
$$\tilde{\alpha}:!(A_1\with\dots\with A_k)\cong !(A_1^u\with\dots\with
A_k^u)$$
where $A_i^u=(B_{i,1}\with\dots\with B_{i,l_i})\Rightarrow N$ is the
uncurried version of $A_i$.
Then
$${\bf C}_i(\sigma_1,\dots,\sigma_{l_i},(\tau_n\;\mid \; n\in\omega))\approx
\tilde{\alpha};\check{{\bf C}_i}(\tilde{\alpha};\sigma_1,\dots,\tilde{\alpha};
\sigma_{l_i},(\tilde{\alpha};\tau_n\;\mid \; n\in\omega)) \;\;\;\;(1)$$
In terms of $\lambda-$calculus, this just boils down to the familiar
equations
$${\tt Curry}(f)xy=f(x,y)$$
$${\tt Uncurry}(g)(x,y)=gxy$$
To see the relationship between the combinators $\bot, {\bf K}n$ and
$\bf C$ and
the syntax of PCF, we use the combinators to write the semantics of
finite evaluation trees.

\newcommand{\sem}[1]{{\cal S}( #1)}
Given $P\in {\bf FET}(\Gamma,T)$ where $\Gamma=x_1:T_1,\dots,x_k:T_k$, we
will define
$$\sem{ \Gamma\vdash P:T
  }:(\sem{T_1}\with\dots\with\sem{T_k})\Rightarrow \sem{T}$$

\begin{itemize}
\item $\sem{\Gamma\vdash\lambda x^T.P:T\Rightarrow
    U}=\Lambda(\sem{\Gamma,x:T\vdash P:U})$
\item $\sem{\Gamma\vdash\Omega:N}=\bot_{\tilde{T}}$
\item $\sem{\Gamma\vdash n:N}= {\bf K}_{\tilde{T}}n$
\item $\sem{\Gamma\vdash{\tt case}(x_i P_1\dots P_{l_i},(Q_n\;\mid \;
    n\in\omega)):N}={\bf C}_i(\sigma_1,\dots,\sigma_{l_i},(\tau_n\;\mid \;
  n\in\omega))$
where
$$T_i=U_{i,1}\Rightarrow\dots U_{i,l_i}\Rightarrow N,$$
$$\sigma_j=\sem{\Gamma\vdash P_j:U_{i,j}},\;\;1\leq j\leq l_i,$$
$$\tau_n=\sem{\Gamma\vdash Q_n:N}, \;\; n\in\omega.$$
\end{itemize}
We can now prove the general form of the Decomposition Lemma:
\begin{proposition}[Decomposition Lemma]\label{lem8}
Let $\sigma:(A_1\with\dots\with
A_p)\Rightarrow(A_{p+1}\Rightarrow\dots A_q\Rightarrow N)$ be any
strategy, where
$$A_i=B_{i,1}\Rightarrow\dots B_{i,l_i}\Rightarrow N,\;\; 1\leq i\leq
q$$
We write $\tilde{C}=A_1,\dots,A_p,\; \tilde{D}=A_{p+1},\dots,A_q$.
(Notation : if $\tau:\tilde{C},\tilde{D}\Rightarrow N$, then
$\Lambda_{\tilde{D}}(\tau):\tilde{C}\Rightarrow(A_{p+1}\Rightarrow\cdots
\Rightarrow A_q\Rightarrow N)$.)

Then exactly one of the following three cases applies.
\begin{itemize}
\item[(i)] $\sigma=\Lambda_{\tilde{D}}(\bot_{\tilde{C},\tilde{D}}).$
\item[(ii)] $\sigma=\Lambda_{\tilde{D}}({\bf K}_{\tilde{C},\tilde{D}} n)$
  for some $n\in\omega$.
\item[(iii)]
  $\sigma=\Lambda_{\tilde{D}}({\bf C}_i(\sigma_1,\dots,\sigma_{l_i},
(\tau_n\mid n\in\omega))),$ where $1\leq i\leq q$, and
\[\begin{array}{lllc}
\sigma_j:\tilde{C},\tilde{D} & \Rightarrow & B_{i,j}, & 1\leq j\leq
l_i, \\
\tau_n:\tilde{C},\tilde{D} & \Rightarrow & N, & n\in\omega.
\end{array}\]
\end{itemize}

\end{proposition}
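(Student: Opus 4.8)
The plan is to derive the general (curried) statement from the uncurried Decomposition Lemma (Lemma~\ref{lem7}) by bracketing it between two coordinate changes: currying away the $\tilde{D}$ part of the codomain, and transporting the curried arguments $A_i=B_{i,1}\Rightarrow\cdots\Rightarrow B_{i,l_i}\Rightarrow N$ to their uncurried forms $A_i^u=(B_{i,1}\with\cdots\with B_{i,l_i})\Rightarrow N$. Since $\KG$ is cartesian closed, the currying operation $\Lambda_{\tilde{D}}$ is a bijection between strategies $\tilde{C},\tilde{D}\Rightarrow N$ and strategies $\tilde{C}\Rightarrow(A_{p+1}\Rightarrow\cdots\Rightarrow A_q\Rightarrow N)$. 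First I would set $\tau=\Lambda_{\tilde{D}}^{-1}(\sigma)$, so that $\sigma=\Lambda_{\tilde{D}}(\tau)$ and it suffices to put the single strategy $\tau:(A_1\with\cdots\with A_q)\Rightarrow N$ into normal form; each clause of the conclusion already has $\Lambda_{\tilde{D}}$ applied on the outside, so reapplying $\Lambda_{\tilde{D}}$ at the end requires no extra work.

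Next I would move to the uncurried context. Writing $\tilde{A}^u=A_1^u\with\cdots\with A_q^u$ and transporting $\tau$ across the isomorphism $\tilde{\alpha}:\;!(A_1\with\cdots\with A_q)\cong\;!\tilde{A}^u$ (using linear composition $;$ with $\tilde{\alpha}$ or its inverse, as the types demand) produces a strategy $\tau^u:\tilde{A}^u\Rightarrow N$ to which Lemma~\ref{lem7} applies directly. That lemma returns exactly one of three normal forms for $\tau^u$: the everywhere-undefined strategy $\bot$; the constant ${\bf K}n$ for a unique $n\in\omega$; or $\tau^u\approx\check{{\bf C}_i}(\sigma_1^u,\dots,\sigma_{l_i}^u,(\tau_n^u\mid n\in\omega))$ for a unique $i$, with $\sigma_j^u:\tilde{A}^u\Rightarrow B_{i,j}$ and $\tau_n^u:\tilde{A}^u\Rightarrow N$.

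Transporting back along $\tilde{\alpha}$ then recovers the curried normal form. In the first case, left-strictness of composition gives $\tau=\bot_{\tilde{C},\tilde{D}}$; in the second, $\tilde{\alpha}$ acts only on the $!$-context and leaves the single opening question and its immediate answer in $N$ untouched, so $\tau={\bf K}_{\tilde{C},\tilde{D}}n$. In the third case, pulling the components $\sigma_j^u,\tau_n^u$ back across $\tilde{\alpha}$ to strategies $\sigma_j:\tilde{C},\tilde{D}\Rightarrow B_{i,j}$ and $\tau_n:\tilde{C},\tilde{D}\Rightarrow N$, the identity $(1)$ relating ${\bf C}_i$ and $\check{{\bf C}_i}$ yields $\tau\approx{\bf C}_i(\sigma_1,\dots,\sigma_{l_i},(\tau_n\mid n\in\omega))$. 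Applying $\Lambda_{\tilde{D}}$ to each of the three cases gives precisely clauses (i), (ii), (iii) of the statement. Mutual exclusivity is inherited from Lemma~\ref{lem7}: the response of the deterministic strategy $\tau$ to the unique opening question in $N$ is either undefined, an immediate answer, or an opening question in exactly one argument $A_i$, and both $\Lambda_{\tilde{D}}$ and $\tilde{\alpha}$ are bijections that preserve this initial behaviour.

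I expect the main obstacle to be bookkeeping rather than any new conceptual difficulty, since the substantive combinatorics were already carried out in Lemma~\ref{lem7} and in establishing identity $(1)$. The points requiring care are: (a) checking that currying and the transport along $\tilde{\alpha}$ genuinely preserve the opening-move trichotomy, which rests on the fact that $\Lambda_{\tilde{D}}$ and $\tilde{\alpha}$ behave as copy-cat/relabellings on the $N$-moves and on the leading moves of any play; (b) verifying that $\bot$ and ${\bf K}n$ are literally fixed (not merely fixed up to $\approx$) by these transports, using left-strictness and the fact that the relabelling touches no $N$-move; and (c) confirming that the pulled-back components $\sigma_j,\tau_n$ are well-typed strategies over $\tilde{C},\tilde{D}$ so that identity $(1)$ applies. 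Once these routine compatibilities are in place, the proposition is a direct packaging of Lemma~\ref{lem7} with identity $(1)$.
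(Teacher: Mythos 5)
Your proposal is correct and follows essentially the same route as the paper: uncurry $\sigma$ via $\Lambda_{\tilde{D}}^{-1}$, transport along the canonical isomorphism $\tilde{\alpha}$ to the uncurried types, apply Lemma~\ref{lem7}, and transport back using identity $(1)$ together with the observation that $\bot$ and ${\bf K}n$ are literally preserved by $\tilde{\alpha}$ (the paper's equations $(2)$ and $(3)$). Your points (a)--(c) are exactly the routine compatibilities the paper's proof relies on, so there is nothing further to add.
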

\begin{proof} Let $\alpha_i:A_i\cong A_i^u:\alpha^{-1}$ be the
canonical isomorphism between $A_i$ and its uncurried version
$$A_i^u=(B_{i,1}\with\dots\with B_{i,l_i})\Rightarrow N$$ for each
$1\leq i\leq q$.

Let
$$\tilde{\alpha}=!(\alpha_1\with\dots\with\alpha_p\with\alpha_{p+1}\with\dots\with\alpha_q).$$
Note that
\[\begin{array}{rclr}
\bot_{\tilde{C},\tilde{D}} &=& \tilde{\alpha};\bot_{\tilde{C}^u,\tilde{D}^u}
& (2) \\
{\bf K}_{\tilde{C},\tilde{D}}n& =& \tilde{\alpha};{\bf
  K}_{\tilde{C}^u,\tilde{D}^u}n
& (3).
\end{array}\]

We can apply Lemma \ref{lem7} to
$\check{\sigma}=\tilde{\alpha}^{-1};\Lambda^{-1}_{\tilde{D}}(\sigma):
\tilde{C}^u,\tilde{D}^u\Rightarrow N$. The result now follows from
equations (1)--(3) since $$\sigma\approx
\Lambda_{\tilde{D}}(\tilde{\alpha};\check{\sigma}). \;\; $$
\end{proof}

With the same notations as in the Decomposition Lemma:

\begin{lemma}[Unicity of Decomposition]\label{lem9}
\begin{itemize}
\item[(i)] If $\sigma\approx\bot_{\tilde{C},\tilde{D}}$ then
  $\sigma=\bot_{\tilde{C},\tilde{D}}.$
\item[(ii)] If $\sigma\approx {\bf K}_{\tilde{C},\tilde{D}}n$ then
  $\sigma={\bf K}_{\tilde{C},\tilde{D}}n.$
\item[(iii)] If
$  {\bf C}_i(\sigma_1,\dots,\sigma_{l_i},(\tau_n\;\mid \; n\in\omega))\Subeq
  {\bf C}_i(\sigma'_1,\dots,\sigma'_{l_i},(\tau'_n\;\mid \; n\in\omega)) $ then
\[\begin{array}{ll}
\sigma_j\Subeq \sigma'_j, &  1\leq j\leq l_i ,\\
\tau_n\Subeq \tau'_n, &  n\in\omega.
\end{array}\]
\end{itemize}
\end{lemma}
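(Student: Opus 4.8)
The plan is to dispatch (i) and (ii) directly from the definition~(\ref{sequiv1}) of $\Subeq$, and to reduce (iii) to an analysis of the play-structure of ${\bf C}_i$. For (i), note $\bot_{\tilde C,\tilde D}=\{\epsilon\}$, so it suffices to observe that even $\sigma\Subeq\bot_{\tilde C,\tilde D}$ forces $\sigma=\{\epsilon\}$: if $\sigma$ contained a play of length two --- and it does as soon as $\sigma\neq\{\epsilon\}$, since $\Over{\sigma}$ is prefix-closed --- say $ab\in\sigma$, then instantiating (\ref{sequiv1}) with $s=s'=\epsilon$ and $a'=a$ (legitimate, as $\approx$ is reflexive on positions) would demand some $b'$ with $ab'\in\bot_{\tilde C,\tilde D}$, which is absurd. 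For (ii), write ${\bf K}_{\tilde C,\tilde D}n=\{\epsilon,\ast\underline n\}$. The direction ${\bf K}_{\tilde C,\tilde D}n\Subeq\sigma$, instantiated at $\ast\underline n$ with $s'=\epsilon$, $a'=\ast$, forces $\ast\underline n\in\sigma$; history-freeness makes $\sigma$'s reply to $\ast$ unique, and $\sigma\Subeq{\bf K}_{\tilde C,\tilde D}n$ pins that reply to $\underline n$ (using that $\approx$ is the identity on the ground type $N$); finally, since $N$ is well-opened and its question $\ast$ is answered, the switching and bracketing conditions preclude any further move, whence $\sigma=\{\epsilon,\ast\underline n\}={\bf K}_{\tilde C,\tilde D}n$.

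For (iii) I would first describe an arbitrary play of ${\bf C}_i(\sigma_1,\dots,\sigma_{l_i},(\tau_n\mid n\in\omega))$. By the definitions of $\chi$ (through $\chi_a$) and of application, such a play opens with the output question, passes through a \emph{control phase} evaluating the head application $x_i\,\sigma_1\cdots\sigma_{l_i}$, and --- once Opponent, acting as the free variable $x_i$, has delivered a value $n$ for that application --- passes to a \emph{branch phase} that runs $\tau_n$ and copies its answer to the output. The decisive point is that Opponent is free to steer the control phase: he may answer the head question of $A_i$ immediately by $n$, touching none of the $\sigma_j$ and activating $\tau_n$; or he may, as $x_i$, ask the opening question of the $j$-th argument $B_{i,j}$, thereby activating $\sigma_j$. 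In each case the ensuing sub-play is, after the fixed copy-cat reindexing imposed by $e$, ${\tt der}$, ${\tt con}$ and the pairing functions, precisely a play of $\tau_n$ (respectively of $\sigma_j$), with O/P polarities aligned. This is essentially the content already extracted in Lemmas~\ref{lem1}, \ref{lem3} and~\ref{lem5}, read as explicit set-operations on the play-set of the composite.

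Using this, I would transfer $\Subeq$ componentwise. To get $\tau_n\Subeq\tau'_n$, take witnessing data $sab\in\tau_n$, $s'\in\tau'_n$ with $sa\approx s'a'$, prefix each with the \emph{common} ``control returns $n$'' interaction to obtain plays of ${\bf C}_i(\sigma_1,\dots,\sigma_{l_i},(\tau_n\mid n\in\omega))$ and of ${\bf C}_i(\sigma'_1,\dots,\sigma'_{l_i},(\tau'_n\mid n\in\omega))$ whose visible prefixes are still $\approx$-related, apply the hypothesis to produce the matching extension, and project the branch phase back to recover $b'$ with $s'a'b'\in\tau'_n$ and $sab\approx s'a'b'$. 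The argument for $\sigma_j\Subeq\sigma'_j$ is identical, using instead the common prefix ``$x_i$ asks its $j$-th argument''. Since the steering prefix is determined solely by the fixed structural strategies $\chi$, ${\tt Ap}$, $\pi_i$ and ${\tt con}$, it is the same on both sides, so the two embeddings are compatible and the transfer is sound.

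The entire difficulty lies in making the play-analysis of the previous two paragraphs precise: checking that the control- and branch-phase segments really are positions of the composite (via the interaction/hiding definition of composition together with the copy-cat behaviour of $\chi_a$, ${\tt Ap}$, ${\tt LAPP}$ and the projections), and that the correspondence with plays of the components is a bijection, up to $\approx$, that respects the index-reindexing introduced by $\ofcourse$, ${\tt con}$ and the pairing functions. I would reuse Corollary~\ref{cordisblock} and the explicit descriptions of $\sigma'$, $\sigma''$ and the $\tau_n$ from the proof of the Decomposition Lemma to carry out this bookkeeping rather than redo it; the remaining steps are routine.
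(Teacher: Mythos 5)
Your proposal is correct and follows essentially the same route as the paper: parts (i) and (ii), which the paper dismisses as trivial, you verify directly from the definition of $\Subeq$ exactly as intended, and for (iii) your scheme --- embed a play of $\tau_n$ (resp.\ $\sigma_j$) into the composite via the fixed steering prefix $*_1 *_2 n_2$ (resp.\ $*_1 *_2$ with reindexing), apply the hypothesis $\Subeq$, and project the matching extension back --- is precisely the paper's argument, which likewise leans on the structural analysis of positions from Lemmas~\ref{lem1}, \ref{lem3} and~\ref{lem5}. The only difference is presentational: you phrase the position analysis in control-phase/branch-phase terms, while the paper works directly with the explicit forms $*_1 *_2 n_2 t$ and $*_1 *_2 s'$.
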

\begin{proof} {\it (i)} and {\it (ii)} are trivial.

For {\it (iii)} write $\sigma={\bf C}_i(\sigma_1,\dots,\sigma_{l_i},(\tau_n\;\mid \;
n\in\omega))$ and
$\tau={\bf C}_i(\sigma'_1,\dots,\sigma'_{l_i},(\tau'_n\;\mid \; n\in\omega))$.

Suppose firstly that $s\in\tau_n$. Then $*_1*_2n_2 s\in\sigma$, so
since $\sigma\Subeq\tau$, for some $t$, $*_1*_2n_2 t\in\tau$ and
$*_1*_2n_2 s\approx*_1*_2n_2 t$. This implies that $t\in\tau'_n$ and
$s\approx t$. We conclude that $\tau_n\Subeq\tau'_n$.

Now suppose that $s\in\sigma_j$. Then $*_1*_2s'\in\sigma$ where
$s'$ is a reindexed version of $s$ with $s\approx s'$. Since
$\sigma\Subeq\tau$, there exists $t'$ such that $*_1*_2t'\in\tau $
and $*_1*_2s'\approx *_1*_2 t'$. This implies that there exists
$t\in\sigma'_j$ with $s\approx t$. We conclude that
$\sigma_j\Subeq\sigma'_j.\;\; \; $
\end{proof}

\subsection{Approximation Lemmas}
The Decomposition Lemma provides for one step of decomposition of an
arbitrary strategy into a form matching that of the semantic clauses
for evaluation trees. However, infinite strategies will not admit a
well-founded inductive decomposition process. Instead, we must appeal
to notions of continuity and approximation, in the spirit of Domain
Theory \cite{AbramskyS:domt}.

We define a PCF {\em type-in-context} (\cite{CroleRL:catt}) to be a type of
the form
$$(T_1\with\dots\with T_p)\Rightarrow U$$
where $T_1,\dots,T_p,U$ are PCF types. Given such a type-in-context
$T$, we will write ${\tt Str}(T)$ for the set of strategies on the game
$\sem{T}$.

The Unicity of Decomposition Lemma says that decompositions are unique up
to partial equivalence. Referring to the Decomposition Lemma,
Prop. \ref{lem8}, note that the proof of the decomposition
$$\sigma\approx {\bf C}_i(\sigma_1,\dots,\sigma_{l_i},(\tau_n\;\mid \;
n\in\omega))$$
involved defining specific strategies
$\sigma_1,\dots,\sigma_{l_i},(\tau_n\;\mid \; n\in\omega)$ from the given
$\sigma$. If we also fix specific pairing and tagging functions and
dereliction indices in the definition of promotion,
dereliction, contraction etc.( and hence in the $\Intmod$
operations of composition, pairing, currying etc.), we obtain an
operation $\Phi$ on strategies such that
\[ \Phi(\sigma)= \left\{ \begin{array}{ll}
            1 & \mbox{ in case {\it (i)} } \\
            (2,n) & \mbox{ in case {\it (ii)} } \\
           (3,\sigma_1,\dots,\sigma_{l_i},(\tau_n\;\mid \; n\in\omega))
           &\mbox{ in case {\it (iii)}}
        \end{array}
\right. \]
according to the case of the Decomposition Lemma which applies to
$\sigma$. We shall use $\Phi$ to define a family of
functions
$$p_k:{\tt Str}(T)\rightarrow {\tt Str}(T) \;\; (k\in\omega)$$
inductively as follows:
\begin{itemize}
\item $p_0(\sigma)=\Lambda_{\tilde{U}}(\bot_{\tilde{T},\tilde{U}})$
\item \[ p_{k+1}(\sigma)= \left\{ \begin{array}{ll}
\Lambda_{\tilde{U}}(\bot_{\tilde{T},\tilde{U}}), &
\Phi(\sigma)=1 \\

\Lambda_{\tilde{U}}({\bf K}_{\tilde{T},\tilde{U}}n) , &
\Phi(\sigma)=(2,n) \\

\Lambda_{\tilde{U}}({\bf C}_i(p_k(\sigma_1),\dots,p_k(\sigma_{l_i}),(\tau'_n\;\mid \;
n\in\omega))), & \Phi(\sigma) =\sigma_0

\end{array}
\right. \]
where
$$\sigma_0=
(3,\sigma_1,\dots,\sigma_{l_i},(\tau_n\mid n\in\omega) ) $$
and
\[ \tau'_n= \left\{ \begin{array}{ll}
p_k(\tau_n), & 0\leq n\leq k \\
\Lambda_{\tilde{U}}(\bot_{\tilde{T},\tilde{U}}), & n > k .
\end{array}
\right. \]

\end{itemize}
The principal properties of these functions are collected in the
following Lemma.

\begin{lemma}[Approximation Lemma for Strategies]\label{lemm1}
For all $k\in\omega$:
\begin{itemize}
\item[(i)] $\sigma\subseteq \tau$ implies $p_k(\sigma)\subseteq p_k(\tau)$
\item[(ii)] If $\sigma_0\subseteq \sigma_1\subseteq\dots $ is an increasing
  sequence,
$$ p_k(\bigcup_{l\in\omega}\sigma_l)=\bigcup_{l\in\omega}
p_k(\sigma_l)$$
\item[(iii)] $\sigma\Subeq \tau$ implies $p_k(\sigma)\Subeq p_k(\tau)$
\item[(iv)] $p_k(\sigma)\Subeq\sigma$
\item[(v)] $\forall s\in\sigma.\; |s|\leq 2k \Rightarrow \exists t\in
  p_k(\sigma).\; s\approx t$
\item[(vi)] $p_k(\sigma)\subseteq p_{k+1}(\sigma)$
\item[(vii)] $\bigcup_{l\in\omega} p_l(\sigma)\approx\sigma$
\item[(viii)] $p_k(p_k(\sigma))\approx p_k(\sigma)$
\end{itemize}
\end{lemma}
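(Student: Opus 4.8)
The plan is to prove all eight clauses by a single induction on $k$, the inductive step always proceeding by the trichotomy of the Decomposition Lemma (Proposition~\ref{lem8}) applied to $\sigma$. The common engine is that in case~(iii) the approximant $p_{k+1}(\sigma)$ is obtained by applying the combinator ${\bf C}_i$ to the approximants $p_k(\sigma_j)$ of the extracted components and the truncated branches $\tau'_n$, and that ${\bf C}_i$, ${\bf K}n$ and $\bot$ are assembled from composition, pairing, currying, contraction and promotion, all of which are \emph{strongly continuous} (monotone for $\subseteq$ and for $\Subeq$, and continuous for directed unions). Thus each property of $p_{k+1}$ reduces, via the induction hypothesis, to the corresponding property of these operations together with a statement about the component-extraction performed by $\Phi$ itself.

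For the purely order-theoretic clauses I would first record that $\Phi$ respects the relevant structure: since the extracted strategies $\sigma'$, $\sigma''$ and $\tau_n$ of Lemmas~\ref{lem2}--\ref{lem5} are defined from $\sigma$ by manifestly monotone, union-preserving set operations, $\sigma\subseteq\tau$ forces $\sigma$ and $\tau$ into the same case of Proposition~\ref{lem8} (by determinacy of the first response) with componentwise inclusions, and directed unions commute with the extraction. Clauses~(i) and~(ii) then follow from $\subseteq$-monotonicity and $\subseteq$-continuity of the combinators, the cases $\sigma=\bot$ and $\sigma={\bf K}n$ being handled directly. Clause~(iv) is an easy induction: in case~(iii), $p_k(\sigma_j)\Subeq\sigma_j$ and $\tau'_n\Subeq\tau_n$ give $p_{k+1}(\sigma)={\bf C}_i(p_k(\sigma_j),(\tau'_n))\Subeq{\bf C}_i(\sigma_j,(\tau_n))\approx\sigma$. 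Clause~(vi) compares $p_{k+1}(\sigma)$ and $p_{k+2}(\sigma)$ componentwise, the only new point being that the truncation bound rises from $k$ to $k+1$, so the branch $\tau_{k+1}$ passes from $\bot$ to $p_{k+1}(\tau_{k+1})$, preserving $\subseteq$. For clause~(iii) I would invoke the Unicity Lemma~\ref{lem9}(iii): $\sigma\Subeq\tau$ places both in case~(iii) with the same $i$ (matching the first response up to $\approx$), so ${\bf C}_i(\sigma_j,(\tau_n))\Subeq{\bf C}_i(\sigma'_j,(\tau'_n))$ yields $\sigma_j\Subeq\sigma'_j$ and $\tau_n\Subeq\tau'_n$ componentwise, and the induction hypothesis with $\Subeq$-monotonicity of the combinators closes the step.

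The real content, and the step I expect to be the main obstacle, is clause~(v). I would argue by induction on $k$: a play $s\in\sigma$ with $|s|\le 2(k+1)$ is either empty, or (case~(ii)) has the form $*n$ and lies in $p_{k+1}(\sigma)={\bf K}n$, or (case~(iii)) decomposes through ${\bf C}_i$ into the outer case-moves together with subplays routed into the components $\sigma_j$ and into the selected branch $\tau_n$. The heart of the matter is the length bookkeeping: one must show that these subplays are strictly shorter---of length at most $2k$---so that the induction hypothesis supplies $\approx$-matching plays in $p_k(\sigma_j)$ and $p_k(\tau_n)$, which reassemble (the linking strategies inside ${\bf C}_i$ being copy-cats, hence $\approx$-preserving) into some $t\in p_{k+1}(\sigma)$ with $s\approx t$. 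The delicate point is to control the \emph{branch index} $n$ selected on the play $s$ against the truncation $\tau'_n=\Lambda(\bot)$ for $n>k$: one must verify that any branch actually reached by a play of length $\le 2(k+1)$ is among those retained, so that the needed $\tau_n$ is not discarded. This interaction between the global length bound and the branch-truncation is where the bookkeeping is most delicate and where I would concentrate the effort, using Lemma~\ref{lem1} (the disjointness ${\tt FST}(s)\cap{\tt FST}(t)=\varempty$) to keep the subplays independent.

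Clauses~(vii) and~(viii) are then corollaries. For~(vii), clause~(vi) makes $(p_l(\sigma))_{l\in\omega}$ a $\subseteq$-chain, so clause~(iv) together with preservation of $\Subeq$ under directed unions gives $\bigcup_l p_l(\sigma)\Subeq\sigma$, while clause~(v) shows every $s\in\sigma$ is $\approx$-matched already in $p_{\lceil|s|/2\rceil}(\sigma)$, giving $\sigma\Subeq\bigcup_l p_l(\sigma)$; together these yield $\bigcup_l p_l(\sigma)\approx\sigma$. For~(viii) I would check, by induction on $k$, that $\Phi(p_{k+1}(\sigma))$ lands in the same case as $\Phi(\sigma)$ with components $p_k(\sigma_j)$ and $\tau'_n$ (using Unicity, Lemma~\ref{lem9}); since both the outer and the inner application of $p_{k+1}$ truncate at the same bound, the induction hypothesis $p_k(p_k(\cdot))\approx p_k(\cdot)$ applied componentwise, together with the fact that ${\bf C}_i$ is well-defined on $\approx$-classes, gives $p_{k+1}(p_{k+1}(\sigma))\approx p_{k+1}(\sigma)$.
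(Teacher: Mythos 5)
Your handling of clauses (i)--(iv), (vi), (vii) and (viii) follows the paper's proof essentially step for step: locality of the extraction operations plus strong continuity of the combinators for (i)--(ii), inductions driven by $\Subeq$-- and $\subseteq$--monotonicity of ${\bf C}_i$ together with the Unicity of Decomposition Lemma for (iii), (iv), (vi) and (viii), and the assembly of (vii) from (iv)--(vi). The gap is in clause (v), and it sits exactly where you said you would ``concentrate the effort''. The verification you propose there --- that any branch actually reached by a play of length $\leq 2(k+1)$ is among those retained --- is not a delicate bookkeeping point that can be pushed through: it is false. The branch index is the \emph{numerical value} of Opponent's answer to the question $*_2$, and it bears no relation to the length of the play. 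Concretely, let $\sigma : N \Rightarrow N$ be the copy-cat (identity) strategy, so that $\sigma \approx {\bf C}_1(({\bf K}n \mid n \in \omega))$ with no $\sigma_j$ components. The play $s = *_1 *_2\, \underline{n}_2\, \underline{n}_1$ has length $4$, yet selects branch $n$; since $p_k(\sigma) = {\bf C}_1({\bf K}0, \dots, {\bf K}(k-1), \bot, \bot, \dots)$ replaces every branch with index $\geq k$ by $\bot$, for any $k \geq 2$ and any $n \geq k$ we have $|s| = 4 \leq 2k$ but no $t \in p_k(\sigma)$ with $s \approx t$: the equivalence $\approx$ only permutes $!$-indices and must preserve answer values ($\Deq_N$ is the identity), so a match would have to contain the answer $\underline{n}$, and $p_k(\sigma)$ contains none. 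So your induction cannot close at this step, and no argument can, because the statement being proved is too strong.

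What your proposal actually uncovers is that clause (v) as stated is flawed, and that the paper's own proof glosses over precisely this case: it treats only plays $*_1 *_2 s'$ in which $*_2$ is never answered, and dismisses the plays $*_1 *_2 s'\, n_2\, t$ --- the only ones that interact with the branch truncation --- with ``the case \dots is similar''. The statement that is both true and sufficient is the weaker ``eventual'' form: for every $s \in \sigma$ there exists $k_0$ (large enough to dominate both $|s|/2$ and all numeral answers occurring in $s$, at every level of the decomposition) such that $s$ has an $\approx$-match in $p_k(\sigma)$ for every $k \geq k_0$. Your induction proves this form verbatim: the length bound drives the recursion, and the numeral bound guarantees that every branch the play enters is retained. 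This weaker form still yields $\sigma \Subeq \bigcup_l p_l(\sigma)$ in clause (vii) (combined with (vi) and the self-equivalence of each $p_m(\sigma)$), and it is (vii), (vi) and the monotonicity clauses --- not the quantitative form of (v) --- that the Isomorphism Theorem and the Full Abstraction Theorem actually use. So the correct move is not to repair the bookkeeping you flagged, but to restate (v) in the eventual form and check that all downstream references go through, which they do.
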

\begin{proof} Firstly, consider the operation $\Phi(\sigma)$. In
case {\it (iii)}, where
$$\Phi(\sigma)=(3,\sigma_1,\dots,\sigma_{l_i},(\tau_n\;\mid \;
n\in\omega))$$ $\Phi(\sigma)$ is obtained by firstly defining
$\sigma'$ and the $\tau_n$ from $\sigma$, then $\sigma''$ from
$\sigma'$, and finally $$\sigma_j=(\sigma'';{\tt der})^{\dag};
\pi_j.$$ Note that $\sigma';\sigma''$ and the $\tau_n$ are defined
{\em
  locally}, i.e. by operations on positions applied pointwise to
$\sigma$ and $\sigma'$ respectively. Together with the
$\subseteq-$monotonicity and continuity of Promotion, Dereliction,
Contraction etc. (Proposition 2.9.4) this implies {\it (i)} and
{\it (ii)}. Now note that ${\bf C}_i$ is $\subseteq-$ and
$\Subeq_{A}$ monotonic by Proposition 2.9.3. A straightforward
induction using $\Subeq-$monotonicity and $\subseteq-$monotonicity
of ${\bf C}_i$ respectively and the Unicity of Decomposition Lemma
yelds {\it (iii)}. Similarly routine inductions using
$\Subeq-$monotonicity and $\subseteq-$monotonicity of ${\bf C}_i$
respectively prove {\it (iv)} and {\it (vi)}.

We prove {\it (v)} by induction on $k$. The basis is trivial as are cases
{\it (i)} and {\it (ii)} of the Decomposition Lemma at the inductive step. Suppose
we are in case {\it (iii)}, with
$$\sigma\approx {\bf C}_i(\sigma_1,\dots,\sigma_{l_i},(\tau_n\;\mid \;
n\in\omega))$$
Consider firstly $s\in\sigma$ where $s=*_1*_2s'$ with $*_2$ not
answered in $s'$. Then $s'\in\sigma''$ where $\sigma''$ is derived
from $\sigma'$ and $\sigma'$ from $\sigma$ as in the proof of the
Decomposition Lemma. Since
$\lang \sigma_1,\dots,\sigma_{l_i}\rang^{\dag}\approx\sigma''$, $s'$ can be
decomposed into subsequences $s_{j,1},\dots,s_{j,p_j}$ with
$s'_{j,q}\approx s_{j,q}\in\sigma_j$, $1\leq j\leq l_i$, $1\leq q\leq
p_j$.

Since $|s_{j,q}| < |s|$, we can apply the induction hypothesis to
conclude that $s_{j,q}\approx u_{j,q}\in p_k(\sigma_j)$, and hence
that there is $*_1*_2 u\in p_{k+1}(\sigma)$ with $s\approx *_1*_2
u$. The case where $s=*_1*_2 s' n_2 t$ is similar.

To prove {\it (vii)}, note firstly that the union $\bigcup_{l\in\omega}
p_l(\sigma)$ is well-defined by {\it (vi)}. Now $\bigcup_{l\in\omega}
p_l(\sigma)\Subeq\sigma$ follows from {\it (iv)}, while $\sigma\Subeq\bigcup_{l\in\omega}
p_l(\sigma)$ follows from {\it (v)}.

Finally {\it (viii)} can be proved by induction   on $k$ and {\it
  (iii)} using the Unicity of Decomposition Lemma.
\end{proof}

We now turn to evaluation trees. Let
$\Gamma=x_1:T_1,\dots,x_k:T_k$. We define a family of functions
$$q_k:{\bf ET}(\Gamma,U)\rightarrow {\bf ET}(\Gamma,U) \;(k\in\omega)$$
inductively by

\[
\begin{array}{lcr}
q_0(P) & = & \lambda \tilde{x}^{\tilde{U}}.\Omega \\
q_{k+1}(\lambda \tilde{x}^{\tilde{U}}.\Omega) &=& \lambda
\tilde{x}^{\tilde{U}}.\Omega \\
q_{k+1}(\lambda \tilde{x}^{\tilde{U}}.n) &=& \lambda
\tilde{x}^{\tilde{U}}.n \\
q_{k+1}(\lambda \tilde{x}^{\tilde{U}}.{\tt case}(x_i P_1\dots
p_{l_i},(Q_n\;\mid \; n\in\omega))) & & \\
=\lambda \tilde{x}^{\tilde{U}}.{\tt case}(x_i q_k(P_1)\dots
q_k(P_{l_i}),(Q'_n\;\mid \; n\in\omega)) & &
\end{array}
\]

where

\[
Q'_n= \left\{ \begin{array}{cc}
q_k(Q_n) , & 0\leq n\leq k\\
\lambda \tilde{x}^{\tilde{U}}.\Omega, & n> k
\end{array}
\right. \]

The following is then standard:
\begin{lemma}[Approximation Lemma for Evaluation Trees]\label{alet}
The $(q_k\;\mid \; k\in\omega)$ form as increasing sequence of
continuous functions with $\bigsqcup_{k\in\omega} q_k={\tt
id}_{{\bf ET}(\Gamma,U)}$. Each $q_k$ is idempotent and has finite
image.
\end{lemma}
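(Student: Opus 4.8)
The plan is to reduce everything to a statement about finite evaluation trees and then lift to the ideal completion. Recall that $\evalT{\Gamma}{U}$ is the ideal completion of $(\finevalT{\Gamma}{U},\lpo)$, a dI-domain whose compact elements are exactly the finite evaluation trees (the proposition following Proposition~\ref{A}). Each clause defining $q_{k}$ inspects a tree only to case-nesting depth $\le k$ and case-branch index $\le k$, so $q_{k}$ is ``finitely determined''; I would make this precise by first checking that the structural recursion yields a well-defined monotone endofunction of $(\finevalT{\Gamma}{U},\lpo)$, and then observing that the structurally-defined $q_{k}$ on all of $\evalT{\Gamma}{U}$ coincides with the unique continuous extension $I \mapsto \bigsqcup_{P\in I} q_{k}(P)$ of that restriction. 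Monotonicity, $M\lpo N \THEN q_{k}(M)\lpo q_{k}(N)$, is proved by induction on $k$ together with a case analysis on the $\lpo$-relation: the leaf cases $\lambda\tilde{x}.\Omega$ and $\lambda\tilde{x}.n$ are immediate (the first because $\lambda\tilde{x}.\Omega$ is least, the second because a numeral head forces $N$ to have the same head), and for a $\B{case}$-head one applies the induction hypothesis to the argument- and branch-subtrees, using that the $\B{case}$ constructor is $\lpo$-monotone. Continuity then follows from the universal property of the ideal completion.

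For the increasing-sequence and least-upper-bound claims, I would first record two pointwise inequalities on finite trees, both by the same induction-on-$k$-plus-case-analysis template: $q_{k}(P)\lpo P$ (since $q_{k}$ only prunes subtrees to $\Omega$) and $q_{k}(P)\lpo q_{k+1}(P)$ (the deeper recursion and the wider branch-window $n\le k+1$ only fill in $\Omega$'s left by $q_{k}$). These give that $(q_{k})$ is an $\lpo$-increasing chain in the function space $[\evalT{\Gamma}{U}\to\evalT{\Gamma}{U}]$, whose supremum is again continuous. To identify that supremum with the identity it suffices, by continuity, to check it on compacts: for a finite tree $P$, choosing $k$ larger than both the case-nesting depth of $P$ and the largest branch-index actually occurring in $P$, the recursion never truncates and $q_{k}(P)=P$; thus $\bigsqcup_{k}q_{k}(P)=P$, and $\bigsqcup_{k}q_{k}$ is the identity on all of $\evalT{\Gamma}{U}$ by continuity.

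Finally, idempotence $q_{k}\circ q_{k}=q_{k}$ is an exact identity (no need for $\approx$, since we manipulate concrete trees), again by induction on $k$: the leaf cases are trivial, and for a $\B{case}$-head the argument-subtrees are handled by the inner induction hypothesis $q_{k}(q_{k}(P_{j}))=q_{k}(P_{j})$, while the branch-window is stable because re-applying $q_{k+1}$ truncates to the same index set $n\le k$. For the image claim, the operative reading of ``finite image'' is that $q_{k}$ maps $\evalT{\Gamma}{U}$ into its finite (compact) elements: a structural bound shows $q_{k}(M)$ is always a genuine finite evaluation tree, of case-nesting depth $\le k$ and with every branch of index $>k$ equal to $\Omega$, hence compact. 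The hard part here is purely interpretive rather than technical: with the ground type $N$ read as the naturals, the clause $q_{k+1}(\lambda\tilde{x}.n)=\lambda\tilde{x}.n$ leaves numeral leaves untouched, so the image is \emph{not} finite as a set; the strictly-finite (SFP) statement holds only once ground values are bounded (as in Finitary PCF) or one additionally truncates numerals. I would flag this as the sole subtlety, the remainder being exactly the routine projection bookkeeping that the phrase ``the following is then standard'' anticipates.
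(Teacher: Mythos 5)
Your proof is correct, and it is essentially the argument the paper has in mind: the paper in fact offers no proof of this lemma at all (it is introduced with the single phrase ``The following is then standard''), and what you spell out --- well-definedness of $q_k$ on $\finevalT{\Gamma}{U}$, continuous extension to the ideal completion $\evalT{\Gamma}{U}$, the pointwise inductions on $k$ giving $\lpo$-monotonicity, $q_k(P)\lpo P$, $q_k(P)\lpo q_{k+1}(P)$, exact idempotence, and the identification of $\bigsqcup_k q_k$ with the identity by checking it on compacts --- is precisely the routine bookkeeping being appealed to.

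Your flag concerning ``finite image'' is also correct and is the one substantive point of divergence from the statement as written. Since $q_{k+1}(\lambda\tilde{x}^{\tilde{U}}.n)=\lambda\tilde{x}^{\tilde{U}}.n$ for every numeral, the image of $q_k$ for $k\geq 1$ contains $\{\lambda\tilde{x}^{\tilde{U}}.n \mid n\in\omega\}$ and so is infinite as a set; the literal (deflation-style) reading of ``finite image'' fails for the ground type $N$, and could only be restored by truncating numerals as well --- which would in turn force matching changes to $p_k$ and $\eta_k$ to keep Lemma~\ref{lemmm3} true --- or by passing to Finitary PCF. What is true, and what the paper actually uses (in the well-definedness of ${\cal S}(P)=\bigcup_{k\in\omega}\varsigma(\Gamma\vdash q_k(P):U)$, where $\varsigma$ is only defined on finite evaluation trees, and in Theorem~\ref{isotheo}), is exactly the version you prove: every $q_k(P)$ is a finite evaluation tree, i.e.\ the image of $q_k$ consists of compact elements. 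So your proposal establishes the statement in the form in which the rest of the paper needs it.
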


\subsection{Main Results}
We are now equipped to address the relationship between strategies
and evaluation trees directly. Let $\Gamma=x_1:T_1,\dots,x_k:T_k$.
We define a map $$\varsigma:{\bf FET}(\Gamma,U)\rightarrow {\tt
Str}(\tilde{T}\Rightarrow U)$$ this map is a concrete version of
the semantic map defined in section 2.4. That is, we fix choices
of pairing functions etc. as in the definition of $\Phi$ in 2.5,
and define $\varsigma(\Gamma\vdash P:U)$ as a specific
representative of the partial equivalence class $\sem{\Gamma\vdash
P:U}$. Thus we will have $$\sem{\Gamma\vdash
P:U}=[\varsigma(\Gamma\vdash P:U)].$$ We were sloppy about this
distinction in 2.4; we give the definition of $\varsigma$
explicitly for emphasis:
\[
\begin{array}{lcr}
\varsigma(\Gamma\vdash \lambda x^T.P:T\Rightarrow U) & = &
\Lambda(\varsigma(\Gamma,x:T\vdash P : U)) \\
\varsigma(\Gamma\vdash \Omega:N) &=& \bot_{\tilde{T}} \\
\varsigma(\Gamma\vdash n:N) &=& {\bf K}_{\tilde{T}}n\\
\varsigma(\Gamma\vdash {\tt case}(x_i P_1\dots P_{l_i},(Q_n\;\mid \;
n\in\omega)) &= &
{\bf C}_i(\sigma_1,\dots,\sigma_{l_i},(\tau_n\;\mid \; n\in\omega))
\end{array} \]
where
\[\begin{array}{lcr}
T_i &=& B_{i,1}\Rightarrow\dots\Rightarrow B_{i,l_i}\Rightarrow N ,\\
\sigma_j &=& \varsigma(\Gamma\vdash P_j:B_{i,j}),\; 1\leq j\leq l_i ,\\
\tau_n &=& \varsigma(\Gamma\vdash Q_n:N) , \; n\in\omega .
\end{array} \]

\begin{lemma}\label{lemmm1}
If $P\sqsubseteq Q$ then $\varsigma(\Gamma\vdash
P:U)\subseteq\varsigma(\Gamma\vdash Q:U)$
\end{lemma}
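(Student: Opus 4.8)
The plan is to argue by structural induction on the finite evaluation tree $P$, exploiting the fact that the $\Omega$-match ordering $P \sqsubseteq Q$ forces $Q$ to have exactly the same shape as $P$ except at those leaves where $P$ carries an $\Omega$. Concretely, unless $P = \Omega$, the head constructor of $P$ and of $Q$ agree and their corresponding immediate subtrees are again related by $\sqsubseteq$; this is precisely what lets the induction hypothesis bite. So the first thing I would record is this componentwise decomposition of the ordering, and then match it against the clauses defining $\varsigma$.

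The base cases are immediate. If $P = \Omega$ then $\varsigma(\Gamma \vdash \Omega : N) = \bot_{\tilde{T}} = \{\epsilon\}$ is the least strategy in the inclusion ordering, so it is contained in $\varsigma(\Gamma \vdash Q : U)$ whatever $Q$ is. If $P = n$ is a numeral then $P$ contains no occurrence of $\Omega$, so $P \sqsubseteq Q$ forces $Q = P$ and the inclusion is in fact an equality.

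In the inductive step there are two constructors to treat. If $P = \lambda x^T . P'$ then $Q = \lambda x^T . Q'$ with $P' \sqsubseteq Q'$; the induction hypothesis gives $\varsigma(\Gamma, x:T \vdash P' : U) \subseteq \varsigma(\Gamma, x:T \vdash Q' : U)$, and since currying $\Lambda$ preserves inclusion we get $\varsigma(\Gamma \vdash P : T\Rightarrow U) = \Lambda(\varsigma(P')) \subseteq \Lambda(\varsigma(Q')) = \varsigma(\Gamma \vdash Q : T \Rightarrow U)$. If $P = {\tt case}(x_i P_1 \cdots P_{l_i}, (Q_n \mid n \in \omega))$ then $Q$ is a case tree with the same head variable $x_i$ and arity, say $Q = {\tt case}(x_i P_1' \cdots P_{l_i}', (Q_n' \mid n \in \omega))$, with $P_j \sqsubseteq P_j'$ for each $j$ and $Q_n \sqsubseteq Q_n'$ for each $n$. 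The induction hypothesis yields $\sigma_j = \varsigma(P_j) \subseteq \varsigma(P_j') = \sigma_j'$ and $\tau_n = \varsigma(Q_n) \subseteq \varsigma(Q_n') = \tau_n'$, whence ${\bf C}_i$-monotonicity gives $\varsigma(P) = {\bf C}_i(\sigma_1, \dots, \sigma_{l_i}, (\tau_n \mid n\in\omega)) \subseteq {\bf C}_i(\sigma_1', \dots, \sigma_{l_i}', (\tau_n' \mid n\in\omega)) = \varsigma(Q)$.

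The only non-bookkeeping ingredient, and hence the point to pin down, is the $\subseteq$-monotonicity of the two combinators $\Lambda$ and ${\bf C}_i$. For $\Lambda$ this is part of the strong continuity of currying (Proposition~2.9.3). For ${\bf C}_i$ it follows because ${\bf C}_i$ is assembled from $\chi$, pairing, application ${\tt Ap}$, composition, tensor, promotion and contraction, each of which is $\subseteq$-monotone by Proposition~2.9.3 (this is exactly the observation already invoked in the proof of the Approximation Lemma for Strategies). I do not expect a genuine obstacle here; the only care needed is to verify that in the case node the componentwise decomposition of $\sqsubseteq$ really does hold on every branch, including the infinitely many tail branches where $Q_n = \Omega$, for which the base case supplies $\tau_n = \bot \subseteq \tau_n'$ and so causes no difficulty.
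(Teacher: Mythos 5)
Your proof is correct and follows exactly the paper's own argument, which is stated there in one line: induction on the construction of $P$, using the $\subseteq$-monotonicity of ${\bf C}_i$ (and implicitly of $\Lambda$), both consequences of the strong continuity results of Section~2.9. Your write-up simply fills in the case analysis and the shape-preservation property of the $\Omega$-match ordering that the paper leaves implicit.
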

\begin{proof} By induction on the construction of $P$, using
$\subseteq$--monotonicity  of ${\bf C}_i$.
\end{proof}

Let $\tilde{T}=T_1,\dots,T_l$, and ${\tt Con}(\tilde{T})$ be the set of
all $\tilde{T}-$contexts $x_1:T_1,\dots,x_p:T_p$.
For each $k\in\omega$, we define a map
$$\eta_k: {\tt Str}(\tilde{T}\Rightarrow U) \rightarrow
\Pi_{\Gamma\in{\tt Con}(\tilde{T})} {\bf FET}(\Gamma,U)$$
inductively by:

\[
\begin{array}{lcr}
\eta_0(\sigma)\Gamma &=& \lambda \tilde{y}^{\tilde{U}}.\Omega \\

\end{array}
\]

\[\eta_{k+1}(\sigma)\Gamma=\left\{ \begin{array}{l}
 \lambda \tilde{y}^{\tilde{U}}.\Omega,
 \sigma=\Lambda_{\tilde{U}}(\bot_{\tilde{T},\tilde{U}}) \\
  \lambda \tilde{y}^{\tilde{U}}.n,
  \sigma=\Lambda_{\tilde{U}}({\bf K}_{\tilde{T},\tilde{U}} n) \\
 \lambda \tilde{y}^{\tilde{U}}.{\tt case}(z_i P_1\dots
 P_{l_i},(Q_n\;\mid \; n\in\omega)), \\
 \hspace{1in} \sigma\approx \Lambda_{\tilde{U}}({\bf C}_i(\sigma_1,\dots,\sigma_{l_i},(\tau_n\;\mid \;n\in\omega)))
\end{array}
\right. \]
where
\[\begin{array}{lcr}
\Gamma &=& x_1:T_1,\dots,x_p:T_p, \\
\Delta &=& y_1:U_1,\dots, y_q:U_q \\
\tilde{z}&=& x_1,\dots,x_p,y_1,\dots,y_q, \\
P_j &=& \eta_k(\sigma_j)\Gamma,\Delta, \;\; 1\leq j\leq l_i
\end{array} \]
and
\[ Q_n = \left\{ \begin{array}{cc}
\eta_k(\sigma_j)\Gamma,\Delta, & 0\leq n\leq k \\
\Omega & n> k
\end{array}
\right. \]

\begin{lemma}\label{lemmm2}
For all $k\in\omega$ :
\begin{itemize}
\item[(i)] $\sigma\Subeq\tau$ implies
  $\eta_k(\sigma)\Gamma\sqsubseteq\eta_k(\tau)\Gamma$ .
\item[(ii)] If $\sigma_0\subseteq\sigma_1\subseteq\dots$ is an
  increasing sequence,
$$
\eta_k(\bigcup_{l\in\omega}\sigma_l)\Gamma=\bigsqcup_{l\in\omega}\eta_k(\sigma_l)\gamma.$$
\item[(iii)] $\eta_k(\sigma)\Gamma\sqsubseteq\eta_{k+1}(\sigma)\Gamma$
\item[(iv)] $q_k(\eta_l(\sigma)\Gamma)=\eta_k(\sigma)\gamma,\;\; l\ge k$
\end{itemize}
\end{lemma}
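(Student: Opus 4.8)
The plan is to prove all four statements by induction on $k$, treating them in a single sweep since they share the same inductive machinery. The base case $k=0$ is immediate: $\eta_0(\sigma)\Gamma = \lambda\tilde{y}^{\tilde{U}}.\Omega$ is the bottom element of ${\bf FET}(\Gamma,U)$ and is the constant value returned by $q_0$, so (i)--(iv) all hold trivially. For the inductive step I would analyse $\sigma$ according to the three mutually exclusive cases of the Decomposition Lemma (Prop.~\ref{lem8}); by the Unicity of Decomposition Lemma (\ref{lem9}) these cases are genuinely determined by $\sigma$, so the clause of the definition of $\eta_{k+1}$ that applies is unambiguous. In cases (i) and (ii) the value $\eta_{k+1}(\sigma)\Gamma$ is $\lambda\tilde{y}.\Omega$ or $\lambda\tilde{y}.n$ and makes no reference to the inductive data, so each claim is checked directly; all the work is in case (iii), where $\sigma\approx\Lambda_{\tilde{U}}({\bf C}_i(\sigma_1,\dots,\sigma_{l_i},(\tau_n\mid n\in\omega)))$ and $\eta_{k+1}(\sigma)\Gamma$ is a {\tt case}-term whose subtrees are $\eta_k$ applied to the canonically extracted components $\sigma_j$ and $\tau_n$.

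For (i), given $\sigma\Subeq\tau$, I would observe that the initial behaviour of $\sigma$ is matched by $\tau$, so that when $\sigma$ lies in case (iii) so does $\tau$, with the same interrogated argument $i$; Lemma~\ref{lem9}(iii) then yields $\sigma_j\Subeq\sigma'_j$ and $\tau_n\Subeq\tau'_n$ for the components, and the induction hypothesis together with the evident $\sqsubseteq$-monotonicity of the {\tt case} constructor (for the $\Omega$-match ordering) gives $\eta_{k+1}(\sigma)\Gamma\sqsubseteq\eta_{k+1}(\tau)\Gamma$; the cases $\sigma=\bot$ and $\sigma={\bf K}n$ are handled by noting that $\eta_{k+1}(\bot)\Gamma$ is bottom, and that $\sigma={\bf K}n\Subeq\tau$ forces $\tau$ to answer immediately with $n$. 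For (ii) the key point is that the component-extraction underlying $\Phi$ is defined locally (pointwise on positions) and is therefore $\subseteq$-continuous, exactly as established in the proof of Lemma~\ref{lemm1}(i),(ii); hence the decomposition of $\bigcup_l\sigma_l$ has as components the unions of the components of the $\sigma_l$, and continuity of {\tt case} with respect to directed $\sqsubseteq$-suprema, combined with the induction hypothesis, gives the desired commutation. For (iii) I would again reduce to case (iii) and apply the induction hypothesis to the subtrees; the only additional point concerns the branch family $(Q_n)$, where passing from $\eta_k$ to $\eta_{k+1}$ enlarges each subtree and each already-present branch by the induction hypothesis while turning one further $\Omega$-branch into a defined tree, so monotonicity of {\tt case} concludes.

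For (iv) I would fix $l\ge k$ and argue by induction on $k$; writing $l=l'+1$ with $l'\ge k$, cases (i) and (ii) are immediate because $q_{k+1}$ fixes $\lambda\tilde{y}.\Omega$ and $\lambda\tilde{y}.n$. In case (iii), $\eta_l(\sigma)\Gamma$ is a {\tt case}-term with subtrees $\eta_{l'}(\sigma_j)$ and branch family $\eta_{l'}(\tau_n)$ (for $n\le l'$, and $\Omega$ otherwise); applying $q_{k+1}$ recurses with $q_k$ on the subtrees and truncates the branch family at $k$, and the induction hypotheses $q_k(\eta_{l'}(\sigma_j))=\eta_k(\sigma_j)$ and $q_k(\eta_{l'}(\tau_n))=\eta_k(\tau_n)$ (valid since $l'\ge k$) show the result is exactly $\eta_{k+1}(\sigma)\Gamma$. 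The essential observation is that the truncation indices built into $q_{k+1}$ and into $\eta_{k+1}$ coincide: both recurse at level $k$ and cut the branch family at $k$, so the two truncations are compatible. I expect the main obstacle to be the bookkeeping in parts (i) and (ii): one must ensure that the decomposition case of $\sigma$ is stable under $\Subeq$ and under directed unions, and that the components extracted by $\Phi$ vary monotonically and continuously---facts that rest on the locality of the extraction and on the Unicity of Decomposition Lemma, rather than on any property of $\eta$ itself.
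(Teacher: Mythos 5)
Your proposal is correct and follows essentially the same route as the paper: the paper's proof simply points to the corresponding parts of the Approximation Lemma for Strategies (Lemma~\ref{lemm1}), whose arguments are exactly the inductions you spell out---case analysis via the Decomposition Lemma, the Unicity of Decomposition Lemma for part \textit{(i)}, locality of the component extraction for the continuity claim \textit{(ii)}, monotonicity of the {\tt case} constructor for \textit{(iii)}, and a routine induction on $k$ for \textit{(iv)}. Your explicit check that the truncation indices in $q_{k+1}$ and $\eta_{k+1}$ coincide is precisely the content of the paper's ``routine induction'' for \textit{(iv)}.
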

\begin{proof} {\it (i)} is proved similarly to part {\it (iii)} of
the Approximation Lemma for strategies; {\it (ii)} is proved
similarly to part {\it (ii)}; and {\it (iii)} to part {\it (vi)};
{\it (iv)} is proved by a routine induction on $k$.
\end{proof}

\begin{lemma}\label{lemmm3}
For all $P\in{\bf FET}(\Gamma,U),
\;\sigma\in{\tt Str}(\tilde{T}\Rightarrow U),\; k\in\omega :$
\begin{itemize}
\item[(i)] $\eta_k(\varsigma(\Gamma\vdash P:U))\Gamma=q_k(P)$
\item[(ii)] $\varsigma(\Gamma\vdash(\eta_k(\sigma)\Gamma):U)\approx p_k(\sigma)$
\end{itemize}
\end{lemma}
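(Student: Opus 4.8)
The plan is to prove both identities by induction on $k$, exploiting the fact that $\varsigma$, $\eta_k$, $p_k$ and $q_k$ are all defined by the \emph{same} three-way case analysis — the bottom strategy/tree, the constant $\mathbf{K}n$, and the $\mathtt{case}$-combinator $\mathbf{C}_i$ — dictated by the Decomposition Lemma (Proposition~\ref{lem8}). The two parts are independent inductions: neither inductive step appeals to the other, so I would carry them out separately, reusing the bookkeeping.

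For part~(i) the base case $k=0$ is immediate, since both sides are $\lambda\tilde{x}^{\tilde{U}}.\Omega$. For the step at $k+1$ I would case-split on the outermost form of the finite evaluation tree $P$. When $P$ is $\lambda\tilde{x}.\Omega$ or $\lambda\tilde{x}.n$ the definitions of $\varsigma$ and $\eta_{k+1}$ unfold directly to $q_{k+1}(P)$. The substantial case is $P=\lambda\tilde{x}.\mathtt{case}(x_iP_1\ldots P_{l_i},(Q_n\mid n\in\omega))$. Here $\varsigma(\Gamma\vdash P:U)$ is \emph{by definition} of the form $\Lambda_{\tilde{U}}(\mathbf{C}_i(\ldots))$ with arguments $\varsigma(P_j)$ and $\varsigma(Q_n)$, so the Unicity of Decomposition Lemma~\ref{lem9} forces the Decomposition Lemma to assign $\varsigma(P)$ to case~(iii) with index $i$ and with components $\approx$-equivalent to these arguments. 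Since $\eta_{k+1}$ rebuilds the $\mathtt{case}$ node from whatever representatives $\sigma_j',\tau_n'$ the decomposition supplies, and since $\eta_k$ is invariant under $\approx$ by Lemma~\ref{lemmm2}(i), I may replace those representatives by $\varsigma(P_j),\varsigma(Q_n)$ and then apply the induction hypothesis to get $\eta_k(\varsigma(P_j))=q_k(P_j)$ in the extended context $\Gamma,\Delta$. The truncation of the family $(Q_n)$ at level $k$ in $\eta_{k+1}$ coincides verbatim with the truncation in $q_{k+1}$, so the two trees are literally equal.

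For part~(ii) the base case is again immediate. For the step I would case-split instead by applying the Decomposition Lemma directly to $\sigma$; cases~(i) and~(ii) are routine, and in case~(iii) the tree $\eta_{k+1}(\sigma)\Gamma$ is a $\mathtt{case}$ node whose subtrees are $\eta_k(\sigma_j)$ and the truncated $\eta_k(\tau_n)$. Applying $\varsigma$ and unfolding yields $\Lambda_{\tilde{U}}(\mathbf{C}_i(\ldots))$ with arguments $\varsigma(\eta_k(\sigma_j))$ and $\varsigma(\eta_k(\tau_n))$; the induction hypothesis rewrites each of these as $p_k(\sigma_j)$, respectively $p_k(\tau_n)$, up to $\approx$, with the entries for $n>k$ equal to $\bot$ on both sides because the two truncations agree. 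Comparing with the definition of $p_{k+1}(\sigma)$ then reduces the goal to the assertion that $\mathbf{C}_i$ is a \emph{congruence} for $\approx$ in all of its arguments at once. This is where I expect the main work: it follows from the strong continuity (hence $\approx$-compatibility) of composition, pairing, currying and promotion established in Section~2.9, since $\mathbf{C}_i$ is assembled from these together with $\chi$ and $\mathtt{Ap}$, while the infinitary tupling $\langle\tau_n\mid n\in\omega\rangle$ is controlled by the same continuity once the truncation has made all but finitely many components equal to $\bot$. The discrepancy between the representatives produced by $\Phi$ inside $p_{k+1}$ and those produced abstractly by the Decomposition Lemma inside $\eta_{k+1}$ is absorbed using Unicity~\ref{lem9} together with the $\Subeq$-monotonicity of $p_k$ (Lemma~\ref{lemm1}(iii)).

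The conceptual content is small; the main obstacle is bookkeeping. I would pay particular attention to keeping the currying wrapper $\Lambda_{\tilde{U}}$ and the context extension from $\Gamma$ to $\Gamma,\Delta$ synchronised through the recursion, and to ensuring the extracted block index $i$ matches the head variable of the $\mathtt{case}$ node. The single genuinely delicate point is the one flagged above, namely that substituting $\approx$-equivalent strategies simultaneously into the finitely many explicit arguments and the infinite family of $\mathbf{C}_i$ preserves $\approx$. Everything else is a direct unfolding of the four definitions, steered at each step by the induction hypothesis.
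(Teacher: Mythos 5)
Your proposal is correct and takes essentially the same route as the paper's proof: induction on $k$ for each part, trivial bases and trivial $\bot$/${\bf K}n$ cases, and the ${\bf C}_i$ case handled by unfolding the parallel definitions of $\varsigma$, $\eta_k$, $p_k$, $q_k$ and applying the induction hypothesis to the components. The two points you single out --- Unicity of Decomposition to identify the decomposition components with $\varsigma(P_j),\varsigma(Q_n)$, and $\approx$-congruence of ${\bf C}_i$ derived from the strong continuity results of Section~2.9 --- are precisely the facts the paper uses (the latter implicitly, having already recorded the $\Subeq$-monotonicity of ${\bf C}_i$ via Proposition~2.9.3 in the proof of the Approximation Lemma).
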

\begin{proof} Both parts are proved by induction on $k$. The
induction bases are trivial as are cases {\it (i)} and {\it (ii)}
of the Decomposition Lemma at the inductive step, and the
corresponding cases on the construction of $P$

{\it (i)}

\[\begin{array}{lcr}
\eta_{k+1}(\varsigma(\Gamma\vdash \lambda \tilde{y}^{\tilde{U}}
. {\tt case}(z_i P_1\dots P_{l_i},(Q_n\;\mid \; n\in\omega )))) &=& \\
\lambda \tilde{y}^{\tilde{U}}.{\tt case}(z_i P'_1\dots
P'_{l_i},(Q'_n\;\mid \;n\in\omega)) &&
\end{array}
\]
where
\[
\begin{array}{llr}
P'_j &=& \eta_k(\varsigma(\Gamma,\Delta \vdash
P_j:B_{i,j}))\Gamma,\Delta \\
  &= {\mbox{ind.hyp}}& q_k(P_j)
\end{array}
\]

\[ \begin{array}{lll}

Q'_n & = & {\left\{ \begin{array}{cc}
\eta_k(\varsigma(\Gamma,\Delta\vdash Q_n:N))\Gamma,\Delta,& 0\leq
n\leq k\\
\Omega & n>k
\end{array}
\right . } \\
 & = {\mbox{ind.hyp}} & {\left\{ \begin{array}{cc}
q_k(\Omega) & 0\leq n\leq k \\
\Omega & n> k
\end{array}
\right . } \\
\end{array}
\]

{\it (ii)}
\[
\begin{array}{lcr}
\varsigma(\Gamma\vdash
\eta_{k+1}({\bf C}_i(\sigma_1,\dots,\sigma_{l_i},(\tau_n\;\mid \;n\in\omega)))\Gamma:U
 &\approx& \\
\Lambda_{\tilde{U}}({\bf C}_i(\sigma'_1,\dots,\sigma'_{l_i},(\tau'_n\;\mid \;
n\in\omega))) &&
\end{array}\]
where
\[
\begin{array}{lll}
\sigma'_j& \approx&
\varsigma(\Gamma,\Delta\vdash(\eta_k(\sigma_j)\Gamma,\Delta):U)  \\
 &\approx_{\mbox{ind.hyp}} & p_k(\sigma_j)
\end{array}
\]

\[\begin{array}{lll}
\tau'_n & \approx& \left\{ \begin{array}{cc}
 \varsigma(\Gamma,\Delta\vdash (\eta_k(\tau_n)\Gamma,\Delta):N ) &
 0\leq n\leq k \\
 \bot_{\tilde{T},\tilde{U}} & n>k
             \end{array} \right.  \\
  &\approx_{\mbox{ind.hyp}}& \left\{ \begin{array}{cc}
p_k(\tau_n) & 0\leq n\leq k \\
\bot_{\tilde{T},\tilde{U}} & n>k
    \end{array} \right.
\end{array} \]
\end{proof}

Now we define functions

$${\cal S} :{\tt ET}(\Gamma,U)\rightarrow {\tt Str}(\tilde{T},U)$$
$${\cal E} : {\tt Str} (\tilde{T},U)\rightarrow {\tt ET}(\Gamma,U)$$
by:
$${\cal S}(P) =\bigcup_{k\in\omega} \varsigma(\Gamma\vdash q_k(P):U)$$
$${\cal E}(\sigma) =\bigsqcup_{k\in\omega}\eta_k(\sigma)\Gamma$$
By Lemma \ref{lemmm1} and the Approximation Lemma for evaluation trees,
$(\varsigma(\Gamma\vdash q_k(P):U)\;\mid \; k\in\omega)$ is an
$\subseteq-$increasing sequence of strategies, so ${\cal S}$ is
well-defined. Similarly, by Lemma \ref{lemmm2} $\cal E$ is well-defined.

We now prove the key result on definability.
\begin{theorem}[Isomorphism Theorem]\label{isotheo}

\begin{itemize}
\item[(i)] For all $P\in{\tt ET}(\Gamma,U)$
$$ {\cal E}\circ{\cal S}(P)=P$$
\item[(ii)] For all $\sigma\in{\tt Str}(\tilde{T}\Rightarrow U)$,
$${\cal S}\circ {\cal E}(\sigma)\approx \sigma$$
\item[(iii)] Let $T=\tilde{T}\Rightarrow U$. Then there is an
  order-isomorphism
$${\cal S}_{\approx}:{\tt ET}(\Gamma,U)\simeq \sem{\hat{T}}:{\cal
  E}_{\approx}$$
where ${\cal S}_{\approx}(P)=[{\cal S}(P)]$ (i.e. the partial
equivalence class of ${\cal S}(P)$), and ${\cal E}_{\approx}([\sigma
])={\cal E}(\sigma)$.
\end{itemize}
\end{theorem}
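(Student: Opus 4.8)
The plan is to derive (i) and (ii) by pushing the commutation relations of Lemma~\ref{lemmm3} through the colimits defining ${\cal S}$ and ${\cal E}$, and then to obtain (iii) as a formal consequence. The two global facts that make this work are $\bigsqcup_k q_k = {\tt id}$ (Lemma~\ref{alet}) and $\bigcup_k p_k(\sigma) \approx \sigma$ (Lemma~\ref{lemm1}(vii)), used together with the continuity of $\eta_k$ (Lemma~\ref{lemmm2}(ii)) and of $q_k$ (Lemma~\ref{alet}), and the monotonicity statements in Lemmas~\ref{lemmm1} and~\ref{lemmm2}.

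For (i) I would compute directly. As $(\varsigma(q_l(P)))_l$ is $\subseteq$-increasing by Lemma~\ref{lemmm1}, continuity of $\eta_k$ gives
\[ {\cal E}({\cal S}(P)) = \bigsqcup_{k} \eta_k\left(\textstyle\bigcup_l \varsigma(q_l(P))\right)\Gamma = \bigsqcup_{k}\bigsqcup_{l} \eta_k(\varsigma(q_l(P)))\Gamma = \bigsqcup_{k}\bigsqcup_{l} q_k(q_l(P)), \]
the final equality by Lemma~\ref{lemmm3}(i). Since $q_k(q_l(P))$ is monotone in each index the suprema may be interchanged, and for fixed $l$ one has $\bigsqcup_k q_k(q_l(P)) = q_l(P)$ because $\bigsqcup_k q_k = {\tt id}$ pointwise; hence ${\cal E}({\cal S}(P)) = \bigsqcup_l q_l(P) = P$. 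Note that this needs only $\bigsqcup_k q_k = {\tt id}$ and never an explicit formula for $q_k \circ q_l$.

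For (ii) the first step is to show $q_k({\cal E}(\sigma)) = \eta_k(\sigma)\Gamma$: by continuity of $q_k$ and Lemma~\ref{lemmm2}(iii)--(iv) the family $(q_k(\eta_l(\sigma)\Gamma))_l$ is increasing in $l$ and is constantly $\eta_k(\sigma)\Gamma$ once $l \geq k$, so its supremum is $\eta_k(\sigma)\Gamma$. Consequently ${\cal S}({\cal E}(\sigma)) = \bigcup_k \varsigma(\eta_k(\sigma)\Gamma)$, and Lemma~\ref{lemmm3}(ii) gives $\varsigma(\eta_k(\sigma)\Gamma) \approx p_k(\sigma)$ for every $k$. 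The main obstacle lies here: this is only a level-wise partial equivalence, and it must be promoted through the $\subseteq$-directed unions. I would isolate the elementary fact that if $(a_k)$ and $(b_k)$ are $\subseteq$-increasing with $a_k \Subeq b_k$ for all $k$, then $\bigcup_k a_k \Subeq \bigcup_k b_k$: given $sab \in \bigcup_k a_k$, $s' \in \bigcup_k b_k$ and $sa \approx s'a'$, pass to a common index $k$ with $sab \in a_k$ and $s' \in b_k$ (possible since both chains increase), and apply $a_k \Subeq b_k$ to produce the required $b'$. Applying this to $a_k = \varsigma(\eta_k(\sigma)\Gamma)$, $b_k = p_k(\sigma)$ in both directions yields $\bigcup_k \varsigma(\eta_k(\sigma)\Gamma) \approx \bigcup_k p_k(\sigma)$, and transitivity with $\bigcup_k p_k(\sigma) \approx \sigma$ (Lemma~\ref{lemm1}(vii)) gives ${\cal S}({\cal E}(\sigma)) \approx \sigma$.

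Part (iii) is then formal. The same promotion fact with $a_k = b_k = \varsigma(q_k(P))$ shows ${\cal S}(P) \approx {\cal S}(P)$, so ${\cal S}_{\approx}(P) = [{\cal S}(P)]$ is well defined; and ${\cal E}_{\approx}$ is well defined because $\sigma \approx \tau$ forces $\eta_k(\sigma)\Gamma = \eta_k(\tau)\Gamma$ for all $k$ by Lemma~\ref{lemmm2}(i), hence ${\cal E}(\sigma) = {\cal E}(\tau)$. Parts (i) and (ii) say precisely that ${\cal E}_{\approx}\circ{\cal S}_{\approx} = {\tt id}$ and ${\cal S}_{\approx}\circ{\cal E}_{\approx} = {\tt id}$, so the two maps are mutually inverse bijections. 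Monotonicity of ${\cal E}_{\approx}$ follows by taking suprema in Lemma~\ref{lemmm2}(i); monotonicity of ${\cal S}_{\approx}$ follows from Lemma~\ref{lemmm1} ($P \sqsubseteq Q$ gives ${\cal S}(P) \subseteq {\cal S}(Q)$) together with the observation that an inclusion between $\approx$-reflexive strategies is in particular a $\Subeq$-relation, so $[{\cal S}(P)] \sqsubseteq [{\cal S}(Q)]$. A bijection that is monotone in both directions is an order-isomorphism, which completes (iii).
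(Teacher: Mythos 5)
Your proposal is correct and follows essentially the same route as the paper's own proof: both parts (i) and (ii) are computed by pushing continuity of $\eta_k$ (Lemma~\ref{lemmm2}(ii)) and of $q_k$ through the defining suprema, then invoking Lemma~\ref{lemmm3}, Lemma~\ref{lemmm2}(iv), Lemma~\ref{alet} and Lemma~\ref{lemm1}(vii), with (iii) obtained formally from (i), (ii) and the monotonicity lemmas. The only differences are cosmetic and in your favour: in (i) you interchange suprema instead of diagonalising (so you never need idempotence of $q_k$), and in (ii) and (iii) you explicitly isolate and prove the promotion of the level-wise relations $\varsigma(\eta_k(\sigma)\Gamma)\approx p_k(\sigma)$ through $\subseteq$-increasing unions, a step the paper's proof uses silently.
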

\begin{proof}

{\it (i)}

\[ \begin{array}{lll}
{\cal E}\circ {\cal S} (P) && \\
 &= {\mbox{definition}}&
 \bigsqcup_{k\in\omega}\eta_k(\bigcup_{l\in\omega}\varsigma(\Gamma
 \vdash q_l(P):U))\Gamma \\
&= {\mbox{Lemma \ref{lemmm2}{\it (ii)}}}&
\bigsqcup_{k\in\omega}\bigsqcup_{l\in\omega}
\eta_k(\varsigma(\Gamma\vdash q_l(P):U))\Gamma \\
&=&\bigsqcup_{n\in\omega} \eta_n(\varsigma(\Gamma\vdash
q_n(P):U))\Gamma \\
&= {\mbox{Lemma \ref{lemmm3}}}& \bigsqcup_{n\in\omega} q_n\circ q_n(P)
\\
&= {\mbox{Lemma \ref{alet}}}& P.

\end{array} \]

{\it (ii)}

\[\begin{array}{lll}
{\cal S}\circ{\cal E} (\sigma) &&\\
 &=& \bigcup_{k\in\omega} \varsigma(\Gamma\vdash
 q_k(\bigsqcup_{l\in\omega}\eta_l(\sigma)\Gamma):U) \\
&= {\mbox{continuity of }q_k} &
\bigcup_{k\in\omega}\varsigma(\Gamma\vdash \bigsqcup_{l\in\omega}
q_k(\eta_l(\sigma)\Gamma): U) \\
&= {\mbox{Lemma \ref{lemmm2}(iv)}}& \bigcup_{k\in\omega}
\varsigma(\Gamma\vdash(\eta_k(\sigma)\Gamma):U) \\
&\approx {\mbox{Lemma \ref{lemmm3}}} & \bigcup_{k\in\omega} p_k(\sigma)\\
&\approx {\mbox{Lemma \ref{lemm1}}} & \sigma.
\end{array} \]

{\it (iii)} Firstly ${\cal E}_{\approx}$ is well-defined and monotone by
Lemma \ref{lemmm2}{\it (i)}. Also, ${\cal S}_{\approx}$ is monotone by
Lemma \ref{lemmm1}. By {\it (i)} and {\it (ii)}, ${\cal E}_{\approx}={\cal
  S}_{\approx}^{-1}. \;\;$
\end{proof}

As an immediate corollary of the Isomorphism Theorem and Proposition
3.2.2:
\begin{proposition}
For each PCF type $T$, $\sem{\hat{T}}$ is a dI-domain. Hence $\Intmod$
is an algebraic cpo-based model.
\end{proposition}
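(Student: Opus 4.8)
The final proposition asserts that for each PCF type $T$, the "semantic domain" $\sem{\hat{T}}$ is a dI-domain, and consequently that $\Intmod$ is an algebraic cpo-based model.

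<br>

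The plan is to obtain this as an essentially immediate corollary of the Isomorphism Theorem (Theorem~\ref{isotheo}) together with the structural facts already established about evaluation trees. The key observation is that the hard work has all been done: the Isomorphism Theorem part~(iii) gives, for a type-in-context $T = \tilde{T}\Rightarrow U$, an \emph{order-isomorphism}
\[ {\cal S}_{\approx} : {\tt ET}(\Gamma,U) \;\simeq\; \sem{\hat{T}} : {\cal E}_{\approx} . \]
Since an order-isomorphism transports all order-theoretic structure, it suffices to exhibit the domain-theoretic properties on the evaluation-tree side, where they are already known.

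<br>

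First I would invoke the earlier result (the proposition immediately following Proposition~\ref{A}) stating that ${\tt ET}(\Gamma,U)$ is a dI-domain whose compact elements are the finite evaluation trees, i.e. terms of \pcfc. Being a dI-domain is preserved under order-isomorphism: the defining properties (directed-completeness, algebraicity, distributivity of the lattice of elements below each compact element, and property~I---that each compact element dominates only finitely many elements) are all order-theoretic and hence carried across by ${\cal S}_{\approx}$ and its inverse ${\cal E}_{\approx}$. Therefore $\sem{\hat{T}}$ is a dI-domain for each PCF type $T$, obtained by taking $\Gamma$ to be the empty context (or more generally reading off the type-in-context structure of $T$). The main point requiring a word of care is simply that every PCF type $T$ can be presented in the type-in-context form $\tilde{T}\Rightarrow U$ to which the Isomorphism Theorem applies; this is immediate by currying, and the isomorphism $\sem{\hat{T}}\cong {\tt ET}(\Gamma,U)$ is independent of the chosen presentation up to the canonical currying isomorphisms already used throughout Section~3.4.

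<br>

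For the second assertion---that $\Intmod$ is an algebraic cpo-based model---I would combine the per-type result with the global structure established in Section~2. We already know from Proposition~\ref{229} that $\KG$ is a rational cartesian closed category, and that $\Mod(\KG)$ is a standard, computationally adequate model of PCF. What the present proposition adds is that each hom-object $\sem{\hat{T}}$ is not merely a pointed poset but an algebraic cpo (indeed a dI-domain), so that the model is cpo-based and its compact elements are exactly the (partial equivalence classes of) strategies definable by finite evaluation trees, hence by \pcfc\ terms. Since the compact elements of ${\tt ET}(\Gamma,U)$ are the finite evaluation trees and these are identified with \pcfc\ terms, the transported compact elements of $\sem{\hat{T}}$ are precisely the \pcfc-definable points. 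This is exactly the intensional full abstraction statement the section has been building toward. I expect no genuine obstacle here: the entire content is front-loaded into the Isomorphism Theorem, and this proposition is the bookkeeping step that reads off its domain-theoretic consequences.
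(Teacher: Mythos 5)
Your proposal is correct and follows exactly the paper's route: the paper derives this proposition as an immediate corollary of the Isomorphism Theorem together with Proposition~3.2.2 (that ${\bf ET}(\Gamma,U)$ is a dI-domain with compact elements the finite evaluation trees), transporting the structure across the order-isomorphism ${\cal S}_{\approx}$. The only difference is that you spell out the bookkeeping (invariance of dI-domain properties under order-isomorphism, presentation of types in type-in-context form) that the paper leaves implicit.
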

Thus although a priori we only knew that $\Intmod$ was a rational
model, by virtue of the Isomorphism theorem we know that the carrier
at each PCF type is an algebraic cpo. Hence the notion of {\em
  intensional full abstraction} makes sense for $\Intmod$. Recall
from the introduction that a model is intensionally fully abstract for a
language $\cal L$ if every compact element of the model is denoted by
a term of $\cal L$.

We can now prove the culminating result of this section.

\begin{theorem}[Intensional Full Abstraction]

$\Intmod$ is intensionally fully abstract for $\pcfc$.
\end{theorem}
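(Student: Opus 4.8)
The plan is to read the result off the Isomorphism Theorem (Theorem~\ref{isotheo}), since essentially all of the work has already been done. Intensional full abstraction has two ingredients: that $\Intmod$ be an algebraic cpo-based model, and that every compact element be definable in $\pcfc$. The first ingredient is already established by the Proposition recording that each $\sem{\hat{T}}$ is a dI-domain and hence that $\Intmod$ is an algebraic cpo-based model, so I would dispose of it in a single line and concentrate entirely on the definability of compacts.

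For definability I would fix a closed PCF type and present it in the type-in-context form demanded by the Isomorphism Theorem. For a closed type $T$ the cleanest choice is the empty context $\Gamma=\varnothing$ together with $U=T$, since then $\tilde{T}\Rightarrow U$ is (isomorphic to) $T$ itself --- using $\ofcourse I = I$ and $I\linimpl T\cong T$ --- and the finite evaluation trees in ${\bf FET}(\varnothing,T)$ are literally closed $\pcfc$ terms of type $T$ under the identification of Section~3.2. Theorem~\ref{isotheo}(iii) then supplies an order-isomorphism ${\cal S}_{\approx}:{\tt ET}(\varnothing,T)\simeq\sem{\hat{T}}$, with inverse ${\cal E}_{\approx}$. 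Because an order-isomorphism between algebraic cpos preserves and reflects compactness, the compact elements of $\sem{\hat{T}}$ are exactly the ${\cal S}_{\approx}$-images of the compact elements of the ideal completion ${\tt ET}(\varnothing,T)$; and by Proposition~3.2.2 the latter are precisely the (principal ideals of) finite evaluation trees, i.e. the $\pcfc$ terms.

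The one genuine verification is that ${\cal S}_{\approx}$ carries a finite evaluation tree to the denotation of the corresponding term. Here I would exploit compactness of $P\in{\bf FET}(\Gamma,U)$: by the Approximation Lemma for evaluation trees (Lemma~\ref{alet}) we have $\bigsqcup_{k}q_k={\tt id}$ with each $q_k$ idempotent of finite image, so $P=\bigsqcup_{k}q_k(P)$ forces $q_k(P)=P$ for all large $k$. Since the family $\varsigma(\Gamma\vdash q_k(P):U)$ is $\subseteq$-increasing (Lemma~\ref{lemmm1}) and stabilises at $\varsigma(\Gamma\vdash P:U)$, its union ${\cal S}(P)$ equals $\varsigma(\Gamma\vdash P:U)$ exactly. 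As $\varsigma$ was chosen precisely as a concrete representative of the semantics, $\sem{\Gamma\vdash P:U}=[\varsigma(\Gamma\vdash P:U)]$, whence ${\cal S}_{\approx}(P)=[{\cal S}(P)]=\sem{\Gamma\vdash P:U}$. Thus every compact element of $\sem{\hat{T}}$ is the interpretation of a finite evaluation tree, hence of a $\pcfc$ term, which is exactly definability.

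I expect no deep obstacle: the substance is carried entirely by the Isomorphism Theorem and the approximation machinery preceding it, and what remains is bookkeeping. The two points deserving a little care are \emph{(a)} the transport of compactness across the order-isomorphism, which is purely order-theoretic, and \emph{(b)} the identification of the concrete map ${\cal S}$ with the abstract semantics $\sem{\cdot}$ on finite evaluation trees, handled by the stabilisation observation above. If one prefers to argue for a general open type-in-context rather than reduce to the empty context, the only extra ingredient is to pass between the in-context and closed formulations via the currying isomorphisms of the cartesian closed structure of $\Intmod$, which carry finite evaluation trees to the corresponding $\lambda$-abstracted $\pcfc$ terms.
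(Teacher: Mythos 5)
Your proposal is correct and follows essentially the same route as the paper: the paper's proof likewise reads the result off the Isomorphism Theorem at the empty context, identifies the compact elements of ${\bf ET}(\Gamma_0,T)$ with the finite evaluation trees, and notes that ${\cal S}_{\approx}$ restricted to ${\bf FET}(\Gamma_0,T)$ is the semantic map on finite evaluation trees qua $\pcfc$ terms. Your only additions are to make explicit two points the paper leaves implicit — the transport of compactness across the order-isomorphism, and the stabilisation argument ($q_k(P)=P$ for large $k$) showing that ${\cal S}$ agrees with $\varsigma$ on finite trees — both of which are sound.
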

\begin{proof} Consider any PCF type $T$. By the Isomorphism Theorem,
the compact elements of $\sem T$ are the image under ${\cal
S}_{\approx}$ of the compact elements of ${\bf ET}(\Gamma_0,T)$
(where $\Gamma_0$ is the empty context). But the compact elements
of ${\bf ET}(\Gamma_0,T)$ are just the finite evaluation trees
${\bf FET}(\Gamma_0,T)$ and the restriction of ${\cal
S}_{\approx}$ to ${\bf FET}(\Gamma_0,T)$ is the semantic map $\sem
.$ on finite evaluation trees {\sl qua} terms of $\pcfc$.
\end{proof}

\section{Extensional Full Abstraction}
\newcommand{\converges}{{\downarrow}}
\newcommand{\Converges}{{\Downarrow}}
\newcommand{\diverges}{{\uparrow}}
\newcommand{\Siepinski}{{\Sigma}}
\subsection{The Intrinsic Preorder}

We define the {\em Sierpinski game} $\Siepinski$ to be the game
$${\Siepinski}=(\{
q,a\},\{(q,OQ),(a,PA)\},\{\epsilon,q,qa\},{\tt id}_{P_{\Siepinski}})$$
with one initial question $q$, and one possible response $a$. Note
that $\hat{\Siepinski}$  is indeed the usual Sierpinski space. i.e. the
two-point lattice $\bot < \top$ with $\bot=\{\epsilon\},
\top=\{\epsilon,qa\}$.

Now for any game $A$ we define the {\em intrinsic preorder} $\Ip A$ on
${\tt Str}(A)$ by:
$$ x\Ip A y \Longleftrightarrow \forall\alpha :A\rightarrow {\Siepinski}.\; x;\alpha \Subeq y;\alpha$$
Note that if we write $x\converges \equiv x=\top$ and $x\diverges\equiv
x=\bot$, then:
$$x\Ip A y \Longleftrightarrow \forall\alpha :A\rightarrow {\Siepinski}.\; x;\alpha\converges \supset y;\alpha\converges$$
It is trivially verified that $\Ip A$ is a preorder.
\begin{lemma}[Point Decomposition Lemma]\label{PDL}

\begin{itemize}
\item[(i)] $\forall x\in{\tt Str}(! A).\; x\approx (x;{\tt
    der}_A)^{\dag}= !(x;{\tt der}_A)$
\item[(ii)] $\forall x\in{\tt Str}(A\with B).\; x\approx
  \lang x;{\tt fst},x;{\tt snd}\rang $
\item[(iii)]$\forall x\in{\tt Str}(A\otimes B).\; \exists
  y\in{\tt Str}(A),z\in{\tt Str}(B).\; x\approx y\otimes z$
\end{itemize}
\end{lemma}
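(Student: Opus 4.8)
The three clauses differ sharply in difficulty: (i) and (ii) are formal consequences of structure already in place, whereas (iii) carries the real content. For (i) the plan is to read it off the Bang Lemma. Since $\ofcourse I = I$ and $I$ is well-opened, a point $x\in{\tt Str}(\ofcourse A)$ is literally a strategy $x:\ofcourse I\linimpl\ofcourse A$, so the Bang Lemma, applied with the well-opened game $I$ in the source position, yields $x\approx(x;{\tt der}_A)^{\dag}$ at once. For the accompanying equation $(x;{\tt der}_A)^{\dag}=\ofcourse(x;{\tt der}_A)$ I would unfold the definition $\ofcourse\phi=({\tt der};\phi)^{\dag}$ and use that ${\tt der}_I=\ident{I}=\{\epsilon\}$ on the moveless game $I$, so that ${\tt der}_I;(x;{\tt der}_A)=x;{\tt der}_A$.

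For (ii) I would use that $A\with B$ is a categorical product in $\KG$, with projections $\pi_1={\tt der};{\tt fst}$ and $\pi_2={\tt der};{\tt snd}$, and that a point is a morphism out of $I$. First I would compute $x\fatsemi\pi_1=x^{\dag};{\tt der}_{A\with B};{\tt fst}\approx x;{\tt fst}$, using the right-identity law $x^{\dag};{\tt der}\approx x$ (equivalently, part (i)), and symmetrically for $\pi_2$. Since the mediating map into a product is unique, $x\approx\lang x\fatsemi\pi_1,x\fatsemi\pi_2\rang$; and as pairing respects $\approx$, the right-hand side equals $\lang x;{\tt fst},x;{\tt snd}\rang$.

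Part (iii) is the substantive case, and I expect it to be the main obstacle, since $\tensor$ provides no projections to imitate the argument for (ii). Instead I would analyse the inducing function directly. Viewing $x:I\linimpl(A\tensor B)$ as a point, Opponent moves first and the Switching Condition for $\tensor$ forces Player always to reply in the component of the preceding Opponent move. As $x$ is history-free, its inducing map ${\tt fun}(x):M_A^O+M_B^O\Pfr M_A^P+M_B^P$ then has no cross terms, splitting as $f_A\vee f_B$ with $f_A:M_A^O\Pfr M_A^P$ and $f_B:M_B^O\Pfr M_B^P$. Setting $y=\sigma_{f_A}$ and $z=\sigma_{f_B}$, the tensor of morphisms $y\tensor z$ is by definition induced by precisely $f_A+f_B={\tt fun}(x)$ (the isomorphism $I\tensor I\cong I$ acting trivially on moves), so that $y\tensor z=\sigma_{{\tt fun}(x)}=x$ and a fortiori $x\approx y\tensor z$.

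The two checks that need care here — the fiddly part of the whole lemma — are that $f_A,f_B$ really induce strategies and that these lie in ${\tt Str}(A)$ and ${\tt Str}(B)$. For the former I would embed any $s\in{\tt traces}(f_A)$ as a position of $x$ lying wholly in $A$, propagating responses by history-freeness, and then read off $s=s\Rest A\in P_A$ from the projection condition (and dually for $B$). For the latter I would restrict the defining clause of $\approx_{A\tensor B}$ to plays confined to $A$ (resp. $B$), where the ${\tt Fst}^{\star}$-matching requirement is vacuous, so that $x\approx x$ descends to $y\approx y$ (resp. $z\approx z$), placing $y,z$ in the required strategy sets.
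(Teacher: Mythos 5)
Your proof is correct and takes essentially the same route as the paper's: part (i) is obtained as a special case of the Bang Lemma via $\ofcourse I = I$, part (ii) from the universal property of the product in $\KG$, and part (iii) by using the Switching Condition to split the inducing partial function into components $f_A : M_A^O \Pfr M_A^P$ and $f_B : M_B^O \Pfr M_B^P$ and setting $y = \sigma_{f_A}$, $z = \sigma_{f_B}$. The only difference is one of detail: you spell out the verifications that ${\tt traces}(f_A) \subseteq P_A$ and that $y \Deq y$, $z \Deq z$ (so that $y, z$ genuinely lie in ${\tt Str}(A)$, ${\tt Str}(B)$), which the paper's proof dismisses as clear.
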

\begin{proof} Firstly we must explain the notation. We think of a
strategy $\sigma$ in $A$ indifferently as having the type $\sigma:
I\rightarrow A$. Now since $!I=I$, we can regard
$!\sigma:!I\rightarrow !A$ as in ${\tt Str}(!A)$. Similarly, since
$I\otimes I=I$, we can regard $\sigma\otimes\tau$ as in ${\tt
Str}(A\otimes B)$, where $\sigma\in{\tt Str}(A),\tau\in{\tt
Str}(B)$. Finally, using $!I=I$ again we can form
$\lang\sigma,\tau\rang \in{\tt Str}(A\with B)$ where
$\sigma\in{\tt Str}(A),\tau\in{\tt Str}(B)$.

Next we note that {\it (i)} is a special case of the Bang Lemma, while {\it (ii)}
follows from the universal property of the product.

Finally, we prove {\it (iii)}. Given $x\in{\tt Str}(A\otimes B)$, write
$x=\sigma_f$, where $f:M_A^O+M_B^O \rightharpoonup M_A^P+M_B^P$. By the
switching condition, we can decompose $f$ as $f=g+h$, where
$g:M_A^O\rightharpoonup M_A^P$, and $h:M_B^O\rightharpoonup M_B^P$. Now
define $y=\sigma_g, z=\sigma_h$. It is clear that $y$ and $z$ are
well-defined strategies, and
$$x=\sigma_f=\sigma_{g+h}\approx\sigma_g\otimes\sigma_h=y\otimes z.$$
\end{proof}

Now we characterise the intrinsic preorder on the Linear types. The
general theme is that ``intrinsic = pointwise''. This is analogous to
results in Synthetic Domain Theory and PER models, although the proofs
are quite different, and remarkably enough no additional hypotheses
are required.
\begin{lemma}[Extensionality for Tensor]\label{EfT}

For all $x\otimes y, x'\otimes y'\in{\tt Str}(A\otimes B)$
$$x\otimes y\Ip{A\otimes B} x'\otimes y' \Longleftrightarrow x\Ip A x'
\wedge y\Ip B y'$$
\end{lemma}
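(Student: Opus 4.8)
The plan is to prove the two implications separately, working throughout with the convergence reformulation $u\Ip{A}v \Longleftrightarrow \forall\alpha:A\rightarrow\Siepinski.\,(u;\alpha\converges \supset v;\alpha\converges)$ recorded just above the lemma, and using that composition $;$ is monotone for $\Subeq$ and that all the autonomous-structure operations are compatible with $\Deq$.

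For the ($\Rightarrow$) direction I would use \emph{weakening tests}. Given any $\alpha:A\rightarrow\Siepinski$, let $\beta:A\tensor B\rightarrow\Siepinski$ be the strategy whose inducing partial function coincides with that of $\alpha$ on the moves of $A$ and $\Siepinski$ and is undefined on every move of $B$; this is a legitimate strategy with $\beta\Deq\beta$ (inherited from $\alpha\Deq\alpha$), since its plays project to the empty (hence valid) position in $B$. Because the function inducing $x\tensor y$ is block-diagonal (by the switching condition, cf.\ Lemma~\ref{PDL}(iii)), the execution formula for $(x\tensor y);\beta$ never activates the $B$-component, so that $(x\tensor y);\beta\Deq x;\alpha$ and $(x'\tensor y');\beta\Deq x';\alpha$ for \emph{any} $y,y'$. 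Instantiating the hypothesis $x\tensor y\Ip{A\tensor B}x'\tensor y'$ at $\beta$ then yields $x;\alpha\Subeq x';\alpha$; as $\alpha$ is arbitrary, $x\Ip{A}x'$. The symmetric test, ignoring $A$ instead of $B$, gives $y\Ip{B}y'$.

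For the ($\Leftarrow$) direction I would substitute the two components one at a time. Fix an arbitrary $\beta:A\tensor B\rightarrow\Siepinski$. Plugging the point $y$ into the second slot produces a test on $A$, namely $\alpha_y = A\cong A\tensor I \stackrel{{\tt id}_A\tensor y}{\longrightarrow} A\tensor B \stackrel{\beta}{\longrightarrow}\Siepinski$, and I claim $(x\tensor y);\beta\Deq x;\alpha_y$ and $(x'\tensor y);\beta\Deq x';\alpha_y$. Granting this, $x\Ip{A}x'$ applied to $\alpha_y$ gives $x;\alpha_y\Subeq x';\alpha_y$, whence $(x\tensor y);\beta\Subeq(x'\tensor y);\beta$. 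Symmetrically, plugging $x'$ into the first slot gives a test $\alpha'_{x'}:B\rightarrow\Siepinski$ with $(x'\tensor y);\beta\Deq y;\alpha'_{x'}$ and $(x'\tensor y');\beta\Deq y';\alpha'_{x'}$, so that $y\Ip{B}y'$ yields $(x'\tensor y);\beta\Subeq(x'\tensor y');\beta$. Chaining the two steps by transitivity of $\Subeq$ (and that $\Deq$ refines $\Subeq$) gives $(x\tensor y);\beta\Subeq(x'\tensor y');\beta$; since $\beta$ is arbitrary, $x\tensor y\Ip{A\tensor B}x'\tensor y'$.

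The real content — and the step I expect to be the main obstacle — is establishing the \emph{partial-application identities} $(x\tensor y);\beta\Deq x;\alpha_y$ and $(x'\tensor y);\beta\Deq y;\alpha'_{x'}$ (together with the weakening-test equalities of the forward direction). These are exactly where one must push the point $y$ (resp.\ $x'$) through the tensor and discharge the unit and associativity isomorphisms, using bifunctoriality of $\tensor$, the interchange law, coherence for the symmetric monoidal closed structure of $\Games$, and compatibility of these operations with $\Deq$. Once they are in place the logical-relations style, one-component-at-a-time argument is immediate, and — as the surrounding remark stresses — no hypotheses beyond the two componentwise ones are required, precisely because the multiplicative tensor performs no sharing or duplication of the substituted points.
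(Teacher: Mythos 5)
Your proposal is correct and follows essentially the same route as the paper's proof: the forward direction uses exactly the paper's weakening test (the paper builds it compositionally as $({\tt id}_A\otimes\bot_{B,I})$ followed by the unit isomorphism and $\alpha$, which is precisely the strategy you describe directly via its inducing partial function), and the backward direction is the paper's one-component-at-a-time substitution through the intermediate point $x'\otimes y$, using the tests $\cong;({\tt id}_A\otimes y);\gamma$ and chaining by transitivity. The partial-application identities you flag as the main obstacle are likewise asserted without detailed proof in the paper, so your treatment is, if anything, slightly more explicit about where the real work lies.
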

\begin{proof} $(\Rightarrow)$. If $x\otimes y\Ip{A\otimes B}
x'\otimes y'$ and $x;\alpha\converges$, then $x\otimes
y;\beta\converges$ where

\[\begin{diagram}
\beta=A\otimes B& \rTo^{{\tt id}_A\otimes\bot_{B,I}} &A\otimes
I&\rTo^{\sim}& A & \rTo^{\alpha}& {\Siepinski},
\end{diagram} \]
\noindent $\bot_{B,I}=\{\epsilon\}$. This implies that $x\otimes
y;\beta\converges$, and hence that $x';\alpha\converges$. This shows
that $x\Ip A x'$; the proof that $y\Ip B y'$ is similar.

$(\Leftarrow)$. Suppose that $x\Ip A x', y\Ip B y'$ and $x\otimes
y;\gamma\converges$ where $\gamma: A\otimes B\rightarrow{\cal
  \Siepinski}$. Then define $\alpha:A\rightarrow {\Siepinski}$ by:

\[\begin{diagram}
\alpha=A &\rTo^{\sim}& A\otimes I& \rTo^{{\tt id}_A\otimes y} &A\otimes
B &\rTo^{\gamma} & {\Siepinski}
\end{diagram} \]

Then $x;\alpha\approx x\otimes y;\gamma \converges$, so
$x';\alpha\approx x'\otimes y;\gamma\converges$ since $x\Ip A x'$.
This shows that $x\otimes y\Ip{A\otimes B} x'\otimes y$. A similar
argument shows that $x'\otimes y\Ip{A\otimes B} x'\otimes y'$, and
so $$x\otimes y\Ip{A\otimes B} x'\otimes y\Ip{A\otimes B}
x'\otimes y'.\;$$
\end{proof}

\begin{lemma}[Extensionality for Product]\label{EfP}

For all $\lang x,y\rang , \lang x',y'\rang \in{\tt Str}(A\with B)$
$$\lang x,y\rang \Ip{A\with B} \lang x',y'\rang  \Longleftrightarrow x\Ip A x' \wedge y\Ip
B y'$$
\end{lemma}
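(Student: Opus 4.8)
The plan is to prove the two implications separately, reducing the intrinsic preorder on $A\with B$ to the intrinsic preorders on the factors, in the same spirit as Lemma~\ref{EfT}, but exploiting the decisive structural difference between $\with$ and $\otimes$: whereas a play in $A\otimes B$ may interleave moves of both components, the positions of the product satisfy $P_{A\with B}=P_A+P_B$, so any single play is confined to one component. Throughout I will use the projection equations $\lang x,y\rang;{\tt fst}\approx x$ and $\lang x,y\rang;{\tt snd}\approx y$; these hold for points by the product structure of $\KG$ together with the Bang Lemma, and are in any case immediate from the fact that $\lang x,y\rang$ is just $x$ on the $A$-part and $y$ on the $B$-part. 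I will also use that composition is compatible with $\approx$ and monotone for $\Subeq$ (Prop.~2.4.2).

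For $(\Rightarrow)$, assume $\lang x,y\rang\Ip{A\with B}\lang x',y'\rang$ and let $\alpha:A\rightarrow\Siepinski$ satisfy $x;\alpha\converges$. I form the test $\beta={\tt fst};\alpha:A\with B\rightarrow\Siepinski$. Then $\lang x,y\rang;\beta\approx(\lang x,y\rang;{\tt fst});\alpha\approx x;\alpha\converges$, so the hypothesis yields $\lang x',y'\rang;\beta\converges$, i.e.\ $x';\alpha\converges$; hence $x\Ip A x'$. The symmetric argument, with $\beta={\tt snd};\alpha$, gives $y\Ip B y'$.

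For $(\Leftarrow)$, assume $x\Ip A x'$ and $y\Ip B y'$ and let $\gamma:A\with B\rightarrow\Siepinski$ satisfy $\lang x,y\rang;\gamma\converges$; I must show $\lang x',y'\rang;\gamma\converges$. The only Opponent opening in $(A\with B)\linimpl\Siepinski$ is the question of $\Siepinski$, so $\gamma$ is determined by its response to it (which is defined, else $\lang x,y\rang;\gamma=\bot$). By the switching condition---Opponent never switches between the $A\with B$ side and the $\Siepinski$ side---together with $P_{A\with B}=P_A+P_B$, every $A\with B$-move in any play of $\gamma$ lies in the single component fixed by this first response. Thus there are exactly three cases: $\gamma$ answers immediately, in which case $z;\gamma\converges$ for every $z$ and we are done; or $\gamma$ plays only in $A$, whence $\gamma\approx{\tt fst};\alpha$ for $\alpha:A\rightarrow\Siepinski$ obtained by reindexing $\gamma$ as a strategy on $A\linimpl\Siepinski$, so that $\lang x,y\rang;\gamma\approx x;\alpha$ and $\lang x',y'\rang;\gamma\approx x';\alpha$ and $x\Ip A x'$ finishes; or $\gamma$ plays only in $B$, handled symmetrically by $y\Ip B y'$.

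The main obstacle is precisely this confinement claim, namely the justification that a test playing only in one component factors through the corresponding projection (that $\gamma\approx{\tt fst};\alpha$, equivalently that the hidden $B$-component is never consulted in the interaction). This is where the argument departs from the tensor case: there one plugs $y$, then $x'$, into the unused tensor factor and changes the two components in turn using extensionality of $\otimes$, whereas here the product structure does the work directly, and the whole content reduces to the switching and projection conditions for $\with$. I expect the bookkeeping needed to turn ``plays only in $A$'' into a clean factorisation to be the only delicate point; everything else is routine.
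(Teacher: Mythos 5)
Your proof is correct and takes essentially the same route as the paper's: the forward direction by precomposing a test $\alpha$ with ${\tt fst}$ (resp.\ ${\tt snd}$), and the backward direction by observing that any test $\gamma : A\with B \rightarrow \Siepinski$ factors through one of the projections, which the paper asserts directly ``by the definition of $A\with B$''. Your justification of that factorisation---the history-free strategy's unique response to the opening $\Siepinski$-question fixes the component, and $P_{A\with B}=P_A+P_B$ confines all subsequent moves to it---is exactly the content the paper leaves implicit, so there is no gap.
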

\begin{proof} By the definition of $A\with B$, any $\gamma:A\with
B\rightarrow {\Siepinski}$ must factor as
\[\begin{diagram}
\gamma=A\with B& \rTo^{{\tt fst}}& A & \rTo^{\alpha} &{\Siepinski}
\end{diagram}\]
or as
\[\begin{diagram}
\gamma=A\with B& \rTo^{{\tt snd}}& B & \rTo^{\beta} &{\Siepinski}
\end{diagram}\]
This shows the right-to-left implication. Conversely, given
$\alpha:A\rightarrow {\Siepinski}$ we can form
\[\begin{diagram}
A\with B& \rTo^{{\tt fst}}& A & \rTo^{\alpha} &{\Siepinski}
\end{diagram}\]
and similarly for $\beta:B\rightarrow{\Siepinski}$.
\end{proof}

\begin{lemma}[Linear Function Extensionality]\label{LFE}

For all $f,g\in{\tt Str}(A\linimpl B)$
$$f\Ip{A\linimpl B} g \Longleftrightarrow \forall x\in{\tt Str}(A),\;
x;f\Ip B x;g$$
\end{lemma}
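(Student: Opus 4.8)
The plan is to prove both implications through the autonomous structure, the common bridge being that for a suitable point $x \in {\tt Str}(A)$ and test $\beta : B \rightarrow \Sigma$ one has $f;\alpha \approx (x;f);\beta$, where $\alpha$ is the test on $A \linimpl B$ that ``applies its argument to $x$ and then tests the result with $\beta$''. Concretely, given $x : I \rightarrow A$ and $\beta : B \rightarrow \Sigma$, define $\alpha : A \linimpl B \rightarrow \Sigma$ as the composite
\[ A \linimpl B \;\cong\; (A \linimpl B) \tensor I \;\stackrel{{\tt id} \tensor x}{\longrightarrow}\; (A \linimpl B) \tensor A \;\stackrel{{\tt LAPP}}{\longrightarrow}\; B \;\stackrel{\beta}{\longrightarrow}\; \Sigma . \]
By naturality of linear application in the autonomous category of Section~2.5 (namely ${\tt LAPP} \circ (f \tensor x) = x;f$ after the unit isomorphism, reading $f$ as a point of $A \linimpl B$), one computes $f;\alpha \approx (x;f);\beta$, and identically $g;\alpha \approx (x;g);\beta$ with the \emph{same} $\alpha$.

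For the left-to-right implication, assume $f \Ip{A \linimpl B} g$ and fix $x \in {\tt Str}(A)$. To show $x;f \Ip B x;g$, take any $\beta : B \rightarrow \Sigma$ and form $\alpha$ as above. Then $f;\alpha \Subeq g;\alpha$ by hypothesis, which by the bridging identity reads $(x;f);\beta \Subeq (x;g);\beta$. Interpreting $\Subeq$ on $\Sigma$ as $\converges \supset \converges$ and letting $\beta$ range over all tests on $B$ yields $x;f \Ip B x;g$.

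For the right-to-left implication, the task is to show that every test $\alpha : A \linimpl B \rightarrow \Sigma$ arises, up to $\approx$, from some $x$ and $\beta$ in exactly this way. I would decompose $\alpha$ through its inducing partial function, mirroring the proof of the Point Decomposition Lemma~\ref{PDL}(iii). Computing polarities, a strategy on $(A \linimpl B) \linimpl \Sigma$ has inducing function of type $M_A^O + M_B^P + M_\Sigma^O \Pfr M_A^P + M_B^O + M_\Sigma^P$, which is precisely the type of an inducing function on $A \tensor (B \linimpl \Sigma)$; moreover the switching condition forces this function to split as $\gamma \vee \delta$ with $\gamma : M_A^O \Pfr M_A^P$ and $\delta : M_B^P + M_\Sigma^O \Pfr M_B^O + M_\Sigma^P$ (an O-move in the $A$-component must be answered in $A$, and a move on the $B$/$\Sigma$ side must be answered there). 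Setting $x = \sigma_\gamma \in {\tt Str}(A)$ and $\beta = \sigma_\delta : B \rightarrow \Sigma$, one checks that $\alpha$ is $\approx$ to the application composite displayed above. Then $f;\alpha \converges$ gives $(x;f);\beta \converges$ by the bridging identity, the hypothesis $x;f \Ip B x;g$ gives $(x;g);\beta \converges$, and hence $g;\alpha \converges$; as $\alpha$ was arbitrary, $f \Ip{A \linimpl B} g$.

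The main obstacle is this decomposition step: justifying that an arbitrary test on the linear function space factors as application at a single point of $A$ followed by a test on $B$. This is exactly where linearity is essential, since the argument is consumed once and so the environment's play in the $A$-component during the interaction is itself a strategy on $A$; the delicate point is the bookkeeping of the switching condition for $(A \linimpl B) \linimpl \Sigma$ against $A \tensor (B \linimpl \Sigma)$, whose opening-move asymmetry must be handled at the level of inducing functions, as in Lemma~\ref{PDL}(iii), rather than by a naive game isomorphism.
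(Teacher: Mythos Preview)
Your approach is the same as the paper's, and the $(\Rightarrow)$ direction is fine. But there is a genuine gap in your $(\Leftarrow)$ argument: the switching condition alone does \emph{not} force the inducing function of $\alpha : (A \linimpl B) \rightarrow \Sigma$ to split as you claim. In $(A \linimpl B) \linimpl \Sigma$, Player switches across the outer $\linimpl$, while within the left component the roles are flipped, so it is Opponent who switches between $A$ and $B$. Hence when Opponent moves in $A$, the switching condition permits Player to respond either in $A$ \emph{or} in $\Sigma$; it does not by itself pin the response to $A$.

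What closes this off is the stack discipline. The only Player move in $\Sigma$ is the answer $a$, and to reach a position where Player is to move in $A$ the play must already have passed through $B$ an odd number of times (the first move in the $(A\linimpl B)$ component is in $B$, and only Opponent can switch between $A$ and $B$). Thus there is an unanswered question in $B$ asked more recently than the opening question $q$ in $\Sigma$; by bracketing, Player cannot answer $q$ at this point, so the $\Sigma$-response is excluded and Player must stay in $A$. This is exactly the argument the paper gives. Once you add this step, your decomposition $\alpha \approx (x,\beta)$ goes through and the rest of your proof is correct.
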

\begin{proof} $(\Rightarrow)$ Suppose $f\Ip{A\linimpl B} g, x\in{\tt
Str}(A), \beta:B\rightarrow{\Siepinski}$ and
$x;f;\beta\converges$. Then we define $\gamma:(A\linimpl B)
\rightarrow {\Siepinski}$ by
\[\begin{diagram}
\gamma=(A\linimpl B)&\rTo^{\sim} & (A\linimpl B)\otimes
I&\rTo^{{\tt id}_{A\linimpl B} \otimes x} & (A\linimpl B)\otimes
A& \rTo^{{\tt LAPP}} & B &\rTo^{\beta} & {\Siepinski}
\end{diagram}\]
For all $h\in{\tt Str}(A\linimpl B), h;\gamma\approx x;h;\beta$, so
$x;g;\beta\approx g;\gamma\converges$ since $f\Ip{A\linimpl B} g$ and
$f;\gamma\converges$.

$(\Leftarrow)$ Suppose $f;\gamma\converges$ where
$\gamma:(A\linimpl B)\rightarrow {\Siepinski}$. From the switching
condition we know that $\gamma$ can respond to the initial move in
${\Siepinski}$ only in $B$ or ${\Siepinski}$; to a move in $B$
only in $B$ or ${\Siepinski}$ and to a move in $A$ only in $A$ or
${\Siepinski}$. Moreover, whenever Player is to move in $A$ the
number of moves played in $B$ is odd, hence there is an unanswered
question in $B$ which must have been asked more recently than the
opening question in ${\Siepinski}$. By the stack discipline
$\gamma$ can in fact only respond in $A$ to a move in $A$.  Thus
if $\gamma\in\sigma_f$ where $f:M_A^O+M_B^P+M_{\cal
  O}^O\rightharpoonup  M_A^P+M_B^O+M_{\Siepinski}^P$ we can decompose $f$
as $f=g+h$ where: $g:M_A^O\rightharpoonup M_A^P,
h:M_B^P+M_{\Siepinski}^O\rightharpoonup M_B^O+M_{\Siepinski}^P$.
If we now define $x=\sigma_g, \beta=\sigma_h$ then:
\begin{itemize}
\item[(i)] $x\in{\tt Str}(A)$.
\item[(ii)] $\beta:B\rightarrow{\Siepinski}$.
\item[(iii)] $\forall h\in{\tt Str}(A\linimpl B). h;\gamma\approx x;h;\beta$.
\end{itemize}
Now
\[\begin{array}{lll}
f;\gamma\converges & \supset & x;f;\beta\converges \\
 &\supset_{\mbox{ by assumption}} & x;g;\beta\converges \\
 &\supset & g;\gamma\converges
\end{array}\]
as required.
\end{proof}

This argument can be understood in terms of Classical Linear Logic. If
we think of $A\linimpl {\Siepinski}$ as ``approximately $A^{\bot}$'', then
$$(A\linimpl B)\linimpl {\Siepinski} \approx (A\linimpl B)^{\bot}=
A\otimes B^{\bot}\approx A\otimes(B\linimpl {\Siepinski}).$$
To prove our final extensionality result, we will need an auxiliary
lemma.

\begin{lemma}[Separation of head occurrence]\label{Soho}

For all $\sigma:!A\rightarrow {\Siepinski}$, for some $\sigma':!A\otimes
A\rightarrow{\Siepinski}$:
\[ \begin{diagram}
\sigma\approx !A & \rTo^{{\tt con}_A} & !A\otimes !A&
\rTo^{{\tt id}_{!A}\otimes {\tt der}_A} & !A\otimes A& \rTo^{\sigma'}&
{\Siepinski}
\end{diagram} \]
\end{lemma}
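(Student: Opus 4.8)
The plan is to construct $\sigma'$ explicitly by ``separating out'' the first copy of $A$ that $\sigma$ ever interrogates, and then to verify that routing this distinguished copy through the dereliction port recovers $\sigma$ up to $\Deq$. Write $\sigma = \sigma_f$ for its inducing partial function $f : M^P_{\ofcourse A} + \{q\} \Pfr M^O_{\ofcourse A} + \{a\}$, where $q,a$ are the question and answer of $\Siepinski$. Since $\Siepinski$ has a single question, every position of $\sigma$ begins with $q$; by the Switching Condition (Proposition~2.3.1) only Player can switch out of $\Siepinski$ into $\ofcourse A$ and, once there, only Player can open a fresh copy of $A$ (opening moves of copies are $P$-labelled in $\ofcourse A \linimpl \Siepinski$). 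Hence $f(q)$ is either undefined, equal to $a$, or equal to a single opening move $(i_0,m_0)$ of $\ofcourse A$. In the first two cases we are done by taking $\sigma' = \{\epsilon\}$ or $\sigma' = \{\epsilon, qa\}$ respectively, so the substantial case is $f(q) \Eqdef (i_0,m_0)$. The index $i_0$ is the \emph{head occurrence}, well-defined precisely because $\sigma$ is history-free.

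First I would define $\sigma' = \sigma_g$ on $\ofcourse A \tensor A \linimpl \Siepinski$ by conjugating $f$ with the relabelling $\rho$ that sends every move $(i_0, m)$ to the move $m$ in the dereliction port $A$, and every move $(j,m)$ with $j \neq i_0$ to $(j,m)$ in the port $\ofcourse A$, fixing $q$ and $a$; concretely $g = \rho \circ f \circ \rho^{\ast}$. Because $\rho$ preserves both the $PO$- and $QA$-labels (the labels of $(i_0,m)$ in $\ofcourse A$ agree with those of $m$ in the separated $A$) and merely relocates the thread of index $i_0$ into its own component, I would check that the projection condition, the stack discipline and history-freeness all transfer from $\sigma$ to $\sigma'$: the restriction of $\sigma'$ to the port $A$ is the index-$i_0$ restriction of $\sigma$ and so lies in $P_A$, the restriction to the port $\ofcourse A$ is the restriction of $\sigma$ to indices $\neq i_0$, and answers remain paired with their questions inside a single component. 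Reflexivity $\sigma' \Deq \sigma'$ follows from $\sigma \Deq \sigma$ since $\rho$ is a bijective re-routing. This step parallels the decomposition of $f$ as $g+h$ in the proof of the Point Decomposition Lemma~\ref{PDL}(iii), the only new feature being that here the splitting is by copy-index rather than by tensor component.

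It remains to compute $\Theta ; \sigma'$, where $\Theta = {\tt con}_A ; ({\tt id}_{\ofcourse A} \tensor {\tt der}_A) : \ofcourse A \rightarrow \ofcourse A \tensor A$, reading off its action from the copy-cat descriptions of ${\tt con}_A$ and ${\tt der}_A$ used in the proof of Lemma~\ref{lem5}. With $r = c \circ {\tt inl}$, $s = c \circ {\tt inr}$ the components of the tagging function of ${\tt con}_A$ and with $0$ the dereliction index, $\Theta$ makes the port $A$ of $\sigma'$ communicate with source index $s(0)$ and the port $\ofcourse A$ index $j$ communicate with source index $r(j)$. Tracing the execution formula therefore shows that $\Theta ; \sigma'$ is exactly $\sigma$ with index $i_0$ renamed to $s(0)$ and each index $j \neq i_0$ renamed to $r(j)$. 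Since $r$ is injective and $\mathrm{range}(r)$ and $\{s(0)\}$ are disjoint (as $c$ is injective), this renaming is an injection of $\omega$ into itself that extends to a permutation $\pi \in S(\omega)$, and by the definition of $\Deq_{\ofcourse A}$ this same $\pi$ witnesses $\Theta ; \sigma' \Deq \sigma$. I expect the only real obstacle to be this last bookkeeping: one must confirm that the fixed injection $i_0 \mapsto s(0)$, $j \mapsto r(j)$ can be chosen as a single global permutation compatible with every position (which it can, since each finite position activates only finitely many indices and $\pi$ is fixed once and for all), so that the per-position permutations required by $\Deq_{\ofcourse A}$ cohere into the one equivalence $\Theta;\sigma' \Deq \sigma$.
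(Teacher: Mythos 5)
Your construction of $\sigma'$ (split off the head index $i_0$ as a separate tensorial factor) and your concluding computation of the composite follow the same route as the paper; indeed the paper treats the composite equation as immediate from the definition and spends essentially all of its effort on one point that you wave through. That point is exactly where your proposal has a genuine gap: showing that $\sigma'$ is a legitimate strategy, i.e.\ that $\sigma' \Deq \sigma'$ with respect to the equivalence of the \emph{new} type $\ofcourse A \tensor A \linimpl \Sigma$. You assert this ``follows from $\sigma \Deq \sigma$ since $\rho$ is a bijective re-routing'', but that inference is invalid: the move-bijection $\rho$ does not carry the equivalence relation of $\ofcourse A \linimpl \Sigma$ onto that of $\ofcourse A \tensor A \linimpl \Sigma$. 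In the former, the witnessing permutations $\pi \in S(\omega)$ are free to move the index $i_0$; in the latter, the separated factor $A$ is rigid, so (translated back through $\rho$) only permutations fixing $i_0$, and keeping the index-$i_0$ moves in the same sequence slots, are allowed. The new equivalence is therefore strictly finer, and reflexivity does not transfer by conjugation: in the condition defining $\Subeq$, a finer equivalence weakens the hypothesis but simultaneously strengthens the conclusion, so reflexivity with respect to one relation yields nothing about the other.

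Concretely, what must be proved is: if $q(i_0,a)sm$ and $q(i_0,a)s'm'$ are reachable in $\sigma'$ and equivalent in the new type, and $f$ responds with $(j_1,a_1)$ and $(j_2,a_2)$ respectively, then the extended positions are again equivalent \emph{in the new type}. From $\sigma \Deq \sigma$ you only get equivalence in the old type, witnessed by some $\pi \in S(\omega)$. The paper closes this gap by observing that, since both positions begin with the identical move $(i_0,a)$, any witnessing $\pi$ must satisfy $\pi(i_0)=i_0$, and since $\pi$ relates the two responses, $j_2 = \pi(j_1)$; hence either $j_1 = j_2 = i_0$ (both responses lie in the separated factor) or $j_1 \neq i_0$ and $j_2 \neq i_0$ (both lie in the $\ofcourse A$ factor), and in either case the new-type equivalence holds. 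Without this dichotomy your $\sigma'$ is not known to satisfy the partial equivalence relation, hence not known to be a morphism at all. (Your final paragraph, by contrast, is sound in substance: the permutations witnessing $\Theta;\sigma' \Deq \sigma$ are chosen per pair of positions, each of which activates only finitely many indices, so the injectivity of $j \mapsto r(j)$, $i_0 \mapsto s(0)$ suffices; no single global permutation exists in general, and none is needed.)
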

\begin{proof}If $\sigma=\bot_{!A,{\Siepinski}}$ or
$\sigma=K_{!A}\top$, the result is trivial. If $\sigma$ responds
to the initial question $q$ with a move $(i,a)$ in $!A$ we define
$\sigma'$ by interpreting the index $i$ as a separate tensorial
factor rather than an index in $!A$. The only non-trivial point is
to show that $\sigma'\approx\sigma'$. If $q(i,a)sm\approx
q(i,a)s'm'$ where $q(i,a)s,q(i,a)s'\in\sigma'$, then any
permutation $\pi$ witnessing the equivalence must satisfy
$\pi(i)=i$. Let the response of $\sigma'$ to $m$ be $(j_1,a_1)$
and to $m'$  $(j_2,a_2)$. Since $\sigma\approx\sigma$ we must have
$q(i,a)sm(j_1,a_1)\approx_{!A\linimpl {\Siepinski}}
q(i,a)s'm'(j_2,a_2)$, and hence either $j_1=j_2=i$ or $j_1\neq
j_2\neq i$. In either cases, $q(i,a)sm(j_1,a_1)\approx_{!A\otimes
A\linimpl {\Siepinski}} q(i,a)s'm'(j_2,a_2)$, as required.
\end{proof}

\begin{lemma}[Bang Extensionality]\label{BE}

For all $x,y\in{\tt Str}(A).$
$$x\Ip A y \Longleftrightarrow !x\Ip{!A} !y$$
\end{lemma}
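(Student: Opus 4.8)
The plan is to prove the two implications separately, using the Separation of Head Occurrence Lemma~\ref{Soho} together with Extensionality for Tensor (Lemma~\ref{EfT}) for the hard direction, and a single application of dereliction for the easy one. Throughout I use that, since $x \in {\tt Str}(A)$ is a point $I \rightarrow A$ and $!I = I$, we have $!x ; {\tt der}_A \approx x$ by the right identity law $\textbf{(m3)}$, and that $\converges$ is invariant under $\approx$ (equivalent points of $\Siepinski$ are equal).

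The implication $!x \Ip{!A} !y \THEN x \Ip A y$ is immediate. Given $\alpha : A \rightarrow \Siepinski$ with $x;\alpha\converges$, consider $({\tt der}_A;\alpha) : \; !A \rightarrow \Siepinski$. Then $!x;({\tt der}_A;\alpha) \approx x;\alpha\converges$, so by hypothesis $!y;({\tt der}_A;\alpha)\converges$, i.e.\ $y;\alpha \approx !y;({\tt der}_A;\alpha)\converges$; hence $x \Ip A y$.

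For the converse, assume $x \Ip A y$ and let $\gamma : \; !A \rightarrow \Siepinski$ with $!x;\gamma\converges$; I must show $!y;\gamma\converges$. By Lemma~\ref{Soho} I may factor $\gamma \approx {\tt con}_A ; (\ident{!A} \tensor {\tt der}_A) ; \gamma'$ for some $\gamma' : \; !A \tensor A \rightarrow \Siepinski$. Since $!x$ is a point and $!$ preserves the canonical cocommutative comonoid structure on $!A$, it is a comonoid morphism, so $!x;{\tt con}_A \approx \;!x \tensor !x$ (under $I \cong I \tensor I$); combined with $!x;{\tt der}_A \approx x$ this gives $!x;\gamma \approx (!x \tensor x);\gamma'$, and likewise $!y;\gamma \approx (!y \tensor y);\gamma'$. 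I then argue by induction on the length of a minimal converging interaction witnessing $!x;\gamma\converges$. If $\gamma$ answers in $\Siepinski$ without opening any copy of $A$ it is the constant strategy and the claim is trivial. Otherwise a head copy is opened: first, by reflexivity of $\Ip{!A}$ and $x \Ip A y$, Lemma~\ref{EfT} yields $!x \tensor x \Ip{!A \tensor A} !x \tensor y$, so $(!x \tensor x);\gamma'\converges \supset (!x \tensor y);\gamma'\converges$; next, define $\delta : \;!A \rightarrow \Siepinski$ by feeding $y$ into the linear argument, $\delta = \;!A \cong \;!A \tensor I \stackrel{\ident{!A} \tensor y}{\longrightarrow} !A \tensor A \stackrel{\gamma'}{\longrightarrow} \Siepinski$, so that $!x;\delta \approx (!x \tensor y);\gamma'$ and $!y;\delta \approx (!y \tensor y);\gamma'$. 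The converging interaction of $!x;\delta$ omits the head copy's dialogue (now internalised inside $y$) and is strictly shorter, so the induction hypothesis gives $!y;\delta\converges$, i.e.\ $(!y \tensor y);\gamma'\converges$, i.e.\ $!y;\gamma\converges$.

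The main obstacle will be making this induction genuinely well-founded: I must verify precisely that internalising the head occurrence (replacing the external dialogue with $x$ by an internal dialogue with $y$ inside $\delta$) strictly shortens the external interaction between $!x$ and the residual test, so that the chosen measure decreases at each step. A secondary point requiring care is the comonoid-morphism identity $!x;{\tt con}_A \approx \;!x \tensor !x$; although this is the standard fact that $!$ of a morphism is a comonoid homomorphism in a linear category, it can if necessary be checked directly from the concrete definitions of contraction, promotion and tensor on history-free strategies.
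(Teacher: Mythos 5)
Your easy direction and the skeleton of your hard direction (Separation of Head Occurrence, then an induction that strips off the head copy) are essentially the paper's, but there is a genuine gap, and it is exactly the one you flag at the end: the well-foundedness of your induction. The trouble comes from internalising $y$, rather than $x$, into the head occurrence. Your inductive step applies the induction hypothesis to $\delta = (\ident{!A}\tensor y);\gamma'$, claiming that the minimal converging interaction of $!x;\delta$ is strictly shorter than that of $!x;\gamma$ because the head dialogue has become internal. That would be true if the interaction of $!x\tensor y$ against $\gamma'$ agreed, outside the head copy, with that of $!x\tensor x$ against $\gamma'$; but nothing guarantees this. The hypothesis $x\Ip{A} y$ is extensional: $y$'s moves in the head copy may be different moves from $x$'s (for instance $y$ may answer a question outright where $x$ first interrogates its argument), and since $\gamma'$ is history-free and reacts to moves, its ensuing behaviour --- in particular which outer copies of $!A$ it opens and how long it plays in them --- can be entirely different in the two interactions. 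Hence neither the length nor the number of $!A$-indices of the interaction $!x\|\delta$ is controlled by the corresponding quantity for $!x\|\gamma$, and your chosen measure may increase at the inductive step; the induction as set up does not obviously terminate.

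The paper's proof performs your two steps in the opposite order precisely to avoid this. It first internalises $x$: with $\gamma_0 = (\ident{!A}\tensor x);\gamma'$ (up to $!A\cong\;!A\tensor I$), the interaction $!x\|\gamma_0$ is literally $!x\|\gamma$ with all moves at the head index deleted, since the head was in any case being played by a copy of $x$; so the measure (the paper counts the $!A$-indices occurring in the interaction) strictly decreases, and the induction hypothesis yields that $!y;\gamma_0$, i.e.\ $(!y\tensor x);\gamma'$, converges. Only then is $x\Ip{A}y$ invoked, through the single test $\delta_0 = (!y\tensor\ident{A});\gamma' : A\rightarrow\Sigma$: from the convergence of $x;\delta_0\approx(!y\tensor x);\gamma'$ one gets convergence of $y;\delta_0\approx(!y\tensor y);\gamma'$, i.e.\ of $!y;\gamma$. (Your appeal to Extensionality for Tensor is a harmless substitute for this last step; what matters is the ordering, not how the hypothesis $x\Ip{A}y$ is packaged.) So the repair is to swap the roles: plug $x$ into the head, apply the induction hypothesis to the resulting test, and only afterwards trade $x$ for $y$ via a test on $A$.
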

\begin{proof} $(\Leftarrow)$ If $!x\Ip{!A} !y$ and
$x;\alpha\converges$ then $!x;({\tt der}_A;\alpha)\converges$, so
$!y;({\tt der}_A;\alpha)\converges$, and hence
$y;\alpha\converges$ as required.

$(\Rightarrow)$ If $!x;\alpha\converges$, define $| \alpha|$ to be
the number of indices in $!A$ occurring in $!x \|\alpha$. We show
that, for all $\alpha:!A\rightarrow {\Siepinski}$ such that
$!x;\alpha\converges, !y;\alpha\converges$, by induction on
$|\alpha|$.  For the basis, note that $!x;\alpha\converges$ and $|\alpha|=0$
implies that $\alpha=K_{!A}\top$. For the inductive step, let
$|\alpha|=k+1$. By Lemma \ref{Soho}, for some $\beta:!A\otimes
A\rightarrow {\Siepinski}$,
$\alpha\approx{\tt con}_A;{\tt id}_{!A}\otimes{\tt der}_A;\beta$. For all
$z\in{\tt Str}(A). \;
!z;{\tt con}_A;{\tt id}_{!A}\otimes{\tt der}_A\approx !z\otimes z$, so
$!z\otimes z;\beta\approx !z;\alpha$.

Now define
\[\begin{diagram}
\gamma=!A & \rTo^{\sim}& !A\otimes I& \rTo^{{\tt id}_{!A}\otimes x} &
!A\otimes A& \rTo^{\beta} & {\Siepinski}
\end{diagram}\]
For all $z\in{\tt Str}(A),\; !z;\gamma\approx !z\otimes x;\beta$. In
particular, $!x;\gamma\approx !x\otimes x;\beta\approx
!x;\alpha\converges$.
Since $|\alpha|>0$, there is a first index $i_0$ in $!A$ used by
$\alpha$. By the definition of $\gamma$, $!x\| \gamma$ is
$!x \| \alpha$ with all moves at index $i_0$ deleted. Hence
$|\gamma|<|\alpha|$, and by induction hypothesis
$!y;\gamma\converges$.

Define $\delta:A\rightarrow {\Siepinski}$ by
\[\begin{diagram}
\delta=A & \rTo^{\sim} & I\otimes A &\rTo^{!y\otimes{\tt id}_A} &
!A\otimes A&\rTo^{\beta} & {\Siepinski}.
\end{diagram} \]
Then for all $z\in{\tt Str}(A).\;  z;\delta\approx!y\otimes
z;\beta$. In particular, $x;\delta\approx !y\otimes x;\beta\approx
!y;\gamma\converges$. By the assumption that $x\Ip A y,
y;\delta\converges$. This implies that $!y;\alpha\approx !y\otimes
y;\beta\converges$, as required.
\end{proof}

\begin{lemma}[Intuitionistic Function Extensionality]
$$\sigma \Ip{A\Rightarrow B}\tau \Longleftrightarrow
\forall x: 1\Rightarrow A,\; \beta :B\Rightarrow {\Siepinski}.
\; \beta\circ\sigma\circ x\converges \supset \beta\circ\tau\circ x\converges.$$
\end{lemma}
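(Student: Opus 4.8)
The plan is to reduce the statement to three results already established—Linear Function Extensionality (Lemma~\ref{LFE}), Bang Extensionality (Lemma~\ref{BE}), and the Point Decomposition Lemma (Lemma~\ref{PDL})—by systematically translating between linear composition $;$ in $\Games$ and co-Kleisli composition $\circ$ in the cartesian closed category $\M(\KG)$. Recall that $A \Rightarrow B \equiv \ofcourse A \linimpl B$, so $\sigma, \tau \in {\tt Str}(A \Rightarrow B)$ are precisely strategies on $\ofcourse A \linimpl B$, and $\Ip{A \Rightarrow B}$ is the \emph{linear} intrinsic preorder, tested against all $\gamma : \ofcourse A \linimpl B \Fr \Siepinski$.

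First I would apply Lemma~\ref{LFE} with source object $\ofcourse A$, which gives
$$\sigma \Ip{A\Rightarrow B} \tau \iff \forall w \in {\tt Str}(\ofcourse A).\ w;\sigma \Ip B w;\tau.$$
Next I would use Point Decomposition (Lemma~\ref{PDL}(i)) to restrict the quantifier to points of the form $\ofcourse x$: every $w \in {\tt Str}(\ofcourse A)$ satisfies $w \approx \ofcourse(w;{\tt der}_A)$, and conversely each $x \in {\tt Str}(A)$ yields $\ofcourse x \in {\tt Str}(\ofcourse A)$. Since composition is monotone for $\Subeq$ (hence invariant under $\approx$ in each argument) and $\Ip B$ depends on its arguments only up to $\approx$, the condition becomes $\forall x \in {\tt Str}(A).\ (\ofcourse x);\sigma \Ip B (\ofcourse x);\tau$. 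The crucial bookkeeping observation is that for a point $x : I \Fr A$ one has $\ofcourse x = x^{\dagger}$, so that co-Kleisli composition unwinds as $\sigma \circ x = x^{\dagger};\sigma = (\ofcourse x);\sigma$; thus the displayed condition reads exactly $\forall x : 1 \Rightarrow A.\ (\sigma \circ x) \Ip B (\tau \circ x)$.

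The hard part—and where Bang Extensionality does the real work—will be to recognise the ``pointwise in the CCC'' convergence condition as an instance of the intrinsic preorder on $\ofcourse B$. Fix $x$ and set $p = \sigma \circ x$, $q = \tau \circ x$, both points of $B$. A test $\beta : B \Rightarrow \Siepinski$ is by definition a strategy on $\ofcourse B \linimpl \Siepinski$, i.e. an arbitrary element of $\Games(\ofcourse B, \Siepinski)$, and $\beta \circ p = p^{\dagger};\beta = (\ofcourse p);\beta$. Hence as $\beta$ ranges over all tests $B \Rightarrow \Siepinski$, the morphism $(\ofcourse p);\beta$ ranges over exactly the composites appearing in the definition of $\Ip{\ofcourse B}$, so that
$$\big[\, \forall \beta : B \Rightarrow \Siepinski.\ \beta \circ p \converges \supset \beta \circ q \converges \,\big] \iff \ofcourse p \Ip{\ofcourse B} \ofcourse q \iff p \Ip B q,$$
the final equivalence being Bang Extensionality (Lemma~\ref{BE}). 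Assembling the chain of equivalences and using associativity of $\circ$ to rewrite $\beta \circ (\sigma \circ x)$ as $\beta \circ \sigma \circ x$, I obtain
$$\sigma \Ip{A\Rightarrow B} \tau \iff \forall x : 1 \Rightarrow A,\ \beta : B \Rightarrow \Siepinski.\ \beta \circ \sigma \circ x \converges \supset \beta \circ \tau \circ x \converges,$$
which is the claim. The remaining work is routine: the $\approx$-invariances licensing the substitution $w \mapsto \ofcourse x$, and the identity $\ofcourse x = x^{\dagger}$ for points, both of which follow immediately from the monotonicity of composition under $\Subeq$ and the definition of promotion.
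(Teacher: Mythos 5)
Your proposal is correct and follows essentially the same route as the paper's own proof: Linear Function Extensionality applied at $\ofcourse A$, the Point Decomposition/Bang Lemma (with $\ofcourse I = I$ and ${\tt der}_I = {\tt id}_I$) to replace arbitrary points of $\ofcourse A$ by promoted points $\ofcourse x = x^{\dagger}$, and Bang Extensionality to identify the pointwise test condition at $B$ with the intrinsic preorder at $\ofcourse B$, followed by the co-Kleisli bookkeeping. The only difference is presentational: the paper reads the Bang Extensionality step upward from $\Ip{B}$ to $\Ip{\ofcourse B}$ and then unfolds the definition of the intrinsic preorder, whereas you unfold the test condition into $\Ip{\ofcourse B}$ first and then descend via Bang Extensionality, so the two chains of equivalences coincide step for step.
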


\begin{proof}
\[\begin{array}{lll}
\sigma\Ip{A\Rightarrow B}\tau &\Longleftrightarrow &\forall z\in{\tt
  Str}(!A). z;\sigma\Ip B z;\tau \\
 & & \mbox{Linear Function Extensionality} \\
& \Longleftrightarrow & \forall x\in{\tt Str}(A).
x^{\dag};\sigma\Ip B x^{\dag};\tau \\ && \mbox{Bang Lemma, } !I =I
\\ &\Longleftrightarrow & \forall x\in{\tt Str}(A).\;
x^{\dag};\sigma^{\dag}\Ip{!B} x^{\dag};\tau^{\dag} \\ &&\mbox{Bang
Extensionality, } {\tt der}_I={\tt id}_I \\ &\Longleftrightarrow&
\forall x\in{\tt Str}(A),\beta :!B\rightarrow {\Siepinski}. \;
x^{\dag};\sigma^{\dag};\beta\converges\supset
x^{\dag};\tau^{\dag};\beta\converges \\ &\Longleftrightarrow&
\forall x: 1\Rightarrow A,\beta : B\Rightarrow {\Siepinski}.\;
\beta\circ\sigma\circ x\converges \supset\beta\circ \tau\circ
x\converges. \;\;
\end{array}\]
\end{proof}

\begin{lemma}[Congruence Lemma]
\begin{itemize}
\item[(i)] $\sigma\Ip {A\Rightarrow B}\sigma' \wedge
  \tau\Ip {B\Rightarrow C} \tau' \supset
  \tau\circ\sigma\Ip {A\Rightarrow C} \tau'\circ\sigma'$
\item[(ii)] $\sigma\Ip {C\Rightarrow A}\sigma' \wedge
  \tau\Ip {C\Rightarrow B} \tau'\supset \lang\sigma,\tau\rang
  \Ip {C\Rightarrow A\with B}\lang\sigma',\tau'\rang.$
\item[(iii)] $\sigma\Ip {A\with B\Rightarrow C}\tau\supset
\Lambda(\sigma)\Ip {A\Rightarrow(B\Rightarrow C)}\Lambda(\tau).$
\end{itemize}
\end{lemma}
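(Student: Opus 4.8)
The plan is to prove each of the three parts by reducing to the intrinsic-preorder definition and the extensionality characterisations already established, exploiting the monotonicity of the categorical combinators with respect to $\Subeq$ (Proposition 2.4.2 and its successors). The key unifying idea is that the intrinsic preorder $\Ip{A \Rightarrow B}$ has just been characterised pointwise by Intuitionistic Function Extensionality, so each congruence statement can be transported to a statement about the underlying maps tested against $\Siepinski$, where monotonicity of composition does the work.

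For part \textit{(i)}, I would argue as follows. To show $\tau\circ\sigma \Ip{A\Rightarrow C} \tau'\circ\sigma'$, by Intuitionistic Function Extensionality it suffices to show, for all $x : 1 \Rightarrow A$ and $\gamma : C \Rightarrow \Siepinski$, that $\gamma\circ(\tau\circ\sigma)\circ x \converges$ implies $\gamma\circ(\tau'\circ\sigma')\circ x\converges$. I would split this into two monotonicity steps chained together. First, from $\tau \Ip{B\Rightarrow C}\tau'$ applied at the point $\sigma\circ x : 1 \Rightarrow B$ and test map $\gamma$, one gets $\gamma\circ\tau\circ(\sigma\circ x)\converges \supset \gamma\circ\tau'\circ(\sigma\circ x)\converges$. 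Second, from $\sigma \Ip{A\Rightarrow B}\sigma'$, applied with the test map $\gamma\circ\tau' : B \Rightarrow \Siepinski$, one gets $(\gamma\circ\tau')\circ\sigma\circ x\converges \supset (\gamma\circ\tau')\circ\sigma'\circ x\converges$. Composing these two implications and using associativity of $\circ$ yields the desired conclusion. The only subtlety is ensuring that the composite test maps (e.g. $\gamma\circ\tau'$) are themselves legitimate strategies into $\Siepinski$, which is immediate since $\circ$ is a total operation on morphisms of the cartesian closed category $\M(\KG)$.

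For part \textit{(ii)}, I would again invoke Intuitionistic Function Extensionality, reducing to testing $\lang\sigma,\tau\rang$ and $\lang\sigma',\tau'\rang$ against points $x : 1 \Rightarrow C$ and maps $\delta : A \with B \Rightarrow \Siepinski$. The crucial structural fact, already used in Extensionality for Product (Lemma \ref{EfP}), is that every $\delta : A \with B \rightarrow \Siepinski$ factors as $\alpha\circ{\tt fst}$ or $\beta\circ{\tt snd}$; the same factorisation holds in $\M(\KG)$ after applying dereliction, so any test map on $A \with B$ probes only one of the two components. Combined with the product equations ${\tt fst}\circ\lang\sigma,\tau\rang = \sigma$ and ${\tt snd}\circ\lang\sigma,\tau\rang = \tau$ (the universal property of the product, Proposition 2.8.3), the convergence $\delta\circ\lang\sigma,\tau\rang\circ x\converges$ reduces to either $\alpha\circ\sigma\circ x\converges$ or $\beta\circ\tau\circ x\converges$, and then the hypotheses $\sigma \Ip{C\Rightarrow A}\sigma'$, $\tau \Ip{C\Rightarrow B}\tau'$ give the corresponding convergence for the primed pairing. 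Part \textit{(iii)} is the easiest: since $\Lambda$ is a natural bijection $\M(\KG)(A\with B, C) \cong \M(\KG)(A, B\Rightarrow C)$ and the intrinsic preorder on the exponential was shown (via Linear Function and Bang Extensionality in the proof of Intuitionistic Function Extensionality) to coincide with the pointwise order inherited through currying, the statement $\sigma \Ip{A\with B\Rightarrow C}\tau \supset \Lambda(\sigma)\Ip{A\Rightarrow(B\Rightarrow C)}\Lambda(\tau)$ follows by unfolding both sides to the same family of convergence tests $\beta\circ\sigma\circ\lang x,y\rang\converges$.

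The main obstacle I anticipate is the careful bookkeeping in part \textit{(ii)}: one must verify that the factorisation of test maps through ${\tt fst}$ or ${\tt snd}$ survives the passage to the co-Kleisli category $\KG$, where the genuine product involves $!(A\with B)$ and the exponential isomorphism $e_{A,B}$ of Proposition 2.7.6. The clean way to handle this is to observe that $\pi_1 = {\tt fst}\circ{\tt der}$ and $\pi_2 = {\tt snd}\circ{\tt der}$ (as in Proposition 2.8.4), so a test map factors through a projection precisely when it ignores one tensorial component after dereliction; this is exactly the content of the switching-condition analysis used in Lemma \ref{EfP}, and I would cite that lemma rather than reprove the factorisation. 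With that factorisation in hand, all three parts are routine chases through monotonicity of the combinators and the definition of $\Ip{}$.
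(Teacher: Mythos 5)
Your parts \textit{(i)} and \textit{(iii)} are correct and follow essentially the paper's own route: \textit{(i)} is the identical two-step chain (replace $\tau$ by $\tau'$ at the point $\sigma\circ x$, then replace $\sigma$ by $\sigma'$ under the composite test $\gamma\circ\tau'$), and \textit{(iii)} is the same unfolding, via currying and the identification of morphisms with points of arrow types, down to tests of the form $\gamma\circ\sigma\circ\lang x,y\rang\converges$.

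Part \textit{(ii)}, however, rests on a false structural claim. You assert that every test map $\delta : A\with B\Rightarrow\Siepinski$ factors through a projection, on the grounds that ``the same factorisation holds in $\M(\KG)$ after applying dereliction''. The factorisation in Lemma~\ref{EfP} comes from $P_{A\with B}=P_A+P_B$: every position of $A\with B$ lies entirely in one component, so a \emph{linear} test $(A\with B)\linimpl\Siepinski$ can only ever probe one component. An \emph{intuitionistic} test is a strategy on $!(A\with B)\linimpl\Siepinski$, and there the constraint applies only index-by-index in the bang: distinct indices may open in distinct components. Concretely, take $A=B=N$ and the history-free strategy which answers the initial $\Siepinski$-question by asking the question of the first component at index $0$, responds to any answer there by asking the question of the second component at index $1$, and responds to any answer there by answering the $\Siepinski$-question. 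This is a legitimate morphism $N\with N\Rightarrow\Siepinski$ (it is history-free, respects the stack discipline, and satisfies $\delta\Deq\delta$) which interrogates \emph{both} components; its convergence on $\lang\sigma,\tau\rang\circ x$ depends jointly on the convergence of $\sigma\circ x$ and of $\tau\circ x$, so it factors through neither ${\tt fst}$ nor ${\tt snd}$. Hence your reduction of $\delta\circ\lang\sigma,\tau\rang\circ x\converges$ to ``either $\alpha\circ\sigma\circ x\converges$ or $\beta\circ\tau\circ x\converges$'' fails, and citing Lemma~\ref{EfP} cannot repair it, since that lemma is precisely and only a statement about linear tests.

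The paper's proof of \textit{(ii)} avoids intuitionistic tests on $A\with B$ altogether by descending to points. For each $x:1\Rightarrow C$ one has $\lang\sigma,\tau\rang\circ x\approx\lang\sigma\circ x,\tau\circ x\rang$ and similarly for the primed pair; part \textit{(i)} (instantiated with $x\Ip{}x$) gives $\sigma\circ x\Ip{A}\sigma'\circ x$ and $\tau\circ x\Ip{B}\tau'\circ x$; and then Product Extensionality---a statement about the intrinsic preorder on ${\tt Str}(A\with B)$, where tests \emph{are} linear and the factorisation is valid---yields $\lang\sigma,\tau\rang\circ x\Ip{A\with B}\lang\sigma',\tau'\rang\circ x$ for all $x$. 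This pointwise, linear-test domination suffices, because the \emph{first} equivalence in the proof of Intuitionistic Function Extensionality (Linear Function Extensionality together with the Bang Lemma) already characterises $\Ip{C\Rightarrow A\with B}$ in exactly those terms; one does not need the final form of that lemma, with intuitionistic tests on $A\with B$, which is the form you invoke. Rewritten along these lines your part \textit{(ii)} goes through; as stated, its key step is wrong.
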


\begin{proof} {\it (i)}
\[\begin{array}{llll}
\beta\circ\tau\circ\sigma\circ x\converges
&\supset&\beta\circ\tau'\circ\sigma\circ x\converges & \tau\Ip
{B\Rightarrow C} \tau' \\ & \supset&\beta\circ\tau'\circ\sigma'
\circ x\converges & \sigma\Ip {A\Rightarrow B}\sigma'.
\end{array}\]

{\it (ii)}
For all $x : 1\Rightarrow C,\; \lang\sigma,\tau\rang\circ
x\approx\lang\sigma\circ  x,\tau\circ x\rang: I\rightarrow A\with B$;
and similarly, $\lang\sigma',\tau'\rang\circ x\approx\lang\sigma'\circ
x,\tau'\circ x\rang.$ By {\it (i)}, $\sigma\circ x\Ip A\sigma'\circ x$
and $\tau\circ x\Ip B\tau'\circ x$. The result now follows by Product
Extensionality.

{\it (iii)} Identifying morphisms with points of arrow types,
\[\begin{array}{llll}
\gamma\circ\Lambda(\sigma)\circ x\circ y\converges &\supset&
\gamma\circ\sigma\circ\lang x,y\rang\converges & \\
&\supset & \gamma\circ\tau\circ\lang x,y\rang\converges &
\sigma\Ip {A\with B\Rightarrow C}\tau \\
 &\supset & \gamma\circ\Lambda(\tau)\circ x\circ y\converges. &
\end{array}\]
\end{proof}

Finally we consider the relationship between the intrinsic
and intensional preorders.

\begin{lemma}\label{II}
\begin{itemize}
\item[(i)] If $\sigma\Subeq_A \tau$, then $\sigma\Ip A \tau$.
\item[(ii)] If $\sigma_o\subseteq\sigma_1\subseteq\dots $ is an
  increasing sequence, and for all $n$, $\sigma_n\Ip A\tau_n$, then
  $\bigcup_{n\in\omega}\sigma_n \Ip A \tau_n.$
\end{itemize}
\end{lemma}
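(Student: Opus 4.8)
For part (i) the plan is to reduce directly to monotonicity of composition with respect to $\Subeq$. Recall that $\sigma \Ip{A} \tau$ unfolds to: for every test $\alpha : A \to \Siepinski$, $\sigma;\alpha \Subeq \tau;\alpha$. So fix such an $\alpha$. Being a morphism of $\Games$, $\alpha$ is represented by a strategy satisfying $\alpha \Deq \alpha$, hence in particular $\alpha \Subeq \alpha$. Viewing $\sigma,\tau$ as morphisms $I \to A$ (legitimate since $I \linimpl A \cong A$, as $I$ has no moves, so that $\Subeq_A$ on ${\tt Str}(A)$ agrees with $\Subeq$ on $\Games(I,A)$), I apply monotonicity of composition (Section~2.4) to the data $\sigma \Subeq \tau$ and $\alpha \Subeq \alpha$; this yields $\sigma;\alpha \Subeq \tau;\alpha$ at once. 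Since $\alpha$ was arbitrary, $\sigma \Ip{A} \tau$.

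For part (ii) I would work with the equivalent convergence formulation $x \Ip{A} y \Leftrightarrow \forall \alpha : A \to \Siepinski.\ x;\alpha \converges \supset y;\alpha \converges$, reading the intended conclusion with a single upper bound $\tau$ independent of $n$ (the subscript on $\tau$ in the displayed statement appears to be a typo; the content is admissibility of $\Ip{A}$ under increasing unions in its left argument). First note that, since $\sigma_0 \subseteq \sigma_1 \subseteq \cdots$ is directed and strategies are closed under directed unions, $\bigcup_n \sigma_n$ is again a strategy. Now fix a test $\alpha$ and assume $(\bigcup_n \sigma_n);\alpha \converges$. By strong continuity of composition (Section~2.9), composition distributes over the directed union in its left argument, so $(\bigcup_n \sigma_n);\alpha = \bigcup_n(\sigma_n;\alpha)$. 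Convergence on $\Siepinski$ means exactly containment of the single non-trivial play $qa$, so $qa \in \bigcup_n(\sigma_n;\alpha)$ forces $qa \in \sigma_m;\alpha$ for some finite $m$, i.e.\ $\sigma_m;\alpha \converges$. The hypothesis $\sigma_m \Ip{A} \tau$ then gives $\tau;\alpha \converges$, and as $\alpha$ was arbitrary I conclude $\bigcup_n \sigma_n \Ip{A} \tau$.

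I expect no genuine difficulty here. Part (i) is a one-line consequence of monotonicity, and the two load-bearing facts in part (ii) are both available from Section~2: that composition commutes with directed unions (strong continuity), and that convergence on the Sierpinski game is a finitary, single-position property, so that a convergent union already converges at some finite stage. The only bookkeeping worth stating explicitly is that $\bigcup_n \sigma_n$ genuinely lands in ${\tt Str}(A)$ (it is self-$\Deq$-related, so that $\Ip{A}$ is defined on it), which again follows from closure of strategies under directed unions together with strong continuity. If the conclusion really is meant to carry a varying target $\tau_n$, no monotonicity hypothesis on the $\tau_n$ is available, so the fixed-$\tau$ reading is the one the argument supports.
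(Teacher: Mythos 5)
Your proof is correct and is essentially the paper's: part (i) is the same appeal to $\Subeq$-monotonicity of composition (the paper runs it through the convergence formulation $\sigma;\alpha = \top \Rightarrow \tau;\alpha = \top$, you through the $\Subeq$ formulation; these are interchangeable by the definition of $\Ip{A}$), and part (ii) is exactly the intended expansion of the paper's one-line appeal to $\subseteq$-continuity of composition, including the finite-stage argument for convergence. Your fixed-$\tau$ reading of the conclusion in (ii) also agrees with how the paper actually uses the lemma, namely in proving that strong $\Ip{}$-chains have least upper bounds.
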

\begin{proof} {\it (i)} By $\Subeq-$monotonicity of composition
(Proposition 2.9.3) if $\sigma\Subeq_A\tau$ and
$\sigma;\alpha=\top$ then $\top=\sigma;\alpha\Subeq_A \tau;\alpha$
and hence $\tau;\alpha=\top$.

{\it (ii)} By $\subseteq-$continuity of composition (Proposition
2.9.3), similarly to {\it (i)}.
\end{proof}

By Lemma \ref{II}, $\sigma\approx\tau$ implies $\sigma\simeq\tau$ where
$\simeq$ is the equivalence induced by the preorder $\Ip{}$. Thus each
$\simeq-$equivalence class is a union of
$\approx-$classes. Henceforth, when we write $[\sigma]$ we shall mean
the $\simeq-$equivalence class of $\sigma$.

We can define the notion of {\em strong chain} of $\simeq-$equivalence
classes, just as we did for $\approx-$classes:
a sequence
$$ (\dag) \;\;\; [\sigma_0]\Ip{}[\sigma_1]\Ip{}\dots $$
such that there are $(\sigma'_n\;\mid\; n\in\omega)$ with
$\sigma'_n\in[\sigma_n]$ and $\sigma'_n\subseteq \sigma'_{n+1}$ for
all $n\in\omega$.

\begin{lemma}\label{Strong}
Every strong $\Ip{}-$chain has a $\Ip{}-$least upper bound.
\end{lemma}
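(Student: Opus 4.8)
The plan is to exhibit the least upper bound concretely as a union of representatives, exactly as in the proof of strong completeness for $\approx$-classes (Proposition~2.9.2). Given a strong $\Ip{}$-chain $[\sigma_0] \Ip{} [\sigma_1] \Ip{} \cdots$, I would unfold the definition of strong chain to obtain strategies $\sigma'_n \in [\sigma_n]$ with $\sigma'_0 \subseteq \sigma'_1 \subseteq \cdots$, and set $\sigma = \bigcup_{n \in \omega} \sigma'_n$. Since history-free strategies are closed under directed unions, and the strong-completeness proposition of Section~2.9 guarantees that the union of a strongly $\subseteq$-directed family of strategies is again a genuine strategy (i.e.\ $\sigma \approx \sigma$), the class $[\sigma]$ is a well-defined element of the quotient, and it is the obvious candidate for $\bigsqcup_n [\sigma_n]$.

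First I would check that $[\sigma]$ is an upper bound. The key elementary observation is that $\sigma'_n \subseteq \sigma$ together with $\sigma \approx \sigma$ already forces $\sigma'_n \Subeq \sigma$: given $sab \in \sigma'_n \subseteq \sigma$ and $s' \in \sigma$ with $sa \approx s'a'$, the self-equivalence $\sigma \Subeq \sigma$ directly supplies the required $b'$ with $s'a'b' \in \sigma$ and $sab \approx s'a'b'$. Lemma~\ref{II}(i) then gives $\sigma'_n \Ip A \sigma$, i.e.\ $[\sigma_n] \Ip{} [\sigma]$ for every $n$. Next I would verify minimality: let $[\tau]$ be any upper bound, so that $\sigma'_n \Ip A \tau$ for all $n$. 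Since $(\sigma'_n)$ is $\subseteq$-increasing with union $\sigma$, Lemma~\ref{II}(ii) yields $\sigma = \bigcup_n \sigma'_n \Ip A \tau$, that is $[\sigma] \Ip{} [\tau]$. Hence $[\sigma]$ is the $\Ip{}$-least upper bound.

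The argument is genuinely light because the two halves have been pre-packaged. The upper-bound half reduces to the trivial fact that inclusion into a self-equivalent strategy is an instance of $\Subeq$, which feeds Lemma~\ref{II}(i); the minimality half is precisely the $\subseteq$-continuity of $\Ip{}$ recorded in Lemma~\ref{II}(ii). The one point requiring care --- and the place I would look hardest for a gap --- is the well-definedness of $\sigma$ as an element of ${\tt Str}(A)$: one must invoke strong completeness to know that the union of the chosen representatives is self-$\approx$ (and hence names a legitimate $\simeq$-class), rather than merely a prefix-closed set of even-length positions. Everything else is bookkeeping, and in particular no new reasoning about the Sierpinski tests defining $\Ip{}$ is needed, since that reasoning is already encapsulated in Lemma~\ref{II}.
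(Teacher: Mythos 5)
Your proof is correct and follows essentially the same route as the paper's: take the union $\sigma = \bigcup_n \sigma'_n$ of the representatives given by strongness, use Lemma~\ref{II}(i) for the upper-bound half and Lemma~\ref{II}(ii) for minimality. You even make explicit two points the paper leaves tacit --- that $\sigma'_n \subseteq \sigma$ plus self-equivalence of $\sigma$ yields $\sigma'_n \Subeq \sigma$, and that strong completeness is what guarantees the union is a legitimate strategy --- which is a welcome tightening rather than a divergence.
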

\begin{proof} Given a strong chain (\dag), take
$\bigsqcup_{n\in\omega} [\sigma_n]=[\sigma']$ where
$\sigma'=\bigcup_{n\in\omega}\sigma'_n$. For all $n,\;
\sigma_n\simeq\sigma'_n\subseteq \sigma'$, so by Lemma
\ref{II}{\it(i)}, $[\sigma']$ is un upper bound for
$([\sigma_n]\mid n\in\omega)$.

Finally, if $[\tau]$ is another upper bound, then for all $n$,
$\sigma'_n\Ip{}\tau$; so by Lemma \ref{II}{\it (ii)},
$\sigma'\Ip{}\tau$.
\end{proof}

\subsection{The Extensional Category }

We begin with some general considerations on quotients of rational cartesian
closed categories. Let $\cal C$ be a rational CCC. A {\em precongruence}
on $\cal C$ is a family
$\Ip{} =\{ \Ip{A,B} \mid A,B\in{\tt Obj}({\cal C})\}$ of
relations $\Ip{A,B}\subseteq {\cal C}(A,B)\times {\cal C}(A,B)$ satisfying
the following properties:
\begin{itemize}
\item[(r1)] each $\Ip{A,B}$ is a preorder
\item[(r2)] $f\Ip{A,B} f'$ and $g\Ip{B,C} g'$ implies
$g\circ f\Ip{A,C} g'\circ f'$
\item[(r3)] $f\Ip{C,A} f'$ and $g\Ip{C,B} g'$ implies
$\lang f,g\rang \Ip{C,A\times B}\lang f',g'\rang $
\item[(r4)] $f\Ip{A\times B,C} g$ implies $\Lambda(f)\Ip{A,B\Rightarrow C}
\Lambda(g)$
\item[(r5)] $\sqsubseteq_{A,B}\;  \subseteq \; \Ip{A,B}$
\item[(r6)] for all $f:A\times B\rightarrow B, g:C\rightarrow A,
h:B\rightarrow D$:
$$(\forall n\in\omega.\; h\circ f^{(n)}\circ g\Ip{C,D} k ) \supset
h\circ f^{\nabla}\circ g\Ip{C,D} k.$$
\end{itemize}

Given such a precongruence, we define a new category ${\cal C}/{\Ip{}}$ as
follows. The objects are the same as those of $\cal C$;
$${\cal C}/{\Ip{}} (A,B)=
({\cal C}(A,B)/\simeq_{A,B},\leq_{A,B}).$$
That is, a morphism in $C/\Ip{(A,B)}$
is a ${\simeq_{A,B}-}$equivalence class $[f]$, where $\simeq_{A,B}$ is the
equivalence relation induced by $\Ip{A,B}$. The partial ordering is then
the induced one:
$$[f]\leq_{A,B} [g] \Longleftrightarrow f\Ip{A,B} g.$$
Note that by (r5), $[\bot_{A,B}]$ is the least element with respect to this
partial order. By (r2)--(r4), composition, pairing and currying are well-defined
on ${\simeq-}$equivalence classes by
\[ \begin{array}{lll}
[g]\circ[f]& = & [g\circ f], \\
\lang [f],[g]\rang  & =& [\lang f,g\rang ], \\
\Lambda([f]) & =& [\Lambda(f)]\; .
\end{array} \]
 It is then immediate by (r5)
and the fact that $\cal C$ is a rational (and hence in particular a
ppo-enriched) CCC that ${\cal C}/{\Ip{}}$ is a ppo-enriched CCC.
It remains to verify rationality for ${\cal C}/{\Ip{}}$. By (r2) and (r5),
for any $f:A\times B\rightarrow B, g:C\rightarrow A,h:B\rightarrow D,$
the sequence  $([h\circ f^{(n)}\circ g]\mid n\in\omega)$ is a $\leq_{C,D}$-chain.
By (r5) and (r6), $[h\circ f^{\nabla}\circ g]$ is the $\leq_{C,D}-$least
upper bound of this chain. In particular~, taking $g={\tt id}_A$ and
$h={\tt id}_B$, $[f^{\nabla}]$ is the least upper bound of
$([f^{(n)}]\mid n\in\omega)$.

We record this result, which is a variant of [ADJ76], as

\begin{lemma}[Rational Quotient]\label{5.1}
If $\Ip{}$ is a precongruence on a rational CCC $\cal C$, then ${\cal C}/{\Ip{}}$
is a rational CCC.
\end{lemma}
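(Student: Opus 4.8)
The plan is to verify, in turn, that ${\cal C}/{\Ip{}}$ is (a) a category, (b) cartesian closed, (c) ppo-enriched, and (d) rational, observing at each stage that the relevant clause in the definition of precongruence has been tailored precisely to make the corresponding property descend to the quotient. Throughout I would work with the identity-on-objects ``quotient functor'' $Q:{\cal C}\to{\cal C}/{\Ip{}}$ sending $f$ to $[f]$, and exploit that every equation between morphisms of $\cal C$ is automatically preserved by $Q$.

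First I would establish that composition, pairing and currying are well defined on $\simeq$-classes. Well-definedness of composition is exactly (r2): if $f\simeq f'$ and $g\simeq g'$ then $g\circ f\simeq g'\circ f'$, so $[g]\circ[f]:=[g\circ f]$ does not depend on representatives; similarly (r3) gives well-definedness of pairing and (r4) of currying. Since $Q$ is identity on objects and the associativity, identity, product and exponential equations already hold in $\cal C$, they hold verbatim in ${\cal C}/{\Ip{}}$; thus ${\cal C}/{\Ip{}}$ is a cartesian closed category. Next I would check ppo-enrichment. By (r1) each $\Ip{A,B}$ is a preorder, so the induced relation $\leq_{A,B}$ on $\simeq$-classes is a genuine partial order (antisymmetry is forced by the passage to equivalence classes: $[f]\leq[g]$ and $[g]\leq[f]$ give $f\simeq g$), and by (r5) the class $[\bot_{A,B}]$ is the least element. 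Monotonicity of composition, pairing and currying is immediate from (r2)--(r4) read as monotonicity statements, and left-strictness descends directly: $[\bot_{B,C}]\circ[f]=[\bot_{B,C}\circ f]=[\bot_{A,C}]$ using left-strictness in $\cal C$.

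The substantive step, and the one I expect to be the main obstacle, is rationality. For $f:A\times B\to B$ I would first show by induction that the quotient approximants coincide with the images of those computed in $\cal C$, i.e. $[f]^{(k)}=[f^{(k)}]$, using well-definedness of composition and pairing at the inductive step and $[\bot_{A,B}]$ at the base. Since $f^{(k)}\sqsubseteq f^{\nabla}$ in $\cal C$, clause (r5) together with monotonicity makes $[f^{\nabla}]$ an upper bound of $([f]^{(k)})$, and more generally, for all $g:C\to A$, $h:B\to D$, the sequence $([h\circ f^{(k)}\circ g])$ is a $\leq_{C,D}$-chain bounded above by $[h\circ f^{\nabla}\circ g]$. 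That this bound is \emph{least} is precisely the content of (r6): if $k'$ dominates every $h\circ f^{(k)}\circ g$ in $\Ip{C,D}$, then $h\circ f^{\nabla}\circ g\Ip{C,D}k'$. Taking $g={\tt id}_A$, $h={\tt id}_B$ yields that $[f^{\nabla}]=[f]^{\nabla}$ is the least upper bound of $([f]^{(k)})$, and the general $g,h$ case gives the required distributivity of composition over the rational lub, completing the proof that ${\cal C}/{\Ip{}}$ is rational. The only real care needed is the bookkeeping identifications $[f]^{(k)}=[f^{(k)}]$ and $[f]^{\nabla}=[f^{\nabla}]$; once these are in place, (r5) and (r6) do all the work, and indeed the whole argument is essentially a clause-by-clause transport of the rational-CCC structure across $Q$.
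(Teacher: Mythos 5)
Your proposal is correct and follows essentially the same route as the paper's own argument: well-definedness of composition, pairing and currying from (r2)--(r4), ppo-enrichment from (r1) and (r5), and rationality by observing that $([h\circ f^{(n)}\circ g])$ is a chain whose least upper bound is $[h\circ f^{\nabla}\circ g]$ by (r5) and (r6), with the special case $g = {\tt id}$, $h = {\tt id}$ giving $[f]^{\nabla} = [f^{\nabla}]$. The only difference is that you make explicit the bookkeeping identification $[f]^{(k)} = [f^{(k)}]$, which the paper leaves implicit.
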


Now we define a family $\Ip{}=\{\Ip{A\Rightarrow B} \mid A,B \in{\tt Obj}(K_{!}(
{\cal G})) \}.$

\begin{lemma}\label{5.2}
$\Ip{}$ is a precongruence on $K_{!}({\cal G})$.
\end{lemma}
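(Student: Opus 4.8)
The plan is to verify each of the six defining conditions (r1)--(r6) for the family $\Ip{} = \{ \Ip{A\Rightarrow B} \}$, reading $\Ip{A,B}$ as the intrinsic preorder $\Ip{A\Rightarrow B}$ on $\KG(A,B) = {\tt Str}(\ofcourse A \linimpl B)$. Most of these are already packaged in the extensionality and congruence results established above, so the verification is largely a matter of assembling them. Condition (r1), that each $\Ip{A\Rightarrow B}$ is a preorder, was already observed immediately after the definition of the intrinsic preorder. Conditions (r2), (r3) and (r4) are respectively parts (i), (ii) and (iii) of the Congruence Lemma, which assert exactly that $\Ip{}$ is compatible with composition, pairing and currying. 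Condition (r5), that the intensional order $\Sx$ is contained in $\Ip{}$, is precisely Lemma~\ref{II}(i).

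The only substantive point is (r6), compatibility of $\Ip{}$ with the rational-fixpoint operator $(-)^{\nabla}$ of $\KG$. Fix $f : A\times B \to B$, $g : C \to A$, $h : B \to D$ and a morphism $k$ with $h\circ f^{(n)}\circ g \Ip{C,D} k$ for every $n\in\omega$; the goal is $h\circ f^{\nabla}\circ g \Ip{C,D} k$. First I would use rationality of $\KG$ (Proposition~\ref{229}) to write $h\circ f^{\nabla}\circ g = \bigsqcup_{n}\, h\circ f^{(n)}\circ g$ as a $\Sx$-least upper bound. The key step is to realise this lub \emph{concretely}: since the approximants $f^{(n)}$ are built from $f$ by the canonical operations (composition and pairing), which are strongly continuous, and post/pre-composition with $h$ and $g$ is likewise strongly continuous, one can choose representatives $\sigma_n \in [\,h\circ f^{(n)}\circ g\,]$ forming a $\subseteq$-increasing (strong) chain whose union $\sigma' = \bigcup_n \sigma_n$ represents $h\circ f^{\nabla}\circ g$. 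Each $\sigma_n \approx h\circ f^{(n)}\circ g$, and since $\approx$ is contained in $\Ip{}$ in both directions (Lemma~\ref{II}(i) together with transitivity), we get $\sigma_n \Ip{} k$ for all $n$. Then Lemma~\ref{II}(ii) yields $\sigma' \Ip{} k$, i.e. $h\circ f^{\nabla}\circ g \Ip{C,D} k$, as required.

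The main obstacle is exactly this coherent choice of representatives in (r6): the relations $\Ip{}$ and $\Sx$ live on partial-equivalence classes, so to apply the $\subseteq$-continuity statement of Lemma~\ref{II}(ii) I must exhibit an honest $\subseteq$-increasing chain of strategies whose union represents the fixpoint, rather than merely a $\Sx$-chain of classes. This is precisely the situation the strong-chain machinery (the strongly-directed-family completeness underlying Lemma~\ref{Strong}) was designed to handle, and the strong continuity of all the categorical operations guarantees that promoting and composing a strong chain again produces a strong chain. Granting this, (r6) follows, and assembling (r1)--(r6) shows that $\Ip{}$ is a precongruence on $\KG$.
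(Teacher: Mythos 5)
Your proposal is correct and follows essentially the same route as the paper: (r1) from the remark after the definition of the intrinsic preorder, (r2)--(r4) from the Congruence Lemma, (r5) from Lemma~\ref{II}(i), and (r6) by realising the approximants of the rational fixpoint as a concrete $\subseteq$-increasing chain of strategies (via $\subseteq$-monotonicity of the concrete operations) and then applying Lemma~\ref{II}(ii). The "coherent choice of representatives" you worry about is handled in the paper exactly as you suggest: since $\Subeq\;\subseteq\;\Ip{}$, one works directly with $\simeq$-classes of strategies, so the concretely defined chain $(\theta\circ\sigma^{(n)}\circ\tau \mid n\in\omega)$ is already $\subseteq$-increasing and no further strong-chain machinery is needed.
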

\begin{proof} The fact that $\Ip{A\rightarrow B}$ is a preorder has
already been noted. (r2)--(r4) are the pre-congruence Lemma 4.1.8.
(r5) is Lemma 4.1.9{\it (i)}. Finally, we verify (r6). Let
$\sigma:A\with B\Rightarrow B, \tau: C\Rightarrow
A,\theta:B\Rightarrow D$. As already explained, since $\Subeq\;
\subseteq\; \Ip{}$, we work directly with $\simeq-$classes of
strategies, rather that $\simeq-$classes of $\approx-$classes of
strategies. Now $(\theta\circ\sigma^{(n)}\circ\tau \mid
n\in\omega)$ is a $\subseteq-$chain (using
$\subseteq-$monotonicity of composition), and we can apply Lemma
4.1.9{\it (ii)} to yeld (r6).
\end{proof}

Now we define $\E=K_{!}({\cal G})/{\Ip{}}$.

\begin{proposition}\label{5.3}
$\E$ is a rational CCC. Moreover, $\E$ is well-pointed in the
order-enriched sense:
$$f\leq_{A,b}g\Leftrightarrow\forall x:1\rightarrow A. \;\; f\circ x\leq_{A,B}
g\circ x\; .$$
\end{proposition}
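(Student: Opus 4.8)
The plan is to treat the two assertions separately: the first is pure assembly of facts already established, while the second reduces, once the right-hand side is read correctly, to a single application of the extensionality lemmas of Section~4.1. For ``$\E$ is a rational CCC'' there is nothing left to compute. By Proposition~\ref{229}, $\KG = K_{\ofcourse}(\G)$ is a rational cartesian closed category; by Lemma~\ref{5.2} the family $\Ip{}$ is a precongruence on it; and so the Rational Quotient Lemma~\ref{5.1} applies verbatim, giving that $\E = \KG/{\Ip{}}$ is again a rational CCC.

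For the order-enriched well-pointedness I would first pin down the meaning of the right-hand comparison. A point $x : 1 \Fr A$ composed with $f : A \Fr B$ yields a point $f \circ x : 1 \Fr B$, and it is these that are compared, in the hom-order $\leq_{1,B}$ on $\E(1,B)$. Since $\ofcourse I = I$, we have ${\tt Str}(1 \Rightarrow B) \cong {\tt Str}(B)$ as intrinsically preordered sets, and under this identification the order on points of $B$ is exactly the intrinsic preorder $\Ip{B}$. Thus the clause ``$f \circ x \leq g \circ x$'' unfolds, straight from the definition of $\Ip{B}$, into the assertion that for every test $\beta : B \Fr \Sierp$ one has $\beta \circ f \circ x \converges \supset \beta \circ g \circ x \converges$.

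The forward implication is then immediate from the Congruence Lemma: instantiating part~(i) with the reflexive instance $x \Ip{} x$ together with the hypothesis $f \Ip{A\Rightarrow B} g$ yields $f \circ x \Ip{} g \circ x$, i.e. $f \circ x \leq g \circ x$, for every point $x$. The backward implication carries the content. Assuming $f \circ x \leq g \circ x$ for all $x : 1 \Fr A$, the unfolding above says precisely that for all points $x$ and all tests $\beta : B \Fr \Sierp$ we have $\beta \circ f \circ x \converges \supset \beta \circ g \circ x \converges$; but this is verbatim the right-hand side of the Intuitionistic Function Extensionality Lemma, whose left-hand side is $f \Ip{A\Rightarrow B} g$, that is, $f \leq_{A,B} g$. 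This closes the equivalence.

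The ``hard part'' here is bookkeeping rather than mathematics: one must check scrupulously that unwinding the pointwise order through $\ofcourse I = I$ produces exactly the quantifier-over-points-and-tests condition appearing as the hypothesis of Intuitionistic Function Extensionality, with no residual coding convention or stray quantifier, and that the composites $\beta \circ (f \circ x)$ and $(\beta \circ f) \circ x$ may be freely identified. All the genuine difficulty — reducing the intrinsic order at an arrow type to behaviour on points, via Linear Function Extensionality, Bang Extensionality and the Bang Lemma — has already been absorbed into that lemma, so the present proposition is essentially its repackaging in the quotient category $\E$.
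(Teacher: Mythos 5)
Your treatment of the first assertion and of the forward implication is exactly the paper's: the Rational Quotient Lemma plus the precongruence property give that $\E$ is a rational CCC, and the Congruence Lemma applied to $x \Ip{} x$ and $f \Ip{A \Rightarrow B} g$ gives $f \circ x \Ip{1 \Rightarrow B} g \circ x$ for every point $x$. The backward implication, however, rests on a claim that is not correct as stated, and it is precisely the point you dismiss as bookkeeping. The hom-order $\leq_{1,B}$ on $\E(1,B)$ is by definition the intrinsic preorder $\Ip{1 \Rightarrow B}$, which (identifying the game $\ofcourse I \linimpl B$ with $B$) quantifies over \emph{linear} tests $\alpha : B \linimpl \Sierp$, composed linearly with the point $f \circ x \in {\tt Str}(B)$; that is indeed what ``straight from the definition of $\Ip{B}$'' yields. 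But the right-hand side of Intuitionistic Function Extensionality quantifies over \emph{intuitionistic} tests $\beta : B \Rightarrow \Sierp$, i.e.\ strategies on $\ofcourse B \linimpl \Sierp$, composed in the co-Kleisli category, so that $\beta \circ (f \circ x) = (f\circ x)^{\dag};\beta$. These two conditions are not verbatim the same, and their equivalence is not a notational matter: passing from ``for all linear tests on $B$'' to ``for all intuitionistic tests on $B$'' at a point is exactly Bang Extensionality (together with $\ofcourse I = I$ and ${\tt der}_I = {\tt id}_I$, whence $!p = p^{\dag}$ for a point $p$), which is one of the substantive lemmas of Section~4.1, not part of its conclusion's surface syntax.

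The gap is local and repairable in either of two ways. First, observe that your unfolded pointwise condition, $\forall x.\; x^{\dag};f \Ip{B} x^{\dag};g$, is precisely the intermediate stage of the paper's proof of Intuitionistic Function Extensionality --- the stage reached after Linear Function Extensionality and the Bang Lemma, before Bang Extensionality is invoked --- and that chain of equivalences already identifies it with $f \Ip{A \Rightarrow B} g$; so you should cite those two steps (or the displayed chain in the proof of Lemma~4.1.7) rather than the statement of the lemma. Alternatively, apply Intuitionistic Function Extensionality a second time, at the type $1 \Rightarrow B$: since $1$ is terminal, the only point $z : 1 \Rightarrow 1$ is the identity, so that instance says $\leq_{1,B}$ is itself characterized by intuitionistic tests $\beta : B \Rightarrow \Sierp$, after which your matching with the right-hand side of the lemma at type $A \Rightarrow B$ becomes literally correct. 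With either repair your argument coincides with the paper's, whose proof of well-pointedness is simply a citation of Intuitionistic Function Extensionality.
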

\begin{proof} $\E$ is a rational CCC by Lemma \ref{5.1} and
\ref{5.2}. It is well-pointed by Intuitionistic Function
Extensionality (Lemma 4.1.7).
\end{proof}

Now we define the PCF model $\M(\E)$ with the same interpretation
of $\Nat$ as in $\M(K_{!}({\cal G}))$. The ground and first-order
constants of PCF are interpreted by the $\simeq-$equivalence
classes of their interpretations in $\M(K_{!}({\cal G}))$.

\begin{proposition}\label{5.4}
$\M(\E)$ is an order-extensional standard model of PCF.
\end{proposition}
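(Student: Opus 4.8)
The plan is to verify the two requirements separately: that $\M(\E)$ is a \emph{standard} model, and that it is \emph{order-extensional}. First I would observe that $\M(\E)$ is a legitimate PCF model at all: by Lemmas~\ref{5.1} and~\ref{5.2}, $\E = \KG/{\Ip{}}$ is a rational cartesian closed category, so the construction $\M(-)$ of Section~2.10 applies once a ground object and constants are fixed --- here $\Nat$ together with the $\simeq$-classes of the constants of $\M(\KG)$, exactly as stipulated before the statement. Thus the only genuine content is standardness and order-extensionality.

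For standardness I would compute $\E({\bf 1},\Nat)$. Since $!I = I$, this hom-poset is ${\tt Str}(\Nat)/{\simeq}$, and its $\approx$-classes are precisely the points $\hat{\Nat} \cong {\Bbb N}_{\bot}$, namely the bottom class $[\{\epsilon\}]$ together with the numerals $\overline{n}$; because $\approx\;\subseteq\;\simeq$ by Lemma~\ref{II}(i), it remains only to check that $\simeq$ restricts to the flat order on these. That $[\{\epsilon\}]$ is least is immediate from left-strictness of composition, since it converges under no test $\alpha:\Nat\to\Siepinski$. For incomparability of distinct numerals I would exhibit, for each $n$, a test $\alpha_n:\Nat\to\Siepinski$ that interrogates $\Nat$ and answers in $\Siepinski$ precisely when the reply is $\underline{n}$; then $\overline{m};\alpha_n\converges$ iff $m=n$, so $\overline{m}\Ip{\Nat}\overline{n}$ forces $m=n$. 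Hence $\E({\bf 1},\Nat)\cong{\Bbb N}_{\bot}$. Standardness of the arithmetic constants is then inherited: the quotient functor $\KG\to\E$ preserves the cartesian closed structure, and hence application, so each first-order constant applied to numerals computes the same function in $\M(\E)$ as in $\M(\KG)$, which is standard by the final Proposition of Section~2.10.

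Order-extensionality is essentially already in hand: the requirement that the order at each type be the pointwise one is exactly the order-enriched well-pointedness recorded in Proposition~\ref{5.3}, which rests on Intuitionistic Function Extensionality. I would note that well-pointedness at a function type, combined with the flat ground order just established, propagates by induction on PCF types to give the pointwise order everywhere, which is precisely Milner's notion of order-extensionality. The only genuinely model-specific step --- and the place where real verification is needed --- is the ground-type computation above: that the intrinsic quotient neither identifies distinct numerals nor introduces junk. This reduces to having enough separating tests $\alpha_n$, which is trivial at ground type; all the higher-type work has already been discharged in assembling Proposition~\ref{5.3}, so the proof is short once that computation is in place.
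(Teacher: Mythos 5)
Your proposal is correct and takes essentially the same route as the paper: order-extensionality is read off from the order-enriched well-pointedness of Proposition~\ref{5.3}, and standardness comes down to the fact that the intrinsic preorder at ground type coincides with the flat intensional one ($\Ip{\Nat}=\Sx_{\Nat}$), which the paper simply asserts and you verify explicitly via the separating tests $\alpha_n$. One local slip worth fixing: your justification that $[\{\epsilon\}]$ is $\Ip{\Nat}$-least (``it converges under no test, by left-strictness of composition'') is false as stated, since the test $\{\epsilon,qa\}:\Nat\rightarrow\Sierp$ that answers the Sierpinski question immediately does make $\{\epsilon\}$ converge, and left-strictness concerns composition with $\bot$ on the other side. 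The conclusion is nevertheless immediate from Lemma~\ref{II}(i): $\{\epsilon\}$ is $\Subeq$-least, and $\Subeq\;\subseteq\;\Ip{\Nat}$.
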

\begin{proof} $\M(\E)$ is an order-extensional model of PCF by
Proposition 4.2.3. It is standard because $\M(K_{!}({\cal G}))$
is, and $\Ip{\Nat}=\Sx_{\Nat}$ .
\end{proof}

\subsection{An alternative view of $\E$}
We now briefly sketch another way of looking
at $\E$, which brings out its extensional character more clearly,
although technically it is no more than a presentational variant of the
above description.
Given a game $A$, define
$$\D(A)=(\{ [x]_{\simeq}\mid x\in{\tt Str}(A)\},\leq_A)$$
Then $\D(A)$ is a pointed poset. Given $\sigma:A\Rightarrow B$, define
$\D(\sigma):\D(A)\rightarrow \D(B)$ as the (monotone) function defined by:
$$\D(\sigma)([x])=[\sigma\circ x]$$
Write $f:A\rightarrow_{\E} B$ if $f:\D(A)\rightarrow \D(B)$ is a monotone
function such that $f=\D(\sigma)$ for some $\sigma: A\Rightarrow B$. In this
case we say that $f$ is {\em sequentially realised} by $\sigma$, and write
$\sigma\Vdash f$.

Note that there are order-isomorphisms

\begin{itemize}
\item $\D(I)\cong 1$
\item $\D(A\with B)\cong \D(A)\times\D(B)$
\item $\D(A\Rightarrow B)\cong \D(A)\Rightarrow_{\E}\D(B)$
\end{itemize}

Here $\D(A)\times\D(B)$ is the cartesian product of the posets $\D(A),\D(B)$,
with the pointwise order; while $\D(A)\Rightarrow_{\E} \D(B)$ is the set of
all functions $f:A\rightarrow_{\E} B$, again with the pointwise order.

Now note that, with respect to the representations of $\D(A\with B)$ as a
cartesian product and $\D(A\Rightarrow B)$ as a ``function space'', the
interpretations of composition, pairing and projections, and currying and
application in $\E$ are the usual set-theoretic operations on functions
{\em in extenso}. That is,
\[
\begin{array}{lcl}
\D(\tau\circ\sigma) & = & \D(\tau)\circ\D(\sigma) \\
\D(\lang \tau,\sigma\rang ) & = & \lang \D(\sigma),\D(\tau)\rang  \\
\D(\pi_1)  & = & \pi_1 \\
\D(\pi_2)  & = & \pi_2 \\
\D(\Lambda(\sigma))   & =& \Lambda(\D(\sigma)) \\
\D({\tt Ap})   & =& {\tt Ap}
\end{array} \]
\noindent where the operations on the right hand sides are defined as
in the category
of sets (or any concrete category of domains).

Thus an equivalent definition of $\E$ is as follows:
\[
\begin{array}{ll}
{\tt Objects} & \mbox{ as in } K_{!}(\G) \\
{\tt Arrows } & f : A\rightarrow_{\E} B\\
{\tt Composition } & \mbox{ function composition}
\end{array}\]

The r\^ole of the intensional structure, that is of the use of the game $A$
to represent the abstract space $\D(A)$, is to cut down the function spaces
to the sequentially realisable functions. Specifically, note the use of $A$
and $B$ in the definition of $\D(A)\Rightarrow_{\E} \D(B)$.

\subsection{Full Abstraction}

We recall that a model $\M$ is fully abstract for a language $\cal L$ if,
for all types $T$ and closed terms $M,N:T$
$$ \M\llbracket M\rrbracket \sqsubseteq \M\llbracket N\rrbracket
\Leftrightarrow M\sqsubseteq_{\tt obs} N \;\;
(\dag)$$
where
\[
\begin{array}{lcl}
 M\sqsubseteq_{\tt obs} N &\Leftrightarrow & \forall \mbox{  program
   context } C[.] \\
& & C[M]\Converges n \supset C[N]\Converges n
\end{array}
\]
Here a program context $C[.]$ is one for which $C[P]$ is a closed term of type
$N$  for any closed term $P:T$; and $\Converges$ is the operational
convergence relation. The left---to---right implication in $(\dag)$ is known as
{\em soundness} and the converse as {\em completeness}.
It is standard that soundness is a consequence of computational
adequacy \cite{CurienPL:catcsa};
thus by Proposition 2.10.1, standard models are sound. Also,
full abstraction for closed terms is easily seen to imply the corresponding
statement $(\dag)$ for open terms.

\begin{theorem}\label{1}
$\M(\E)$ is fully abstract for PCF.
\end{theorem}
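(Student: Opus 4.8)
The plan is to split the biconditional $(\dag)$ into soundness and completeness and to note that soundness is already in hand: by Proposition~\ref{5.4} $\M(\E)$ is a standard model, so by the Computational Adequacy Theorem it is adequate, and adequacy yields soundness as remarked above. It therefore remains to prove completeness, $M \sqsubseteq_{\tt obs} N \THEN \llbracket M \rrbracket \Sx \llbracket N \rrbracket$, which I would establish in contrapositive form. Fix representative strategies $\sigma_M, \sigma_N$ with $\llbracket M \rrbracket = [\sigma_M]$ and $\llbracket N \rrbracket = [\sigma_N]$ in $\M(\E)$. Since the order on $\E$ is the intrinsic preorder, $\llbracket M \rrbracket \not\Sx \llbracket N \rrbracket$ means $\sigma_M \not\Ip{T} \sigma_N$, so by the definition of the intrinsic preorder there is a test $\alpha : T \Fr \Sigma$ with $\alpha \circ \sigma_M \converges$ and $\alpha \circ \sigma_N \diverges$. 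The goal is to manufacture from $\alpha$ an honest PCF program context that separates $M$ and $N$ operationally.

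First I would route the test through the ground type. Let $c : \Sigma \Fr N$ be the evident strict strategy that interrogates its $\Sigma$-argument and answers the output question by $\overline{0}$ exactly when it receives the $\Sigma$-answer, so that $c$ sends the point $\top$ to $\overline{0}$ and the point $[\{\epsilon\}]$ to $[\{\epsilon\}]$. The composite $c \circ \alpha : T \Fr N$ now lives at a genuine PCF type-in-context and satisfies $(c \circ \alpha)\circ \sigma_M = \overline{0}$ while $(c \circ \alpha)\circ \sigma_N = [\{\epsilon\}]$. Next I would replace $c \circ \alpha$ by a compact approximant: by the corollary to the Isomorphism Theorem the carrier at $T \Rightarrow N$ is an algebraic cpo, and composition is $\subseteq$-continuous (Propositions~2.9.3--2.9.4), so $(c \circ \alpha)\circ \sigma_M$ is the directed join of the $\beta \circ \sigma_M$ over compact $\beta \Sx c \circ \alpha$. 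As $\overline{0}$ is compact in $\hat{N}$, some compact $\beta$ already gives $\beta \circ \sigma_M = \overline{0}$, and since $\beta \Sx c \circ \alpha$ forces $\beta \circ \sigma_N \Sx (c \circ \alpha)\circ \sigma_N = [\{\epsilon\}]$, the strategy $\beta \circ \sigma_N$ is divergent as well.

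Finally I would invoke definability and adequacy. Being a compact strategy at the PCF type $T \Rightarrow N$, $\beta$ is denoted by a $\pcfc$ term by Intensional Full Abstraction, and that term is observationally equivalent to a PCF term $B$, so the application $B[-]$ is a PCF program context. Computing in the adequate standard model, $\llbracket B\,M \rrbracket = [\beta \circ \sigma_M] = [\overline{0}]$ and $\llbracket B\,N \rrbracket = [\{\epsilon\}]$, whence by the Computational Adequacy Theorem $B\,M \Converges 0$ while $B\,N$ diverges; thus $C[-] = B[-]$ witnesses $M \not\sqsubseteq_{\tt obs} N$, completing the contrapositive. The main obstacle is precisely this passage from the abstract $\Sigma$-valued tests that define the intrinsic preorder to concrete ground-type contexts: one must check that composing with the strict map $\Sigma \Fr N$ lands the test at a bona fide PCF type, so that Intensional Full Abstraction applies, and that continuity together with algebraicity of $\Intmod$ legitimately reduces an arbitrary separating test to a compact, hence definable, one. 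Everything after that is bookkeeping, since adequacy performs the final translation from equality of denotations to operational (non)convergence.
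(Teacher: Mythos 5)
Your overall route is sound and genuinely different from the paper's. The paper never needs to define a test: it applies Intuitionistic Function Extensionality (Lemma 4.1.7) to convert the failure $\llbracket M\rrbracket \nleq \llbracket N\rrbracket$ into points $x_1,\dots,x_k$ of the argument types together with a trivial ground-type discrimination ($\sigma\circ x_1\circ\dots\circ x_k = n$ while $\tau\circ x_1\circ\dots\circ x_k\neq n$), approximates the \emph{arguments} by the compact projections $p_m(x_i)$, and defines those by Lemma 3.6.3, obtaining an applicative separating context $[-]\,P_1\dots P_k$. You instead keep the raw test and make \emph{it} definable, obtaining a separating context $B\,[-]$. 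Both can be made to work with the paper's machinery; one thing the paper's version buys that yours does not is the Context Lemma, which falls out precisely because its separating contexts are applicative.

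However, there is a genuine gap at exactly the point you call the main obstacle, and flagging it is not the same as closing it. The test supplied by the definition of the intrinsic preorder is a morphism of the linear category $\G$, i.e.\ a strategy on $\llbracket T\rrbracket\linimpl\Siep$, so your composite $c\circ\alpha$ is a strategy on $\llbracket T\rrbracket\linimpl N$. This is \emph{not} the denotation of the PCF type $T\Rightarrow N$, which is $\ofcourse\llbracket T\rrbracket\linimpl N$; so neither Intensional Full Abstraction nor the approximation machinery ($p_k$, $\eta_k$, $\varsigma$), all of which live at PCF types, applies to it as written. The repair is short but must be made explicit: pass to $\theta = {\tt der}_{\llbracket T\rrbracket};\alpha;c \, : \, \ofcourse\llbracket T\rrbracket\linimpl N$. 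For any point $x$ of $\llbracket T\rrbracket$, the application of $\theta$ to $x$ in the cartesian closed category is $x^{\dagger};{\tt der};\alpha;c \Deq x;\alpha;c$ by the comonad law (m3), so $\theta$ still sends $\sigma_M$ to $\overline{0}$ and $\sigma_N$ to $\bot$, and it now lives at a PCF type.

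A second soft spot: your compactness step assumes that composition preserves directed joins in the carrier order of the intensional model (``$(c\circ\alpha)\circ\sigma_M$ is the directed join of the $\beta\circ\sigma_M$ over compact $\beta\Sx c\circ\alpha$''). The paper proves only $\subseteq$-continuity and $\Subeq$-monotonicity of composition (Proposition 2.9.3); it does not prove continuity with respect to abstract directed suprema in $(\hat{A},\Sx_A)$ --- indeed it cannot even show these posets are cpos in general, and algebraicity at PCF types arrives only a posteriori via the Isomorphism Theorem. The safe form of your step is the paper's own: take the canonical approximants $p_m(\theta)$, which form a $\subseteq$-chain with $\bigcup_{m} p_m(\theta)\Deq\theta$ (Approximation Lemma 3.5.1, parts (vi) and (vii)); by $\subseteq$-continuity some $p_m(\theta)\circ\sigma_M = \overline{0}$, while $p_m(\theta)\Subeq\theta$ and $\Subeq$-monotonicity force $p_m(\theta)\circ\sigma_N \Subeq \bot$, hence $p_m(\theta)\circ\sigma_N = \bot$. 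With these two repairs the remainder of your argument is correct, with one small reordering: adequacy should be applied to the $\pcfc$ application (the term denoting $p_m(\theta)$ applied to $M$, resp.\ $N$) \emph{before} trading that term for the PCF term $B$ of Lemma 3.1.1, since $B$ is only observationally, not denotationally, equal to it.
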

\begin{proof} Firstly, $\M(\E)$ is a standard model by
Proposition~\ref{5.4}, and hence sound. We shall prove the
contrapositive of completeness. Suppose $M,N$ are closed terms of
PCF of type $T=T_1\Rightarrow\dots T_k\Rightarrow \Nat$ and
$$\M(\E)\llbracket M\rrbracket \nleq_{\llbracket T\rrbracket}
\M(\E)\llbracket N\rrbracket.$$ Let $[\sigma]=\M(\E)\llbracket
M\rrbracket, [\tau]= \M(\E)\llbracket N \rrbracket$. By
Intuitionistic Function Extensionality, for some $x_1\in{\tt
Str}(\llbracket T_1\rrbracket),\dots,x_k\in{\tt Str}(\llbracket
T_k\rrbracket),$ $$ \beta:!N\rightarrow\Siep,
\beta\circ\sigma\circ x_1\circ\dots\circ x_k \converges\mbox{ and
}\beta\circ\tau\circ x_1\circ\dots\circ x_k \diverges.$$ By
$\Subeq-$monotonicity of composition, this implies that
$\sigma\circ x_1\circ\dots\circ x_k\not\Subeq_{\Nat}\tau\circ
x_1\circ\dots \circ x_k$, and hence that $\sigma\circ
x_1\circ\dots\circ x_k=n$ for some $n\in\omega$, and $\tau\circ
x_1\circ\dots\circ x_k\neq n$. By $\subseteq-$continuity of
composition and the properties of the projections $p_k$ given in
the Approximation Lemma 3.5.1, for some $m\in\omega$, $\sigma\circ
p_m(x_1)\circ\dots\circ p_m(x_k)=n$, while by
$\subseteq-$monotonicity of composition, $\tau\circ
p_m(x_1)\circ\dots\circ p_m(x_k)\neq n$. By Lemma 3.6.3, there are
finite evaluation trees,and hence PCFc terms $P_1,\dots,P_k$ such
that $\llbracket P_i\rrbracket =[p_m(x_i)]$, $1\leq i\leq k$. This
means that $\llbracket M P_1\dots P_k\rrbracket = n$, while
$\llbracket N P_1\dots P_k\rrbracket \neq n$. By computational
adequacy, this implies that $M P_1\dots P_k\Converges n$ and
$\neg(N P_1\dots P_k\Converges n)$. By Lemma 3.1.1, each PCFc term
is observationally congruent to a PCF term. Hence there is a PCF
context $C[.]=[.]Q_1\dots Q_k$, where $Q_i\cong_{\tt obs} P_i$,
$1\leq i\leq k$, such that $C[M]\Converges n$ and
$\neg(C[N]\Converges n)$. This implies that $M\not\sqsubseteq_{\tt
obs} N$, as required.
\end{proof}

As an instructive consequence of this proof, we have:

\begin{corollary}[Context Lemma]
For all closed $M,N:T_1\Rightarrow \dots T_k\Rightarrow \Nat$,
\[
\begin{array}{lcl}
M\sqsubseteq_{\tt obs} N & \Leftrightarrow & \forall \mbox{ closed }
P_1:T_1,\dots,P_k:T_k \\
& & MP_1\dots P_k\Converges n\supset NP_1\dots P_k\Converges n
\end{array}
\]
\end{corollary}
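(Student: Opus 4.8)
The forward implication is immediate and requires no new work: for any closed $P_1 : T_1, \ldots, P_k : T_k$ the context $C[\cdot] = [\cdot]\,P_1 \cdots P_k$ is a program context, so the hypothesis $M \sqsubseteq_{\tt obs} N$ yields $MP_1\cdots P_k \Converges n \supset NP_1\cdots P_k \Converges n$ directly from the definition of $\sqsubseteq_{\tt obs}$, for every $n$. The content of the lemma is therefore the converse, and the plan is to prove its contrapositive: if $M \not\sqsubseteq_{\tt obs} N$ then some closed \emph{application} already separates $M$ and $N$. I would route this through full abstraction. Since $\M(\E)$ is fully abstract (Theorem~\ref{1}), $M \not\sqsubseteq_{\tt obs} N$ forces $\M(\E)\llbracket M\rrbracket \nleq \M(\E)\llbracket N\rrbracket$, and from this point the argument is precisely the construction already carried out inside the proof of Theorem~\ref{1}, which I would simply replay.

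Concretely, writing $[\sigma] = \M(\E)\llbracket M\rrbracket$, $[\tau] = \M(\E)\llbracket N\rrbracket$ and $T = T_1\Rightarrow\cdots\Rightarrow T_k\Rightarrow\Nat$, Intuitionistic Function Extensionality (Lemma 4.1.7) supplies strategy-points $x_1,\dots,x_k$ and a test $\beta : \Nat \to \Siep$ with $\beta\circ\sigma\circ x_1\circ\cdots\circ x_k\converges$ while $\beta\circ\tau\circ x_1\circ\cdots\circ x_k\diverges$. By $\Subeq$-monotonicity of composition the two composites are $\Subeq_{\Nat}$-incomparable points of $\hat{\Nat} \cong {\Bbb N}_{\bot}$, so there is a numeral $n$ with $\sigma\circ x_1\circ\cdots\circ x_k = n$ and $\tau\circ x_1\circ\cdots\circ x_k \neq n$. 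The remaining step converts the semantic points into syntax: using $\subseteq$-continuity of composition and the Approximation Lemma 3.5.1 I would replace each $x_i$ by a compact approximant $p_m(x_i)$ with $m$ large enough that $\sigma\circ p_m(x_1)\circ\cdots\circ p_m(x_k) = n$ is retained (while $\tau\circ p_m(x_1)\circ\cdots\circ p_m(x_k) \neq n$ by $\subseteq$-monotonicity). By definability (Lemma 3.6.3, via the Isomorphism Theorem) there are \pcfc\ terms $P_i$ with $\llbracket P_i\rrbracket = [p_m(x_i)]$, and by Lemma 3.1.1 each $P_i$ is observationally congruent to a genuine PCF term $Q_i$. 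Computational adequacy then gives $M Q_1\cdots Q_k \Converges n$ and $\neg(N Q_1\cdots Q_k \Converges n)$, which is exactly the failure of the right-hand side.

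I do not expect a genuine obstacle here, since all the machinery has already been assembled for Theorem~\ref{1}; the whole point of the corollary is the observation that the separating context produced in that proof is always of the restricted applicative shape $[\cdot]\,Q_1\cdots Q_k$, so completeness is witnessed by applications alone. The one place deserving care is the ground-type extraction of the numeral $n$: one must check that a Sierpinski test $\beta : \Nat \to \Siep$ can separate two points of ${\Bbb N}_{\bot}$ only by detecting a specific numeral, so that convergence of $\beta$ on the $\sigma$-composite together with its divergence on the $\tau$-composite really does isolate a single $n$. This rests on strictness of $\beta$ and the flatness of $\hat{\Nat}$, and is already implicit in the proof of Theorem~\ref{1}, so it carries over verbatim.
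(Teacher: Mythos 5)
Your proposal is correct and follows essentially the same route as the paper: the forward direction is immediate since applicative contexts are program contexts, and the converse is obtained by taking the contrapositive, using soundness to pass from $M\not\sqsubseteq_{\tt obs} N$ to $\llbracket M\rrbracket \nleq \llbracket N\rrbracket$, and then replaying the completeness argument of the Full Abstraction Theorem, whose separating context is precisely of the applicative shape $[.]\,Q_1\dots Q_k$. The paper's proof simply cites that argument rather than re-deriving it, but the content is identical to what you wrote.
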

\begin{proof} The left-to-right implication is obvious, by
considering applicative contexts $[.] P_1\dots P_k$. The converse
follows from the proof of the Full Abstraction Theorem, since if
$M\not\sqsubseteq_{\tt obs} N$, then $\llbracket M\rrbracket \nleq
\llbracket N\rrbracket$ by soundness, and then by the argument for
completeness this can be translated back into an applicative
context separating $M$ and $N$.
\end{proof}

The point of reproving this well-known result is that a semantic proof
falls out of the Full Abstraction Theorem. By contrast, Milner had to prove
the Context Lemma directly, as a necessary preliminary to his syntactic
construction of the fully abstract model. Moreover, the direct syntactic
proof, particularly for the $\lambda-$calculus formulation of PCF
\cite{CurienPL:catcsa}, is quite subtle. This gives some immediate
evidence of substance in our ``semantic analysis''.

\section{Universality}

The definability result we have achieved so far refers only to compact
strategies. Our aim in this section is to characterize precisely which
strategies are (extensionally) definable in PCF, and in fact to
construct a fully abstract model in which {\em all} strategies are
definable.

\subsection{ Recursive Strategies}

We shall develop effective versions of $\G$ and $\E$. Our treatment will be
very sketchy, as the details are lengthy and tedious, but quite routine.
We refer to standard texts such as \cite{SoareRI:recesd} for background.

We say that a game $A$ is {\em effectively given} if there is a surjective map
$e_A:\omega \rightarrow M_A$ with respect to which $\lambda_A$ (with some
coding of $\{P,O,Q,A\}$) and the characteristic functions of $P_A$ and
$\approx_A$ (with some coding of finite sequences) are tracked by recursive
functions. A strategy $\sigma$ on $A$ is then said to be {\em recursive}
if $\sigma$ is a recursively enumerable subset of $P_A$
(strictly speaking, if the set of codes of positions in $\sigma$ is r.e.).

\begin{lemma}\label{uni.1}
$\sigma=\sigma_f$ is recursive iff $f$ is tracked by a partial recursive
function. There are recursive functions taking an index for $\sigma$ to
one for $f$, and vice versa.
\end{lemma}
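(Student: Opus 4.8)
The plan is to read the biconditional as an effective interchangeability of the two presentations of a history-free strategy: as an r.e.\ set of positions, and as the partial function on moves inducing it. Throughout I would fix the standard enumerations $(\phi_e)$ of partial recursive functions and $(W_e)$ of r.e.\ sets, and transport everything along the coding $e_A : \omega \rightarrow M_A$ supplied by the hypothesis that $A$ is effectively given; ``$f : M_A^O \Pfr M_A^P$ is tracked by a partial recursive function'' then means there is a partial recursive $\hat{f}$ with $e_A(\hat{f}(n)) = f(e_A(n))$ whenever the right-hand side is defined, and $\hat{f}(n){\uparrow}$ whenever $f(e_A(n)){\uparrow}$, for all codes $n$ of $O$-moves. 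I would prove the two implications by exhibiting explicit procedures and then read off uniformity in the indices from the $s$-$m$-$n$ theorem. For the nontrivial direction I would take $f$ to be the canonical least inducing function $\B{fun}(\sigma)$, using the identities $\B{fun}(\sigma_f)\Incl f$ and $\sigma_{\B{fun}(\tau)}=\tau$ recorded before the statement to see that this loses nothing.

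For the direction ``$f$ recursive $\THEN \sigma_f$ recursive'', I would enumerate $\B{traces}(f)$ directly from its inductive definition. Starting from $\epsilon$, and given a position $s$ already enumerated, I would dovetail over all codes $n$ of candidate next $O$-moves $a = e_A(n)$: test $sa \in P_A$ (decidable, since the characteristic function of $P_A$ is recursive by hypothesis) and run $\hat{f}(n)$; if and when it halts with value $b$, emit $sab$. Dovetailing is needed precisely because the domain of $f$ is only semi-decidable. This yields an r.e.\ listing of $\sigma_f$, and since $\B{traces}(f)\Incl\plays{A}$ always holds (the first Proposition of Section~2.2), no legality checks beyond membership in $P_A$ are required.

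For the converse, ``$\sigma$ recursive $\THEN \B{fun}(\sigma)$ recursive'', I would compute $\B{fun}(\sigma)$ as follows. On input a code $n$ with $a = e_A(n) \in M_A^O$, search the given r.e.\ enumeration of (codes of) positions of $\sigma$ for a position of the form $sab$ whose penultimate move is $a$, and output a code for $b$. This halts exactly when $a \in \B{dom}(\B{fun}(\sigma))$, and it is single-valued because history-freeness of $\sigma$ (i.e.\ $sab, tac \in \sigma \THEN b = c$) guarantees that the value $b$ is independent of which witnessing position is found first. Hence $\B{fun}(\sigma)$ is tracked by a partial recursive function, and $\sigma_{\B{fun}(\sigma)} = \sigma$ closes the loop.

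Finally, for the index-translations I would note that both procedures are uniform: the enumerator of $\B{traces}(f)$ uses an index for $\hat{f}$ only as a subroutine, so by the $s$-$m$-$n$ theorem there is a total recursive function sending an index for $f$ to an index for the r.e.\ set $\sigma_f$; symmetrically, the search procedure uses an index for $\sigma$ uniformly, giving a total recursive function from an index for $\sigma$ to an index for $\B{fun}(\sigma)$. I expect the only delicate points to be bookkeeping rather than mathematics: organising the dovetailing so that the resulting listing is genuinely recursive, and phrasing ``tracking'' correctly given that $e_A$ is merely surjective, so that a move has many codes and $\hat{f}$ need only respect $e_A$ rather than literally be a function of moves. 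Both are routine, which is why the development can fairly be flagged as ``lengthy and tedious, but quite routine''.
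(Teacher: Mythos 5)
Your proposal is correct and takes essentially the same route as the paper: the paper extracts $f$ from $\sigma$ by observing that the graph $f(a)\simeq b \Leftrightarrow \exists s.\, sab\in\sigma$ is r.e.\ in $\sigma$ (your search procedure, with single-valuedness implicitly from history-freeness), recovers $\sigma$ from $f$ via an r.e.\ inductive predicate $G(s,n)$ that is just your dovetailed stage-by-stage enumeration of ${\tt traces}(f)$, and obtains the index translations by noting that both constructions are effective syntactic transformations (i.e.\ \textsf{S-m-n}). Your additional care about coding via $e_A$ and about single-valuedness only makes explicit what the paper leaves implicit.
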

\begin{proof} The predicate $f(a)\simeq b\Leftrightarrow \exists s.
sab\in\sigma$ is clearly r.e. in $\sigma$, hence $f$ has an r.e.
graph and is partial recursive

Conversely, given $f$ define  a predicate $G(s,n)$ by:
\[ \begin{array}{lcl}
G(s,0) & = & s=\epsilon ,\\
G(s,n+1) &=& \exists a,b,t. \; s=tab\wedge s\in P_A\wedge G(t,n)\wedge
f(a)\simeq b.
\end{array} \]

Clearly $G$ is r.e. and hence so is $$\sigma={\tt graph}(f)=\{ s
\mid \exists n. G(s,n)\}. $$ These constructions are defined via
simple syntactic transformations and yield effective operations on
indices. \end{proof}

If $A$ and $B$ are effectively given, one can verify that the effective
structure lifts to $A\tensor B$, $A\linimpl B, A\with B$ and $!A$. Also,
$I$ and $\Nat$ are evidently effectively given. The most interesting point
which arises in verifying these assertions is that $\approx_{!A}$ is
recursive. This requires the observation that, in checking $s\approx_{!A} t$,
it suffices to consider permutations $\pi\in S(\omega)$ of {\em bounded}
(finite) support, where the bound is easily computed from $s$ and $t$.

Similarly, one can check that all operations on strategies defined in
Section 2 effectivize. For example, it is easily seen that the definition of
$\sigma;\tau$ in terms of sets of positions is r.e. in $\sigma$ and $\tau$;
or, we can give an algorithm for computing $EX(f,g)$. This algorithm
simply consists of applying $f$ and $g$ alternately starting from
whichever is applicable to the input, until an ``externally visible'' output
appears. Note that it is {\sl not} the case in general that unions of
$\subseteq-$chains of recursive strategies are recursive. For example every
strategy of type $N\linimpl N$ is a union of an increasing chain of finite
and hence recursive strategies. However, given a recursive
$\sigma: A\with B\Rightarrow B$, $\sigma^{\nabla}=\bigcup_{n\in\omega}
\sigma^{(n)}$ is recursive, since it can be enumerated uniformly effectively
in $n$ (``r.e. unions of r.e. sets are r.e.'').

Thus we can define a category $\G_{\tt rec}$ with objects effectively  given
games, and morphisms (partial equivalence classes of ) recursive strategies.
Also, the interpretations of PCF constants in $\M(K_{!}(\G))$ are clearly
recursive strategies.

\begin{proposition}
\begin{itemize}
\item[(i)] $\G_{\tt rec}$ is a Linear category
\item[(ii)] $K_{!}(\G_{\tt rec})$ is a rational cartesian closed category
\item[(iii)] $\M(K_{!}(\G_{\tt rec }))$ is a standard model of PCF
\end{itemize}
\end{proposition}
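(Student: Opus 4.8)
The plan is to replay the constructions of Section~2 in the effective setting, verifying at each stage that the structure restricts to recursive strategies. Most of the enabling facts have already been assembled in the preceding discussion: $I$ and $\Nat$ are effectively given; the effective structure lifts to $A \tensor B$, $A \linimpl B$, $A \with B$ and $\ofcourse A$ (the one delicate point being that $\Deq_{\ofcourse A}$ is recursive, which rests on the observation that it suffices to test permutations of bounded support); and all of the concrete operations on strategies from Section~2 are effective. By Lemma~\ref{uni.1} we may pass freely between a recursive strategy $\sigma$ and its tracking partial recursive function $f$, so either presentation is available throughout.

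For \emph{(i)}, the point is that $\G_{\tt rec}$ is simply the subcategory of $\G$ cut down to effectively given games and to partial equivalence classes of recursive strategies. Consequently every equational axiom required of a Linear category --- the monoidal coherence laws, the autonomous adjunction ${\tt App}/\Lambda$, the comonad laws $({\bf m1})$--$({\bf m3})$, and the comonoid laws for ${\tt con}$ and ${\tt weak}$ --- holds automatically, being inherited from $\G$. What remains to check is only that each structural morphism is represented by a recursive strategy, and that recursive strategies are closed under the operations. The former holds because the copy-cat identities ${\tt id}_A$, the associativity, commutativity and unit isomorphisms, ${\tt App}$, the projections ${\tt fst}, {\tt snd}$, and ${\tt der}$, ${\tt con}$, ${\tt weak}$ are all induced by recursive partial functions on moves (the underlying reindexing bijections being recursive since the enumerations $e_A$ are), hence are recursive by Lemma~\ref{uni.1}. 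The latter holds since composition is r.e.\ in its arguments --- one computes ${\tt EX}(f,g)$ by running $f$ and $g$ alternately until an externally visible move appears --- and tensor, currying and promotion are given by effective syntactic transformations of the tracking functions.

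For \emph{(ii)}, the co-Kleisli construction goes through verbatim, so $K_{!}(\G_{\tt rec})$ is cartesian closed, all the relevant isomorphisms being recursive as in \emph{(i)}. For rationality, each hom-set has least element $[\{\epsilon\}]$, which is recursive, and monotonicity together with left-strictness of composition, pairing and currying are inherited from $\KG$. The one genuinely new verification is the existence and recursiveness of $f^{\rat}$: given recursive $f : A \with B \Fr B$, the approximants $f^{(k)}$ are uniformly recursive, so $f^{\rat} = \bigcup_{k} f^{(k)}$ is recursive as an r.e.\ union of r.e.\ sets, enumerable uniformly in $k$. The defining equation for $f^{\rat}$ then holds in $K_{!}(\G_{\tt rec})$ because it already holds in $\KG$ and the suprema are computed in the same way. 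For \emph{(iii)}, by \emph{(ii)} and the recipe of Section~2.10 a PCF model is determined once the ground type and constants are fixed: we take $N \mapsto \Nat$, interpret the numerals $\overline{n}$, the first-order arithmetic constants $\sigma^{f}$ and the conditional $\kappa$ by their strategies from Section~2.10 (each finite, hence recursive), and obtain ${\bf Y}$ from the rational structure of \emph{(ii)}. The model is standard because every strategy on $\Nat$ is finite and therefore recursive, so $\hat{\Nat}$ is unchanged and still satisfies $K_{!}(\G_{\tt rec})({\bf 1},N) \cong {\Bbb N}_{\bot}$, while the constant interpretations coincide with the standard ones, being the very strategies used in $\M(\KG)$. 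The main obstacle is concentrated in the two effectivity checks flagged above --- recursiveness of $\Deq_{\ofcourse A}$ via bounded support, and recursiveness of $f^{\rat}$ via the r.e.-union argument; everything else is either inherited from $\G$ and $\KG$ or a routine confirmation that a syntactically specified operation on tracking functions is effective.
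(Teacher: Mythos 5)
Your proposal takes essentially the same route as the paper; indeed, the paper states this proposition with no separate proof at all, relying on precisely the preceding discussion that you reconstruct: effectivization of the connectives (with the bounded-support argument for $\Deq_{\ofcourse A}$), effectivization of the operations on strategies via Lemma~\ref{uni.1} and the ${\tt EX}(f,g)$ algorithm, the observation that $\sigma^{\rat} = \bigcup_{k}\sigma^{(k)}$ is recursive as a uniformly effective union of r.e.\ sets (the one point where care is needed, since general $\subseteq$-chains of recursive strategies need not have recursive unions), and recursiveness of the interpretations of the PCF constants.

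One small correction: your parenthetical ``each finite, hence recursive'' is wrong for the first-order constants. The strategies $\sigma^{f}$ for the arithmetic functions and the conditional strategy $\kappa$ are \emph{infinite} sets of positions, since they contain a complete play for every numeral that may be returned by an argument; only the numerals $\overline{n}$ (and, as you correctly note elsewhere, the points of $\Nat$) are finite. The conclusion stands, and for the reason already available in your own setup: these strategies are induced by evidently partial recursive functions on moves, hence are recursive by Lemma~\ref{uni.1}. This is exactly how the paper disposes of the point (``the interpretations of PCF constants in $\M(K_{!}(\G))$ are clearly recursive strategies'').
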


We can now consider the extensional quotient $\E_{\tt
rec}=K_{!}(\G_{\tt rec}) /{\Ip{}}$ where $\Ip{}$ is defined just
as for $K_{!}(\G)$, but of course with respect to recursive tests,
i.e. recursive strategies $A\linimpl \Siepinski$. All the results
of section 4 go through with respect to recursive tests.

\begin{proposition}
$\E_{\tt rec }$ is a well-pointed rational CCC. $\M(\E_{\tt rec })$ is a
fully abstract model of PCF.
\end{proposition}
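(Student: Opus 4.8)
The plan is to obtain this statement as the effective mirror of Propositions~\ref{5.3} and~\ref{5.4} (hence of the Full Abstraction Theorem~\ref{1}), re-running each argument of Section~4 inside $K_{!}(\G_{\tt rec})$ and checking at every step that the auxiliary strategies and tests manufactured along the way are again recursive. The single genuinely new thing to verify is therefore the slogan that \emph{all the constructions used in Section~4 effectivise}. Granting this, the categorical scaffolding is already in place: $K_{!}(\G_{\tt rec})$ is a rational CCC by part~(ii) of the preceding Proposition, and the Rational Quotient Lemma~\ref{5.1} is a purely categorical fact that applies verbatim once we know that the recursive-test version of $\Ip{}$ is a precongruence.

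First I would establish the recursive analogue of Lemma~\ref{5.2}. Conditions (r1) and (r5) are immediate, and (r6) is exactly the point already isolated earlier in this section: for recursive $\sigma:A\with B\Rightarrow B$ the fixpoint $\sigma^{\nabla}=\bigcup_{n}\sigma^{(n)}$ is r.e.\ uniformly in $n$, so Lemma~\ref{II}(ii) applies. Conditions (r2)--(r4) are the Congruence Lemma, which is derived from the extensionality lemmas; so the real work is to recheck Lemmas~\ref{EfT}, \ref{EfP}, \ref{LFE}, \ref{BE} and Intuitionistic Function Extensionality \emph{with respect to recursive tests}. Their proofs use only two kinds of move: (a) building a new test from a given one by composing with fixed structural combinators (${\tt con}$, ${\tt der}$, ${\tt LAPP}$, $\chi_a$, projections, copy-cats and the evident symmetries), all of which effectivise as noted above; and (b) splitting the inducing partial function $f$ of a given recursive test as $f=g+h$ along the move-set decomposition dictated by the switching condition, exactly as in the Point Decomposition Lemma~\ref{PDL}. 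Step~(b) preserves recursiveness because $g$ and $h$ are restrictions of the graph of $f$ to decidable subsets of the move set, hence each is partial recursive by Lemma~\ref{uni.1}; consequently the points $x=\sigma_g$ and tests $\beta=\sigma_h$ produced in the backward directions are recursive. Feeding this through the derivation of Intuitionistic Function Extensionality (Linear Function Extensionality $+$ Bang Lemma $+$ Bang Extensionality) yields its recursive version, and with it the well-pointedness of $\E_{\tt rec}$ precisely as in Proposition~\ref{5.3}. This settles the first sentence.

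For full abstraction I would replay the proof of Theorem~\ref{1}. The model $\M(\E_{\tt rec})$ is standard by part~(iii) of the preceding Proposition together with $\Ip{\Nat}=\Sx_{\Nat}$ (the tests separating numerals are finite, hence recursive, so the ground preorder is unchanged), and is therefore sound by computational adequacy. For completeness, suppose $\M(\E_{\tt rec})\llbracket M\rrbracket\nleq\M(\E_{\tt rec})\llbracket N\rrbracket$ for closed $M,N:T_1\Rightarrow\cdots\Rightarrow T_k\Rightarrow\Nat$. Applying the recursive Intuitionistic Function Extensionality to representatives $\sigma,\tau$ yields \emph{recursive} points $x_1,\dots,x_k$ and a recursive test $\beta$ separating them, precisely because the separating data arise from step~(b) above. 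The remainder is inherited unchanged: the finite approximants $p_m(x_i)$ of Lemma~\ref{lemm1} are compact, hence recursive, and are denoted by $\pcfc$ terms $P_i$ by the Isomorphism Theorem~\ref{isotheo} (via Lemma~\ref{lemmm3}); $\subseteq$-continuity and monotonicity of composition then separate $M$ and $N$ on the applicative context $[\,\cdot\,]P_1\cdots P_k$, and Lemma~3.1.1 replaces the $P_i$ by observationally congruent PCF terms, giving $M\not\sqsubseteq_{\tt obs}N$.

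The main obstacle is concentrated in step~(b) for the one lemma whose proof is not a one-line decomposition, namely Bang Extensionality~\ref{BE}. There the separating test is built by induction on the number of indices $|\alpha|$, using the Separation of Head Occurrence Lemma~\ref{Soho} to re-read an index of $\ofcourse A$ as an independent tensor factor. I would check that this reindexing, and the inductively defined tests $\beta,\gamma,\delta$, are effective; this rests on the already-noted fact that $\Deq_{\ofcourse A}$ is decidable because it suffices to search over permutations of bounded support, so the reindexings and the verification $\sigma'\Deq\sigma'$ stay recursive. Once Bang Extensionality is seen to respect recursiveness, the rest of the effective theory goes through mechanically.
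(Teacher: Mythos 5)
Your proposal is correct and takes essentially the same approach as the paper: the paper simply asserts that all of Section~4 goes through with respect to recursive tests (the effectivisation details you spell out), and its proof of this Proposition consists of the single observation that the argument for Theorem~\ref{1} can be replayed because it rests on the finitary definability Lemma~\ref{lemmm3} rather than on the Isomorphism Theorem. The one correction: you should not cite Theorem~\ref{isotheo}, even parenthetically, since it is \emph{not} valid for $\M(\E_{\tt rec})$ (unions of $\subseteq$-chains of recursive strategies need not be recursive, so ${\cal S}$ does not land in recursive strategies) --- this invalidity is precisely the ``little care'' the paper's proof is devoted to pointing out, and your argument survives only because the work is actually done by Lemma~\ref{lemmm3}, which you also cite and which remains valid in the recursive setting.
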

\begin{proof} The result does require a little care, since the
Isomorphism Theorem 3.6.4 is not valid for $\M(\E_{\tt rec})$.
However, the Isomorphism Theorem was not used in the proof of the
Full Abstraction Theorem 4.3.1, but rather the finitary version
Lemma 3.6.3, which is valid in $\M(\E_{\tt rec})$.
\end{proof}

It is worth remarking that a nice feature of our definition of model in
terms of rationality rather than cpo-enrichment is that the recursive
version $\E_{\tt rec}$ is again a model in exactly the same sense as $\E$. By
contrast, in the cpo-enriched setting one must either modify the definition
of model explicitly (by only requiring completeness with respect to
r.e. chains), or implicitly by working inside some recursive realizability
universe.

\subsection{ Universal Terms }

The fact that $\M(K_{!}(\G_{\tt rec}))$ and $\M(\E_{\tt rec})$ are models
shows that all PCF terms denote recursive strategies, as we would expect.
Our aim now is to prove a converse; every recursive strategy is, up to
extensional equivalence, the denotation of a PCF term, and hence every
functional in the extensional model $\M(\E_{\tt rec})$ is definable in PCF.

More precisely our aim is to define, for each PCF type $T$, a
``universal term'' $U_T:\Nat \Rightarrow T$, such that

$$\E\llbracket U_T \lceil\sigma\rceil\rrbracket = [\sigma]$$
for each recursive $\sigma$. These universal terms will
work by simulating the evaluation tree corresponding to $\sigma$.

Firstly, we recall some notations from recursion theory. We fix an acceptable
numbering of the partial recursive functions \cite{SoareRI:recesd}
such that $\phi_n$ is the $n$'th
partial recursive function and $W_n$ is the $n$'th r.e. set. We also fix
a recursive pairing function
$\lang -,-\rang :\omega\times\omega\rightarrow\omega$
and a recursive coding of finite sequences.

A recursive strategy $\sigma$ is regarded as being given by a code
(natural number) $\lceil \sigma\rceil$. By virtue of Lemma 5.1.1
we use such a code indifferently as determining $\sigma$ by
$$\sigma=\sigma_f \mbox{ where } f=\phi_{\lceil\sigma\rceil}$$ or
$$W_{\lceil\sigma\rceil}=\{\lceil s\rceil \mid s\in\sigma\}$$ The
following lemma is a recursive refinement of the Decomposition
Lemma, and assumes the notations of Section 3.4.

\newcommand{\casodue}[5] { \[ {#1} = \left\{  \begin{array}{ll}
                         {#2} & {{#3}} \\
                         {#4} & {{#5}}
                         \end{array}
                      \right. \] }

\newcommand{\casotre}[7] { \[ {#1} = \left\{  \begin{array}{ll}
                         {#2} & {{#3}} \\
                         {#4} & {{#5}} \\
                         {#6} & {{#7}}
                         \end{array}
                      \right. \] }

\begin{lemma}[Decomposition Lemma (Recursive Version)]
For each PCF type $T$ there are partial recursive functions
$$D_T,H_T:\omega\rightharpoonup\omega\mbox{ and } B_T:\omega\times\omega
\rightharpoonup\omega$$
such that, if $\sigma$ is a recursive strategy on $T$

\casotre{D_T\lceil\sigma\rceil}{\mbox{undefined,}}{\sigma=\bot_{\tilde{T}}}
{\lang 2,n\rang ,}{\sigma=K_{\tilde{T}}n}{\lang 3,i\rang ,}{R(\sigma)}
\casodue{H_T\lceil\sigma\rceil}{\lang \lceil\sigma_1\rceil,\dots,\lceil\sigma_{l_i}
\rceil\rang ,}{R(\sigma)}{\mbox{undefined},}{\mbox{otherwise}}
\casodue{B_T(\lceil\sigma\rceil,n)}{\lceil\tau_n\rceil,}{R(\sigma)}
{\mbox{undefined},}{\mbox{otherwise}}
where $R(\sigma)$ stands for
$$\Phi(\sigma)=(3,i,\sigma_1,\dots,\sigma_{l_i},
(\tau_n\mid n\in\omega)).$$
\end{lemma}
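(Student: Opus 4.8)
The plan is to read the three functions off the concrete decomposition procedure $\Phi$ of Section~3.5, checking at each step that every operation involved is \emph{uniformly} effective in a code $\lceil\sigma\rceil$, and to use Lemma~\ref{uni.1} throughout to pass freely between a code for $\sigma$ and a code for the inducing partial function $f=\phi_{\lceil\sigma\rceil}$. Since by Lemma~\ref{uni.1} these translations are themselves tracked by recursive index maps, it suffices to build $D_T,H_T,B_T$ from recursive operations on $f$ and on (codes for) the position-sets of $\sigma$ and its derived strategies.

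First I would dispose of the case analysis, which is governed entirely by $\sigma$'s response to the initial question $\ast_1$: by the switching and stack disciplines this response, if it exists, is either an immediate answer $\underline{n}$ in $N$ or the opening question of exactly one argument $A_i$. Using the recursive labelling function and the move-coding of the effectively given game $\sem{T}$, the code of $\ast_1$ is computable, and $f(\ast_1)$ can be evaluated and its label and ${\tt Fst}$-tag inspected. Thus $D_T\lceil\sigma\rceil$ is defined by: compute $f(\ast_1)$; if it diverges (case (i)) diverge; if it is an answer $\underline{n}$ return $\lang 2,n\rang$; if it is the opening question of $A_i$ return $\lang 3,i\rang$. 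This is manifestly partial recursive, and its pattern of definedness already matches the statement; $H_T$ and $B_T$ will be made to run this same test first and to proceed only when $R(\sigma)$ holds, diverging otherwise.

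Next I would treat $B_T$ and $H_T$, where the key observation is the one already exploited in the proof of the Approximation Lemma~\ref{lemm1}: the auxiliary strategies are obtained by \emph{local} operations on positions. Concretely $\tau_n=\SET{\ast_1 t \mid \ast_1\ast_2\underline{n}_2 t\in\sigma}$ (Lemma~\ref{lem2}) is uniformly r.e.\ in $(\lceil\sigma\rceil,n)$ --- enumerate $\sigma$, select the positions beginning $\ast_1\ast_2\underline{n}_2$, and strip the segment $\ast_2\underline{n}_2$ --- so $B_T(\lceil\sigma\rceil,n)$ computes an index $\lceil\tau_n\rceil$ in case $R(\sigma)$. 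For $H_T$ the intermediate $\sigma'$ and $\sigma''$ (Lemmas~\ref{lem2} and~\ref{lem5}) are likewise defined pointwise on positions and hence uniformly r.e., while the components arise as $\sigma_j=(\sigma'';{\tt der})^{\dag};\pi_j$ through promotion, composition and projection. By the remark at the start of Section~5.1 that every operation of Section~2 effectivises with a recursive index transformation, each $\lceil\sigma_j\rceil$ is computable from $\lceil\sigma\rceil$, so $H_T$ returns $\lang\lceil\sigma_1\rceil,\dots,\lceil\sigma_{l_i}\rceil\rang$. Finally, since the curry/uncurry isomorphisms $\alpha_i$ and $\tilde{\alpha}$ relating $T$ to its uncurried form are effectively given strategies, pre- and post-composition with them preserves recursiveness with effective index maps, so the whole construction transfers from the uncurried Decomposition Lemma~\ref{lem8} to the general curried type $T$.

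The argument is essentially bookkeeping, and I expect the only point demanding genuine care to be the one the paper flags in general terms: verifying that the composite construction of the $\sigma_j$ --- which passes through promotion $(\cdot)^{\dag}$, ${\tt der}$ and the comonoid/product structure --- carries a \emph{uniform} recursive index transformation rather than merely preserving recursiveness elementwise. Granting the effectivisation results of Section~5.1, this is routine, and the three functions are then assembled by composing these uniform index transformations with the case test, their definedness behaviour following immediately from that of $D_T$.
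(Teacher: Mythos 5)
Your proposal is correct and follows essentially the same route as the paper's proof: compute $D_T$ by evaluating the inducing partial function $\phi_{\lceil\sigma\rceil}$ on the initial question, obtain $\tau_n$ (and hence $B_T$) from the uniformly r.e.\ pointwise extraction of positions, and obtain the $\sigma_j$ (hence $H_T$) via the r.e.\ extractions of $\sigma'$, $\sigma''$ followed by the effectivised composition, dereliction and projection of Proposition~5.1.2. Your appeal to ``uniform recursive index transformations'' is exactly the paper's use of the \textsf{S-m-n} theorem, so the two arguments coincide in substance.
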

\begin{proof} $\D_T\lceil\sigma\rceil$ is computed by applying
$\phi_{\lceil\sigma \rceil}$ to the (code of) the initial
question. The extraction of $\tau_n$ from $\sigma$,
$\tau_n=\{*_1s\mid  *_1*_2ns\in\sigma \}$, is obviously r.e. in
$\sigma$, uniformly effectively in $n$. Hence we obtain an r.e.
predicate $s\in B_T(\lceil\sigma\rceil,n)$, and by an application
of the \textsf{S-m-n} theorem we obtain the index for
``$B_T\lceil\sigma\rceil n =\lceil\tau_n\rceil$''.

Similarly the extraction of $\sigma'$ from $\sigma$ is r.e. in
$\sigma$, and that of $\sigma''$ for $\sigma'$ is r.e. in
$\sigma'$; while $\sigma_1, \dots,\sigma_{l_i}$ are obtained from
$\sigma''$ by composition, dereliction and projection, which are
computable operations by Proposition 5.1.2. Hence applying the
\textsf{S-m-n} theorem again we obtain the codes for $\sigma_1,
\dots,\sigma_{l_i}$.
\end{proof}

Given a PCF type $T$, we define the subtypes of $T$ to be the PCF
types occurring as subformulas of $T$, e.g. $(N\Rightarrow N)$ and
$N$ are subtypes of $(N\Rightarrow N)\Rightarrow N$. Let
$S_1,\dots,S_q$ be a listing of all the (finitely many) subtypes
of $T$, where we write $$S_i=S_{i,1}\Rightarrow\dots
S_{i,l_i}\Rightarrow N$$ To aid the presentation, we will use an
abstract datatype ${\tt Ctxt}_T$ of ``$T$-contexts'', which we
will show later how to implement in PCF. We will make essential
use of the fact that, while contexts can grow to arbritary size in
the recursive unfolding of an evaluation tree of type $T$, the
types occurring in the context can only be subtypes of $T$.

${\tt Ctxt}_T$ comes with the following operations:
\begin{itemize}
\item ${\tt  emptycontext}_T:{\tt Ctxt}_T$
\item ${\tt get}_S: N\Rightarrow{\tt Ctxt}_T\Rightarrow S$ for each subtype
$S$ of $T$
\item ${\tt extend}_{S_i}:{\tt Ctxt}_T\Rightarrow S_{i,1}\Rightarrow\dots
S_{i,l_i}\Rightarrow{\tt Ctxt}_T$ for each subtype $S_i$ of $T$
\item ${\tt map}_T: N\Rightarrow{\tt Ctxt}_T\Rightarrow N$.
\end{itemize}

If $\Gamma=x_1:U_1,\dots,x_n:U_n$, then $\Gamma^i$ is the subsequence of all
entries of type $S_i$, $1\leq i\leq q$ and $\Gamma_j=x_j:U_j$

The idea is that, if $\Gamma$ is an ``abstract context'',
\begin{itemize}
\item ${\tt extend}_{S_i} \Gamma x_1^{S_{i,1}}\dots x_{l_i}^{S_{i,l_i}}=
\Gamma,x_1:S_{i,1},\dots,x_{l_i}:S_{i,l_i}$
\item ${\tt map}_T \;i \;\Gamma =\lang i_1,i_2\rang $ where $\Gamma_i=x:S_{i_1}=
\Gamma_{i_2}^{i_1}$
\item ${\tt get}_{S_i} j \Gamma = \Gamma_j^i$.
\end{itemize}

Now we use the standard fact that every partial recursive function
$\phi:\omega\rightharpoonup\omega$ can be represented by a closed PCF term
$M: N\Rightarrow N$ in the sense that, for all $n\in\omega$
$$M n\Downarrow m\Leftrightarrow \phi n\simeq m .$$
This obviously implies that partial recursive functions of two arguments can
be represented by closed terms of type $ N\Rightarrow N\Rightarrow N$.
Specifically, we fix terms ${\bf  D_T,H_T}:N \Rightarrow N$ and
${\bf B_T}: N\Rightarrow N\Rightarrow N$ which represent $D_T,H_T$ and $B_T$
respectively.

Now we define a family of functions $$F_S:{\tt Ctxt}_T\Rightarrow
N\Rightarrow S$$ for each subtype $S=U_1\Rightarrow\dots
U_k\Rightarrow N$ of $T$, by the following mutual recursion:

\[ \begin{array}{l}
F_S=  \lambda k^{N}.\lambda \Gamma^{{\rm Ctxt}}_T.\lambda x_1^{U_1}\dots
\lambda x_k^{U_k}\\
\;\; {\tt let }\lang k_1,k_2\rang = D_T k \;{\tt in } \\
\;\;\;\; {\tt if }\; k_1=2 \;\; {\tt then }\; k_2 \; {\tt else}\\
\;\;\;\;\;\;\;\;\; {\tt if }\; k_1=3 \; {\tt then }\\
 \;\;\;\;\;\;\;\;\;\;\; {\tt let }\;  \Delta ={\tt extend}_S \;\Gamma x_1\dots
x_k \;  {\tt in}\\
\;\;\;\;\;\;\;\;\;\;\;\; {\tt let } \; \lang i_1,i_2\rang  = {\tt map}_T \;k_2\;
\Delta \;  {\tt in} \\
\;\;\;\;\;\;\;\;\;\;\;\;\;{\tt case }\; i_1 \;\; {\tt of } \\
\;\;\;\;\;\;\;\;\; \;\;\;\;\; 1 :  \dots  \\
\;\;\;\;\;\;\;\;\;\;\;\;\; \; \vdots  \\
\;\;\;\;\;\;\;\;\;\;\;\;\;  \; i: \; {\tt let }\;\lang k_1,\dots,k_{l_i}\rang
=\; H_S k \;\;\; {\tt in} \\
\;\;\;\;\;\;\;\;\;\;\;\;\;\;\;\;\;\;\;  \;{\tt let }\;n= ({\tt get}_{S_i}
i_2\Delta)(F_{S_{i,1}}  k_1\Delta) \dots\;\;(F_{S_{i,l_i}}k_{l_i}\Delta) \; \\
\;\;\;\;\;\;\;\;\;\;\;\;\;\;\;\;\;\;\;\;\;\;\;\;\;\; \;\;\;\;\;\;\;\;
\; {\tt in} \;\;F_N(B_S k n) \Delta\\
\;\;\;\;\;\;\;\;\;\;\;\;\;\;\;i+1 : \; \dots \\
\;\;\;\;\;\;\;\;\;\;\;\;\; \; \vdots  \\
\;\;\;\;\;\;\;\;\;\;\;\;\;\;\; q :  \; \dots \\
\;\;\;\;\;\;\;\;\;\;\;\; {\tt otherwise}: \; \Omega\\
\;\;\;\;\;\;\;\;\;\;\; {\tt endcase} \\
\;\;\;\;\;\;\;\;\;\;{\tt else} \; \Omega
\end{array} \]

These functions have been defined using some ``syntactic sugar''.
Standard techniques can be used to transform these definitions
into PCF syntax. In particular Beki\u{c}'s rule
\cite{WinskelG:fspl} can be used to transform a finite system of
simultaneous recursion equations into iterated applications of the
${\bf Y}$ combinator. The universal term $U_T$ can then be defined
by $$U_T=F_T \;\;{\tt emptycontext}_T.$$

It remains to be shown how ${\tt Ctxt}_T$ can be implemented in PCF. To do
this, we assume two lower-level data-type abstractions, namely product
types $T\times U$ with pairing and projections, and list types ${\tt list}(T)$
for each PCF type $T$, with the usual operations:
\begin{itemize}
\item ${\tt empty}_T:{\tt list}(T)\Rightarrow N$
\item ${\tt cons}_T:T\Rightarrow{\tt list}(T)\Rightarrow{\tt list}(T)$
\item ${\tt hd}_T:{\tt list}(T)\Rightarrow T$
\item ${\tt tl}_T:{\tt list}(T)\Rightarrow{\tt list}(T)$
\item ${\tt nil}_T:{\tt list}_T$
\end{itemize}

We write $l\converges i$ for the $i$'th component of a list.

We represent an abstract context $\Gamma$ by the $q+1-$tuple
$(l_1,\dots,l_q,{\tt mlist})$ where $l_i:{\tt list}(S_i), 1\leq i\leq q$ and
${\tt mlist}:{\tt list}(N)$. The idea is that $l_i=\Gamma^i$, while
\[ {\tt mlist}\converges i= \lang i_1,i_2\rang ={\tt map}_T \, i \,
\Gamma . \]

It is straightforward to implement the operations on contexts in terms of this representation.

\begin{itemize}
\item ${\tt emptycontext}_T=([],\dots,[],[])$
\item ${\tt map}_T i(l_1,\dots,l_q,{\tt mlist})={\tt mlist}\converges i$
\item ${\tt get}_{S_i} j(l_1,\dots,l_q,{\tt mlist})=l_i\converges j$
\item ${\tt extend}_{S_i}(l_1,\dots,l_q,{\tt mlist})x_1\cdots
  x_{l_i}= L$

where
$$L={\tt extend1}_{S_{i,l_i}}(\cdots({\tt extend1}_{S_{i,2}}({\tt
  extend1}_{S_{i,1}} (l_1,\dots,l_q,{\tt
  mlist})x_1)x_2)\cdots)x_{l_i}$$
and ${\tt extend1}_{S_{i,j}}(l_1,\dots,l_q,{\tt mlist})x$ equals
$$(l_1,\dots,l_j++[x],\dots,l_q,{\tt mlist}++[\lang j,{\tt
  length}_{S_j}(l_j)+1\rang]) $$
where $-++-$ is list concatenation.
\end{itemize}
Finally, we show how to represent lists and products in PCF. We represent
lists by
$${\tt List}(T)=( N\Rightarrow T)\times N$$
where e.g.
\begin{itemize}
\item ${\tt cons}_T=$
\[\begin{array}{llll}
\lambda x^T.  \lambda l:{\tt List}(T) & & & \\
\;\;{\tt let} \;(f,n)=l & {\tt in}\; (g,n+1)  & &\\
& {\tt where} &  & \\
& g=\lambda i^{N}. &{\tt if}\; i=0 &{\tt then}\; x \\
& & & {\tt else} \;\;f(i-1)
\end{array}\]
\item ${\tt empty}_T(f,n)= n=0.$
\end{itemize}
A function taking an argument of product type
$$T\times U\Rightarrow V$$
can be replaced by its curried version
$$T\Rightarrow U\Rightarrow V$$
while a function returning a product type can be replaced by the two
component functions.

This completes our description of the universal term $U_T$.

\newcommand{\R}{{\; \cal R\; }}
For each PCF type $T$, we define a relation $M \R_T a$ between
closed PCF terms of type $T$ and strategies $a\in{\tt Str}(T)$ by
$$M\R_T a\iff \llbracket M\rrbracket \simeq a .$$
This is extended to sequences $\tilde{M}\R_{\tilde{T}}\tilde{a}$ in
the evident fashion.

We fix a type $T$ with subtypes $S_1,\dots, S_q$ as in the previous
discussion.

\begin{lemma}\label{522}
Let $\tilde{T}\Rightarrow S$ be a PCF type-in-context and
$\sigma\in{\tt Str}(\tilde{T}\Rightarrow S)$ a compact strategy, where
$\tilde{T},S$ are subtypes of $T$. Let $\Gamma$ be a closed expression
of type ${\tt Ctxt}_T$ (which we will regard as a sequence of
closed terms), and $\tilde{a}$ a sequence of strategies.
Then
$$\Gamma\R_{\tilde{T}}\tilde{a} \Rightarrow (F_S\lceil
\sigma\rceil\Gamma)\R_S(\sigma\tilde{a}). $$
\end{lemma}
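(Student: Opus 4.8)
The plan is to prove the statement by induction on the finite evaluation tree determined by the compact strategy $\sigma$, recovered one layer at a time by the recursive Decomposition Lemma. Compactness guarantees that this tree is finite, so in the recursive case the components $\sigma_1,\dots,\sigma_{l_i}$ and the $\tau_n$ are strictly smaller trees; moreover only finitely many $\tau_n$ differ from $\bot_{\tilde C,\tilde D}$, the remainder being absorbed by the base case. Since $S=U_1\Rightarrow\dots\Rightarrow U_k\Rightarrow N$ is in general a function type, the first move is to reduce the required equivalence $\llbracket F_S\lceil\sigma\rceil\Gamma\rrbracket\simeq\sigma\tilde a$ to ground type. By well-pointedness of $\E$ (Proposition~\ref{5.3}) together with Intuitionistic Function Extensionality it suffices to compare the two sides on all points $b_1\in{\tt Str}(U_1),\dots,b_k\in{\tt Str}(U_k)$; I realise these by fresh variables $x_1,\dots,x_k$ and reason about the resulting term $F_S\lceil\sigma\rceil\Gamma\,x_1\cdots x_k$ of ground type $N$. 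It is essential that the $b_j$ need not be PCF-definable: the relation $\R$ is carried only by the \emph{code-driven} recursive calls, while the argument slots are filled by the generic points supplied by extensionality. I would accordingly run the induction with $\R$ generalised to contexts that assign arbitrary strategies to the bound argument variables.

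Unwinding $F_S\lceil\sigma\rceil\Gamma$ on these arguments, the body first evaluates $D_T\lceil\sigma\rceil$ and branches according to the recursive Decomposition Lemma. In case~(i), $\sigma=\Lambda_{\tilde U}(\bot)$ and $D_T\lceil\sigma\rceil$ is undefined, so the body diverges for every choice of $\tilde b$; hence $\llbracket F_S\lceil\sigma\rceil\Gamma\rrbracket=\bot=\sigma\tilde a$ by extensionality. In case~(ii), $\sigma=\Lambda_{\tilde U}({\bf K}\,n)$ and $D_T\lceil\sigma\rceil=\langle 2,n\rangle$, so the body returns the numeral $n$ uniformly in $\tilde b$, matching the constant point $\sigma\tilde a$.

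The substance is case~(iii), where $\sigma\approx\Lambda_{\tilde U}({\bf C}_i(\sigma_1,\dots,\sigma_{l_i},(\tau_n\mid n\in\omega)))$ and $D_T\lceil\sigma\rceil=\langle 3,k_2\rangle$. Here the term forms $\Delta={\tt extend}_S\,\Gamma\,x_1\cdots x_k$, and the crux is to check that $\Delta$ realises the enlarged sequence $(\tilde a,\tilde b)$ and that decoding the global index $k_2$ through ${\tt map}_T$ into a subtype/position pair $\langle i,i_2\rangle$ and retrieving ${\tt get}_{S_i}\,i_2\,\Delta$ selects exactly the argument that $\sigma$'s decomposition interrogates. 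Granting this, I apply the induction hypothesis at the context $\Delta$ to the smaller trees $\sigma_j$ — whose codes are delivered by $H_T\lceil\sigma\rceil=\langle\lceil\sigma_1\rceil,\dots,\lceil\sigma_{l_i}\rceil\rangle$ — obtaining that each $F_{S_{i,j}}\lceil\sigma_j\rceil\Delta$ realises $\sigma_j$ over $(\tilde a,\tilde b)$, and likewise to each $\tau_n$ — whose code is $B_T(\lceil\sigma\rceil,n)$ — obtaining that $F_N(B_T\lceil\sigma\rceil\,n)\Delta$ realises $\tau_n$. The remaining work is to recognise the term's structure as the PCF rendering of the combinator ${\bf C}_i$: the application of the retrieved variable to the $F_{S_{i,j}}\lceil\sigma_j\rceil\Delta$ realises ${\tt Ap}\circ\langle\pi_i,\dots\rangle$, while the strict ${\tt let}\;n=\dots\;{\tt in}\;F_N(B_T\lceil\sigma\rceil\,n)\Delta$ realises the branching combinator $\chi$ against $\langle\tau_n\mid n\in\omega\rangle$. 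Since composition, pairing and currying respect $\simeq$ (Congruence Lemma) and $\chi$ is a fixed strategy, the term therefore realises ${\bf C}_i(\sigma_1,\dots,\sigma_{l_i},(\tau_n\mid n\in\omega))$, which by the Decomposition Lemma (Proposition~\ref{lem8}) is $\approx$, hence $\simeq$ (Lemma~\ref{II}), to $\sigma\tilde a$.

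I expect the main obstacle to be precisely the context bookkeeping highlighted above: reconciling the indexing convention of the Decomposition Lemma — where the interrogated argument is identified by its position in the global context — with the concrete ${\tt Ctxt}_T$ implementation, so that ${\tt extend}_S$, ${\tt map}_T$ and ${\tt get}_{S_i}$ together preserve the realisation invariant $\Delta\R(\tilde a,\tilde b)$ and return the correct variable of the correct subtype. This is routine but delicate, and is the only place where the explicit list-based representation of contexts (and the agreement of ${\tt map}_T$ with the numbering used by $D_T,H_T,B_T$) actually enters the argument; everything else reduces, via Intuitionistic Function Extensionality and the Congruence Lemma, to the already-established decomposition of $\sigma$.
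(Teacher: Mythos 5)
Your proposal is correct in outline and follows the paper's proof almost step for step: the same induction on the height of the finite evaluation tree attached to the compact strategy $\sigma$ by the Isomorphism Theorem (Theorem~\ref{isotheo}), the same case analysis through the recursive version of the Decomposition Lemma, the same extension of the context followed by applications of the induction hypothesis to the $\sigma_j$ (codes supplied by $H_T$) and to the $\tau_n$ (code $B_T(\lceil\sigma\rceil,n)$), and the same final analysis of the head application and the dispatch through $F_N(B_T\lceil\sigma\rceil\, n)\Delta$. The one place where you genuinely diverge --- and it is precisely the step you flag as essential --- is the extensionality reduction, where you go in the opposite direction from the paper. The paper does \emph{not} test against arbitrary strategies: it quantifies only over closed terms $\tilde M$ with $\tilde M\,{\cal R}_{\tilde S}\,\tilde b$, stores those closed terms in the context ($\Delta={\tt extend}_S\,\Gamma\,\tilde M$), and so keeps ${\cal R}$ --- which is defined only on closed terms --- and the entire induction within closed syntax, the induction hypothesis being literally the lemma itself. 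Your version, which fills the argument slots with fresh variables denoting arbitrary points, forces you to generalise ${\cal R}$ to open context-expressions under strategy environments, machinery the paper never sets up; but it buys a genuinely self-contained appeal to Intuitionistic Function Extensionality, which quantifies over \emph{all} points $x:1\Rightarrow A$, not just definable ones. By contrast the paper's ``by Intuitionistic Function Extensionality it suffices to check definable arguments'', read literally, is a restriction that the extensionality lemma alone does not license; it is sound here because both sides of the desired equivalence are definable points (the left by the term $F_S\lceil\sigma\rceil\Gamma$ itself, the right because compact strategies are \pcfc-definable and $\tilde a$ is realised by the entries of $\Gamma$), so that agreement at all definable arguments gives observational equivalence of defining terms by the Context Lemma, and hence $\simeq$ by the Full Abstraction Theorem (Theorem~\ref{1}). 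In short, the two routes discharge the same obligation: yours pays with a generalised, environment-indexed induction hypothesis and gets a clean use of extensionality; the paper's pays with an implicit appeal to full abstraction and keeps all bookkeeping on closed terms. Both work, but note that your claim that non-definable arguments are \emph{essential} is accurate only for your route, not for the lemma itself.
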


\begin{proof} By induction on the height of the finite evaluation
tree corresponding to $\sigma$ under Theorem \ref{isotheo} , and
by cases on the Decomposition Lemma for $\sigma$. The cases for
$\sigma=\Lambda_{\tilde{S}}(\bot_{\tilde{T},\tilde{S}}) $ and
$\sigma= \Lambda_{\tilde{S}}( {\bf K}_{\tilde{T},\tilde{S}}n)$ are
clear.

Suppose $$\sigma\approx {\bf
  C}_i(\sigma_1,\dots,\sigma_{l_i},(\tau_n\mid n\in\omega)).$$
By Intuitionistic Function Extensionality Lemma, it suffices to show
that, for all closed $\tilde{M}$ and
strategies $\tilde{b}$ such that $\tilde{M}\R_{\tilde{S}}\tilde{b}$
$$F_S\lceil \sigma\rceil
\Gamma\tilde{M}\R_N\sigma\tilde{a}\tilde{b}.$$
Let $\Delta={\tt extend}_S\Gamma\tilde{M},\;
\tilde{c}=\tilde{a},\tilde{b}$. Then
$\Delta\R_{\tilde{T},\tilde{S}}\tilde{c}$,
so by induction hypothesis,
$$F_{S_{i,j}}\lceil\sigma_j\rceil\Delta\R_{S_{i,j}}\sigma_j\tilde{c},\;\;\;
1\leq j\leq l_i$$
Hence if we define
\[\begin{array}{lll}
M &=&
\Delta_i(F_{S_{i,1}}\lceil\sigma_1\rceil\Delta)\cdots(F_{S_{i,l_i}}\lceil
\sigma_{l_i}\rceil\Delta)\\
&=& \Delta^{i_1}_{i_2}(F_{S_{i,1}}\lceil\sigma_1\rceil\Delta)\cdots
(F_{S_{i,l_i}}\lceil\sigma_{l_i}\rceil\Delta)\\
\end{array}\]
where $\lang i_1,i_2\rang={\tt map} \, i \, \Delta$, then $M\R_N
c_i(\sigma_1\tilde{c}) \cdots(\sigma_{l_i}\tilde{c}).$ Thus if
$c_i(\sigma_1\tilde{c}) \cdots(\sigma_{l_i}\tilde{c})=\bot_N$, then
$\llbracket M\rrbracket =\bot_n$, while if $ c_i(\sigma_1\tilde{c})
\cdots(\sigma_{l_i}\tilde{c}) = n$ then $\llbracket M\rrbracket =n$.

In the former case, $$\llbracket
F_S\lceil\sigma\rceil\Gamma\tilde{M}\rrbracket\simeq\bot_N
\simeq\sigma\tilde{c}.$$
In the latter case,
\[\begin{array}{lll}
\llbracket F_S\lceil\sigma\rceil\Gamma\tilde{M}\rrbracket
&\simeq&\llbracket F_N(B\lceil\sigma\rceil n)\Delta\rrbracket \\
&\simeq&
\llbracket F_N\lceil\tau_n\rceil\Delta\rrbracket,
\end{array}\]
while $\sigma\tilde{c}\simeq\tau_n\tilde{c}$, and by induction
hypothesis $F_N\lceil\tau_n\rceil\Delta \R_N\tau_n\tilde{c}.\;\;$
\end{proof}

Now we define a family of relations $(\preceq_k\mid k\in\omega)$, where
$\preceq_k\subseteq\omega\times\omega$, inductively as follows:
\[\begin{array}{{cll}}
\preceq_0 &=& \omega\times\omega \\
n\preceq_{k+1} m& \iff& (D_n=\lang 2,p\rang\Rightarrow D_m=\lang
2,p\rang)\\
&\wedge & (D_n=\lang 3,i\rang\Rightarrow D_m=\lang 3,i\rang\\
&\wedge & [H n=\lang k_1,\dots,k_{l_i}\rang\Rightarrow \\
&&       H m =\lang k'_1,\dots,k'_{l_i}\rang\wedge \bigwedge_{j=1}^{l_i}
k_j\preceq_k k'_j ]\\
&\wedge &  \forall p: 0\preceq p\preceq k.\;\;  B n p \preceq_k B m p).
\end{array}\]

We can read $n\preceq_k m$ as: the stategy coded by $m$ simulates the
strategy coded by $n$ for all behaviours of size $\leq k$.

We write $$n\preceq m\iff \forall k\in\omega. n\preceq_k m.$$

\begin{lemma}\label{due}
For all PCF types $T$, $\sigma\in{\tt Str}(T), k\in\omega:$
\begin{itemize}
\item[(i)] $p_k(\sigma)\preceq\sigma$.
\item[(ii)] $\sigma\preceq_k p_k(\sigma)$
\end{itemize}
\end{lemma}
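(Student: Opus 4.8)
The plan is to prove both parts by direct inductions that follow the common decomposition structure underlying the projection functions $p_k$ (from the Approximation Lemma~\ref{lemm1}) and the code-level relations $\preceq_k$. The key preliminary observation is that, by the Recursive Decomposition Lemma, $D_T\lceil\sigma\rceil$, $H_T\lceil\sigma\rceil$ and $B_T(\lceil\sigma\rceil,n)$ compute exactly the data of the $\Phi$-decomposition of $\sigma$: the case tag, the argument codes $\lceil\sigma_j\rceil$, and the branch codes $\lceil\tau_n\rceil$. Hence $\preceq_k$, although defined on codes, descends to a comparison of the $\Phi$-decomposition trees of strategies to depth $k$. In the distinguished case $\sigma=\bot$ the functions $D_T$, $H_T$, $B_T$ are all undefined, and reading an undefined value (which codes $\bot$) as $\preceq_k$-below everything, every clause of $n\preceq_{k+1}m$ is then vacuously satisfied; thus $\bot\preceq_k\tau$ for all $\tau$ and all $k$. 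This ``$\bot$ is least'' fact drives both parts.

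For part~(ii), $\sigma\preceq_k p_k(\sigma)$, I would induct on $k$. The base $k=0$ is immediate since $\preceq_0=\omega\times\omega$. For $k+1$ I case on the Decomposition Lemma applied to $\sigma$. If $\sigma=\bot$ or $\sigma={\bf K}n$, then $p_{k+1}(\sigma)$ is $\bot$ or ${\bf K}n$ respectively and the single relevant clause holds trivially. If $\Phi(\sigma)=(3,i,\sigma_1,\dots,\sigma_{l_i},(\tau_n\mid n\in\omega))$, then $p_{k+1}(\sigma)$ decomposes via the same head index $i$ with arguments $p_k(\sigma_j)$ and branches $\tau'_n$, where $\tau'_n=p_k(\tau_n)$ for $0\le n\le k$. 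Hence the $D$-values agree, the $H$-clause reduces to $\sigma_j\preceq_k p_k(\sigma_j)$ for $1\le j\le l_i$, and the $B$-clause reduces to $\tau_p\preceq_k p_k(\tau_p)$ for $0\le p\le k$, all of which are instances of the induction hypothesis. The crucial alignment is that the branch range $0\le p\le k$ in the definition of $\preceq_{k+1}$ matches exactly the range on which $p_{k+1}$ retains a genuine (non-$\bot$) branch.

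For part~(i), $p_k(\sigma)\preceq\sigma$, i.e. $p_k(\sigma)\preceq_j\sigma$ for every $j$, I would induct on $j$ (with $k$ and $\sigma$ universally quantified), again casing on $\Phi(\sigma)$ at the inductive step. The base $j=0$ is trivial. When $k=0$ we have $p_0(\sigma)=\bot$, so $\bot\preceq_{j+1}\sigma$ by the vacuity observation. When $k=k'+1$, the cases $\sigma=\bot$ and $\sigma={\bf K}n$ match immediately; and when $\Phi(\sigma)=(3,i,\dots)$ both $p_{k'+1}(\sigma)$ and $\sigma$ decompose via $i$, so the $D$- and $H$-clauses hold using $p_{k'}(\sigma_j)\preceq_j\sigma_j$ (induction hypothesis on $j$). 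For the $B$-clause one compares, for each $0\le p\le j$, the branch $\tau'_p$ of $p_{k'+1}(\sigma)$ with $\tau_p$: for $p\le k'$ this is $p_{k'}(\tau_p)\preceq_j\tau_p$, again the induction hypothesis, while for $k'<p\le j$ the branch $\tau'_p$ is $\bot$, which simulates into $\tau_p$ by vacuity. This last point---that the over-generous $\bot$-truncation of $p_k$ beyond depth $k$ is exactly what allows the simulation to continue at \emph{every} depth $j$---is the one place where the two truncation conventions must be reconciled, and is the main (though still routine) obstacle. Since both inductions pass uniformly through the decomposition case, this completes the argument.
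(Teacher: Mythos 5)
The paper gives no proof of Lemma~\ref{due} at all---it is stated bare and used immediately in Lemma~\ref{tre}---so there is nothing of the authors' to compare your argument against; it must be judged on its own merits. Your overall plan is certainly the intended one, and its key points are right: the double induction on the depth parameter with a case split along the Decomposition Lemma; the observation that an undefined $D$-value makes both implications in the definition of $\preceq_{k+1}$ vacuous, so that (a code of) $\bot$ is $\preceq_k$-below everything; the alignment of the quantifier range $0\leq p\leq k$ in the $B$-clause of $\preceq_{k+1}$ with the range on which $p_{k+1}$ keeps the branches $p_k(\tau_p)$ rather than $\bot$; and, in part~(i), the absorption of the $\bot$-truncated branches with $k'<p\leq j$ by the vacuity observation.

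There is, however, one step that is not correct as stated, and it sits exactly where the content of this ``routine'' lemma lies. You assert that $p_{k+1}(\sigma)$ ``decomposes via the same head index $i$ with arguments $p_k(\sigma_j)$ and branches $\tau'_n$'', so that the $H$- and $B$-clauses reduce to instances of the induction hypothesis. But $D$, $H$, $B$ return (codes of) the components extracted by $\Phi$, and $\Phi$ applied to $p_{k+1}(\sigma)=\Lambda_{\tilde U}({\bf C}_i(p_k(\sigma_1),\dots,p_k(\sigma_{l_i}),(\tau'_n\mid n\in\omega)))$ does \emph{not} return $p_k(\sigma_j)$ and $\tau'_n$ on the nose: the composition with $\chi$, ${\tt Ap}$, contraction and promotion inside ${\bf C}_i$ reindexes the exponential indices, so the extracted components agree with $p_k(\sigma_j)$ and $\tau'_n$ only up to $\approx$. (This is precisely what the Unicity of Decomposition Lemma~\ref{lem9} provides, and it is why the paper writes $p_k(p_k(\sigma))\approx p_k(\sigma)$, with $\approx$ rather than $=$, in Lemma~\ref{lemm1}.) Consequently the $H$-clause reduces to $\sigma_j\preceq_k\rho_j$ for some $\rho_j\approx p_k(\sigma_j)$, to which your induction hypothesis does not apply. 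The gap is repairable within your scheme: either first prove, by a parallel induction on $k$ using Lemma~\ref{lem9} together with the fact that $\approx$-equivalent strategies fall under the same case of the Decomposition Lemma with the same head index and the same numeral, that $\preceq_k$ is invariant under $\approx$ in both arguments; or strengthen both induction hypotheses to be $\approx$-saturated (e.g.\ prove $\sigma\preceq_k\rho$ for \emph{every} $\rho\approx p_k(\sigma)$, and $\rho\preceq_j\sigma$ for every $\rho\approx p_k(\sigma)$). Note that in part~(i) the branches with $k'<p\leq j$ need no such invariance, since by Lemma~\ref{lem9}(i) a strategy $\approx\bot$ is equal to $\bot$, so vacuity applies directly. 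With that auxiliary step added, both of your inductions go through as described.
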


\begin{lemma}\label{tre}
With $S,\Gamma,\tilde{M}$ as in Lemma \ref{522}, and $\sigma$ any strategy
in ${\tt Str}(S)$:
$$\llbracket F_S\lceil\sigma\rceil\Gamma\tilde{M}\rrbracket
=n\iff\exists k\in\omega.\;\llbracket
F_S\lceil p_k(\sigma)\rceil\Gamma\tilde{M}\rrbracket =n$$
\end{lemma}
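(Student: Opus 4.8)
The plan is to reduce the statement to a \emph{bounded simulation lemma} linking the relation $\preceq_k$ of the preceding paragraph to the operational behaviour of the universal term, and then to feed in the two halves of Lemma~\ref{due}. First I would record that $F_S$ is defined by ${\bf Y}$-recursion, so by rationality (cf.\ the discussion of ${\bf Y}_A=\bigsqcup_k\Theta_A^{(k)}$ in Section~2.10) its denotation is the least upper bound of the denotations of the finite syntactic approximants $F_S^{(j)}$, obtained by unfolding the mutual recursion $j$ times with $F_S^{(0)}=\bot$. Since the arguments $\tilde M$ saturate $S$, the result is of ground type $N$, so continuity gives
\[\llbracket F_S\lceil\rho\rceil\Gamma\tilde M\rrbracket = n \iff \exists j.\ \llbracket F_S^{(j)}\lceil\rho\rceil\Gamma\tilde M\rrbracket = n,\]
i.e.\ any convergence to a value is witnessed at some finite unfolding depth (equivalently, via computational adequacy, by a finite reduction touching ${\bf Y}$ only finitely often).

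The heart of the argument is the Simulation Lemma, proved by induction on $k$: for every subtype $S$, all codes with $a\preceq_k b$, all contexts $\Gamma$ and all saturating argument sequences $\tilde M$,
\[\llbracket F_S^{(k)}\, a\,\Gamma\,\tilde M\rrbracket = n \THEN \llbracket F_S\, b\,\Gamma\,\tilde M\rrbracket = n.\]
The base case $k=0$ is vacuous as $F_S^{(0)}=\bot$. For $k+1$ I unfold the body on $a$ and read off the applicable clause of the Decomposition Lemma from $D\,a$. The constant case is immediate from the first clause of $a\preceq_{k+1}b$, which forces $D\,b=D\,a=\langle 2,n\rangle$. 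In the case $D\,a=\langle 3,i\rangle$ the second clause gives $D\,b=\langle 3,i\rangle$, the $H$-clause gives subcodes with $k_j\preceq_k k_j'$, and the $B$-clause gives $B\,a\,m\preceq_k B\,b\,m$ for $m\leq k$. Since $\Gamma$ and $\tilde M$ are shared, both computations build the same $\Delta$ and call the same variable ${\tt get}_{S_i}\,i_2\,\Delta$. Applying the induction hypothesis (in its saturated form, i.e.\ as observational approximation, by Intuitionistic Function Extensionality) to the recursive calls $F_{S_{i,j}}^{(k)}k_j\Delta$, each lies below $F_{S_{i,j}}k_j'\Delta$; as the branch index $m$ is a defined element of the flat object $N$, monotonicity of the evaluation context ${\tt get}_{S_i}\,i_2\,\Delta\,(-)\cdots(-)$ upgrades this to equality of the two resulting values $m$. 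A second use of the induction hypothesis on $F_N^{(k)}(B\,a\,m)\Delta$ versus $F_N(B\,b\,m)\Delta$, via the $B$-clause, then closes the step --- \emph{provided} the probed index satisfies $m\leq k$.

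Finally I would assemble the two implications. For $(\Leftarrow)$, Lemma~\ref{due}(i) gives $\lceil p_k(\sigma)\rceil\preceq\lceil\sigma\rceil$, hence $\preceq_j$ for every $j$; taking $j$ above the depth witnessing convergence of $F_S\lceil p_k\sigma\rceil\Gamma\tilde M$ and above the finitely many probed indices, the Simulation Lemma yields $\llbracket F_S\lceil\sigma\rceil\Gamma\tilde M\rrbracket=n$. For $(\Rightarrow)$, a converging computation of $F_S\lceil\sigma\rceil\Gamma\tilde M$ has finite unfolding depth and visits only finitely many (recursion-depth-$e$, probed-index-$m$) pairs; choosing $k$ so that $k\geq e+m+1$ at every such node makes the side condition $m\leq k-e-1$ hold throughout, while Lemma~\ref{due}(ii) supplies $\sigma\preceq_k p_k(\sigma)$, and the Simulation Lemma transfers the value $n$ to $F_S\lceil p_k\sigma\rceil\Gamma\tilde M$. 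The main obstacle is exactly this coordination: $\preceq_{k+1}$ controls $B\,a\,m$ only for $m\leq k$ and decrements the index at each recursive descent, so the unfolding depth and the magnitudes of the \emph{dynamically} produced branch indices must be simultaneously dominated by a single $k$; it is the finiteness of a converging computation that guarantees such a $k$ exists. The secondary subtlety --- that the recursive calls return higher-type values fed to an arbitrary context variable --- is dealt with by saturating with arguments to reach ground type and invoking flatness of $N$ together with monotonicity, as above.
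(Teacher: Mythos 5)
Your proof is correct and follows essentially the same route as the paper: the paper's own proof is exactly the compressed form of yours, citing Lemma~\ref{due}\textit{(i)} for $(\Leftarrow)$, and Lemma~\ref{due}\textit{(ii)} together with continuity and the fact that a converging evaluation of $F_S\lceil\sigma\rceil\Gamma\tilde{M}$ makes only finitely many calls to $D$, $H$ and $B$ (made precise via Berry's Syntactic Approximation Lemma) for $(\Rightarrow)$. Your explicit Simulation Lemma, with the bookkeeping that coordinates unfolding depth against the dynamically probed branch indices, is precisely the detail the paper leaves implicit behind the phrase ``this can be made precise''.
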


\begin{proof} $(\Leftarrow)$ By Lemma \ref{due}{\it (i)}.

$(\Rightarrow)$ By Lemma \ref{due}{\it (ii)}, using continuity,
and hence the fact that only finitely many calls to $D,H$ and $B$
are made in evaluating $F_S\lceil\sigma\rceil\Gamma\tilde{M}$.
(This can be made precise using Berry's Syntactic Approximation
Lemma for PCF \cite{BerryG:fulasl}).
\end{proof}

\begin{theorem}[Universality Theorem]
For all PCF types $T$ and recursive strategies $\sigma\in{\tt Str}(T)$ with
$n=\lceil\sigma\rceil$,
$$\M(K_{!}(\G))\llbracket U_T n\rrbracket\simeq_T\sigma.$$
Thus every functional in $\M(\E_{\tt rec})$ (equivalently, every functional
in $\M(\E)$ realised by a recursive strategy) is definable in PCF.
\end{theorem}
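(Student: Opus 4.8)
The plan is to deduce $\llbracket U_T\,n\rrbracket\simeq_T\sigma$ (for $n=\lceil\sigma\rceil$) from the two finitary facts already in hand, Lemma~\ref{522} and Lemma~\ref{tre}, by interposing the approximation machinery of Section~3.5. Recall that $U_T=F_T\,{\tt emptycontext}_T$ and that ${\tt emptycontext}_T$ realises the empty argument sequence, so that for a \emph{compact} $\sigma$ the desired conclusion is literally the empty-context instance of Lemma~\ref{522}: $F_T\lceil\sigma\rceil\,{\tt emptycontext}_T\,\R_T\,\sigma$. The whole difficulty is therefore confined to passing from compact to arbitrary recursive $\sigma$. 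Writing $T=T_1\Rightarrow\cdots\Rightarrow T_k\Rightarrow N$, the first step is to invoke Intuitionistic Function Extensionality: since $\hat N\cong{\Bbb N}_{\bot}$ carries the standard flat order and ground tests merely detect values, $\llbracket U_T\,n\rrbracket\simeq_T\sigma$ holds precisely when the two strategies, applied to an arbitrary tuple of points $a_1\in{\tt Str}(T_1),\dots,a_k\in{\tt Str}(T_k)$, return the same value in ${\Bbb N}_{\bot}$.

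Second, I would cut the arbitrary arguments down to definable ones exactly as in the Full Abstraction Theorem~\ref{1}. Using $\subseteq$-continuity of composition together with $a_i\approx\bigcup_m p_m(a_i)$ and clauses (vi),(vii) of the Approximation Lemma~\ref{lemm1}, the value of either strategy at $a_1,\dots,a_k$ is the directed supremum of its values at the compact approximants $p_m(a_1),\dots,p_m(a_k)$; as this value lies in the flat domain ${\Bbb N}_{\bot}$, it is already attained at some finite stage $m$. Each $p_m(a_i)$ is compact, hence (by Intensional Full Abstraction, the dI-domain structure of $\hat{T_i}$, and the observational agreement of \pcfc\ with PCF from Section~3.1) is the denotation of a closed term $N_i$ with $N_i\,\R_{T_i}\,p_m(a_i)$. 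It thus suffices to prove the ground-value identity when the arguments are compact and realised by terms $\tilde N$. Writing $c_i=p_m(a_i)$ and using the Congruence Lemma, $\llbracket U_T\,n\rrbracket\,c_1\cdots c_k=\llbracket F_T\,n\,{\tt emptycontext}_T\,\tilde N\rrbracket$ as ground values.

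Third, Lemma~\ref{tre} lets me replace the single index $\lceil\sigma\rceil$ that drives the lazy unfolding by that of a finite approximant: $\llbracket F_T\lceil\sigma\rceil\,{\tt emptycontext}_T\,\tilde N\rrbracket=v$ iff $\llbracket F_T\lceil p_{k'}(\sigma)\rceil\,{\tt emptycontext}_T\,\tilde N\rrbracket=v$ for some $k'$. For each such $k'$ the strategy $p_{k'}(\sigma)$ is compact, so Lemma~\ref{522} applies and, evaluated at the realised arguments $\tilde N\,\R_{\tilde T}\,\tilde c$, gives $\llbracket F_T\lceil p_{k'}(\sigma)\rceil\,{\tt emptycontext}_T\,\tilde N\rrbracket=p_{k'}(\sigma)\,\tilde c$. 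Running the approximation of $\sigma$ once more in reverse (Lemma~\ref{lemm1} and flatness of ${\Bbb N}_{\bot}$) identifies $\bigsqcup_{k'}p_{k'}(\sigma)\,\tilde c$ with $\sigma\,\tilde c$, closing the chain of equalities and yielding $\llbracket U_T\,n\rrbracket\simeq_T\sigma$. The closing assertion about $\M(\E_{\tt rec})$ is then immediate: taking for $n$ the PCF numeral of $\lceil\sigma\rceil$ exhibits $U_T\,n$ as a genuine PCF term denoting $[\sigma]$, so every functional realised by a recursive strategy is PCF-definable.

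The main obstacle I anticipate is exactly the clash of finiteness levels that Lemmas~\ref{522}, \ref{due} and \ref{tre} are arranged to bridge. Lemma~\ref{522} speaks only of compact strategies, whereas $\sigma$ is an arbitrary recursive, typically infinite, strategy presented by a single index; the universal term never holds all of $\sigma$, only the ability to unfold it lazily through $D_T,H_T,B_T$. Making the two independent continuity reductions interlock — one on the arguments $a_i$ and one on $\sigma$ itself through Lemma~\ref{tre} — so that a single finite stage simultaneously witnesses the ground value on both sides, is the delicate point; everything else is routine bookkeeping with the relation $\R_T$ and the fact that $\simeq$ is a congruence (Congruence Lemma).
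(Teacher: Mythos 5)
Your proposal is correct and follows essentially the same route as the paper: the heart of both arguments is the chain Lemma~\ref{tre} (trade $\lceil\sigma\rceil$ for $\lceil p_k(\sigma)\rceil$), Lemma~\ref{522} (evaluate the compact case against realised arguments), and the Approximation Lemma~\ref{lemm1} (recover $\sigma$ from its approximants), closed off by Intuitionistic Function Extensionality. The only difference is that you make explicit the preliminary reduction of arbitrary argument strategies $a_1,\dots,a_k$ to compact, term-definable ones (via $\subseteq$-continuity of composition and the definability of compacts, exactly as in the Full Abstraction Theorem), a bookkeeping step the paper's proof leaves implicit when it passes from the chain of equivalences over closed terms $\tilde M$ to the conclusion $\llbracket U_T\lceil\sigma\rceil\rrbracket\simeq\sigma$ via Intuitionistic Function Extensionality.
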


\begin{proof} For all closed $\tilde{M}:\tilde{T}$.
\[\begin{array}{lll}
\llbracket U_T\lceil\sigma\rceil\tilde{M}\rrbracket = n &\iff&\exists
k\in\omega.\; \llbracket U_T\lceil p_k(\sigma)\rceil
\tilde{M}\rrbracket= n\\
&& \mbox{by Lemma \ref{tre}} \\
&\iff& \exists k\in\omega.\; p_k(\sigma)\llbracket \tilde{M}\rrbracket
=n\\
&& \mbox{by Lemma \ref{522}} \\
&\iff&\sigma \llbracket\tilde{M}\rrbracket = n \\
&&\mbox{by the Approximation Lemma for strategies.}
\end{array}\]

\noindent By the Intuitionistic Function Extensionality Lemma this
shows that $\llbracket
U_T\lceil\sigma\rceil\rrbracket\simeq\sigma.\; $
\end{proof}

In the case of cpo-enriched models, an important result due to Milner
is that the fully-abstract order extensional model is unique up to isomorphism.
For rational models, the situation is not quite so rigid. For example, both
$\M(\E)$ and $\M(\E_{\tt rec})$ are fully abstract, but $\M(\E_{\tt rec})$ is
properly contained in $\M(\E)$. To see this, note that all monotonic functions
of type $N\Rightarrow N$ are sequentially realised and hence live in
$\M(\E)$, while only the recursive ones live in $\M(\E_{\tt rec})$.
We can, however, give a very satisfactory account of the canonicity of
$\M(\E_{\tt rec})$. We define a category ${\tt FAMOD(PCF)}$ with objects
the fully abstract (rational) models of PCF.
A homomorphism $F:\M({\cal C})\rightarrow {\cal M}({\cal D})$ is a functor
from the full cartesian closed sub category of $\cal C$ generated by
the interpretation
of $N$ in $\M({\cal C})$ to the corresponding subcategory of $\cal D$.
$F$ is additionally required to be a rational CCC functor, and to preserve the
interpretation of $N$ and of the PCF ground and first-order constants.

\begin{theorem}[Extensional Initiality Theorem]

$\M(\E_{\tt rec})$ is initial in ${\tt FAMOD(PCF)}$.
\end{theorem}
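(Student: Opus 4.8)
The plan is to show that $\M(\E_{\tt rec})$ admits, for every object $\M({\cal D})$ of ${\tt FAMOD(PCF)}$, a \emph{unique} homomorphism $F : \M(\E_{\tt rec}) \Fr \M({\cal D})$. The whole argument turns on the Universality Theorem: since every morphism of the full cartesian closed subcategory of $\E_{\tt rec}$ generated by $N$ is, under the standard correspondence between morphisms $A \Fr B$ and points of $A \Rightarrow B$, the denotation $\llbracket M \rrbracket$ of a closed PCF term $M$, a homomorphism out of $\M(\E_{\tt rec})$ has essentially no freedom. First I would define $F$ on objects by sending each PCF type (these are exactly the objects of the generated subcategory, being the closure of $\{N\}$ under products and exponentials) to the corresponding PCF type in ${\cal D}$, and on morphisms by $F(\llbracket M \rrbracket_{\E_{\tt rec}}) = \llbracket M \rrbracket_{\cal D}$, which is a total assignment precisely by Universality.

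The crux of existence is well-definedness, and here is where full abstraction of both models enters. Suppose $\llbracket M \rrbracket_{\E_{\tt rec}} \sqsubseteq \llbracket N \rrbracket_{\E_{\tt rec}}$. Since $\M(\E_{\tt rec})$ is fully abstract, this is equivalent to $M \sqsubseteq_{\tt obs} N$. But observational preorder is defined purely operationally and is therefore independent of the model; since $\M({\cal D})$ is also fully abstract, $M \sqsubseteq_{\tt obs} N$ is in turn equivalent to $\llbracket M \rrbracket_{\cal D} \sqsubseteq \llbracket N \rrbracket_{\cal D}$. Reading this chain in both directions shows simultaneously that the value $\llbracket M \rrbracket_{\cal D}$ does not depend on which defining term we chose, so $F$ is well-defined, and that $F$ is an order-embedding on each hom-poset.

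It then remains to check that this $F$ is genuinely a rational CCC functor preserving $N$ and the constants. Preservation of $N$ and of the ground and first-order constants holds by construction. Functoriality, and preservation of products, exponentials, pairing, currying and application, follow because each of these operations is implemented by a PCF-definable structural combinator: representing two composable morphisms as $\llbracket M \rrbracket$ and $\llbracket N \rrbracket$, their composite is the denotation of the corresponding composite term, and applying $F$ unwinds to the matching composite in ${\cal D}$; this is a routine induction once well-definedness is in hand. Preservation of the rational structure $(\cdot)^{\rat}$ is the one place where the rationality axioms are used, but it reduces to the term-level case already handled: since ${\bf Y}_A = \Theta_A^{\rat}$ is itself a PCF denotation and its approximants $\Theta_A^{(k)}$ are the denotations of the syntactic approximants ${\bf Y}_A^{(k)}$, compatibility of $F$ with $(\cdot)^{\rat}$ follows from its compatibility with these.

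Uniqueness I would treat first in the final write-up, as it is immediate: any homomorphism $G$ preserves $N$, the constants, the CCC structure and $(\cdot)^{\rat}$, hence by induction on the formation of a closed PCF term $M$ — base cases the constants, step cases application, abstraction and ${\bf Y}$ — it satisfies $G(\llbracket M \rrbracket_{\E_{\tt rec}}) = \llbracket M \rrbracket_{\cal D}$, and by Universality every morphism of the generated subcategory is such a $\llbracket M \rrbracket_{\E_{\tt rec}}$, forcing $G = F$. The main obstacle is thus conceptual rather than computational: one must recognise that Universality is exactly what collapses the usual gap between a syntactic term model and a semantic model, guaranteeing that $\M(\E_{\tt rec})$ contains no junk, so that the evident ``interpret the defining term in ${\cal D}$'' assignment is at once total (Universality) and forced (preservation of all the structure), with full abstraction of the two models supplying well-definedness through model-independence of $\sqsubseteq_{\tt obs}$.
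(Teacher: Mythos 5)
Your proposal is correct and follows essentially the same route as the paper: the definition $F(\llbracket M\rrbracket_{\E_{\tt rec}}) = \llbracket M\rrbracket_{\cal D}$ forced (and made total) by the Universality Theorem, well-definedness and order preservation/reflection via the chain $\llbracket M\rrbracket \leq \llbracket N\rrbracket \Leftrightarrow M \sqsubseteq_{\tt obs} N \Leftrightarrow \llbracket M\rrbracket_{\cal D} \leq \llbracket N\rrbracket_{\cal D}$ using full abstraction of both models, and the homomorphism property from compositionality of the semantics. Your write-up merely spells out in more detail the uniqueness induction and the preservation of the rational structure, which the paper compresses into single sentences.
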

\newcommand{\N}{{\cal N}}
\begin{proof} Let $\N$ be any fully abstract model. By the
Universality Theorem, there is only one possible definition of
$F:\M(\E_{\tt rec})\rightarrow {\cal N}$, given by $$F(\M(\E_{\tt
rec})\llbracket M\rrbracket)={\cal N}\llbracket M\rrbracket$$ for
all closed terms $M$ of PCF. Since $\M(\E_{\tt rec})$ and $\cal N$
are both fully abstract,
\[\begin{array}{lrll}
& \M(\E_{\tt rec})\llbracket M\rrbracket & \leq& \M(\E_{\tt rec})\llbracket N
\rrbracket \\
\Leftrightarrow & M & \sqsubseteq_{\tt obs} & N \\
\Leftrightarrow & {\cal N}\llbracket M\rrbracket & \leq  &
{\cal N}\llbracket N\rrbracket
\end{array}\]
so this map is well-defined, and preserves and reflects order. It
is a homomorphism by the compositional definition of the semantic
function.
\end{proof}

%\nocite{SazonovV:funcsp}
%\section{Universality}
%\input univ

%\addcontentsline{toc}{section}{References}
%% optional
%\begin{acknowledgment}
% text...
% \end{acknowledgment}

%% not optional:
%\section{References}
%\input bib

%\end{references}

%% This command is necessary! ==>>
%\end{article}
\end{document}